\documentclass[twoside,10pt]{amsart}

\usepackage{graphicx,amsmath,amsfonts,amsthm,amssymb, enumerate}
\newtheorem{thm}{Theorem}[section]

\newtheorem{lem}[thm]{Lemma}
\newtheorem{prop}[thm]{Proposition}
	
\theoremstyle{definition}
\newtheorem{defn}[thm]{Definition}	
\theoremstyle{remark}

\newcommand{\im}{\operatorname{Im}}
\newcommand{\re}{\operatorname{Re}}

\newcommand{\cL}{\mathcal{L}}
\newcommand{\lspan}{\operatorname{span}}

\newcommand{\bth}{\boldsymbol{\theta}{}}
\newcommand{\ve}{\boldsymbol{e}}
\newcommand{\bom}{\boldsymbol{\omega}}
\newcommand{\bOm}{\boldsymbol{\Omega}}
\newcommand{\bara}{{a^*}}
\newcommand{\bzeta}{\zeta^*}
\newcommand{\barZ}{Z^*}
\newcommand{\barA}{{A^*}}
\newcommand{\barB}{B^*}
\newcommand{\barC}{C^*}
\newcommand{\barL}{L^*}
\newcommand{\barmu}{\mu^*}
\newcommand{\barnu}{\nu^*}
\newcommand{\baral}{{\alpha^*}}
\newcommand{\bardel}{{\delta^*}}
\newcommand{\barX}{{X^*}}

\newcommand{\rA}{\mathrm{A}}
\newcommand{\rB}{\mathrm{B}}
\newcommand{\rC}{\mathrm{C}}

\newcommand{\rE}{\mathrm{E}}
\newcommand{\rL}{\mathrm{L}}
\newcommand{\rP}{\mathrm{P}}
\newcommand{\BP}{\operatorname{BP}}
\newcommand{\CP}{\operatorname{CP}}
\newcommand{\AP}{\operatorname{AP}}
\renewcommand{\AE}{\operatorname{AE}}
\newcommand{\AL}{\operatorname{AL}}
\newcommand{\BL}{\operatorname{BL}}
\newcommand{\BE}{\operatorname{BE}}
\newcommand{\CE}{\operatorname{CE}}

\renewcommand{\d}{\mathrm{d}}
\newcommand{\e}{\mathrm{e}}
\renewcommand{\i}{\mathrm{i}}

\newcommand{\tC}{\tilde{C}}
\newcommand{\tX}{\tilde{X}}
\newcommand{\tDelta}{\tilde{\Delta}}
\newcommand{\tUpsilon}{\tilde{\Upsilon}}
\newcommand{\tnu}{\tilde{\nu}}
\newcommand{\tz}{\tilde{z}}

\newcommand{\hz}{\hat{z}}
\newcommand{\hDelta}{\hat{\Delta}}
\newcommand{\hUpsilon}{\hat{\Upsilon}}
\newcommand{\hnu}{\hat{\nu}}

\newcommand{\al}{\alpha}

\begin{document}

\title{Invariant classification of vacuum PP-waves}
\author{R. Milson, A. Coley, D. McNutt} 

\address{\ Dept. Mathematics and Statistics, Dalhousie U., 
Halifax, Nova Scotia B3H 4R2, Canada}

\email{rmilson@dal.ca, aac@mathstat.dal.ca, ddmcnutt@dal.ca}

\begin{abstract} 
  We solve the equivalence problem for vacuum PP-wave spacetimes by
  employing the Karlhede algorithm. Our main result is a suite of
  Cartan invariants that allows for the complete invariant
  classification of the vacuum pp-waves.  In particular, we derive the
  invariant characterization of the $G_2$ and $G_3$ sub-classes in
  terms of these invariants.  It is known \cite{Collins91} that the
  invariant classification of vacuum pp-waves requires \emph{at most}
  the fourth order covariant derivative of the curvature tensor, but
  no specific examples requiring the fourth order were known.  Using
  our comprehensive classification, we prove that the $q\leq 4$ bound
  is sharp and explicitly describe all such maximal order solutions.
\end{abstract} 

\maketitle

\section{Introduction}
In general relativity, identical spacetimes are often given in
different coordinate systems, thereby disguising the diffeomorphic
equivalence of the underlying metrics.  It is consequently of
fundamental importance to have an invariant procedure for deciding the
question of metric equivalence.  One approach to this problem is to
utilize scalar curvature invariants, obtained as full contractions of
the curvature tensor and its covariant derivatives \cite{ACSI}.
However, a particularly intriguing situation arises when we consider
pp-waves, space-times that admit a covariantly constant null vector
field \cite[Chapter 24]{ES}..  Some time ago it was observed that
all curvature invariants of a pp-wave spacetime vanish \cite{schmidt}.
Subsequently all space-times with the VSI property (vanishing scalar
invariants) and the more general CSI property (constant scalar
invariants) were classified \cite{VSI,BCSI}.  It is now known that
either a spacetime is uniquely determined by its scalar curvature
invariants, or is a degenerate Kundt spacetime \cite{ACSI,DKundt}; the
VSI and CSI solutions belong to this more general class.

To invariantly classify the degenerate Kundt spacetimes, and pp-waves
in particular, one must therefore use the Karlhede algorithm
\cite{karlhede} \cite[Chapter 9.2]{ES}, which is the Cartan
equivalence method \cite{Cartan} adapted to the case of 4-dimensional
Lorentzian manifolds.  The invariant classification proceeds by
reducing the 6-dimensional Lorentz frame freedom by normalizing the
curvature tensor $R$ and its covariant derivatives, $R^q$.  The
unnormalized components of $R^q$ are called \emph{Cartan
  invariants}. We define the \emph{IC (invariant classification)
  order} of a given metric to be the highest order $q$ required for
deciding the equivalence problem for that metric.  An upper bound on
the IC order is often referred to as the \emph{Karlhede bound}.

% The Karlhede algorithm proceeds by successively normalizing the
% curvature tensor and its covariant derivatives.  
% It is customary to
% set $t_{-1}= 0$ and $d_{-1} = 6$, where the latter is the dimension of
% the group of 4D Lorentz transformations.  
Set $t_{-1}=0$ and $d_{-1}=6$ (the dimension of the Lorentz group).
At each order $q\geq 0$, let $0\leq t_{q-1}\leq t_q$ denote the number
of functionally independent Cartan invariants and let $6\geq
d_{q-1}\geq d_q$ denote
% the degrees of frame freedom remaining after
% said normalizations; i.e; $d_q$ is \
the dimension of the joint isotropy group of the normalized $R,
R^1,\ldots, R^q$.  The algorithm terminates as soon as $t_{q-1} = t_q$
and $d_{q-1} = d_q$. A value of $d_q=0$ means that there exists an
invariant tetrad.  If $t_q<4$, then Killing vectors are present.  The
dimension of the isometry group is $4-t_q+d_q$. Henceforth, we will
refer to the sequence $(t_0,t_1,\ldots, t_q)$ as the \emph{invariant
  count}.

In this paper, we focus on a particularly simple class of VSI
spacetimes: the vacuum pp-waves, whose metric has the simple form
shown in equation \eqref{eq:vsinvacmetric} below.  The symmetry
classes for pp-waves were initially classified by Kundt and Ehlers
\cite{KE62} \cite[Table 24.2]{ES} for vacuum solutions, and
subsequently extended by Sippel and Goenner \cite{sipgo86} to the
general case.  The Karlhede bound for pp-waves was investigated in
\cite{Collins91} and \cite{ramvick96} where $q\leq 4$ was established;
however, it was not known whether this bound is sharp, or if it could
be lowered further. Despite the fact that these metrics have a very
simple form, depending on just one parametric function $f(\zeta,u)$
(see equation \eqref{eq:vsinvacmetric} below), the present paper is
the first to present a complete invariant classification for vacuum
pp-waves, and to establish the sharpness of the $q\leq 4$ bound.

\begin{figure}[ht]
  \centering
  \includegraphics[width=10cm]{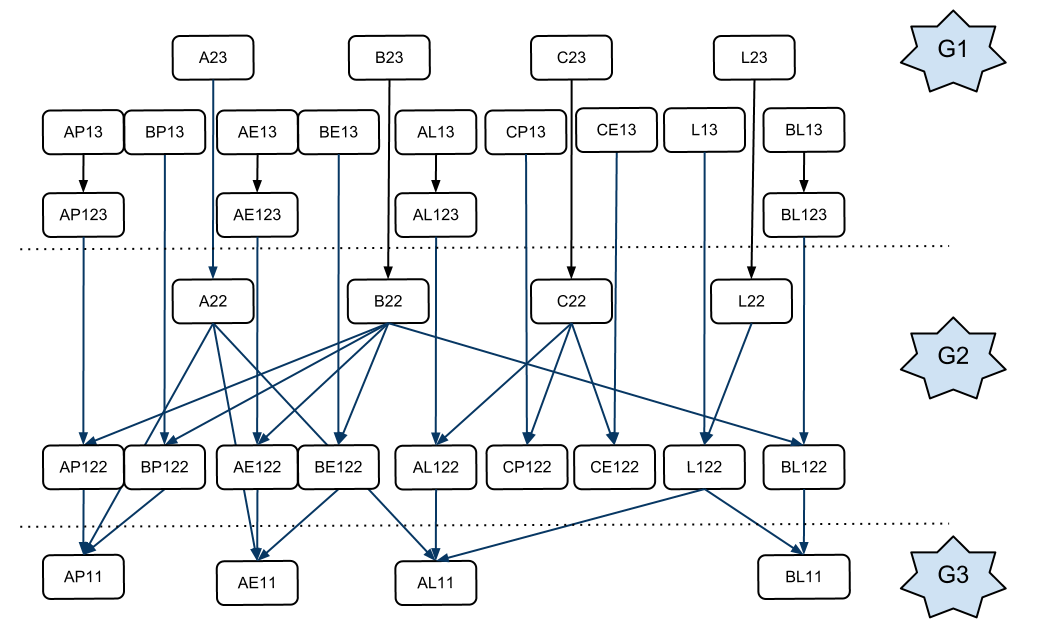}
  \caption{Specialization of  $G_1\to G_2\to G_3$ solutions in the
    $\alpha\neq 0$ class.}
  \label{fig:g23}
\end{figure}

All vacuum pp-waves have at least one Killing vector.  Kundt and
Ehlers identified 3 classes of $G_2$ solutions, 4 classes of $G_3$
solutions, a universal form for the $G_5$ solutions, and two types of
homogeneous $G_6$ solutions.  Below, we exhibit explicit Cartan
invariants that distinguish the various special sub-classes in an
invariant fashion.  

The $G_1,G_2, G_3$ solutions ($\alpha\neq 0$) and the $G_5, G_6$ solutions
($\alpha=0$) form two distinct solution branches; here $\alpha$ is a
fundamental 1st order invariant which will be defined precisely in
Section \ref{sect:ppwaves}.  The classification of the $\alpha\neq 0$
class is summarized in Figure \ref{fig:g23}.  The numbers in the
solution labels refer to the invariant count with the initial 0 and
any trailing $3$ omitted.  Thus, solution form $\AP_{123}$ refers to a
metric with an invariant count of $(0,1,2,3,3)$ while $\AP_{122}$
refers to a $G_2$ solution with an invariant count of $(0,1,2,2)$.
The $G_1$ solutions have three independent invariants and thus their
label indices end with a $3$.  For the same reason, the indices of the
$G_2$ solutions end with a 2 while the indices of the $G_3, G_5$
solutions end with a $1$.

From the point of view of invariant classification there are 4 classes
of generic $G_2$ solutions. We label these $\rA_{22}, \rB_{22},
\rC_{22}, \rL_{22}$ and summarize their invariant classification in
Table \ref{tab:022} (the Cartan invariants in the third column will be
defined in Section \ref{sect:g1}.)  Kundt-Ehlers described forms
$\rB_{22}$ and $\rL_{22}$.  Their third $G_2$ form is
\begin{equation}
  \label{eq:CA22}
 f(\zeta,u) = F(\zeta \e^{\i ku}),
\end{equation}
where $F$ is a holomorphic function and $k$ a real constant.  The $k$
parameter is not essential, and if $k\neq 0$ can be normalized to
$k\to 1$ by means of a coordinate transformation.  In terms of the
present terminology, the Kundt-Ehlers solutions of type
\eqref{eq:CA22} belong to class $\rC_{22}$ in the the case of $k=1$,
and and to class $\rA_{22}$ if $k=0$.

\begin{table}[ht]
  \centering
  \begin{tabular}{|@{\vbox to 10pt{} } c|c|l|l|}
    \hline
    $G_2$ & $f(\zeta,u)$ & Invariant condition \\ \hline
    $\rB_{22}$ & $F( u^{-ik}\zeta)u^{-2}$ &$B_2/B_1= k,\;  \Delta X_1
    = 2X_1^2,\; \hUpsilon=0$,  $AA^*\neq 1,B_1\neq 0$  \\
    $\rC_{22}$ &$F(\zeta \e^{\i u})$ & $B_1=0,\Delta X_2=0,\; \hUpsilon=0,\;
    \mu\neq 0, AA^*\neq 1$ \\
    $\rL_{22}$ & $ g\log \zeta$ & $A=1,\, Y=0$\\  
    $\rA_{22}$ & $F(\zeta)$ & $\mu= 0, \Delta \nu=0$\\
    \hline
  \end{tabular}
  \caption{Type $(0,2,2)$   $G_2$ solutions}
  \label{tab:022}
\end{table}

One benefit of the invariant classification is a clear description of
the mechanism of specialization of the $G_1\to G_2 \to G_3$ solutions.
In order to understand the $G_1\to G_2$ specialization one first has
to understand the invariant mechanism by which the solution forms in
Table \ref{tab:022} arise.  To that end, we show in Proposition
\ref{prop:genform} that all of vacuum pp-wave solutions of interest
can be reduced to the following form
\begin{equation}
  \label{eq:A**23}
  f(\zeta,u) = g_1 F(g_2 \zeta) + g_3 \zeta,
\end{equation}
where $F$ is a holomorphic functions and where $g_i=g_i(u),\; i=1,2,3$
are complex valued functions of one variable.  This general ansatz,
which we name $\rA^{**}_{23}$, bifurcates into a number of more
specialized forms, which are summarized in Table \ref{tab:g1} of the
Appendix.  Roughly speaking, there are 6 solution forms, which we
label by A,B,C, P,E,L and by numerical indices that describe the
invariant count.  An asterisk denotes a generic precursor of a more
specialized solution.  The labels P,E,L refer to, respectively,
solutions of power, exponential and logarithmic type.  Roughly
speaking, the Kundt-Ehlers $G_2$ solution forms are appropriate
specializations of the A,B,C and L solution forms.

The $G_1\to G_2$ specialization can be understood via the notion of a
``precursor solution''.  This is a $G_1$ solution that is mild
generalization of a corresponding $G_2$ solution.  For example the
precursor of the $\rB_{22}$ solution
\[ f(\zeta,u) = F(u^{-ik} \zeta) u^{-2} \]
is the $\rB_{23}$ solution 
\begin{equation}
  \label{eq:B23}
  f(\zeta,u) = F(u^{-ik} \zeta) u^{-2} + g \zeta,
\end{equation}
where $g=g(u)$ is an arbitrary complex valued function of one
variable.  Precursors of the other $G_2$ solutions have an analogous
form.  The invariant conditions that define the various precursor
classes are listed in Table \ref{tab:preg202} of the Appendix.  In
each case, the specialization to a $G_2$ involves the loss of the
$g\zeta$ term, or equivalently, the vanishing of a certain higher
order invariant.

As we show below, a vacuum pp-wave has no zeroth order invariants
\cite{ES}, and generically two independent first order invariants,
$\alpha,\alpha^*$.  In order to understand the $G_2\to G_3$
specialization it is necessary to understand the sub-class of
solutions for which $t_1=1$; i.e, metrics for which the invariants
$\alpha$ and $\alpha^*$ are functionally dependent. We refer to such
solutions as belonging to the (0,1) class and devote Section
\ref{sect:01} to their analysis.  Thus, the specialization to the
$G_3$ solutions follows the following path:
\[ (0,1,3) \to (0,1,2,2) \to (0,1,1) \] where the middle step consists
of type (0,1) $G_2$ solutions; summarized in Table \ref{tab:0122} of
the Appendix.

Another consequence of our analysis is a firm determination of the
Karlhede bound for vacuum pp-waves.  It turns that $q\leq 4$ is the
sharp bound.
\begin{thm}\label{thm:maxorder}
  There exist vacuum pp-wave spacetimes with an IC order $q=4$.  Every
  such metric belongs to one of the four classes exhibited in Table
  \ref{tab:0123}.
\end{thm}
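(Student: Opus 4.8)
The plan is to run the Karlhede algorithm to completion on the general vacuum pp-wave ansatz and simply read off which branches fail to stabilize before order $q=4$. Since the excerpt establishes that every relevant vacuum pp-wave can be brought to the form \eqref{eq:A**23}, $f(\zeta,u)=g_1F(g_2\zeta)+g_3\zeta$, and since all the specialized solution forms (with their invariant counts and the Cartan invariants distinguishing them) are tabulated in the Appendix, the theorem reduces to a finite case check: for each solution form appearing in those tables, one knows the invariant count $(t_0,t_1,\dots)$ and the orders at which the isotropy dimension $d_q$ drops, so one can identify exactly those forms for which the algorithm's termination condition $t_{q-1}=t_q$ and $d_{q-1}=d_q$ first holds at $q=4$ and not before.

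Concretely, I would proceed as follows. First, recall from the earlier sections that a vacuum pp-wave has $t_0=0$ and generically $t_1=2$, with the first-order isotropy already reduced to a one-dimensional (boost/spin) subgroup once $\alpha\neq 0$; in the $\alpha=0$ branch one gets the $G_5,G_6$ solutions, which have small IC order and can be dismissed immediately. So attention restricts to the $\alpha\neq 0$ branch, i.e.\ the $G_1\to G_2\to G_3$ tree of Figure \ref{fig:g23}. Second, within this branch walk down the specialization tree form by form: for the $G_1$ (type-3) forms $\rA^{**}_{23}$, $\rB_{23}$, $\rC_{23}$, $\rP_{23}$, $\rE_{23}$, $\rL_{23}$ and their further refinements, and then the $G_2$ and $G_3$ descendants, check at which order the pair $(t_q,d_q)$ stabilizes. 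The point is that passing to a covariant derivative $R^{q+1}$ can still reduce the isotropy (killing the residual boost/spin freedom) even after $t_q$ has stopped growing, and it is precisely the forms where this residual reduction is postponed to the third derivative — so that $R^3$ still changes $d_q$ and $R^4$ is needed to certify termination — that realize IC order $4$. Third, collect exactly those maximal forms; by the completeness of the classification these are the four entries of Table \ref{tab:0123}, and writing down their explicit metric functions $f(\zeta,u)$ finishes the proof.

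The main obstacle is not conceptual but bookkeeping: one must be certain that the isotropy dimensions $d_0,d_1,d_2,d_3,d_4$ are computed correctly for the borderline forms, because an off-by-one error in when the residual one-parameter isotropy is broken changes the IC order by one. In particular I would pay special attention to distinguishing the forms where $d_3=d_2\neq 0$ but some new functionally-independent component or some further isotropy reduction appears only at order $4$ (genuine $q=4$) from the nearby forms where everything already stabilizes at $q=3$. Concretely this means carefully tracking the normalization of the frame: after using $\alpha,\alpha^*$ and the sign/phase data at orders $1$ and $2$, identifying the first order at which no unnormalized component of $R^q$ depends on the remaining frame parameter. Verifying that the four candidate classes genuinely require $q=4$ — rather than merely that the algorithm, run naively, happens to terminate there — is the delicate step, and it rests on exhibiting at order $3$ a Cartan invariant still varying under the residual isotropy for each of the four classes, together with the order-$4$ invariant (or isotropy collapse) that certifies stabilization; these are exactly the data recorded in Table \ref{tab:0123}.
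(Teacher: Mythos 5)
Your proposed mechanism for why the order reaches $q=4$ is not the one operating here, and in fact contradicts the structure established in Proposition \ref{prop:alnonzero}. In the $\alpha\neq 0$ branch the frame is completely fixed already at first order (normalizing $\Psi_4\to 1$ kills boost and spin, and $\gamma\to 0$ kills the null rotations), so $d_p=0$ for all $p\geq 1$; there is no residual one-parameter isotropy whose elimination could be ``postponed to the third derivative,'' and tracking $d_0,\dots,d_4$ is not the delicate step --- it is trivial. The only way the algorithm can be forced to order $4$ is through the functional-independence counts: since every invariant is annihilated by $D$, one has $t_q\leq 3$, and termination at $q=4$ requires the count to climb one step at a time, i.e.\ exactly $(0,1,2,3,3)$. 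The whole content of the theorem is therefore to determine which metrics have $t_1=1$, $t_2=2$, $t_3=3$, and that is a statement about the dependency relations among $\alpha$, $\mu$, $\nu$, $\Delta\nu$, not about isotropy reduction.

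The second gap is circularity: you take the Appendix tables, and in particular Table \ref{tab:0123}, as given data and reduce the theorem to ``reading off'' which tabulated forms have the right count. But Table \ref{tab:0123} is the output of precisely the argument the theorem requires; it is derived, not assumed. The paper's actual route is: (a) $t_1=1$ forces $B=0$, $AA^*=1$, $A$ constant (Propositions \ref{prop:albaldep}, \ref{prop:albaldep1}); (b) the subcase $\d\alpha\wedge\d\mu\neq 0$ is excluded because Proposition \ref{prop:01nu} shows it yields a $G_2$, hence terminates at lower order; (c) Lemma \ref{lem:b0aneq1} splits the remainder into the branch $A=1$, $\Delta\mu=\mu^2$, $\mu\neq 0$ and the branch $\mu=0$, $AA^*=1$, which after the earlier integrations gives the candidate families $\BL_{13}$, $\AP_{13}$, $\AE_{13}$, $\AL_{13}$; (d) the condition $\d\alpha\wedge\d\nu\wedge\d\nu^*=0$ is solved explicitly in each family (Propositions \ref{prop:BL123}, \ref{prop:AP123}, \ref{prop:AE123}, \ref{prop:AL123}), producing the four classes; and (e) genericity is verified by showing that $\d\alpha\wedge\d\nu\wedge\d\Delta\nu=0$ holds only under a further explicit condition which collapses the metric to a $(0,1,2,2)$ $G_2$ solution (Propositions \ref{prop:BL122}, \ref{prop:AP122}, \ref{prop:AE122}, \ref{prop:AL122}), so that away from that locus the count is genuinely $(0,1,2,3,3)$. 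Steps (a)--(e) are the proof; your proposal supplies none of them and replaces them with an appeal to the very classification being proved.
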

\noindent
Note that metrics that require 4th order invariants for invariant
classification necessarily have a (0,1,2,3,3) as their invariant
count.

\begin{figure}[ht]
  \centering
\includegraphics[width=10cm]{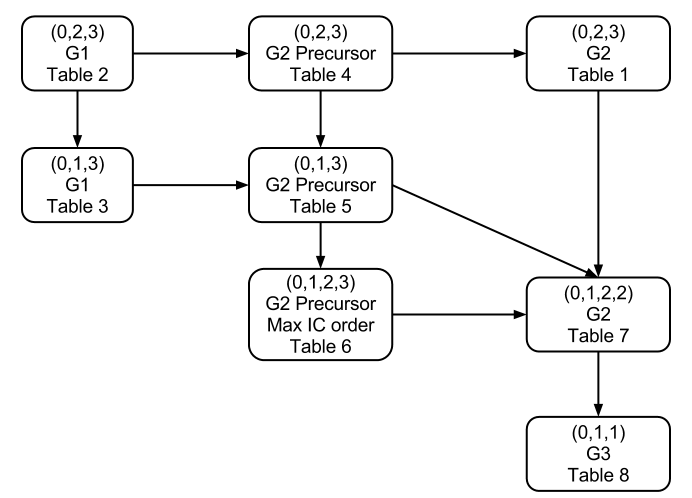}
\caption{The invariant classification of the $\alpha\neq 0$ class.}
  \label{fig:g123}
\end{figure}

The rest of the paper is organized as follows.  Section
\ref{sect:ppwaves} is an introductory description of the Karlhede
algorithm as it applies to the class of vacuum pp-wave metrics.  In
particular, this section describes the fundamental bifurcation into
the generic $\alpha\neq 0$ class and the specialized $\alpha=0$
subclass.  The invariant classification of the former consists of 8
sub-class types shown in Figure \ref{fig:g123}. Section \ref{sect:g1}
introduces the various Cartan invariants necessary for the generic
classification and derives the A,B,C,P,E,L solution forms in an
invariant manner. Section \ref{sect:01} deals with the type (0,1)
solutions in the $\alpha\neq 0$ class.  Section \ref{sect:g2precursor}
classifies the $G_2$-precursor solutions.  Section \ref{sect:0123}
derives and classifies the $G_1$ metrics having maximal IC order; the
proof of Theorem \ref{thm:maxorder} is given here.  Sections
\ref{sect:g1}, \ref{sect:01}, \ref{sect:g2precursor}, \ref{sect:0123},
when taken together, constitute the invariant classification of the
$G_1$ solutions; the specialization diagram for the various $G_1$
sub-classes is presented in Figure \ref{fig:g1} of the Appendix.
Sections \ref{sect:g2} and \ref{sect:g3} deal with the invariant
classification of the $G_2$ and $G_3$ solutions, respectively.
The $\alpha=0$ branch consists of $G_5$ and $G_6$ solutions. There is
a generic $G_5$ solution that specializes into two distinct classes of
homogeneous $G_6$ solutions, as per Figure \ref{fig:g56}.  This branch
of the classification is discussed in Section \ref{sect:g56} and
summarized in Table \ref{tab:g56}.

\begin{figure}[ht]
  \centering
  \includegraphics[width=10cm]{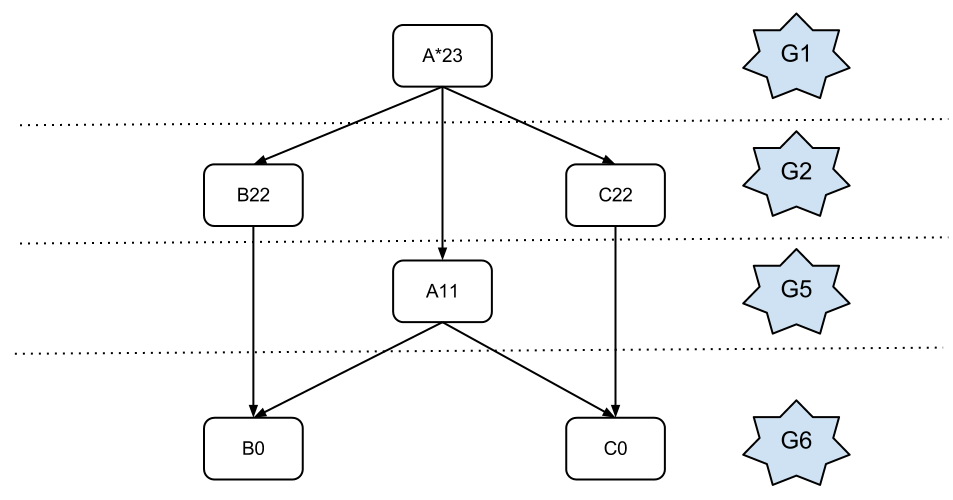}
  \caption{Specialization diagram for the $G_5, G_6$ solutions.}
  \label{fig:g56}
\end{figure}

\textbf{Remark:} the invariant analysis in Section \ref{sect:g3}
brings to light a minor classification mistake found in line 6 of
\cite[Table 24.2]{ES}. This line describes a $G_3$ class which is
listed as $\BL_{11}$ in our Table \ref{tab:011}.  Kundt-Ehlers give
the solution as $a u^{-2} \ln\zeta$ with $a$ a real constant.  This is
incorrect; the leading coefficient should be an arbitrary complex
number.

\section{Vacuum pp-wave spacetimes}
\label{sect:ppwaves}
% Note: general theory is well known.  No neeed to include -RM 12 March
% Following the work of \cite{Collins91} we choose the normalized dyad $ \{ o^A, \iota^A \}$ satisfying, $o_A \iota^A = 1 $. We define the generic symbol $\zeta_a^A$ for the dyad: 

% \beq & \zeta_0^A = o^A,~~\zeta_1^A = \iota^A,~~ \bar{\zeta}_{0'}^{A'} = \bar{o}^{A'},~~ \bar{\zeta}_{1'}^{A'} = \bar{\iota}^{A'} & \nonumber \eeq
Throughout, we use the four-dimensional Newman-Penrose
formalism \cite{PR} adapted to a complex, null-tetrad $(\ve_a) = (m^a, m^{*a}, \ell^a, n^a) = (\delta, \bardel, D,\Delta)$.  These vectors satisfy
\[ \ell_a n^a = 1, \quad m_a m^{*a} = 1, \] with all other
cross-products zero.  Equivalently, letting $\bth^1,\ldots,\bth^4$
denote the dual coframe, the metric is given by
\[g = 2\bth^1 \bth^2 - 2\bth^3 \bth^4.\]
The connection 1-form and the
the curvature 2-form 
are defined, respectively by
\begin{eqnarray}
  d\bth^a = \bom^a{}_b\wedge \bth^b,\quad \bom_{(ab)}=0\\
  \bOm^a{}_b = d\bom^a{}_b + \bom^a{}_c \wedge \bom^c{}_d.
\end{eqnarray}
The connection components are labeled by the 12 Newman-Penrose
scalars:
\begin{eqnarray}
  \label{eq:NPsc1}
  \qquad -\bom_{14}= \sigma\, \bth^1+\rho\, \bth^2+\tau\, \bth^3 +
  \kappa\, \bth^4;\\  
  \qquad \bom_{23} = \mu\, \bth^1+\lambda\,  \bth^2+\nu\, \bth^3
  +\pi\, \bth^4;\\ 
  \label{eq:NPsc3}
  -(\bom_{12}+\bom_{34})/2 = \beta\, \bth^1 +
  \alpha\,\bth^2+\gamma\,\bth^3+\epsilon\,\bth^4.
\end{eqnarray}
The curvature components are labelled by the Ricci
scalar $\Lambda=\bar{\Lambda}$, traceless Ricci components
$\Phi_{AB}=\bar{\Phi}_{BA},\; A,B=0,1,2$, and Weyl components $\Psi_C
,\; C=0,\ldots,4$:
\begin{eqnarray*}
%   \fl
%   -\bOm_{14}= \Phi_{01}(\bth^{12}-\bth^{34})+
%   \Phi_{02}\bth^{13}-\Phi_{00}\bth^{24} 
%   -\Psi_0\bth^{14} +\left(\Psi_2 + 2\Lambda\right)\bth^{23}
%   -\Psi_1 (\bth^{12}+\bth^{34})\\
  \small
 \bOm_{14}= \Phi_{01}(\bth^{34}-\bth^{12})-
  \Phi_{02}\bth^{13}+\Phi_{00}\bth^{24} 
  +\Psi_0\bth^{14} -\left(\Psi_2 + 2\Lambda\right)\bth^{23}
  +\Psi_1 (\bth^{12}+\bth^{34})\\
  \small
   \bOm_{23} = \Phi_{21}(\bth^{12}-\bth^{34}) +\Phi_{22}
  \bth^{13}-\Phi_{20} \bth^{24}+\Psi_4\bth^{23} - 
  (\Psi_2 + 2\Lambda)\bth^{14}  -
  \Psi_3(\bth^{12}+ \bth^{34})\\
  \small
 (\bOm_{12} + \bOm_{34})/2=  - \Phi_{12}\bth^{13} +
  \Phi_{10}\bth^{24}+\Psi_1\bth^{14}-
  \Psi_3\bth^{23}+\\ 
  \small
  \qquad +\Phi_{11}
  (\bth^{34}-\bth^{12})   +(\Psi_2-\Lambda)(\bth^{12}+\bth^{34}),
\end{eqnarray*}
where $\bth^{ab} = \bth^a\wedge\bth^b$.  

A pp-wave is a space-time admitting a covariantly constant null vector
field.  this entails
\[ \kappa=\sigma=\rho=\tau = 0.\] Such space-times are necessarily
Petrov type N or type O and belong to the Kundt class
\cite[Sect. 24.5]{ES}.  A vacuum pp-wave that isn't flat-space
is necessarily type N:
\[ \Phi_{AB'} = 0,\quad \Psi_0 = \Psi_1 = \Psi_2 = \Psi_3 = 0,\;
\Psi_4 \neq 0,\] Applying a boost and a spatial rotation we normalize
the tetrad by setting $\Psi_4 \to 1$.  Therefore, there are no 0th
order Cartan invariants.  The remaining frame freedom consists of the
2-dimensional group of null rotations.

The above constraints can be integrated to yield the following class
of exact solutions \cite[Section 24.5]{ES}:
\begin{equation}
  \label{eq:vsinvacmetric}
  \d s^2 = 2\d\zeta \d\bar\zeta -2\d u \d v- (f+\bar{f}) \d u^2,
\end{equation}
where $f=f(\zeta,u)$ is analytic in $\zeta$.  The above form is
preserved by the following class of transformations:
\begin{eqnarray}
  \label{eq:zetaxform}
  \hat\zeta &=& \e^{\i k}(\zeta+h(u))\\
  \label{eq:vxform}
  \hat{v} &=&  a(v+h'(u)\bar\zeta + \bar{h}'(u) \zeta+g(u))\\
  \label{eq:uxform}
  \hat{u} &=& (u+u_0)/a\\
  \label{eq:fxform}
  \hat{f}&=& a^2(f-\bar{h}''(u)\zeta+1/2(h'(u) \bar{h}'(u)-g(u)))
\end{eqnarray}

The Bianchi identities \cite[(7.32c) (7.32d)]{ES} impose:
\begin{equation}
\label{TNbianchi}   
\beta=\epsilon=0
\end{equation}
% General theory.  DOn't include -- RM 12 March
% The first covariant derivative is then defined by: \beq
% (D \Psi)_{\mu f'} = \Psi_{ABCD;EF'} \zeta_a^A \zeta_b^B \zeta_c^C
% \zeta_d^D \zeta_e^E \bar{\zeta}_{f'}^{F'} \nonumber \eeq \noindent
% where $\mu$ of the unprimed dyad vectors are $\zeta_1^A$'s. 
Using the notation of \cite{Collins91}, the non-vanishing 1st order
components are:  
\[ (D \Psi)_{50'} = 4 \alpha,\quad (D \Psi)_{51'} = 4 \gamma. \] The
transformation law for these components is \cite[(7.7c)]{ES}
\begin{equation}
  \label{eq:nrtransformation}
   \alpha' = \alpha,\quad \gamma' = \gamma +z \alpha,
\end{equation}
where $z$ is a
complex valued scalar.  Therefore, $\alpha$ is a 1st order Cartan
invariant and the invariant classification divides into two cases:
$\alpha=0$ and $\alpha\neq 0$. In the first case, $\gamma$
is an invariant, while in the 2nd case, we fix the tetrad by
normalizing $\gamma \to 0$.  We consider these two cases in more
detail.

%\subsection{The $\alpha\neq 0$ case.}
\begin{prop}
  \label{prop:alnonzero}
  Suppose that $\alpha\neq 0$. Then, $d_p=0$ for $p\geq 1$.  The
  possible values of the invariant count sequence are:
  \[(0,2,3,3),\, (0,1,3,3),\, (0,1,2,3,3),\, (0,2,2),\, (0,1,2,2),\,
  (0,1,1).\] The first 3 possibilities describe a $G_1$, the next 2
  possibilities are a $G_2$, and the last possibility is a $G_3$.  The
  Cartan invariants are generated by 
  \[ \bardel^n \alpha,\quad \delta^j \Delta^{n-j} \mu, \quad
  \Delta^{n} \nu,\quad 0\leq j\leq n,\; n=0,1,2,\ldots\] and their
  complex conjugates, where the above spin coefficients are calculated
  relative to the normalized $\Psi_4\to 1, \gamma\to 0$ tetrad.
\end{prop}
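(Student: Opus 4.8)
The plan is to first pin down the invariant tetrad, then extract the surviving Newman--Penrose spin coefficients and the induced Cartan invariants from the NP field equations, and finally read off the admissible invariant counts from the abstract bookkeeping of the Karlhede algorithm. Because $\alpha\neq0$, the transformation law \eqref{eq:nrtransformation} shows that the choice $z=-\gamma/\alpha$ normalizes $\gamma\to0$, exhausting the $2$-dimensional null-rotation freedom; the resulting tetrad is unique up to a discrete subgroup, so $d_1=0$, and since the isotropy dimension is non-increasing and non-negative, $d_p=0$ for every $p\geq1$ and the isometry dimension equals $4-t_q$. I would also note at this point that $\partial_v$ is a Killing vector of \eqref{eq:vsinvacmetric} which preserves the normalization: the metric is $v$-independent and the conditions $\Psi_4\to1$, $\gamma\to0$ are determined by $v$-independent data, so every normalized curvature component is a function of $(\zeta,\bar\zeta,u)$ alone. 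Consequently $t_q\leq3$ throughout, and $D$ annihilates every Cartan invariant.

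Next I would identify the surviving spin coefficients. Inserting the pp-wave and type-N conditions $\kappa=\sigma=\rho=\tau=\beta=\epsilon=\gamma=0$, $\Psi_0=\cdots=\Psi_3=0$, $\Psi_4=1$ into the NP Ricci identities \cite{ES}, the identity for $D\gamma$ reduces to $\alpha\bar\pi=0$ and the identity for $\delta\gamma$ to $\alpha\bar\lambda=0$; since $\alpha\neq0$ this forces $\pi=\lambda=0$, so that only $\alpha,\mu,\nu$ and their conjugates survive. The remaining Ricci identities then become reduction rules, schematically
\[ D\alpha=D\mu=D\nu=0,\quad \delta\alpha=\alpha\baral,\quad \bardel\mu=-\alpha\mu,\quad \Delta\alpha=-\barmu\alpha,\quad \bardel\nu=1-3\alpha\nu,\quad \delta\nu=\Delta\mu+\mu^2-\baral\nu, \]
while the NP commutator relations, after the same substitutions, reduce to $\{1,\alpha,\baral,\mu,\barmu,\nu,\barnu\}$-linear combinations of $D,\delta,\bardel,\Delta$. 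Since every component of $R^q$ is a polynomial in iterated frame derivatives of the surviving spin coefficients, proving the generating statement reduces to showing, by induction on word length, that each such iterated derivative lies in the polynomial algebra $\mathcal A$ generated by $\bardel^n\alpha$, $\delta^j\Delta^{n-j}\mu$ $(0\leq j\leq n)$, $\Delta^n\nu$ $(n\geq0)$ and their conjugates. In the inductive step the rules kill every $D$; for a word ending in $\alpha$ one commutes each leading $\delta$ or $\Delta$ rightward past the block of $\bardel$'s, picking up factors $\bardel^k\alpha\in\mathcal A$ and $\mathcal A$-coefficients, until it meets $\alpha$ and is absorbed by $\delta\alpha,\Delta\alpha\in\mathcal A$; for a word ending in $\mu$ one absorbs leading $\bardel$'s by $\bardel\mu=-\alpha\mu$ and reorders the $\delta$'s and $\Delta$'s into the normal form $\delta^j\Delta^{n-j}\mu$ using $\delta\Delta=\Delta\delta+(\mathcal A\text{-terms})$; and for a word ending in $\nu$ one absorbs $\bardel\nu$ and rewrites $\delta\nu$ in terms of $\Delta\mu$, so that only $\Delta^n\nu$ survives.

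Finally I would enumerate the invariant counts. There are no zeroth-order invariants, so $t_0=0$; at first order the only Cartan invariants are $\alpha,\baral$, giving $t_1\leq2$; and $\delta\alpha=\alpha\baral\neq0$ shows $\alpha$ is non-constant, so $t_1\geq1$. The count is non-decreasing, bounded above by $3$, and, the isotropy having already stabilized to $0$ at order $1$, the algorithm halts at the first $q\geq1$ with $t_q=t_{q-1}$. Listing the monotone integer sequences beginning $0,t_1$ with $t_1\in\{1,2\}$, with values in $\{0,1,2,3\}$, and terminating at the first repetition gives exactly
\[ (0,2,3,3),\ (0,1,3,3),\ (0,1,2,3,3),\ (0,2,2),\ (0,1,2,2),\ (0,1,1), \]
and the isometry dimension $4-t_q$ is $1,2,3$ according as the terminal value is $3,2,1$, which is the asserted $G_1/G_2/G_3$ split.

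The main obstacle is the bookkeeping in the second step: one must verify that the commutator ``junk'' produced at each stage genuinely stays inside $\mathcal A$, which forces the induction to be organized by a well-chosen weight on words. This is feasible precisely because the surviving structure is so meager --- three spin coefficients and the constant $\Psi_4=1$ --- and it is where essentially all the genuine work lies.
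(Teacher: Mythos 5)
Your proposal is correct and follows essentially the same route as the paper: normalize $\Psi_4\to 1$, $\gamma\to 0$ to kill the isotropy, use the NP equations (the $D\gamma$ and $\delta\gamma$ identities, i.e.\ the same (7.21f), (7.21o) the paper cites) to force $\pi=\lambda=0$ and obtain the derivative relations \eqref{eq:Dalpha}--\eqref{eq:deltanu}, then invoke the commutators to generate all higher-order invariants and enumerate the admissible counts via the Karlhede bookkeeping. Your extra touches --- the explicit $\partial_v$ Killing-vector argument for $D$-annihilation and the sketched induction on word length (which the paper dismisses as ``straight-forward'') --- are harmless refinements of, not departures from, the paper's argument.
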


 \begin{prop}
   \label{prop:alzero}
   Suppose that $\alpha=0$.  Then $d_p = 2$ for all $p$.  The possible
   values of the invariant count sequence are
   \[ (0,1,1),\; (0,0).\] The first possibility describes a $G_5$. The
   second possibility describes a $G_6$ (homogeneous space).  The
   Cartan invariants are generated by
   \[\Delta^{n} \gamma,\quad n=0,1,2,\ldots\]
   and their complex conjugates, calculated relative to a tetrad
   normalized by $\Psi_4\to 1$.
 \end{prop}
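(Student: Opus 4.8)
The plan is to extract the entire tower of curvature derivatives from the recurrence structure of the quadruple principal null direction, and to use the hypothesis $\alpha=0$ to collapse that tower to the single chain $\gamma,\Delta\gamma,\Delta^2\gamma,\dots$. Write the Weyl spinor as $\Psi_{ABCD}=\Psi_4\,o_Ao_Bo_Co_D$, where $o_A$ spans the PND; since $\kappa=\sigma=\rho=\tau=0$ this spinor is recurrent, $\nabla o_A=\Theta\otimes o_A$, and after the normalization $\Psi_4\to1$ (which forces $\beta=\epsilon=0$) the only non-zero frame components of the 1-form $\Theta$ are its $\bardel$-component, $\alpha$, and its $\Delta$-component, $\gamma$ --- this is exactly the content of $(D\Psi)_{50'}=4\alpha$, $(D\Psi)_{51'}=4\gamma$. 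First I would observe that, because the curvature is totally aligned with $o_A$, the Ricci identity $2\nabla_{[a}\nabla_{b]}o_C=(d\Theta)_{ab}\,o_C$ has right-hand side proportional to $\Psi_{ABCD}o^D=0$, so $\Theta$ is closed; equivalently $o_A$ may be rescaled to be covariantly constant and $\Theta$ becomes a logarithmic derivative of $\Psi_4$, so that $\alpha=0$ says precisely that $\Psi_4$ depends only on the retarded time, which reduces the metric to the plane-wave normal form $f=c(u)\zeta^2$, $c\neq0$ (the residual coordinate freedom being a constant rescaling and phase rotation of $c$ and an affine change of $u$).

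Next, with $\alpha=0$ only the $\Delta$-component $\gamma$ of $\Theta$ survives, and the equation $d\Theta=0$ reads off, direction by direction, as $D\gamma=\delta\gamma=\bardel\gamma=0$, because the $\Delta$-component of each of the structure functions of $[D,\Delta]$, $[\delta,\Delta]$, $[\bardel,\Delta]$ is either zero or a multiple of $\alpha$. This propagates: computing $D\Delta^n\gamma$, $\delta\Delta^n\gamma$, $\bardel\Delta^n\gamma$ by commuting $\Delta^n$ past $D,\delta,\bardel$, every term not already annihilated by the inductive hypothesis carries a factor of $\alpha$; in particular the coefficients $\mu,\nu,\lambda$, although they do occur in these commutators, always multiply $D\gamma$, $\delta\gamma$ or $\bardel\gamma$ and hence never enter a Cartan invariant. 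Therefore the only new Cartan invariant at order $n$ is $\Delta^n\gamma$ together with its complex conjugate, which is the asserted generating set.

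It remains to read off the isotropy and the invariant count. A null rotation fixes $\gamma$ when $\alpha=0$ by \eqref{eq:nrtransformation}, and it fixes every $\Delta^n\gamma$ as well --- the inhomogeneous term in the null-rotation law for the operator $\Delta$ applied to an invariant is a multiple of $\alpha$ --- so no frame normalization is available beyond $\Psi_4\to1$ and $d_p=2$ for all $p$. Since $\gamma$ and all of its $\Delta$-derivatives are functions of the one variable along the $\Delta$-direction, there is at most one functionally independent invariant. If $\gamma$ is non-constant then $t_1=1=t_2=\cdots$ while $d_1=d_0=2$, so the algorithm terminates with invariant count $(0,1,1)$ and the isometry group has dimension $4-1+2=5$; if $\gamma$ is constant then $t_q=0$ and $d_q=2$ throughout, the count is $(0,0)$, and the isometry group has dimension $4-0+2=6$. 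These describe a $G_5$ and a homogeneous $G_6$, respectively, and no other invariant count can occur.

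The main obstacle is the first paragraph: establishing that $\Theta$ is closed --- equivalently that $D\gamma=\delta\gamma=\bardel\gamma=0$, and hence the plane-wave normal form --- and setting up the recurrence-form bookkeeping so that a single argument controls all orders. Once this alignment is secured, the induction of the second paragraph and the termination argument of the third are routine verifications with the Newman--Penrose commutators.
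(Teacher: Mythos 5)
Your proposal is correct and takes essentially the same route as the paper: both arguments rest on $D\gamma=\delta\gamma=\bardel\gamma=0$, on the null-rotation law for $\Delta$ making each $\Delta^n\gamma$ frame-invariant (hence $d_p=2$ throughout), and on the fact that the commutators contain no $\Delta$ term when $\alpha=0$, which propagates the annihilation to all orders and yields exactly the counts $(0,1,1)$ and $(0,0)$. The only difference is presentational: you obtain \eqref{NPBcase} from closedness of the recurrence $1$-form of the aligned spinor, which is a coordinate-free repackaging of the same Newman--Penrose (Ricci) identities the paper cites directly, while the plane-wave normal-form digression in your first paragraph is not needed for this proposition.
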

\noindent
In the following sections we will show that each of these cases
describes a well-defined class of solutions, and go on to derive a the
canonical forms for the metric in each case.

We now turn to the proof of Proposition \ref{prop:alnonzero}, which
concerns the $\alpha\neq 0$ case.  The NP equations \cite[(7.21f)
(7.21o)]{ES} imply the additional constraints
\[ \pi= \lambda = 0.\] The non-vanishing 2nd order curvature
components are \cite[(4.2a)-(4.2t)]{Collins91}:
\begin{eqnarray*} 
  (D^2 \Psi)_{50';00'} &=& 4D \alpha ,\\
%  (D^2 \Psi)_{50';01'} &=& 4 \delta \alpha - 4 \baral \alpha , \\
  (D^2 \Psi )_{50';10'} &=& 4 \bardel \alpha + 20 \alpha^2 ,\\
  (D^2 \Psi)_{50';11'} &=& 4 \Delta \alpha  ,  \\ 
  (D^2\Psi)_{51';10'} &=& -4 \barmu \alpha ,\\
  (D^2 \Psi)_{51';11'} &=& - 4\barnu \alpha ,\\
  (D^2 \Psi)_{41';11'} &=& - \barnu \alpha.
\end{eqnarray*}
Therefore, the independent 2nd order Cartan invariants are $\mu, \nu,
\bardel \alpha$ and the corresponding complex conjugates.
The commutator relations are
\begin{eqnarray}
  \label{eq:alnzcom1}
  \Delta D - D\Delta  &=& 0,\\
  \label{eq:alnzcom2}
  \delta D-D\delta &=& \baral D,\\
  \label{eq:alnzcom3}
  \delta \Delta - \Delta \delta &=& -\barnu
  D-\baral\Delta+\mu\delta,\\  
  \label{eq:alnzcom4}
  \bardel\delta -\delta\bardel  &=&
  (\barmu-\mu)D-\baral\bardel  + \alpha \delta
\end{eqnarray}
The NP-equations imply the following relations amongst the invariants:
\begin{eqnarray} 
  \label{eq:Dalpha}
  D \alpha &=& 0,\\ 
  \label{eq:deltaalpha}
  \delta \alpha &=& \alpha  \baral,\\ 
  \label{eq:Deltaalpha}
  \Delta \alpha &=&-\barmu\alpha, \\
  \label{eq:Dmu} 
  D \mu &=& 0,\\ 
  \label{eq:bardeltamu}
  \bardel \mu &=&-\alpha \mu \\
  \label{eq:Dnu}
  D \nu &=& 0,\\ 
  \label{eq:bardelnu}
  \bardel \nu&=& 1-3\alpha \nu ,\\
  \label{eq:deltanu}
  \delta \nu&=&   -\baral\nu+\Delta \mu+\mu^2
\end{eqnarray}
Higher order relations follow in a straight-forward manner from these
and from the commutator relations.  Fixing $\Psi_4\to 1$ reduces the
isotropy to null rotation.  Fixing $\gamma\to 0$ eliminates this frame
freedom.  Therefore, the isotropy is trivial.  Equation
\eqref{eq:deltaalpha} implies that $\alpha$ is not constant.  All
invariants are annihilated by $D$. Therefore, there are either 3, 2,
or 1 independent Cartan invariants.  The conclusions of Proposition
\ref{prop:alnonzero} now follow directly from the Karlhede algorithm.

Next we present the proof of Proposition \ref{prop:alzero}, which
treats the $\alpha = 0$ class.  As was mentioned above, the 1st order
Cartan invariants are generated by
\[ (D \Psi)_{51'} = 4 \gamma \]
The  Newman-Penrose equations \cite[(7.21f) (7.21o)
(7.21r)]{ES} imply
 \begin{equation} \label{NPBcase} 
   D \gamma = 0,\;
   \delta \gamma = 0,\;
   \bardel\gamma = 0. 
 \end{equation}
 There is only one non-zero 2nd order curvature component, namely
 \begin{equation*}
   (D^2 \Psi)_{51';11'} = 4 \Delta \gamma + 20 \gamma^2 + 4
   \bar{\gamma} \gamma,
 \end{equation*}
 The operator transformation law for null rotations is
 \cite[(7.7a)]{ES}
 \[ D' = D,\; \delta' = \delta + B D,\; \Delta' = \Delta + B
 \bardel + \bar{B} \delta + B\bar{B} D.\] Therefore, by
 \eqref{NPBcase}, $\Delta^n\gamma$ is well-defined, despite the fact
 that no canonical choice of $\Delta$ exists and is invariant with
 respect to null rotations.  By \cite[(7.6a)-(7.6d)]{ES} all
 commutators are spanned by $\delta, \bardel, D$.  This implies
 that
 \[ \delta \Delta^n \gamma = \bardel \Delta^n \gamma = D\Delta^n
 \gamma = 0.\] Therefore there are two possibilities.  Either $\gamma$
 is a constant, in which case we have a homogeneous $G_6$; or $\gamma$
 is the unique independent invariant, in which case we have a $G_5$.
 This concludes the proof of Proposition \ref{prop:alzero}.

 \section{The $G_1$ solutions}
\label{sect:g1}
In this section, we derive solutions for certain key $G_1$
sub-classes.  We assume that $\alpha\neq0$ for the remainder of this
section.  The solutions are summarized in Table \ref{tab:g1} and
\ref{tab:g101}. In the tables, $F=F(z)$ is an analytic function;
$g=g(u)$ is complex-valued function of $u$; $h=h(u)$ is a real-valued
function of $u$; and $k$ is a real constant.  The meaning of $g_1,
g_2, h_1, h_2, k_1,k_2$ are analogous.

In the preceding section we established that $\alpha\neq 0$ solutions
admit an invariant tetrad characterized by the normalizations
\begin{equation}
  \label{eq:alnznormalization}
  \Psi_4= 1,\quad \gamma=0
\end{equation}
Let $\omega^1,\omega^2 = (\omega^1)^*, \omega^3
=(\omega^3)^*, \omega^4 =(\omega^4)^*$ denote the coframe dual
to $\delta, 
\bardel, \Delta,D$. 

We introduce the following key invariants.
\begin{align}
  \label{eq:Adef}
  A&:=\bardel\alpha/ \alpha^2,\\
  \label{eq:Bdef}
  B&:= \mu A-\barmu,\quad B_1 = \re B,\; B_2 = \im B,\\
  \label{eq:Mdef}
  M&:= \alpha\mu,\\
  \label{eq:Xdef}
  X&:= B/(A\barA-1) ,\quad X_1 = \re X,\; X_2 = \im X,\quad A A^*\neq 1,\\
  \label{eq:Ydef}
  Y &:= (3-A)\nu-1/\alpha  + (\Delta \mu+\mu^2)/\baral,\\
  \label{eq:hnudef}
  \hnu&:= \nu + X(\mu+2\barX)/\baral ,\quad A\barA\neq 1,\\
  \label{eq:Upsilondef}
  \hUpsilon&:=  \Delta(\hnu/\barX) - 2\hnu +1/\alpha - 4\i X X_2 / \baral,\\
  \label{eq:hDeltadef}
  \hDelta &:= \Delta + \hz^* \delta + \hz\bardel + \hz \hz^* D,\quad
  \hz := \barX/\alpha, \\
  \label{eq:tXdef}
  \tX  &:= \Delta \log M^*/(1-A^*),\quad A\neq 1,\\
  \label{eq:tnudef}
  \tnu&:= \nu + \tX^*(\tX + 2\mu-\barA \tX^*)/\baral,\quad  A\neq 1, \\
  \label{eq:tUpsilondef}
  \tUpsilon&:=  \Delta(\tnu/\tX^*) - 2\tnu +1/\alpha - 4\i \tX \tX_2 /
  \baral,\\ 
  \label{eq:tDeltadef}
  \tDelta &:= \Delta + \tz^* \delta + \tz\bardel + \tz \hz^* D,\quad
  \tz := \tilde{X}^*/\alpha.
 %  \label{eq:Ydef}
 %  Y &:= 2\nu-1/\alpha + (\Delta\mu + \mu^2)/\baral\\
 % \label{eq:tDeltadef}
 %  \tDelta &:= \Delta + \tz^* \delta + \tz \bardel + \tz \tz^*D,\quad
 %  \tz = \tX^*/\alpha
\end{align}
\noindent

Even though the $\gamma\to 0$ normalization is the most obvious way to
select an invariant tetrad, an equally useful normalization is
$\hDelta\alpha \to 0$.  The reason is that a Killing vector $V$
necessarily annihilates all invariants, and hence it will turn out to
be useful to work in a frame where $\hDelta$ is a linear combination of
Killing vectors.  

For a given vector field $V$ let us write
\[ V=V^1\delta + V^2 \bardel+ V^3 \Delta + V^4 D \]
where 
\[(V^1)^* =V^2, \; (V^3)^* = V^3, \; (V^4)^* = V^4.\] The following
proposition shows that if $AA^*\neq 1$, then the normalization
$\hDelta \alpha \to 0$ selects a well-defined invariant tetrad.
\begin{prop}
  \label{prop:hDelta}
  Suppose that $A A^*\neq 1$.  Then, every vector field that satisfies
  \begin{equation}
    \label{eq:cLValpha}
     \cL_V\alpha = \cL_V\baral =0,\quad V^3\neq 0.
   \end{equation}
   has the form $V=a\hDelta + b D,\; a\neq 0$.
  If $A A^*= 1$, but $B\neq 0$, then \eqref{eq:cLValpha} does not have
  a solution. If $A A^*=1$ and $B=0$, then there is a 1-parameter
  family of solutions to \eqref{eq:cLValpha}.
\end{prop}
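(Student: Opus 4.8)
The plan is to convert the two scalar conditions $\cL_V\alpha=\cL_V\baral=0$ into a single complex linear equation on the frame components of $V$ and then read off the three cases from the determinant of the associated $2\times2$ matrix. Since $\alpha$ is a scalar, $\cL_V\alpha=V^1\delta\alpha+V^2\bardel\alpha+V^3\Delta\alpha+V^4D\alpha$. Substituting the first‑order relations \eqref{eq:Dalpha}, \eqref{eq:deltaalpha}, \eqref{eq:Deltaalpha} together with $\bardel\alpha=A\alpha^2$ (the definition \eqref{eq:Adef} of $A$) gives
\[ \cL_V\alpha=\alpha\bigl(V^1\baral+V^2A\alpha-V^3\barmu\bigr). \]
Since $\alpha\neq0$ on the $\alpha\neq0$ branch, $\cL_V\alpha=0$ is equivalent to $V^1\baral+V^2A\alpha-V^3\barmu=0$, and $\cL_V\baral=0$ is exactly its complex conjugate. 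Imposing $V^2=(V^1)^*$, $V^3=(V^3)^*$ and writing $t:=V^3\in\Rset$, the pair of conditions becomes
\[ \begin{pmatrix}\baral & A\alpha\\ \barA\baral & \alpha\end{pmatrix}\begin{pmatrix}V^1\\(V^1)^*\end{pmatrix}=t\begin{pmatrix}\barmu\\\mu\end{pmatrix}, \]
whose coefficient matrix has determinant $\alpha\baral\,(1-A\barA)$. Note that $V^4$ never enters, so it is unconstrained in every case; the content of the proposition concerns only the $\delta,\bardel,\Delta$ components, equivalently, which null rotations carry $\Delta$ to a vector annihilating $\alpha$.

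If $A\barA\neq1$ the determinant is nonzero, so for each $t$ Cramer's rule gives a unique pair $(x,y)$; the symmetry of the system under $(x,y)\mapsto(y^*,x^*)$ forces $y=x^*$, so the solution is a genuine real vector field. It then suffices to treat $t=1$, where a two‑line computation from the definitions \eqref{eq:Bdef} and \eqref{eq:Xdef} yields the identity $X+A\barX=\barmu$; this says precisely that $V=\hDelta$ (with $\hz=\barX/\alpha$) solves the $t=1$ system. Hence every solution is $V=t\hDelta+bD$ with $t,b\in\Rset$, and the requirement $V^3\neq0$ forces $t\neq0$, giving $V=a\hDelta+bD$ with $a\neq0$, as claimed.

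If $A\barA=1$ the two rows are proportional, the second being $\barA$ times the first. The system is then consistent only when the right‑hand side obeys the same proportionality, $t\mu=\barA\,t\barmu$, which after using $B^*=\barA\barmu-\mu$ (the conjugate of \eqref{eq:Bdef}) reads $t\,B^*=0$. When $B\neq0$ this forces $t=V^3=0$, contradicting \eqref{eq:cLValpha}, so no admissible $V$ exists. When $B=0$ the consistency condition is automatic, the system collapses to the single equation $V^1\baral+(V^1)^*A\alpha=t\barmu$, and the key observation is that the left‑hand side always lies on the real line $\{w\in\Cset:\ w=A\bar w\}$ — one real dimension, since $|A|=1$ — while the right‑hand side lies on the same line precisely because $B=0$ gives $\barmu=A\mu$. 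The $\Rset$‑linear map $w\mapsto w\baral+w^*A\alpha$ from $\Cset$ onto this line has one‑dimensional kernel, so after fixing the scale $V^3$ and factoring out the ever‑free $D$‑direction the admissible $V$ — equivalently the null rotations $B$ solving $BA\alpha+\bar B\baral=\barmu$ — form a one‑parameter family.

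I expect the $A\barA=1$, $B=0$ case to be the main obstacle, not because the algebra is difficult but because the bookkeeping must be stated carefully: since $\cL_V\alpha$ and $\cL_V\baral$ are blind to $V^4$, ``a one‑parameter family'' has to be understood modulo the trivial $D$‑direction and the overall scale — i.e.\ as a one‑parameter family of null‑rotation frames in which $\hDelta\alpha=0$ — and it is this geometric reading, rather than a raw count of vector fields, that makes the statement precise and dovetails with the remark preceding the proposition that for $A\barA\neq1$ the normalization $\hDelta\alpha\to0$ selects a \emph{unique} tetrad.
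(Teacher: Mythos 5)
Your proof is correct and follows essentially the same route as the paper: both reduce $\cL_V\alpha=\cL_V\baral=0$ via \eqref{eq:Dalpha}--\eqref{eq:Deltaalpha} to the same $2\times 2$ linear system with determinant $\alpha\baral(1-A\barA)$, solve it by the identity $X+A\barX=\barmu$ (equivalently $\hz=\barX/\alpha$) when $A\barA\neq1$, and read off the consistency condition $\alpha B=0$ in the rank-one case. Your write-up is in fact slightly more complete than the paper's, which sets the system up directly for the null-rotation parameter $\hz$ and leaves the reality check, the passage from the $t=1$ solution to the general form $V=a\hDelta+bD$, and the one-parameter count in the $A\barA=1$, $B=0$ case implicit.
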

\begin{proof}
  The null-rotation transformation law for $\Delta$ is  \cite[7.7
  (c)]{ES},
  \begin{equation}
    \label{eq:nrDelta}
    \hDelta = \Delta + \hz^* \delta + \hz \bardel +\hz\hz^* D.
  \end{equation}
  Hence, by \eqref{eq:Dalpha}-\eqref{eq:Deltaalpha} and \eqref{eq:Adef}
  we seek a scalar $\hz$ such that
  \begin{equation}
    \label{eq:hDeltasystem}
    \begin{pmatrix}
      A\alpha & \baral \\
      \alpha & A^*\baral \\
    \end{pmatrix}
    \begin{pmatrix}
      \hz \\ \hz^*
    \end{pmatrix}
    = 
    \begin{pmatrix}
      \barmu\\ \mu
    \end{pmatrix}
  \end{equation}
  If $AA^*\neq 1$, the  solution is
  \begin{equation}
    \label{eq:zhDelta}  
    \hz  = \frac{\barA\barmu-\mu}{(A A^*-1)\alpha} = \frac{\barX}{\alpha}.
  \end{equation}
  If $A A^* = 0$, then the system has rank 1.  In this case the system
  is consistent if and only if 
  \[
  \begin{vmatrix}
    A\alpha & \barmu \\ \alpha & \mu
  \end{vmatrix}
  = \alpha B = 0.\]
\end{proof}

Next, we establish some key relations for these invariants and
certain other scalars that will prove useful in our calculations.
\begin{prop}
  \label{prop:alnonzerorels}
  Suppose that $\alpha\neq 0$. If  the normalization
  \eqref{eq:alnznormalization} holds then
  \begin{align}
    \label{eq:alphaZrel}
    \alpha &= e^{a-\bara}/(Z_a)^*  =e^{a-\bara}   (a_\zeta)^*,\\
    \label{eq:muLrel}
    \mu &= e^{-a-\bara} L_u \\ %= - e^{-a-\bara} r_\zeta,\\
    \label{eq:MLrel}
    M &= (e^{-2a}/Z_a)^* L_u,\\
    \label{eq:nuLrel}
    \nu &= e^{-a-3\bara}\left( Z_{uu} +
      (\Phi_a/Z_a)^*\right)=e^{-a-3\bara}(
    Z_{uu} + (f_\zeta)^*),\\
    \label{eq:ALrel}
    A &= -1 - (L_a)^*,\\
    \label{eq:om1def} 
    \omega^1 &= (\baral)^{-1} da,\\
    \label{eq:om3def} 
    \omega^3 &= e^{a+\bara} du,\\
    % \omega^4 &= e^{-a-\bara}\left( (f+\bar{f} + rr^*) du + dv +
    %   \bar{r} d\zeta + rd\bzeta\right)\\
    \omega^4&= e^{-a-\bara} \left( (f+f^* +Z_u \barZ_u) du +
      dv-Z_u d\bzeta-Z_u^* d\zeta\right),
    \intertext{where}
    \label{eq:adef}
    a&:= \frac{1}{4} \log f_{\zeta\zeta},\quad a_\zeta \neq 0,\\
    % r&:= a_u/a_\zeta,\\
    \zeta &=: Z(a,u),\quad \bzeta =: \barZ(\bara,u) ,\\
    L &:= \log Z_a\\
    \Phi(a,u) &:= f(\zeta,u).\intertext{We also have}
    \label{eq:muXA}
    \mu &= X A^*+ X^*,\\
    \delta A &= 0.    \intertext{  Furthermore, if $Q=Q(a,u)$ then }
    \label{eq:deltaQ}
    \delta Q&= \baral \, Q_a,\\
    \label{eq:DeltaQ}
    \Delta Q&= e^{-a-\bara} Q_u.
  \end{align}
\end{prop}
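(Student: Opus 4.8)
At bottom this proposition is the explicit coordinate description of the invariant tetrad, so the plan is to build that tetrad and then read off the invariants and the dual coframe. \emph{First}, I would fix the natural null frame adapted to the covariantly constant null vector, with vectors $\delta_0 = \partial_\zeta$, $\bardel_0 = \partial_{\bar\zeta}$, $D_0 = \partial_v$, $\Delta_0 = \partial_u - \tfrac12(f+f^*)\partial_v$ and coframe $\d\zeta,\,\d\bar\zeta,\,\d v + \tfrac12(f+f^*)\d u,\,\d u$, in which $g$ takes the form \eqref{eq:vsinvacmetric}. Solving Cartan's structure equations for this frame, almost all spin coefficients vanish, the survivors being built from $f_\zeta$ and $\partial_u(f+f^*)$; in particular $\gamma_0 = 0$, and the only surviving curvature scalar is $\Psi_4$, a nonzero multiple of $\overline{f_{\zeta\zeta}}$. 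This recovers the reductions listed in Section~\ref{sect:ppwaves}.

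\emph{Second}, I would carry out the normalization \eqref{eq:alnznormalization}. Since $\Psi_4$ has boost- and spin-weight $-2$, the boost-and-rotation sending $\Psi_4\to 1$ is governed by the single complex function $a := \tfrac14\log f_{\zeta\zeta}$ (whose conjugate is $\tfrac14\log\Psi_4$): it rescales $D\mapsto e^{a+\bara}D$, $\Delta\mapsto e^{-a-\bara}\Delta$, $\delta\mapsto e^{\bara-a}\delta$, $\bardel\mapsto e^{a-\bara}\bardel$, and, because $\gamma_0 = 0$, the resulting $\gamma$ is proportional to $\Delta_0$ of a function of $a$. I would then impose $\gamma\to 0$ by a null rotation $\delta\mapsto\delta + zD$, $\Delta\mapsto\Delta + z^*\delta + z\bardel + zz^*D$ with $z = -\gamma/\alpha$; this is legitimate because $\alpha\neq 0$ by \eqref{eq:deltaalpha}, and the shift $\gamma\mapsto\gamma+z\alpha$ is exactly \eqref{eq:nrtransformation}. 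Pushing the frame vectors and the spin coefficients through this compound transformation --- via the boost/rotation laws together with the null-rotation operator and spin-coefficient laws (\cite[(7.7)]{ES}; cf.\ \eqref{eq:nrtransformation}) --- produces $\alpha,\mu,M,\nu,A$ as the stated functions of $f$ and its derivatives.

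\emph{Third}, I would pass to the coordinate $a$, which is permissible under the standing hypothesis $a_\zeta = f_{\zeta\zeta\zeta}/(4f_{\zeta\zeta})\neq 0$ (this is just the hypothesis $\alpha\neq 0$ re-expressed): set $\zeta = Z(a,u)$, $L = \log Z_a$, $\Phi(a,u) = f(Z(a,u),u)$ and substitute using $\partial_\zeta = Z_a^{-1}\partial_a$ and $\partial_u|_\zeta = \partial_u|_a - (Z_u/Z_a)\partial_a$, noting $\Phi_a/Z_a = f_\zeta$. Since $a$ depends only on $(\zeta,u)$, after all the transformations $\delta = \baral\,\partial_a + (\text{a }D\text{-term})$, $\bardel = \alpha\,\partial_{\bara} + (\text{a }D\text{-term})$ and $D = e^{a+\bara}\partial_v$, while the condition $\gamma\to 0$ says precisely that $\Delta a = \Delta\bara = 0$, so that $\Delta = e^{-a-\bara}\bigl(\partial_u + (\text{a }D\text{-term})\bigr)$. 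Requiring $\omega^1,\omega^2,\omega^3,\omega^4$ to be dual to $\delta,\bardel,\Delta,D$ then forces $\omega^1 = (\baral)^{-1}\d a$, $\omega^3 = e^{a+\bara}\d u$ and the displayed form of $\omega^4$ (its $\d a,\d\bara$ terms coming from the $D$-terms in $\delta,\bardel$, i.e.\ from $z$); relations \eqref{eq:deltaQ}, \eqref{eq:DeltaQ} are then immediate, and substituting into \eqref{eq:Adef}--\eqref{eq:Xdef} gives \eqref{eq:alphaZrel}--\eqref{eq:ALrel} (e.g.\ $\bardel\alpha = \alpha\,\partial_{\bara}\alpha$ yields $A = -1-(L_a)^*$ in a line). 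Finally, $\delta A = 0$ and $\mu = XA^* + X^*$ cost nothing further: the first follows from \eqref{eq:Dalpha}, \eqref{eq:deltaalpha} and the commutator \eqref{eq:alnzcom4}, which together give $\delta\bardel\alpha = 2\baral\,\bardel\alpha$ and hence $\delta(\bardel\alpha/\alpha^2) = 0$; the second is the algebraic identity obtained by multiplying $XA^*+X^*$ by $AA^*-1$ and using $X(AA^*-1) = \mu A - \mu^*$ from \eqref{eq:Bdef}, \eqref{eq:Xdef}.

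\emph{Expected obstacle.} The real work is the bookkeeping of the compound frame transformation in the second step, and in particular the two identifications that make the output clean: that the $\Psi_4\to 1$ parameter is exactly $a$, and that imposing $\gamma\to 0$ is equivalent to $\Delta a = 0$. The latter is what fixes the prefactors $e^{-a-3\bara}$ in \eqref{eq:nuLrel} and $e^{a+\bara}$ in \eqref{eq:om3def}, and it is also the source of the $Z_{uu}$ term in $\nu$: the null-rotation correction to $\nu$ contains $\Delta z$, and $z\propto\gamma/\alpha$ is built from a first $u$-derivative of a function of $a$, so re-expressing $\Delta z$ in the $(a,u)$ variables produces a second $u$-derivative of $Z$. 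An alternative that sidesteps recomputing the transformation is to simply \emph{posit} the tetrad \eqref{eq:om1def}--$\omega^4$, check directly that it returns \eqref{eq:vsinvacmetric} and satisfies $\Psi_4 = 1$, $\gamma = 0$, and then invoke Proposition~\ref{prop:alnonzero} ($d_p = 0$ for $p\geq 1$) to conclude that the invariant tetrad is unique, so that the invariants computed from it are the claimed ones. Either way, keeping the real/imaginary-part and weight conventions straight throughout is the part most likely to hide sign errors.
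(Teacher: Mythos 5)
The paper offers no proof of this proposition at all---it is asserted as the outcome of a direct Newman--Penrose computation---and your plan is essentially that computation: the natural Brinkmann tetrad, the boost/spin governed by $a=\tfrac14\log f_{\zeta\zeta}$ that sets $\Psi_4\to 1$, the null rotation with $z=-\gamma/\alpha$, and the passage to the variables $(a,u)$. Your one genuinely delicate identification, that imposing $\gamma\to 0$ is the same as imposing $\Delta a=\Delta\bara=0$, is indeed correct: $\beta,\alpha,\gamma,\epsilon$ are the components of the single connection form $-(\bom_{12}+\bom_{34})/2$, which vanishes in the coordinate frame and after the boost/spin becomes exact, equal to $\d\bara$ up to sign (which is also why $\alpha$ comes out proportional to $(a_\zeta)^*$ while $\beta=\epsilon=0$ persists), so the null rotation killing $\gamma$ is precisely the one making $\Delta$ annihilate $a$ and $\bara$; granting that, the duality relations \eqref{eq:deltaQ}--\eqref{eq:DeltaQ} and your short arguments for \eqref{eq:ALrel}, $\delta A=0$ and \eqref{eq:muXA} go through as you describe.
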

\noindent We begin by deriving some a key classes of $G_1$ solutions;
\emph{all} the various solutions discussed in this paper are
subclasses of these general categories.
\begin{prop}
  \label{prop:genform} 
  Suppose that $\alpha\neq 0$. The following
  conditions are equivalent:
  \begin{gather}
    \label{eq:prcondition}
    \delta \barA\, \delta^2 \!M = \delta M\, \delta^2 \!\barA\\
    \label{eq:prsolution}
    f(\zeta,u) = g_1F(g_2 \zeta) + g_3 \zeta,
  \end{gather}
  where $F=F(z)$ is an analytic function such that $F'''(z) \neq 0$
  and where $g_i=g_i(u),\; i=1,2,3$ are complex-valued such that $g_1,
  g_2\neq 0$. Furthermore, $\delta M =0$ if and only if $g_1 =
  g_2^{-2}$; i.e.,
  \begin{equation}
    \label{eq:A*23}
    f(\zeta,u) = F(g\zeta)g^{-2} + g_3 \zeta.
  \end{equation}
  In addition $M=0$ if and only if $g_1=g_2=1$; i.e.,
  \begin{equation}
    \label{eq:A23}
    f(\zeta,u) = F(\zeta) + g\zeta.
  \end{equation}
\end{prop}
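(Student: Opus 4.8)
The plan is to prove the chain of equivalences by passing through the $Z$-variable description supplied by Proposition~\ref{prop:alnonzerorels} and reducing everything to elementary ODEs in the variable $a$ at fixed $u$. First I would rewrite the condition \eqref{eq:prcondition} using \eqref{eq:deltaQ}: since $\delta Q = \baral\, Q_a$ for any $Q=Q(a,u)$, and both $A$ and $M$ are such functions (by $\delta A = 0$ together with \eqref{eq:ALrel}, \eqref{eq:MLrel}), one computes $\delta^2 Q = \baral\,\delta(Q_a)\cdot(\text{correction from }\delta\baral)$; more carefully, $\delta^2 Q = \baral^2 Q_{aa} + (\delta\baral) Q_a$. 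The cross-difference $\delta\barA\,\delta^2 M - \delta M\,\delta^2\barA$ then collapses: the $(\delta\baral)$ terms cancel in the Wronskian-like combination, leaving $\baral^3(\barA_a M_{aa} - M_a \barA_{aa}) = 0$, i.e. the Wronskian $W(\barA_a, \text{?})$... actually the statement reduces to $\barA_a M_{aa} = M_a \barA_{aa}$, which says $M_a$ and $\barA_a$ are proportional as functions of $a$ (for each fixed $u$), equivalently $(\log M_a)_a = (\log \barA_a)_a$ wherever $M_a\neq0$, so $M_a = c(u)\,\barA_a$ for some function $c(u)$, hence $M = c(u)\barA + e(u)$.

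Next I would translate this back through the definitions. We have $\barA = A^* = -1 - L_a$ (conjugating \eqref{eq:ALrel}, bearing in mind $A$ is a function of $a$), where $L = \log Z_a$; and $M = (e^{-2a}/Z_a)^* L_u$ from \eqref{eq:MLrel}. The relation $M = c(u)\barA + e(u)$ then becomes a first-order PDE linking $Z_a$ and $Z_{au}$. The key structural observation I expect to use is that $\barA_a = -L_{aa} = -(Z_{aa}/Z_a)_a$, and that the $\zeta$-dependence of $f$ enters only through $f_{\zeta\zeta} = e^{4a}$, i.e. through the change of variables $\zeta = Z(a,u)$ with $Z_a = e^{-L}$. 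Integrating the proportionality condition in $a$ should force $Z$ to have the separated form $Z(a,u) = g_2(u)^{-1}\,\Xi(a) + (\text{affine in the }\zeta\text{-image})$, which upon inverting to express $f$ as a function of $\zeta$ yields the ansatz $f(\zeta,u) = g_1 F(g_2\zeta) + g_3\zeta$. Concretely, $f_{\zeta\zeta} = g_1 g_2^2 F''(g_2\zeta)$, so $a = \tfrac14\log(g_1 g_2^2) + \tfrac14\log F''(g_2\zeta)$; this shows $e^{4a}$ is $g_1 g_2^2$ times a function of $g_2\zeta$ alone, which is exactly the invariant content of the proportionality, and the converse direction (plugging the ansatz into \eqref{eq:prcondition}) is then a direct verification.

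For the two refinements: $\delta M = 0$ means $M_a = 0$, i.e. $M = M(u)$ depends only on $u$; tracing through \eqref{eq:MLrel} and the form $f = g_1 F(g_2\zeta) + g_3\zeta$ gives $M = \alpha\mu$ with $\alpha, \mu$ computed from \eqref{eq:alphaZrel}, \eqref{eq:muLrel}; one finds $M$ is proportional to $(g_1 g_2^2)'/(g_1 g_2^2)$ times a power of $(g_1 g_2^2)$, and demanding no $\zeta$- (equivalently $a$-) dependence forces $g_1 g_2^2 = 1$, giving \eqref{eq:A*23}. Finally $M = 0$ means $\mu = 0$; by \eqref{eq:muLrel} this is $L_u = 0$, i.e. $Z_a$ independent of $u$, which combined with the already-established separated form pins down $g_2 = 1$ and then (from $\delta M = 0$, which $M=0$ implies) $g_1 = 1$, yielding \eqref{eq:A23}. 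The main obstacle I anticipate is bookkeeping the change of variables between $(\zeta,u)$ and $(a,u)$ cleanly — in particular making the "invert to recover $f$" step rigorous, since $a$ is defined implicitly via $f_{\zeta\zeta} = e^{4a}$ and one must argue that the separated structure of $Z(a,u)$ in $a$ is equivalent to the claimed multiplicative structure of $f$ in $\zeta$, handling the affine ambiguity (the $g_3\zeta$ term and the $v$-shift freedom in \eqref{eq:zetaxform}--\eqref{eq:fxform}) carefully. The condition $F'''\neq 0$ and $a_\zeta\neq 0$ should be exactly what is needed to keep the change of variables non-degenerate throughout.
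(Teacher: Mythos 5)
Your route is essentially the paper's: both arguments pass to the $(a,u)$ variables of Proposition \ref{prop:alnonzerorels}, show that \eqref{eq:prcondition} amounts to the statement that $L_{ua}/L_{aa}$ is independent of $a$, i.e.\ to the linear relation $L_u+g_7L_a=g_8$ of \eqref{eq:LaLucomb}, and then integrate along characteristics to reach \eqref{eq:prsolution}; your Wronskian identity $\delta\barA\,\delta^2\!M-\delta M\,\delta^2\!\barA=\baral^3(\barA_a M_{aa}-M_a\barA_{aa})$ is the paper's ``$\delta$ of the ratio $\delta\barA/\delta M$ vanishes'' in different packaging, and your handling of the sub-cases $\delta M=0$ ($L_{ua}=0$) and $M=0$ ($L_u=0$) matches the paper's case split.

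There is, however, one step that would fail as literally written. $M$ is \emph{not} a function of $(a,u)$: by \eqref{eq:MLrel} it carries the anti-holomorphic factor $e^{-2\bara}/\barZ_{\bara}$. Consequently the inferences ``$M_a=c(u)\barA_a$'' and ``$M=c(u)\barA+e(u)$'' are wrong as stated — the right-hand side of the latter has no $\bara$-dependence while the left-hand side does, so taken at face value it imposes spurious extra conditions (it is inconsistent unless $L_u\equiv0$). Your computation nevertheless survives, because $\delta$ has no $\partial_{\bara}$ component, so $\delta M=\baral M_a$ still holds and the $a$-independent prefactor factors out of the Wronskian; the correct intermediate conclusion is $L_{ua}=c(u)L_{aa}$, i.e.\ precisely \eqref{eq:LaLucomb} after integrating in $a$, which is where the paper lands (note that the paper strips the same factor $\barZ_{\bara}e^{2\bara}$ in the final line of its proof). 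You should also treat the degenerate branch $L_{aa}\equiv0$ (equivalently $\delta\barA=0$, constant $A$), where proportionality cannot be inferred from the vanishing Wronskian: there \eqref{eq:prcondition} holds trivially and the affine $L$ integrates directly to the form \eqref{eq:prsolution}, parallel to the paper's separate treatment of $\delta M=0$. With these repairs, and with the same tacit use of the form-preserving freedom \eqref{eq:zetaxform}--\eqref{eq:fxform} that the paper makes when discarding integration functions of $u$, your argument coincides with the published proof.
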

\begin{proof}
  Our first claim is that \eqref{eq:prsolution} is equivalent to the
  following conditions:
  \begin{align}  
    \label{eq:fzz}
    &f_{\zeta\zeta} = g_4F_1(g_2 \zeta) ,\quad g_4 = g_1 g_2^2,\,
    F_1(z)
    = F''(z),\\   \nonumber
    &a = F_2(g_2 \zeta) + g_5,\quad g_5 = \frac{1}{4}\log g_4,\; F_2(z)
    = \frac{1}{4} \log F_1(z)\\ \nonumber
    &Z = F_3(a-g_5)/g_2,\quad   F_3(F_2(z)) =z, \\ 
    \label{eq:LF4rel}
    &L = F_4(a-g_5) + g_6,\quad g_6 = -\log g_2,\; F_4(z) = \log
    F_3'(z),\\
    \label{eq:LaLucomb}
    & L_u+g_7 L_a=g_8 ,\quad g_7 = g_5'(u),\; g_8 = g_6'(u).
  \end{align}
  Note that since $\alpha\neq 0$, by \eqref{eq:alphaZrel}, we must
  have $L \neq0$.   We now consider two cases.  

  First, let us consider the case of $\delta M = 0$.  Note that in
  this case \eqref{eq:prcondition} holds trivially.  Also, in this
  case, $L_{ua}=0$, and hence without loss of generality, $g_5=0$.
  The case of $M=0$ is true if and only if $L_u=0$. Here $g_5=0$ and
  $g_1=g_2=1$.

  Let us now consider the generic case where $\delta M\neq 0$.  In this
  case, \eqref{eq:prcondition} can be restated as
  \[ \delta\left(\frac{\delta \barA}{\delta M}\right)  = 0.\]
  Observe that
  \begin{align*}
    &\frac{\delta (\barA/M)}{\delta (1/M)} = \barA - M
    \frac{\delta\barA}{\delta M}\\
    &  \delta\left(\frac{\delta (\barA/M)}{\delta (1/M)}\right) = -
    M \delta\left(\frac{\delta \barA}{\delta M}\right) 
  \end{align*}
  Hence, \eqref{eq:prcondition} is equivalent to
  \[ \delta\left(\frac{\delta (\barA/M)}{\delta (1/M)}\right)  =
  0.\]
  Next, we observe that
  \[ \frac{\delta (\barA/M)}{\delta (1/M)} = -1-L_a +\frac{L_u
    L_{aa}}{L_{au}}.\]
  Hence,
  \[ \bardel\left(\frac{\delta (\barA/M)}{\delta (1/M)} \right)=
  0\] Hence, by \eqref{eq:DeltaQ}, condition \eqref{eq:prcondition} is
  equivalent to 
  \begin{align*}
    &\frac{\delta(\barA/M)}{ \delta(1/M)} = g,\quad g=g(u)\\
    &\delta\left(\frac{1+\barA-g}{M}\right)  =  \delta \left(
      \frac{-L_a + g}{L_u}\right) \barZ_{\bara}  e^{2\bara} = 0
  \end{align*}
  The latter condition is equivalent to \eqref{eq:LaLucomb}.
\end{proof}

% \noindent
% Here is another essential property of the $\delta M = 0$ class of solutions
% \begin{prop}
%   \label{prop:tDelta}
%   Suppose that $\delta M=0,\; A\neq 1$. 
%   Then, up to a scale factor, $V=\tDelta$
%   is the unique real vector field such that $\cL_V\mu =0$ and
%   $V^3\neq 0$.
% \end{prop}
% \begin{proof}
%   We seek a $\tz$ so that
%   \[ \tDelta = \Delta + \tz^* \delta + \tz \bardel +\tz\tz^* D \]
%   satisfies $\tDelta M = 0$.  The necessary requirement is
%   \[ \Delta M + \tz \bardel M = \Delta M + \tz\alpha M(A-1) = 0.\]
%   This gives
%   \[   \tz = -\Delta M/(M\alpha(A-1)) \]
% \end{proof}

\begin{prop}
  \label{prop:B*23} Suppose that $B_1\neq 0$ and $A\barA \neq 1$. Then
  the following are equivalent: (i) $B_2/B_1=k$, is a real constant and
  (ii) $f(\zeta,u) = F(h^{\i k}\zeta)h^{2}+g\zeta$.
\end{prop}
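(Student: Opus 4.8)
The plan is to transfer everything into the normalized-tetrad identities of Proposition~\ref{prop:alnonzerorels}. The summand $g\zeta$ in (ii) does not enter $f_{\zeta\zeta}$, hence affects none of $a,Z,L,\alpha,\mu,A$, and therefore not $B$; so $B$ is a function of $L=L(a,u)$ alone. By \eqref{eq:muLrel} and \eqref{eq:ALrel},
\[
  B=\mu A-\barmu=-\e^{-a-\bara}\bigl(L_u\,(1+(L_a)^*)+(L_u)^*\bigr),
\]
and since $\e^{-a-\bara}>0$ is real, condition (i)---that $B$ be a real multiple of $c:=1+\i k$---is, after using $1+(L_a)^*=(c+\bar N)/\bar c$ with $N:=cL_a+2\i k$, equivalent to $L_u\bar N=(L_u)^*N$; that is, to
\[
  \Lambda:=\frac{N}{L_u}=\frac{cL_a+2\i k}{L_u}\in\Rset
\]
(the quotient makes sense since $B_1\ne0$ forces $\mu\ne0$, hence $L_u=\mu\,\e^{a+\bara}\ne0$). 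So it suffices to prove: $\Lambda\in\Rset$ if and only if $f$ has the form (ii).

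For (ii)$\Rightarrow$(i): with $f=h^2F(h^{\i k}\zeta)$, the recipe of Proposition~\ref{prop:alnonzerorels} gives $a=\tfrac{c}{2}\log h+\tfrac14\log F''(h^{\i k}\zeta)$ and, with $F_4$ as in the proof of Proposition~\ref{prop:genform}, $L=\log Z_a=-\i k\log h+F_4\!\bigl(a-\tfrac{c}{2}\log h\bigr)$; hence $L_a=F_4'(a-\tfrac c2\log h)$ and $L_u=-\tfrac{h'}{2h}(cL_a+2\i k)$, so $\Lambda=-2h/h'$, which is real because $h$ is. This direction is immediate.

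For (i)$\Rightarrow$(ii): assume $\Lambda\in\Rset$. Then $\Lambda$ is manifestly a function of $(a,u)$, but also equals $\bar\Lambda=\bar N/(L_u)^*$, a function of $(\bara,u)$, so $\Lambda=\Lambda(u)$; differentiating in $a$ gives $cL_uL_{aa}=N\,L_{au}$. Suppose first $L_{au}\ne0$ (equivalently $\delta M\ne0$). Then $L_uL_{aa}/L_{au}=L_a+2\i k/c$, so the quantity $-1-L_a+L_uL_{aa}/L_{au}$ from the proof of Proposition~\ref{prop:genform} equals the constant $-1+2\i k/c=-\bar c/c$; being a function of $u$, it forces condition \eqref{eq:prcondition}, and Proposition~\ref{prop:genform} yields $f=g_1F(g_2\zeta)+g_3\zeta$. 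In that parametrization the same quantity equals $-1-g_8/g_7$, forcing $g_8=-\tfrac{2\i k}{c}g_7$, and then $\Lambda$ reduces to $-c/g_7$, forcing $g_7/c\in\Rset$. Integrating $g_7=g_5'$ and $g_8=g_6'$ and absorbing the two integration constants into rescalings of $F$ and $\zeta$, we may take $g_5=\tfrac c2\ell(u)$ with $\ell$ real and $g_6=-\i k\ell$; then $g_1=\e^{4g_5+2g_6}=\e^{2\ell}=h^2$ and $g_2=\e^{-g_6}=h^{\i k}$ with $h:=\e^{\ell}$ real, which is (ii). Suppose instead $L_{au}=0$; then $cL_uL_{aa}=0$ forces $L_{aa}=0$, so $L=pa+q(u)$ with $p$ constant, and $p\ne0,-2$ since otherwise $A\in\{-1,1\}$, contradicting $A\barA\ne1$. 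Now $\Lambda=(cp+2\i k)/q'(u)\in\Rset$ forces $q(u)=q_0+(cp+2\i k)\,\sigma(u)$ with $\sigma$ real; feeding $L=pa+q$ through Proposition~\ref{prop:alnonzerorels} gives $f_{\zeta\zeta}=\e^{4a}=\bigl(p\,\e^{-q(u)}(\zeta-Z_0(u))\bigr)^{4/p}$, and two integrations, followed by removal of $Z_0(u)$ and the pure-$u$ integration constant by a coordinate change \eqref{eq:zetaxform}--\eqref{eq:fxform}, give $f=C(u)\zeta^m+g_3\zeta$ with $m:=4/p+2$ constant and $C(u)=C_1\e^{-4q(u)/p}$. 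Writing this as $h^2F(h^{\i k}\zeta)+g_3\zeta$ with $F(w)$ a constant multiple of $w^m$ (or of $w\log w$ when $m=1$) requires $C(u)=\mathrm{const}\cdot h(u)^{2+\i km}$, which via the identity $\dfrac{cp+2\i k}{p(2+\i km)}=\dfrac{cp+2\i k}{2(cp+2\i k)}=\tfrac12$ reduces to $\log h(u)=-2\sigma(u)$, real; so again (ii) holds.

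The substance of the proof---and its only real difficulty---lies in these algebraic collapses to \emph{constants}: that (i) reduces to reality of the single scalar $\Lambda$, that $-1-L_a+L_uL_{aa}/L_{au}$ and $\Lambda$ (evaluated in the $G_1$ parametrization of Proposition~\ref{prop:genform}) are forced to the rigid values $-\bar c/c$ and $-c/g_7$, and that $\frac{cp+2\i k}{p(2+\i km)}\equiv\tfrac12$. It is precisely these identities that pin down the exponent pattern $g_1=h^2$, $g_2=h^{\i k}$ with $h$ real; any other pairing of the two $u$-dependent factors multiplying $F$ in \eqref{eq:prsolution} makes $\Lambda$ non-real. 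What remains is bookkeeping: tracking the constants absorbed into $F$ and into the coordinate transformations, and checking that the degeneracies $\mu=0$ (barred by $B_1\ne0$) and $A\barA=1$ (barred by hypothesis, which in the power branch excludes $p\in\{0,-2\}$) do not intervene.
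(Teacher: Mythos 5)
Your proof is correct, and its first half coincides with the paper's: both pass to the potential $L(a,u)$ of Proposition \ref{prop:alnonzerorels} and reduce (i) to the statement that the scalar $(cL_a+2\i k)/L_u$ (your $\Lambda$; the paper's $(-2\i k-CL_a)/L_u$ up to sign) is real, and both then invoke ``real and holomorphic in $a$, hence a function of $u$ alone.'' Your version of this reduction is actually tighter: you get the equivalence of (i) with $\Lambda\in\Rset$ by direct algebra on $B=-\e^{-a-\bara}\bigl(L_u(1+(L_a)^*)+(L_u)^*\bigr)$, and your computation $\Lambda=-2h/h'$ for metrics of form (ii) supplies the (ii)$\Rightarrow$(i) direction explicitly, whereas the paper derives only the forward implication through $(C/B)(A\barA-1)=(C+\barC A)/\barmu$ and simply asserts the converse. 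Where you genuinely diverge is the integration step: the paper reads $\Lambda=\Lambda(u)$ as the linear PDE $L_u+(1+\i k)h_1L_a=-2\i k h_1$ and solves it in one sweep by characteristics, $L=F(a-(1+\i k)h_2)-2\i k h_2$, which handles the generic and affine-in-$a$ cases uniformly and passes straight to $f$; you instead differentiate the constancy of $\Lambda$ in $a$ and split on $L_{au}$, using Proposition \ref{prop:genform} as the key lemma when $L_{au}\neq0$ (the quantity $-1-L_a+L_uL_{aa}/L_{au}$ being forced to $-\bar{c}/c$) and integrating the affine case to a power law, with the identity $(cp+2\i k)/\bigl(p(2+\i k m)\bigr)=1/2$, when $L_{au}=0$. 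Your route is longer and needs the extra case, but it exhibits why the pattern $g_1=h^2$, $g_2=h^{\i k}$ with $h$ real is forced and ties the result to the general ansatz \eqref{eq:prsolution}; the paper's characteristics argument is shorter and case-free. Both arguments, consistently with the paper's conventions, work modulo the form-preserving transformations \eqref{eq:zetaxform}--\eqref{eq:fxform} when discarding integration ``constants'' such as the $u$-dependent translation of $Z$.
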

\begin{proof}
  Let $C=1+\i k$ so that condition (i) is equivalent to
  \begin{equation}
    \label{eq:BCrel}
    \frac{B}{B^*} = \frac{B_1 + \i B_2}{B_1- \i B_2} =
    \frac{C}{C^*},
  \end{equation}
  or $\im(B/C)=0$.  Suppose that (i) holds. By \eqref{eq:Bdef},
  \begin{align*}
    A^* &=  (B^*+\mu)/\mu^*\\
    A A^*-1 &= A(B^*+\mu)/\mu^* -1 =(AB^*+A\mu-\mu^*)/\mu^*
    = (A B^*+B)/\mu^*\\
    &= (B/\mu^*)\left( 1+(B^*/B)  A\right). 
  \end{align*}
  Hence, assuming \eqref{eq:BCrel} and by \eqref{eq:muLrel} \eqref{eq:ALrel},
  \begin{gather}
    e^{-a-a^*}(C/B)(A\barA-1) = e^{-a-a^*}(C+\barC A)/\barmu =
    (-2\i k - CL_a)/L_u
  \end{gather}
  By Proposition \ref{prop:alnonzerorels}, the above is both real and
  holomorphic in $a$, and hence independent of $a$.  Hence,
  \begin{gather}
    \label{eq:B*23LuLarel}
    L_u  + (1+\i k)h_1 L_a = -2\i k h_1, 
  \end{gather}
  where $h_1=h_1(u)\neq 0$ is real.  Conversely, \eqref{eq:B*23LuLarel}
  with $h_1\neq 0$ implies condition (i).  Hence,
  \begin{gather*}
    L = F(a-(1+\i k) h_2) - 2\i k h_2,\quad h_2'(u) = h_1(u) \\
    Z = F(a-(1+\i k) h_2)\e^{-2\i k h_2}\\
    a = (1+\i k) h_2+ F(\e^{2\i k h_2}\zeta)\\
    f_{\zeta\zeta} = h^{2+2\i k} F(h^{\i k} \zeta),\quad h =\e^{2
      h_2}\\
    f = F(h^{\i k} \zeta) h^{2}+ g\zeta
  \end{gather*}
\end{proof}
\begin{prop}
  \label{prop:g1e} Suppose that  $A \neq 1$. Then,
  the following are equivalent: (i) $B_1=0$, and
  (ii) $f(\zeta,u) = F(\e^{\i h}\zeta)+g\zeta$.
\end{prop}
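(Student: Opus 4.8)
The plan is to follow the strategy of Proposition~\ref{prop:B*23}; the only genuinely new ingredient is the algebraic reformulation of the hypothesis $B_1=0$. By \eqref{eq:Bdef},
\[ B+B^* \;=\; \mu(A-1) + \mu^*(A^*-1) \;=\; 2\,\re\!\big(\mu(A-1)\big), \]
so condition (i) is equivalent to $\mu(A-1)$ being purely imaginary. If $\mu=0$ this holds automatically, and then $M=\alpha\mu=0$ makes the identity \eqref{eq:prcondition} trivial, so Proposition~\ref{prop:genform} produces $f=F(\zeta)+g\zeta$, which is (ii) with $h\equiv 0$. Hence I may assume $\mu\neq 0$ for the rest of the argument.

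Next I substitute the formulas of Proposition~\ref{prop:alnonzerorels}. Writing $\bar L$ for the conjugate function, so that $\bar L_{\bar a}=(L_a)^*$, equations \eqref{eq:muLrel} and \eqref{eq:ALrel} give $\mu=\e^{-a-a^*}L_u$ and $A-1=-2-\bar L_{\bar a}$; clearing the real, nonzero factor $\e^{-a-a^*}$, the condition $\re(\mu(A-1))=0$ becomes
\[ L_u\,(2+\bar L_{\bar a}) + \bar L_u\,(2+L_a)\;=\;0 . \]
Since $A\neq 1$ we have $2+L_a=1-A^*\neq 0$, and conjugating also $2+\bar L_{\bar a}\neq 0$; this is precisely where the hypothesis $A\neq 1$ is used, and it is the same condition that makes $\tX$ in \eqref{eq:tXdef} well defined. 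Because $\mu\neq 0$ forces $L_u,\bar L_u\neq 0$, I may divide and obtain
\[ \frac{L_u}{2+L_a}\;=\;-\,\frac{\bar L_u}{2+\bar L_{\bar a}} . \]
The left-hand side is holomorphic in $a$ and the right-hand side is anti-holomorphic in $a$, so both equal a function $\chi(u)$ of $u$ alone; comparing the left identity with its conjugate gives $\chi^*=-\chi$, i.e.\ $\chi=\i h_1(u)$ with $h_1$ real. This step --- a quantity that is simultaneously holomorphic and anti-holomorphic in $a$ must be $a$-independent, together with a reality check --- is the analogue of the key step in Proposition~\ref{prop:B*23}, and I expect it to be the crux of the proof.

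The remaining integration proceeds exactly as in Proposition~\ref{prop:B*23}. The last display is the linear equation $L_u+\i h_1 L_a=-2\i h_1$, whose characteristics $a-\i h_2(u)=\mathrm{const}$ (with $h_2'=h_1$) integrate it to $L=F(a-\i h_2)-2\i h_2$ for a holomorphic $F$. Then $Z_a=\e^L$ integrates --- after absorbing an additive function of $u$ into the residual gauge freedom \eqref{eq:zetaxform}--\eqref{eq:fxform}, which only alters the $g\zeta$ term --- to $\zeta=\e^{-2\i h_2}F(a-\i h_2)$; solving for $a$ gives $a=\i h_2+q(\e^{2\i h_2}\zeta)$ for a holomorphic $q$, so by \eqref{eq:adef} $f_{\zeta\zeta}=\e^{4a}=\e^{2\i h}H(\e^{\i h}\zeta)$ with $h:=2h_2$ real and $H:=\e^{4q}$. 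Integrating twice in $\zeta$, and using \eqref{eq:fxform} to discard the resulting additive function of $u$, gives $f=F(\e^{\i h}\zeta)+g\zeta$ (with $F'''\neq 0$ automatic), which is (ii). For the converse I run the computation backwards: from $f=F(\e^{\i h}\zeta)+g\zeta$ one gets $a=\i h_2+\tfrac14\log F''(\e^{2\i h_2}\zeta)$ with $h_2=h/2$, hence $L=\log Z_a$ is again of the form $G(a-\i h_2)-2\i h_2$ with $G$ holomorphic, hence $L_u=-\i h_2'(2+L_a)$; substituting this back yields $\mu(A-1)=\i\,\e^{-a-a^*}\,h_2'\,|2+L_a|^2$, which is purely imaginary, so $B_1=0$. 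The only part that needs care is keeping track of which quantities are holomorphic, and which anti-holomorphic, in $a$.
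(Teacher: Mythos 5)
Your proof is correct and follows essentially the same route as the paper: rewriting $B_1=0$ via \eqref{eq:muLrel} and \eqref{eq:ALrel} as $(L_a+2)\barL_u+L_u(L_a+2)^*=0$, using $A\neq 1$ and holomorphy in $a$ to conclude $L_u$ is an imaginary $u$-dependent multiple of $L_a+2$, and integrating along characteristics to reach $f=F(\e^{\i h}\zeta)+g\zeta$. Your separate treatment of $\mu=0$ and the explicit converse are harmless additions (the paper leaves the steps reversible), and the sign slip between $\chi=\i h_1$ and the equation $L_u+\i h_1L_a=-2\i h_1$ is immaterial since $h_1$ is an arbitrary real function.
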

\begin{proof}
  By \eqref{eq:Bdef} \eqref{eq:muLrel}
  \eqref{eq:ALrel}, condition (i) is equivalent to
  \[ (L_a+2) \barL_u + L_u (L_a+2)^* = 0.\]
  where $L_a\neq -2$, by assumption.
  Hence, 
  \begin{gather*}
    L_u = \i h_1 (L_a+2),\quad L_u  - \i h_1 L_a = 2 \i h_1,
  \end{gather*}
  where $h_1=h_1(u)$ is real. Hence,
  \begin{gather*}
    L = F(a+\i h_2) + 2\i h_2,\quad h_2'(u) = h_1(u) \\
    Z = F(a+\i h_2)\e^{2\i h_2}\\
    a = -\i h_2+ F(\e^{-2\i h_2}\zeta)\\
    f_{\zeta\zeta} = h^{2\i h} F(\e^{\i h} \zeta),\quad h =-h_2/2\\
    f = F(\e^{\i h} \zeta) + g\zeta
  \end{gather*}
\end{proof}

\begin{prop}
  \label{prop:aa*}
  The following are equivalent: (i) $A A^*=1$, and (ii) $L= Pa+g$ whereb
  $g=g(u)$ and $P=P(u)$ such that $P P^*+P+P^*=0$.  
\end{prop}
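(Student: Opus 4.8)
The plan is to convert condition (i) into a constraint on $L_a$ by means of the identity $A=-1-(L_a)^*$ from \eqref{eq:ALrel}, and then to invoke the elementary fact that a holomorphic function of constant modulus is constant.

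First I would rewrite (i). Since $A^*=-1-L_a$ by \eqref{eq:ALrel} and $\overline{1+L_a}=1+(L_a)^*$, we have $AA^*=(1+(L_a)^*)(1+L_a)=|1+L_a|^2$, so (i) is equivalent to $|1+L_a|=1$. Working in the $\alpha\neq 0$ branch we have available the substitution $\zeta=Z(a,u)$ of Proposition \ref{prop:alnonzerorels}, with $Z_a=1/a_\zeta\neq 0$ by \eqref{eq:adef}; hence for each fixed $u$ the function $Z$, and therefore $L=\log Z_a$, is holomorphic in $a$, so $h:=1+L_a$ is holomorphic in $a$. The main step is then the classical observation that a holomorphic function of constant modulus is constant: from $h\bar h=1$ we get $0=\partial_a(h\bar h)=h_a\bar h$, and $\bar h\neq 0$, so $h_a=0$. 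Thus $L_a$ depends only on $u$, say $L_a=P=P(u)$, and integrating in $a$ gives $L=Pa+g$, where $g:=L-Pa$ has vanishing $a$-derivative and hence $g=g(u)$. Finally $|1+P|^2=AA^*=1$ unwinds to $PP^*+P+P^*=0$, which establishes (ii).

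For the converse I would substitute directly: if $L=Pa+g$ with $P=P(u)$, $g=g(u)$ and $PP^*+P+P^*=0$, then $L_a=P$, so by \eqref{eq:ALrel} $A=-1-P^*$ and $A^*=-1-P$, whence $AA^*=(1+P^*)(1+P)=1+(P+P^*+PP^*)=1$, which is (i). The computation is routine; the only point I would be careful with is the holomorphy of $L$ in $a$, since the ``constant modulus $\Rightarrow$ constant'' step genuinely needs $h$ to be holomorphic rather than merely real-analytic — this is precisely why it matters that $a_\zeta\neq 0$ and that we are in the $\alpha\neq 0$ regime, where the change of variables $\zeta=Z(a,u)$ is legitimate.
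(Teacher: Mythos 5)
Your argument is correct and is essentially the paper's: both proofs hinge on the fact that, by \eqref{eq:ALrel}, $A^*=-1-L_a$ is holomorphic in $a$, so the condition $AA^*=1$ (your $|1+L_a|^2=1$, the paper's ``$A=1/A^*$ is simultaneously holomorphic and anti-holomorphic'') forces $L_{aa}=0$, i.e. $L_a=P(u)$, after which the algebraic relation $PP^*+P+P^*=0$ and the trivial converse follow by direct substitution. Your write-up merely makes explicit the Wirtinger-derivative step and the reverse implication that the paper leaves implicit.
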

\begin{proof}
  By \eqref{eq:ALrel}, $A^*$ is holomorphic in $a$.  Hence, if $A A^*
  =1$ then, $A$ must be independent of $a$; i.e., $L_a = P =P(u)$.
  Since $A^*=1/A=-1-L_a$,  condition (ii) follows.
\end{proof}

\begin{prop}
  \label{prop:P23}
  Suppose that $A^2\neq 1$.  The following are equivalent: (i)
  $AA^*=1$ and (ii) $f(\zeta,u) = (\e^{g_1}\zeta)^{\i h} + g_2\zeta$
\end{prop}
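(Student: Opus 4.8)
The plan is to prove Proposition \ref{prop:P23} by combining Proposition \ref{prop:aa*}, which already pins down the functional form of $L$ when $AA^*=1$, with the reconstruction machinery of Proposition \ref{prop:alnonzerorels} that expresses $f$ in terms of $a$, $Z$, and $L$. Since the equivalence $AA^*=1 \iff L = Pa+g$ (with $PP^*+P+P^*=0$) is given, the only real work is to translate the condition $L=Pa+g$, together with $A^2\neq1$, into the explicit solution form (ii). First I would observe that $A^2\neq 1$ together with $AA^*=1$ forces $A$ to be genuinely complex (if $A$ were real, then $A^2=(AA^*)=1$), so $P=L_a=-1-A^*$ is a non-real function of $u$ with $\re P = -|P|^2 \cdot(\text{something})$; more precisely the constraint $PP^*+P+P^*=0$ says $|P+1|^2=1$, so $P+1$ lies on the unit circle and we may write $P = -1 + \e^{\i\theta(u)}$ for a real function $\theta$, and the non-reality of $A$ just says $\e^{\i\theta}\neq \pm1$ generically.

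Next I would integrate back up the chain in Proposition \ref{prop:alnonzerorels}. From $L = \log Z_a = Pa+g$ we get $Z_a = \e^{Pa+g}$, hence $Z = \e^{g}\,\e^{Pa}/P + (\text{function of }u)$; absorbing the additive function into a redefinition (it corresponds to the allowed shift $h(u)$ in the coordinate freedom \eqref{eq:zetaxform}) we may take $Z = \e^{g}\e^{Pa}/P$. Inverting, $\zeta = Z(a,u)$ gives $\e^{Pa} = P\e^{-g}\zeta$, so $a = \frac{1}{P}\log(P\e^{-g}\zeta) = \frac{1}{P}\big(\log\zeta + \log(P\e^{-g})\big)$. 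Then from \eqref{eq:adef}, $f_{\zeta\zeta} = \e^{4a}$, so $f_{\zeta\zeta} = \exp\!\big(\tfrac{4}{P}\log\zeta + \tfrac{4}{P}\log(P\e^{-g})\big) = c(u)\,\zeta^{4/P}$ for a suitable $u$-dependent coefficient $c(u)$. Integrating twice in $\zeta$ (noting $4/P \neq -1,-2$ is what the hypothesis $A^2\neq1$, i.e. $A\neq\pm1$, guarantees — $A=-1$ would be $P=0$, $A=1$ would be $L_a=-2$, i.e. $P=-2$, i.e. $4/P=-2$), one obtains $f = c_1(u)\,\zeta^{4/P+2} + (\text{linear in }\zeta)$. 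The linear term $+g_2\zeta$ is exactly the residual freedom, and the pure power term, after writing the exponent and absorbing the $u$-dependent multiplicative constant into the base, should be brought to the form $(\e^{g_1}\zeta)^{\i h}$ by matching $\i h = 4/P + 2$ and checking this is real-free in the right way: since $P+1 = \e^{\i\theta}$, one computes $4/P + 2 = 2(P+2)/P = 2(1+\e^{\i\theta})/(\e^{\i\theta}-1)$, which is purely imaginary because $(1+\e^{\i\theta})/(\e^{\i\theta}-1) = \i\cot(\theta/2)\cdot(-1)$ up to sign — so the exponent is indeed of the form $\i h$ with $h=h(u)$ real, as claimed.

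The main obstacle I anticipate is the bookkeeping in the last step: verifying that the exponent $4/P+2$ is purely imaginary (so that the solution really has the stated form with a \emph{real} $h$), and correctly absorbing the $u$-dependent multiplicative prefactor $c_1(u)$ into the form $(\e^{g_1(u)}\zeta)^{\i h(u)}$ — this requires $c_1 = \e^{\i h g_1}$ for a suitable complex $g_1(u)$, which is always solvable. One must also be careful that the allowed coordinate transformations \eqref{eq:zetaxform}–\eqref{eq:fxform} are precisely what permits dropping the additive constant in $Z$ and normalizing away extraneous $u$-dependent factors, so that the reduction to form (ii) is legitimate rather than merely up to further equivalence. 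The converse direction (ii)$\Rightarrow$(i) is the easy check: starting from $f = (\e^{g_1}\zeta)^{\i h}+g_2\zeta$ one computes $f_{\zeta\zeta}$, then $a$, then $L = \log Z_a$, finds $L_a$ constant in $a$ (a function of $u$ only), reads off $P=L_a$, verifies $PP^*+P+P^*=0$ from $\i h$ being imaginary, and invokes Proposition \ref{prop:aa*} (or \eqref{eq:ALrel} directly, $A=-1-L_a^*$, $A^*=-1-L_a$, so $AA^*=(1+L_a)(1+L_a^*)=|1+P|^2=1$).
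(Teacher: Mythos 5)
Your proposal is correct and follows essentially the same route as the paper: invoke Proposition \ref{prop:aa*} to get $L=Pa+g$ with the circle constraint on $P$, integrate back through $Z$ and $a$ to get $f_{\zeta\zeta}=c(u)\,\zeta^{4/P}$, and note that $4/P+2$ is purely imaginary; your parametrization $P=-1+\e^{\i\theta}$ is merely a cosmetic variant of the paper's $\re(1/P)=-1/2$, i.e.\ $P=-2\i/(h+\i)$. The only extra content is your explicit converse check via \eqref{eq:ALrel}, which the paper leaves implicit.
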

\begin{proof}
  By  Proposition \ref{prop:aa*}, condition (i) is equivalent to
  \[ L = P a - g,\quad g =g(u) \]
  where, by assumption, $P\neq 0, -2$.  Hence $\re(1/P) =-1/2$, whence
  \[ P= -2\i /(h+\i ).\]
  where $h=h(u)\neq 0$ is real.  Hence,
  \begin{gather*}
    Z = \exp(P a-g),\quad     a = (\log(\zeta)+g)/P,\quad
    f_{\zeta\zeta} = (\e^g\zeta)^{4/P}=(\e^g\zeta)^{-2+2\i h}
  \end{gather*}
  Hence (ii) follows with 
  \[ g= (1+P/2) g_1 + \text{const.}\]
\end{proof}
\begin{prop}
  \label{prop:E23}
  The following are equivalent: (i) $A=-1$ and (ii) $f(\zeta,u) =
  \exp(g_1\zeta) + g_2\zeta$
\end{prop}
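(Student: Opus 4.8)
The plan is to run the same ``translate-and-integrate'' scheme already used for the exponential-, power-, and logarithmic-type forms in Propositions~\ref{prop:g1e}--\ref{prop:P23}, with Proposition~\ref{prop:alnonzerorels} supplying the dictionary between the Cartan invariants and $f$. The linchpin is \eqref{eq:ALrel}: since $A=-1-(L_a)^*$, condition (i), $A=-1$, is equivalent to $L_a=0$. Unlike the neighbouring propositions, this leaves no differential equation to integrate -- it simply says $L=L(u)$ depends on $u$ alone -- so I expect this to be the most rigid and shortest case of the family.

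For the forward direction I would argue: $L_a=0$ gives $Z_a=\e^{L(u)}$, hence $Z=\e^{L}a+h$ for some $h=h(u)$, and inverting, $a=\e^{-L}(\zeta-h)$. Then \eqref{eq:adef} gives $f_{\zeta\zeta}=\e^{4a}=c(u)\exp\!\big(4\e^{-L}\zeta\big)$, and integrating twice in $\zeta$ yields $f=c_1(u)\exp\!\big(4\e^{-L}\zeta\big)+g_3(u)\zeta+g_4(u)$. To reach the stated normal form I would drop $g_4(u)$ (absorbed by the coordinate freedom \eqref{eq:zetaxform}--\eqref{eq:fxform}, the metric seeing $f$ only through $f+\bar f$), and perform a shift $\zeta\mapsto\zeta+h_*(u)$ from the same family to absorb the prefactor $c_1(u)$ into the argument of the exponential, at the cost of a further modification of the still-free linear coefficient. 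This produces $f=\exp(g_1\zeta)+g_2\zeta$ with $g_1=4\e^{-L}$; in particular $g_1\neq0$, consistent with the standing hypothesis $f_{\zeta\zeta\zeta}\neq0$.

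The converse is a direct check: from $f=\exp(g_1\zeta)+g_2\zeta$ one gets $f_{\zeta\zeta}=g_1^2\exp(g_1\zeta)$, so $a=\tfrac14 g_1\zeta+\tfrac12\log g_1$ is affine in $\zeta$; hence $Z_a=4/g_1$ is independent of $a$, so $L_a=0$ and $A=-1$ by \eqref{eq:ALrel}. The only genuinely fiddly point is the bookkeeping in the forward direction: confirming that \eqref{eq:zetaxform}--\eqref{eq:fxform} indeed permit one to strip both the multiplicative prefactor of the exponential and the constant-in-$\zeta$ term while keeping the linear coefficient an arbitrary function of $u$. Everything else is the routine chain of substitutions $L\to Z\to a\to f_{\zeta\zeta}\to f$ already carried out in the adjacent propositions.
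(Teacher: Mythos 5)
Your proposal follows the paper's own route: use \eqref{eq:ALrel} to restate $A=-1$ as $L_a=0$, then run the chain $L\to Z\to a\to f_{\zeta\zeta}\to f$ and normalize with the coordinate freedom \eqref{eq:zetaxform}--\eqref{eq:fxform}, exactly as in the adjacent propositions. Your version is simply more explicit than the paper's one-line proof about the integration functions (the prefactor and additive terms) and how they are absorbed, which is fine and correct.
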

\begin{proof}
  By the Lemma, condition (i) can be restated as $L = g$. Hence,
  \begin{gather*}
    Z = ga,\quad a = g\zeta,\quad f_{\zeta\zeta} = \exp(\e^g\zeta),\quad
    f = \exp(g_1\zeta) + g_2\zeta
  \end{gather*}
\end{proof}
\begin{prop}
  \label{prop:L23}
  The following are equivalent: (i) $A=1$ and (ii) $f(\zeta,u) =
  g_1\log\zeta + g_2\zeta$.  
\end{prop}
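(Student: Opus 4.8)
The plan is to reuse the template of Propositions~\ref{prop:aa*}--\ref{prop:E23}: convert the invariant condition $A=1$ into an ODE for $L(a,u)=\log Z_a$ via \eqref{eq:ALrel}, integrate to recover $Z$ and hence $a$ as a function of $\zeta$, and then reconstruct $f$ from the defining relation $f_{\zeta\zeta}=e^{4a}$ of \eqref{eq:adef}, discarding the spurious $u$-dependent integration functions with the residual coordinate freedom \eqref{eq:zetaxform}--\eqref{eq:fxform}. Structurally this is the $A=1$ boundary case that was excluded from Proposition~\ref{prop:P23}: there $P=L_a$ would have to satisfy $P=-2$, which indeed obeys $PP^*+P+P^*=0$, so no new ideas are required.

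Concretely, I would first apply \eqref{eq:ALrel}, $A=-1-(L_a)^*$, together with the fact (already used in the proof of Proposition~\ref{prop:aa*}) that $A^*=-1-L_a$ is holomorphic in $a$: the equation $A=1$ then forces the constant value $(L_a)^*=-2$, i.e.\ $L_a=-2$, equivalently $L=-2a+g$ with $g=g(u)$ complex-valued; the converse direction is immediate since $L=-2a+g$ gives $L_a=-2$ and hence $A=1$ through \eqref{eq:ALrel}. Next I would integrate $Z_a=e^{L}=e^{g-2a}$ to get $Z=c_1(u)e^{-2a}+c_2(u)$ with $c_1=-\tfrac12 e^{g}\neq 0$, absorb the additive term $c_2(u)$ by the shift $\zeta\mapsto\zeta+c_2(u)$ allowed in \eqref{eq:zetaxform}, invert to $e^{-2a}=\zeta/c_1$, so that $f_{\zeta\zeta}=e^{4a}=c_1^2/\zeta^2$, and integrate twice in $\zeta$ to obtain $f=-c_1^2\log\zeta+p(u)\zeta+q(u)$; the pure function $q(u)$ is removed via \eqref{eq:fxform}, the coefficient of $\zeta$ is renamed $g_2$, and $-c_1^2$ is renamed $g_1$, which is form (ii). Running the same computation backwards from $f=g_1\log\zeta+g_2\zeta$ gives $f_{\zeta\zeta}=-g_1/\zeta^2$, hence $a=\tfrac14\log(-g_1/\zeta^2)$, $Z_a=-2\sqrt{-g_1}\,e^{-2a}$ up to a $u$-dependent branch, so $L_a=-2$ and $A=1$.

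The only real obstacle is the routine bookkeeping: keeping track of which constants of integration are genuine functions of $u$ and checking that each can be legitimately absorbed by one of the residual transformations \eqref{eq:zetaxform}--\eqref{eq:fxform} without disturbing the normalizations $\Psi_4\to1$, $\gamma\to0$ that pin down the invariant tetrad. Since this is exactly the type of bookkeeping already performed in Propositions~\ref{prop:genform}--\ref{prop:E23}, I expect the final argument to reduce to a two-line computation.
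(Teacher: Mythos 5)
Your proposal is correct and follows essentially the same route as the paper's own proof: restate $A=1$ as $L=-2a+g$ via \eqref{eq:ALrel}, integrate $Z_a=\e^{L}$, invert to express $a$ in terms of $\zeta$, and recover $f$ from $f_{\zeta\zeta}=\e^{4a}$, with the converse by running the computation backwards. The paper simply suppresses the bookkeeping of the $u$-dependent integration functions and their absorption via \eqref{eq:zetaxform}--\eqref{eq:fxform} that you spell out.
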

\begin{proof}
  By the Lemma, condition (i) can be restated as $L = -2a+g$.  Hence,
  \begin{gather*}
    Z = \exp(-2a+g),\quad -2a = \log(\zeta)-g,\quad  f_{\zeta\zeta} =
    \e^{2g}\zeta^{-2},\quad 
    f = g_1\log\zeta + g_2\zeta
  \end{gather*}
\end{proof}
\noindent
Note that if $A=1$, then $B=\mu-\mu^*$.  Hence, if $A=1$, then $B_1=0$
automatically.

\section{The $(0,1)$ class}
\label{sect:01}
Above we showed that $\alpha,\baral$ generate the 1st order
invariants.  Generically, these are independent and hence,
generically, the invariant count is $(0,2)$.  However, an important
subclass occurs for which $\d\alpha\wedge \d\baral = 0$.  We will
refer to these as the $(0,1)$ solutions. The next two Propositions
characterize the (0,1) solutions in terms of invariants.

\begin{prop}
  \label{prop:albaldep}
  If $\mu\neq 0$, then $\d\alpha\wedge \d\baral = 0$ if and only if
  $B=0$.  In this case, the condition $A A^*=1$ follows
  automatically. If $\mu=0$, then $\d\alpha\wedge \d\baral = 0$ if and
  only if $A\barA =1$. 
\end{prop}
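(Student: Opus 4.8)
The plan is to analyze the condition $\d\alpha\wedge\d\baral = 0$ directly using the relations established in Proposition \ref{prop:alnonzerorels}. Recall that $\alpha$ is annihilated by $D$, so $\d\alpha$ is a combination of $\omega^1,\omega^2,\omega^3$; by \eqref{eq:Dalpha}--\eqref{eq:Deltaalpha} and the definition \eqref{eq:Adef} of $A$, we have $\delta\alpha = \alpha\baral$, $\bardel\alpha = A\alpha^2$, $\Delta\alpha = -\barmu\alpha$, with conjugate formulas for $\baral$. Thus $\d\alpha\wedge\d\baral = 0$ is equivalent to the $3\times 2$ matrix of coefficients $(\delta\alpha,\bardel\alpha,\Delta\alpha)$ versus $(\delta\baral,\bardel\baral,\Delta\baral)$ having rank $\leq 1$, i.e.\ the vanishing of all three $2\times 2$ minors.

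The first step is to compute those minors. The $\delta$-$\bardel$ minor is $\delta\alpha\,\bardel\baral - \bardel\alpha\,\delta\baral = \alpha\baral\cdot\baral\barA\alpha - A\alpha^2\cdot(\baral)^2 = \alpha^2(\baral)^2(\barA - A)$; wait, more carefully, $\bardel\baral = \barA(\baral)^2$ and $\delta\baral = (\baral)^2/\alpha\cdot\alpha$... I will recompute cleanly using $\delta\baral = (\bardel\alpha)^* $? No — I should be careful. The cleanest approach: write $\beta := \alpha$, note $\bardel\alpha = A\alpha^2$ so $\delta\baral = \barA(\baral)^2$ by conjugation of $\bardel\alpha = A\alpha^2$ (since $\delta = \bardel^*$), while $\bardel\baral = (\delta\alpha)^* = \baral\alpha$, and $\Delta\baral = -\mu\baral$. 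So the $\delta$-$\bardel$ minor is $\delta\alpha\cdot\bardel\baral - \bardel\alpha\cdot\delta\baral = \alpha\baral\cdot\baral\alpha - A\alpha^2\cdot\barA(\baral)^2 = \alpha^2(\baral)^2(1 - A\barA)$. The $\bardel$-$\Delta$ minor is $\bardel\alpha\cdot\Delta\baral - \Delta\alpha\cdot\bardel\baral = A\alpha^2\cdot(-\mu\baral) - (-\barmu\alpha)\cdot\alpha\baral = \alpha^2\baral(\barmu - A\mu) = -\alpha^2\baral\, B^*$ using \eqref{eq:Bdef} conjugated (since $B = \mu A - \barmu$, so $B^* = \barmu\barA - \mu$, hence $\barmu - A\mu = -(A\mu - \barmu)$ — I will need to match signs to the definition, but the point is this minor is $\pm\alpha^2\baral\, B^*$ up to conjugation). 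The $\delta$-$\Delta$ minor similarly produces $\pm\alpha(\baral)^2 B$.

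From this, when $\mu\neq 0$ the vanishing of the $B$-minors forces $B = 0$, and then the $\delta$-$\bardel$ minor $\alpha^2(\baral)^2(1-A\barA)$ must also vanish; but I must check that $B=0$ with $\mu\neq 0$ already forces $A\barA = 1$ — indeed $B = 0$ means $\mu A = \barmu$, so $|\mu|\,|A| = |\mu|$, giving $|A| = 1$ since $\mu\neq 0$, i.e.\ $A\barA = 1$ automatically. Conversely $B = 0$ makes all three minors vanish, so $\d\alpha\wedge\d\baral = 0$. This proves the first assertion. When $\mu = 0$: then $\barmu = 0$ too, so $\Delta\alpha = 0$ and $\Delta\baral = 0$, hence the only surviving minor is the $\delta$-$\bardel$ one, $\alpha^2(\baral)^2(1 - A\barA)$, whose vanishing is exactly $A\barA = 1$. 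Since $\alpha\neq 0$ throughout this section, this gives the second assertion.

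The main obstacle, such as it is, is purely bookkeeping: getting the signs and conjugations in the minors to line up exactly with the definition \eqref{eq:Bdef} of $B$ and with the transformation relations \eqref{eq:Dalpha}--\eqref{eq:Deltaalpha}, and making sure that in the $\mu \neq 0$ branch one correctly deduces $A\barA = 1$ as a consequence rather than an independent hypothesis. There is no analytic or geometric difficulty — everything reduces to the algebra of the three $2\times2$ determinants formed from the already-known covariant derivatives of $\alpha$ and $\baral$, together with the observation $|A|=1$ follows from $B=0$ when $\mu\neq0$.
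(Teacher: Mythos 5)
Your proposal is correct and follows essentially the same route as the paper: both compute the $2\times 2$ minors of $\d\alpha\wedge\d\baral$ using \eqref{eq:deltaalpha}--\eqref{eq:Deltaalpha}, identify them (up to conjugation and sign) with $\alpha^2\baral^2(A\barA-1)$ and $\alpha\baral^2 B$, and then observe that $B=0$ with $\mu\neq 0$ forces $A\barA=1$, while $\mu=0$ forces $B=0$. Your only slip is labeling the $\bardel$-$\Delta$ minor as proportional to $B^*$ when it is in fact $-\alpha^2\baral B$, but since $B=0$ and $B^*=0$ are equivalent this does not affect the argument.
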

\begin{proof}
  By \eqref{eq:deltaalpha} \eqref{eq:Deltaalpha},
  \begin{align*}
    &\delta \alpha \bardel \baral - \bardel \alpha
    \delta \baral= \alpha^2 \baral^2(A\barA-1)\\
    &\delta \alpha \Delta \baral - \delta \baral \Delta
    \alpha =\alpha\baral^2 B
  \end{align*}
  Hence, the condition $\d\alpha \wedge \d\baral = 0$ is equivalent
  to the conjunction of $A\barA=1$ and $B=0$.  However, if $\mu
  \neq 0$ and $B=0$, then 
  $A = \barmu/\mu$, and hence $A\barA =1$ automatically.  Therefore,
  if $\mu\neq 0$, then the condition $B=0$ suffices.  On the other hand,
  if $\mu=0$, then $B=0$, and therefore the condition $A\barA = 1$ suffices.
\end{proof}

\begin{prop}
  \label{prop:albaldep1}
  Suppose that $B=0$ and $A\barA=1$. Then, necessarily $A$ is a constant
  and $\delta M = 0$.
\end{prop}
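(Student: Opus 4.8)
\emph{Proof proposal.} The goal is to establish two things: that $A$ is constant, and that $\delta M=0$. For the first it suffices to check that $A$ is annihilated by each of the four tetrad derivatives $\delta,\bardel,\Delta,D$. Two of these come for free: $\delta A=0$ is already recorded in Proposition~\ref{prop:alnonzerorels}, and $DA=0$ because $A$ is a Cartan invariant and, as noted in the proof of Proposition~\ref{prop:alnonzero}, every invariant is annihilated by $D$ in the $\alpha\neq 0$ case. The hypothesis $A\barA=1$ gives $A\neq 0$ and $\barA=1/A$, so applying the operator $\bardel=\overline{\delta}$ yields $0=\overline{\delta A}=\bardel\barA=-A^{-2}\,\bardel A$, hence $\bardel A=0$.

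Next, the hypothesis $B=0$ reads $\barmu=A\mu$. Applying $\bardel$ and using $\bardel\mu=-\alpha\mu$ from \eqref{eq:bardeltamu} together with $\bardel A=0$ gives $\bardel\barmu=A\,\bardel\mu=-\alpha A\mu=-\alpha\barmu$; conjugating this produces the key identity $\delta\mu=-\baral\mu$. Combining it with $\delta\alpha=\alpha\baral$ from \eqref{eq:deltaalpha}, we get $\delta M=\delta(\alpha\mu)=(\delta\alpha)\mu+\alpha(\delta\mu)=\alpha\baral\mu-\alpha\baral\mu=0$, which is the second assertion.

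It remains to prove $\Delta A=0$. The plan is to differentiate the defining relation $\alpha^2 A=\bardel\alpha$ along $\Delta$. On the left, $\Delta\alpha=-\barmu\alpha$ from \eqref{eq:Deltaalpha} gives $\Delta(\alpha^2 A)=-2\barmu\alpha^2 A+\alpha^2\,\Delta A$. On the right, I rewrite $\Delta\bardel$ using the complex conjugate of the commutator \eqref{eq:alnzcom3}, namely $\Delta\bardel=\bardel\Delta+\nu D+\alpha\Delta-\barmu\bardel$, and substitute $D\alpha=0$, $\Delta\alpha=-\barmu\alpha$, $\bardel\alpha=\alpha^2 A$ and the just-derived $\bardel\barmu=-\alpha\barmu$; the $\barmu$-terms combine and $\Delta\bardel\alpha$ reduces to $-2\barmu\alpha^2 A$. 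Equating the two sides, the common term $-2\barmu\alpha^2 A$ cancels and leaves $\alpha^2\,\Delta A=0$, hence $\Delta A=0$. Therefore $dA=0$ and $A$ is constant.

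The one genuinely substantive step — and the main obstacle — is establishing $\bardel\barmu=-\alpha\barmu$, equivalently $\delta\mu=-\baral\mu$: the quantity $\delta\mu$ does not occur among the NP relations \eqref{eq:Dalpha}--\eqref{eq:deltanu} and is not determined for a general $\alpha\neq 0$ pp-wave, so it is precisely here that both hypotheses $B=0$ and $A\barA=1$ are used together. Everything downstream (the $\delta M$ identity and the $\Delta A$ computation) is routine bookkeeping with the commutators. As a cross-check, the same conclusions can be obtained from the explicit parametrization of Proposition~\ref{prop:alnonzerorels}: by Proposition~\ref{prop:aa*}, $A\barA=1$ forces $L=Pa+g$ with $P=P(u)$, so $A=-1-(L_a)^{*}=-1-P^{*}$ and $\mu=e^{-a-\bara}L_u$; then $B=0$ forces $P'=0$, since $1,a,\bara$ are linearly independent over functions of $u$ (as $a_\zeta\neq 0$ when $\alpha\neq 0$), whence $A$ is constant and $M=(e^{-2a}/Z_a)^{*}L_u$ depends only on $\bara$ and $u$, giving $\delta M=0$ by \eqref{eq:deltaQ}.
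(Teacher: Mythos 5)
Your proof is correct, but it takes a genuinely different route from the paper. The paper's proof is coordinate-based: it invokes Proposition \ref{prop:aa*} to write $L=Pa+g$ with $P=P(u)$, translates $B=0$ into the relation \eqref{eq:LuLarel} for $L$, differentiates in $a$ to get $L_{ua}=0$ (so $P'=0$ and $A=-1-P^*$ is constant), and then reads off $\delta M=0$ from the explicit formula $\delta M \propto L_{au}$ coming from \eqref{eq:MLrel} and \eqref{eq:deltaQ} --- essentially the argument you relegate to your closing ``cross-check'' paragraph. Your main argument instead stays entirely within the NP frame calculus: $\delta A=DA=0$ hold generally, $A\barA=1$ conjugated against $\delta A=0$ gives $\bardel A=0$, the combination of $B=0$ (i.e.\ $\barmu=A\mu$) with \eqref{eq:bardeltamu} yields the key identity $\delta\mu=-\baral\mu$, whence $\delta M=0$ by \eqref{eq:deltaalpha}, and finally $\Delta A=0$ follows from differentiating $\bardel\alpha=\alpha^2 A$ along $\Delta$ using the conjugate of the commutator \eqref{eq:alnzcom3}; I checked the cancellations and they are right, and the argument also covers the degenerate case $\mu=0$ without modification. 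What your approach buys is independence from the $a$, $L$, $Z$ parametrization and a transparent accounting of exactly where each hypothesis enters (both are used only through $\delta\mu=-\baral\mu$ and $\bardel A=0$); what the paper's approach buys is brevity, since the parametrization of Proposition \ref{prop:alnonzerorels} is already set up and reused for the explicit integrations throughout Sections \ref{sect:g1}--\ref{sect:0123}.
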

\begin{proof}
  By Proposition \ref{prop:aa*},
  $L=Pa+g$ where $P=P(u), g=g(u)$.  Since $B=0$, we have
  \begin{equation}
    \label{eq:LuLarel}
    L_u+\barL_u (1+L_a) = 0.
  \end{equation}
  Taking the derivative with respect to $a$ gives $L_{ua}=0$.  Hence,
  $A$ must be a constant.  Furthermore, by \eqref{eq:MLrel}
  \eqref{eq:deltaQ},
  \[ \delta(M) = \frac{e^{-a-\bara}}{Z_a \barZ_{\bara}} L_{au} =
  0,\]
  as was to be shown.
\end{proof}

\begin{lem}
  \label{lem:aa*b0}
  Suppose that  $B=0$ and $AA^*=1$.  
  If $A\neq 1$, then
  \begin{equation}
    \label{eq:LArel}
    L = -\frac{A+1}{A}\, a + \frac{A-1}{A}\, (k+\i h)
  \end{equation}
  where $k$ is a real constant, and $h=h(u)$ is real.
  If $A=1$, then
  \begin{equation}
    \label{eq:LArel1}
    L = -2a + h+k\i.
  \end{equation}
\end{lem}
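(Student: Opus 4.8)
The plan is to use Proposition \ref{prop:aa*} to reduce the hypotheses $B=0$ and $AA^*=1$ to a constraint on $L=L(a,u)$, and then to integrate the resulting first-order PDE explicitly. By Proposition \ref{prop:albaldep1} we already know that $B=0$ together with $AA^*=1$ forces $A$ to be a constant, and by Proposition \ref{prop:aa*} we have $L_a = P$ with $P$ constant (since $A=-1-L_a$ is constant) and $PP^*+P+P^*=0$. Thus $L = Pa + g(u)$ for some complex-valued function $g$; the whole problem is to pin down $g(u)$.

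The first step is to write out the $B=0$ condition explicitly. From \eqref{eq:Bdef}, $B = \mu A - \mu^*$, and using \eqref{eq:muLrel} one gets $\mu = e^{-a-a^*}L_u$, so $B=0$ becomes $A L_u = (L_u)^* $ up to the positive real factor $e^{-a-a^*}$, i.e. (after relabelling as in \eqref{eq:LuLarel}) the relation $L_u + \bar L_u(1+L_a)=0$. Substituting $L=Pa+g$, $L_a=P$ gives $g' + \bar g'(1+P) = 0$, a linear ODE for $g'$ with constant coefficients. The second step is to solve this ODE. Writing $g' = r + \i s$ with $r,s$ real and using $\re(1/P) = -1/2$ (which is exactly the content of $PP^*+P+P^*=0$ when $P\neq 0$), one finds that the real and imaginary parts decouple: $g'$ must be a fixed complex multiple of a single real function, so that $g = c\,(k+\i h)' $ — more precisely $g$ is forced into the form $\frac{A-1}{A}(k+\i h(u)) + \text{const}$ when $A\neq 1$, absorbing the constant of integration into the real constant $k$. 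One has to be slightly careful to identify the coefficient: since $L_a = P = -(A+1)/A$ and $A A^* = 1$ gives $\re(1/P) = -1/2$, the combination $(A-1)/A$ is precisely the coefficient that makes $g' + \bar g'(1+P) = 0$ hold when $g' = \frac{A-1}{A}\i h'$, because $(1+P) = -1/A$ and $\frac{A-1}{A} - \frac{1}{A}\overline{\left(\frac{A-1}{A}\right)}\cdot(\text{conjugation})$ cancels by the $AA^*=1$ identity. This verification is the one genuinely fiddly computation.

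For the case $A=1$ the argument degenerates: then $P = L_a = -1-A = -2$, so $L = -2a + g(u)$, and the $B=0$ condition $L_u + \bar L_u(1+L_a) = 0$ becomes $g' + \bar g'(1-2) = g' - \bar g' = 0$, i.e. $\im g' = 0$, so $g'$ is real and hence $g = h(u) + k\i$ with $h$ real and $k$ a real constant of integration — this is exactly \eqref{eq:LArel1}. Conversely, one checks directly that the stated forms \eqref{eq:LArel} and \eqref{eq:LArel1} satisfy both $B=0$ and $AA^*=1$, though since we derived them under those hypotheses the converse is not strictly needed for the lemma as stated.

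The main obstacle, such as it is, will be bookkeeping rather than anything conceptual: one must carefully track the normalization of the real integration constant $k$ versus the function $h(u)$ and check that the coefficient $(A-1)/A$ in \eqref{eq:LArel} is forced (and not, say, $(A+1)/A$ or some other combination), which requires using the identity $AA^*=1$ at exactly the right moment to simplify $1+P = 1 - (A+1)/A = -1/A = -A^*$. Once that algebraic identity is in hand the integration is immediate and the two cases separate cleanly according to whether $A\neq 1$ (so $P\neq -2$, $g'$ purely ``rotated real'') or $A=1$ (so $P=-2$, $g'$ purely real).
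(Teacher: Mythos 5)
Your proposal is correct and follows essentially the same route as the paper: both reduce to $L=Pa+g$ with $A$ (hence $P$) constant via Propositions \ref{prop:aa*} and \ref{prop:albaldep1}, substitute into the $B=0$ relation \eqref{eq:LuLarel} to get $g'+\bar g'(1+P)=0$ with $1+P=-1/A$, and solve this conjugate-linear constraint, splitting into $A\neq 1$ and $A=1$. The only cosmetic difference is that the paper packages the $A\neq1$ solution as $\re\bigl(\tfrac{A}{A-1}g'\bigr)=0$ rather than exhibiting the solution direction $\i\tfrac{A-1}{A}$ directly, which amounts to the same computation.
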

\begin{proof}
  By Propositions \ref{prop:aa*}, \ref{prop:albaldep1}, $L=Pa+g$ where
  $g=g(u)$ and $P=-(A+1)/A$ is a constant.  Hence, equation
  \eqref{eq:LuLarel} can be restated as
  \[ g'(u) -  (g'(u))^*/A = 0.\]
  If $A\neq 1$, we
  multiply both sides by $A/(A-1)$ to obtain $\re(A/(A-1)g'(u)) = 0$.
  This gives us \eqref{eq:LArel}.  If $A=1$, then \eqref{eq:LArel1}
  follows immediately.
\end{proof}

\begin{prop}
  \label{prop:013}
  A type $(0,1)$ solution belongs to one of the classes shown in Table
  \ref{tab:g101}.
\end{prop}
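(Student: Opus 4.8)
The plan is to use Propositions~\ref{prop:albaldep}, \ref{prop:albaldep1} and Lemma~\ref{lem:aa*b0} to convert the type $(0,1)$ hypothesis into an explicit form for $L=\log Z_a$, and then to integrate exactly as in the proofs of Propositions~\ref{prop:L23}, \ref{prop:E23} and \ref{prop:P23}. First I would record that a type $(0,1)$ solution necessarily has $B=0$ and $A\barA=1$: when $\mu\neq 0$ this is precisely Proposition~\ref{prop:albaldep}, while when $\mu=0$ one has $B=\mu A-\barmu=0$ automatically and again $A\barA=1$ by the same proposition. Hence Proposition~\ref{prop:albaldep1} applies, giving $\delta M=0$ and $A=\mathrm{const}$, and Lemma~\ref{lem:aa*b0} pins down $L$: for $A\neq 1$,
\[ L=-\frac{A+1}{A}\,a+\frac{A-1}{A}\,(k+\i h),\]
and for $A=1$, $L=-2a+h+k\i$, with $k\in\Rset$ constant and $h=h(u)\in\Rset$. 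In particular $L$ is affine in $a$ with constant slope $P=L_a=-(A+1)/A$, and the only surviving arbitrary function of $u$ is $h$.

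Since $\delta M=0$, Proposition~\ref{prop:genform} already puts the metric in the form \eqref{eq:A*23}, $f=F(g\zeta)g^{-2}+g_3\zeta$; the additional fact that $L_a$ is a fixed constant forces $F$ to be of logarithmic, exponential, or power type according as $A=1$, $A=-1$, or $A\neq\pm 1$. Equivalently, in each of these three cases I would run the integration directly: integrate $Z_a=\e^{L}$ in $a$ to recover $Z$, invert $\zeta=Z(a,u)$, substitute into $f_{\zeta\zeta}=\e^{4a}$, and integrate twice in $\zeta$, using the residual coordinate freedom \eqref{eq:zetaxform}--\eqref{eq:fxform} to absorb the constants of integration and the additive function of $u$. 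This reproduces a \emph{specialization} of the corresponding generic form of Propositions~\ref{prop:L23}, \ref{prop:E23}, \ref{prop:P23}: the $u$-dependent exponent of $\rP_{23}$ (resp.\ the $u$-dependent prefactor of $\rE_{23}$, $\rL_{23}$) collapses to a constant precisely because $P$ is now constant. Within each family I would then split off the sub-case $\mu=0$, which by \eqref{eq:muLrel} is the same as $L_u=0$, i.e.\ $h$ constant, and the sub-case in which the precursor term $g_3\zeta$ is absent (so that the solution degenerates to a $G_2$ or $G_3$, in the sense of Section~\ref{sect:g2precursor}); reading off the invariant counts of the resulting forms reproduces the entries of Table~\ref{tab:g101}.

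The difficulty here is bookkeeping rather than concept: one must carry out three separate integrations, keep careful track of which multiplicative and additive constants can be eliminated by \eqref{eq:zetaxform}--\eqref{eq:fxform} and which survive as genuine invariants, and check that the normal forms obtained, with their invariant counts, match the table case by case. A small technical point worth isolating is that, when $A\barA=1$ and $A\neq\pm 1$, the power-type exponent $2+4/P$ is a nonzero purely imaginary constant — this uses $\re(1/P)=-\tfrac12$, which follows from $A\barA=1$ together with $A\neq -1$ — and it is exactly this that distinguishes the type $(0,1)$ power solutions from the generic $\rP_{23}$ family, whose exponent is a nonconstant function of $u$.
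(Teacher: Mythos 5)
Your proposal is correct and follows essentially the same route as the paper: reduce the type $(0,1)$ hypothesis to $B=0$, $AA^*=1$ via Proposition~\ref{prop:albaldep} (with $A$ constant by Proposition~\ref{prop:albaldep1}), pin down $L$ with Lemma~\ref{lem:aa*b0}, and then split on $A^2\neq 1$, $A=-1$, $A=1$, invoking the integrations of Propositions~\ref{prop:P23}, \ref{prop:E23}, \ref{prop:L23} to land on the forms $\rP_{13}$, $\rE_{13}$, $\rL_{13}$ of Table~\ref{tab:g101}. The only difference is ordering (the paper applies Proposition~\ref{prop:P23} first and then restricts $g_1$ by the Lemma, whereas you restrict $L$ first and then integrate), which is immaterial.
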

\begin{proof}  By Proposition \ref{prop:albaldep}, $B=0$ and $AA^*=1$.
  We proceed by cases. Suppose that $A^2\neq 1$.  By Proposition
  \ref{prop:P23}, 
  \[f=(\e^{g_1} \zeta)^{2\i k} + g_2 \zeta,\quad L=Pa-(1+P/2) g_1.\]
  Since $(1+P/2)= (A-1)/(2A)$, we must have $g_1= k+\i h$ by Lemma
  \ref{lem:aa*b0} . This gives form $\rP_{13}$.  Next, consider the
  case $A=-1$.  Here, $L=k+\i h$.  By Proposition \ref{prop:E23} we
  arrive at form $\rE_{13}$.  Finally, if $A=1$, then
  \eqref{eq:LArel1} and Proposition \ref{prop:L23} give form
  $\rL_{13}$.
\end{proof}

\section{The $G_2$ precursors}
\label{sect:g2precursor}
As above, we assume that $\alpha\neq 0$ and that
$\delta,\delta^*,\Delta,D$ is a tetrad normalized so that $\Psi_4\to
1$ and $\gamma\to 0$.  In this section we classify the solutions that
satisfy the following definition.
\begin{defn}
  \label{def:g2pre}
  We say that a vacuum pp-wave metric is a $G_2$-precursor if there
  exists a vector field $V=V^1\delta + V^2 \bardel + V^3 \Delta + V^4
  D$ such that
  \begin{equation}
    \label{eq:precursor}
    \cL_V \omega^1 = \cL_V \omega^3 = 0,\quad  V^1\neq 0,\text{ or }
    V^3\neq 0
  \end{equation}
\end{defn}
\noindent
A Killing vector annihilates all invariant scalars and invariant
differential forms \cite[Ch. 8-10]{olver}.  Thus, the ``precursor''
terminology reflects the fact that \eqref{eq:precursor} is a necessary,
but not sufficient condition, for the existence of a Killing vector
independent from $D=\partial_v$.  The requisite propositions and
proofs are presented below. The resulting classification of precursor
solutions is summarized in Tables \ref{tab:preg202} and
\ref{tab:preg201}.

\begin{prop}
  \label{prop:om13}
  Let $V=V^1\delta + V^2 \bardel + V^3 \Delta + V^4 D$ be a vector
  field. Relation $\cL_V\omega^1=\cL_V\omega^3=0$ holds if and only if
  $C=\baral V^1$ is a constant, while $V^3$ satisfies
  \begin{align}
    \label{eq:V3rel}
    V^3\barmu &= C + \barC A\\
    \label{eq:deltaV3}
    \delta V^3 &= \baral V^3,\\
    \label{eq:DeltaV3}
    \Delta V^3 &= -C-\barC.
  \end{align}
\end{prop}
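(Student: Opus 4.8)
The plan is to unwind the Lie-derivative condition $\cL_V\omega^1 = \cL_V\omega^3 = 0$ using Cartan's formula $\cL_V = \d\iota_V + \iota_V\d$, together with the structure equations $\d\bth^a = \bom^a{}_b\wedge\bth^b$ specialized to the normalized frame. First I would record the exterior derivatives $\d\omega^1$ and $\d\omega^3$ in terms of the surviving connection components. For $\alpha\neq 0$ pp-waves the surviving spin coefficients are $\alpha,\mu,\nu$ (and conjugates), with $\beta=\epsilon=\pi=\lambda=\kappa=\sigma=\rho=\tau=\gamma=0$; feeding these into \eqref{eq:NPsc1}--\eqref{eq:NPsc3} gives the connection 1-forms, and hence $\d\omega^1 = \baral\,\omega^1\wedge\omega^3 + (\text{terms in }\omega^2,\omega^4)$ etc. The cleanest route is actually to use the explicit coframe formulas from Proposition \ref{prop:alnonzerorels}: $\omega^1 = (\baral)^{-1}\d a$, $\omega^3 = e^{a+\bara}\,\d u$. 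From these, $\d\omega^1$ and $\d\omega^3$ are immediate, and the commutator relations \eqref{eq:alnzcom1}--\eqref{eq:alnzcom4} encode the same data invariantly.

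Next I would expand $\cL_V\omega^1 = \d(\iota_V\omega^1) + \iota_V\,\d\omega^1$. Writing $\iota_V\omega^1 = V^1$ (since $\omega^1,\dots,\omega^4$ is dual to $\delta,\bardel,\Delta,D$), one gets $\cL_V\omega^1 = \d V^1 + \iota_V\d\omega^1$, and similarly $\cL_V\omega^3 = \d V^3 + \iota_V\d\omega^3$. Setting each of these to zero and decomposing along the dual basis $\omega^1,\omega^2,\omega^3,\omega^4$ yields a system of first-order PDEs for the components $V^1,V^2,V^3,V^4$ in terms of $\alpha,\mu,\nu$. The $\omega^4$-component of $\cL_V\omega^1=0$ should give $D V^1 = 0$; the other components, combined with the expression for $\d\omega^1$ coming from \eqref{eq:om1def} (namely $\d\omega^1 = -\,\bardel(\log\baral)\,\omega^2\wedge\omega^1 - \Delta(\log\baral)\,\omega^3\wedge\omega^1 - D(\log\baral)\,\omega^4\wedge\omega^1$, together with whatever $\omega^2\wedge\omega^3$ etc. terms appear), should collapse — after using \eqref{eq:deltaalpha}, \eqref{eq:Deltaalpha} — to the single statement that $C := \baral V^1$ is annihilated by $\delta,\bardel,\Delta,D$, i.e. $C$ is a constant. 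The condition $\cL_V\omega^3=0$ with $\omega^3 = e^{a+\bara}\d u$ similarly forces $V^3$ to obey the three displayed relations: the $\omega^1$-component gives $\delta V^3 = \baral V^3$ (using $\delta(a+\bara) = \baral$ via \eqref{eq:deltaQ}), the $\omega^2$-component gives the conjugate relation automatically, the $\omega^3$-component is vacuous, and the $\omega^4 = D$-component plus the definition of $\omega^4$ yields $\Delta V^3 = -C - \barC$. The algebraic constraint $V^3\barmu = C + \barC A$ should emerge from the $\omega^1\wedge\omega^2$ part (or equivalently the requirement that the $\d u$-coefficient be consistent), using $\mu = e^{-a-\bara}L_u$ and $A = -1-(L_a)^*$ from \eqref{eq:muLrel}, \eqref{eq:ALrel}, or alternatively directly from $\cL_V\omega^1=0$ projected onto $\omega^3$.

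I expect the main obstacle to be bookkeeping: getting all the wedge-product terms in $\d\omega^1$ and $\d\omega^3$ correct and then tracking which component of each vector equation produces which scalar relation, without dropping or double-counting a constraint. A subtlety worth flagging is that $\cL_V\omega^1=0$ and $\cL_V\omega^3=0$ are complex equations, so their real and imaginary (equivalently, their conjugate) parts must both be extracted; the reality conditions $(V^1)^* = V^2$, $(V^3)^* = V^3$, $(V^4)^*=V^4$ must be used to close the system and to see that the conjugate of \eqref{eq:deltaV3} is not an independent constraint. I would also double-check the algebraic relation \eqref{eq:V3rel} for consistency with \eqref{eq:deltaV3} and \eqref{eq:DeltaV3} by applying $\delta$ and $\Delta$ to \eqref{eq:V3rel} and verifying compatibility via \eqref{eq:bardeltamu}, \eqref{eq:Deltaalpha} and the commutators — this serves as a useful sanity check that the three relations are exactly the content of the hypothesis, with no hidden extra condition. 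Conversely, for the "if" direction, one shows that any $V$ whose components satisfy the three relations (with $C$ constant) has $\cL_V\omega^1 = \cL_V\omega^3 = 0$ by reversing the computation; since the steps are equivalences at each stage, this is essentially automatic once the forward direction is laid out carefully.
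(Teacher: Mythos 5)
Your overall route---Cartan's formula together with the explicit coframe $\omega^1=(\baral)^{-1}\d a$, $\omega^3=e^{a+\bara}\d u$ (or, equivalently, the structure equations), followed by a componentwise decomposition---is the same as the paper's, and much of the bookkeeping you sketch does come out as you expect: the $\omega^2,\omega^3,\omega^4$ components of $\cL_V\omega^1=0$ give $\bardel C=\Delta C=DC=0$, and expanding $\cL_V\omega^3$ gives \eqref{eq:deltaV3}, its conjugate, and \eqref{eq:DeltaV3} from the $\omega^1$, $\omega^2$ and $\omega^3$ components respectively, the $\omega^4$ component giving $DV^3=0$ (consistent with \eqref{eq:V3rel}, since $D$ annihilates all invariants); note this is not the assignment you guessed, where you call the $\omega^3$ component vacuous and attribute \eqref{eq:DeltaV3} to the $\omega^4$ component.

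There is, however, a genuine gap at the crux. You assert that $\cL_V\omega^1=0$ collapses to the single statement that $C=\baral V^1$ is constant, and you propose to obtain \eqref{eq:V3rel} either from ``the $\omega^1\wedge\omega^2$ part'' (the Lie derivatives here are $1$-forms, so there is no such part) or from ``$\cL_V\omega^1=0$ projected onto $\omega^3$''. Neither works: the $\omega^3$-component of $\cL_V\omega^1=0$ is just $\Delta V^1=\mu V^1$, i.e.\ $\Delta C=0$, while its $\omega^1$-component reads $\delta V^1=\alpha V^2-\mu V^3$, which by \eqref{eq:deltaalpha} translates into $\delta C=\baral\,(\barC+\barA C-\mu V^3)$. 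So $\delta C=0$ is \emph{equivalent to} (the conjugate of) \eqref{eq:V3rel}, and neither the constancy of $C$ nor \eqref{eq:V3rel} follows from $\cL_V\omega^1=0$ alone. The missing step is to couple the two hypotheses: since $\cL_V$ kills $\omega^1,\omega^2,\omega^3$, it kills $\d\omega^3=(\baral\omega^1+\alpha\omega^2)\wedge\omega^3$ as well, forcing $\cL_V\alpha=0$; combined with the identity $\cL_V\alpha=\alpha(C+\barC A-V^3\barmu)$, which is how the paper proceeds, and $\alpha\neq0$, this yields \eqref{eq:V3rel}, whence $\delta C=0$ and $C$ is constant. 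Equivalently, applying the commutator \eqref{eq:alnzcom4} to $V^1$ using your componentwise equations produces \eqref{eq:V3rel} directly---so the commutator computation you relegate to a ``sanity check'' is in fact the step that closes the forward direction; once it is inserted, the converse direction goes through as you describe.
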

\begin{proof}
  By \eqref{eq:Dalpha} - \eqref{eq:Deltaalpha} and \eqref{eq:Adef},
  \begin{align}
    \label{eq:valpharel} \cL_V \alpha &= \alpha (C + \barC A - V^3
    \barmu).\\
    \intertext{By \eqref{eq:om1def} and the definition of $C$,}
    \label{eq:LVal*om1}
    \cL_V(\baral\omega^1) &= \cL_V da = d(\cL_V a) = d(V\rfloor
    da) = d C,\\
    \intertext{We also have  the following identity:}
    \cL_V \omega^3 &= d(V\rfloor \omega^3) + V\rfloor \d\omega^3 \\
    &= d(V^3) + (\baral V^1+ \alpha V^2)\omega^3 - \baral V^3 \omega^1
    - \alpha V^3 \omega^2\\
    \label{eq:LVomega3}
    &= (\delta V^3 - \baral V^3) \omega^1 + (\bardel V^3 -
    \alpha V^3) \omega^2 + (\Delta V^3 +C+\barC)\omega^3
  \end{align} 
  The desired equivalence follows immediately.
\end{proof}

\begin{prop}
  \label{prop:precursor}
  If $B \neq 0$, then \eqref{eq:precursor} is equivalent to the 
  the conjunction of
  \begin{equation}
    \label{eq:albaralmu}
    \d\alpha\wedge\d\baral\wedge\d\mu = 0,
  \end{equation}
  and the condition
  \begin{equation}
    \label{eq:dBB*}
    \d (B/B^*) = 0.
  \end{equation}
\end{prop}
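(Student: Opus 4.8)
The plan is to convert the existential statement \eqref{eq:precursor} into an algebraic problem for a single complex constant, using Proposition~\ref{prop:om13}, and then to match the resulting constraints against \eqref{eq:albaralmu} and \eqref{eq:dBB*}.

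First I would note that $B\neq 0$ forces $\mu\neq 0$, since $B=\mu A-\barmu$. Hence in Proposition~\ref{prop:om13} equation \eqref{eq:V3rel} solves to $V^3=(C+\barC A)/\barmu$, and if $C=0$ then $V^3=0$ and $V^1=C/\baral=0$, so \eqref{eq:precursor} fails; a precursor vector field therefore has $C\neq 0$, whence $V^1\neq 0$. Moreover, once $C$ is constant and $V^3$ is given by this formula, \eqref{eq:deltaV3} is automatic: by $\delta A=0$ (Proposition~\ref{prop:alnonzerorels}) and $\delta\barmu=(\bardel\mu)^*=-\baral\barmu$ (the conjugate of \eqref{eq:bardeltamu}), $\delta V^3=-(C+\barC A)\,\delta\barmu/\barmu^2=\baral V^3$. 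Thus, by Proposition~\ref{prop:om13}, \eqref{eq:precursor} holds if and only if there is a nonzero complex constant $C$ with (a) $V^3:=(C+\barC A)/\barmu$ real and (b) $\Delta V^3=-(C+\barC)$. A brief computation using $B=\mu A-\barmu$ and $B^*=\barmu\barA-\mu$ shows that (a) is equivalent to $\barC B=C B^*$, i.e.\ to $B/B^*=C/\barC$. Since $|B/B^*|=1$ identically, such a $C$ exists, uniquely up to a nonzero real factor, precisely when $B/B^*$ is constant, i.e.\ when \eqref{eq:dBB*} holds; and since (b) is homogeneous of degree one in $(C,\barC)$, it is then a well-defined condition on the metric, independent of the choice of $C$.

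It remains to show that, assuming $B\neq 0$ and \eqref{eq:dBB*}, condition (b) is equivalent to \eqref{eq:albaralmu}. I would do this by putting both conditions in the form of a linear relation between the two third-order scalars $\Delta\barmu$ and $\bardel\barmu$ and checking that these relations are proportional. Because $D$ annihilates $\alpha$, $\baral$ and $\mu$, each of $\d\alpha,\d\baral,\d\mu$ is a combination of $\omega^1,\omega^2,\omega^3$, so \eqref{eq:albaralmu} is the vanishing of the determinant whose rows are $(\delta Q,\bardel Q,\Delta Q)$, $Q\in\{\alpha,\baral,\mu\}$; substituting $\delta\alpha=\baral\alpha$, $\bardel\alpha=\alpha^2 A$, $\Delta\alpha=-\barmu\alpha$ and their conjugates from \eqref{eq:deltaalpha}, \eqref{eq:Deltaalpha}, \eqref{eq:Adef}, and $\bardel\mu=-\alpha\mu$ from \eqref{eq:bardeltamu}, and simplifying with $B=\mu A-\barmu$, one finds that \eqref{eq:albaralmu} is equivalent to $\baral(1-A\barA)\,\Delta\mu=B\,\delta\mu-\baral\mu\,B^*$. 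For (b), differentiating $V^3=(C+\barC A)/\barmu$ and using the identity $\Delta A=-\barmu-(\bardel\barmu)/\alpha$, which comes from the conjugate of the commutator \eqref{eq:alnzcom3} applied to $\alpha$ together with \eqref{eq:Dalpha}, \eqref{eq:Deltaalpha}, \eqref{eq:Adef}, reduces (b) to $V^3\,\Delta\barmu+(\barC/\alpha)\,\bardel\barmu=C\barmu$. Conjugating the relation $\baral(1-A\barA)\Delta\mu=B\delta\mu-\baral\mu B^*$ gives $\alpha(1-A\barA)\,\Delta\barmu-B^*\,\bardel\barmu+\alpha\barmu B=0$. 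The two coefficient triples, on $\Delta\barmu$, $\bardel\barmu$ and the constant term, are proportional with factor $-C/(\alpha B)\neq 0$: the $\bardel\barmu$- and constant-term ratios both reduce to the reality condition $\barC B=C B^*$, and the $\Delta\barmu$-ratio then follows from the algebraic identity $B(C+\barC A)+C\barmu(1-A\barA)=A(\barC B-C B^*)$, whose right-hand side vanishes by reality. Hence (b) is equivalent to \eqref{eq:albaralmu}, and together with the previous paragraph this proves the proposition.

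The step I expect to be the main obstacle is this last matching. Its two key ingredients, easy to overlook, are the normal form $\Delta A=-\barmu-(\bardel\barmu)/\alpha$, which guarantees that (b) introduces no invariant beyond $\Delta\barmu$ and $\bardel\barmu$, and the identity $B(C+\barC A)+C\barmu(1-A\barA)=A(\barC B-C B^*)$, which is exactly what converts the reality condition (a) into the proportionality of the two linear relations. Everything else, namely the reduction to (a) and (b), the automatic validity of \eqref{eq:deltaV3}, and the determinantal form of \eqref{eq:albaralmu}, is routine manipulation of the relations in Propositions~\ref{prop:om13} and \ref{prop:alnonzerorels}.
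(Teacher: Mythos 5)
Your argument is correct---I checked the steps you flag as the crux ($\delta V^3=\baral V^3$ being automatic, the identity $\Delta A=-\barmu-\bardel\barmu/\alpha$ from the conjugate of \eqref{eq:alnzcom3} applied to $\alpha$, the reduction of \eqref{eq:albaralmu} to $\baral(1-A\barA)\Delta\mu=B\,\delta\mu-\baral\mu B^*$, and the identity $B(C+\barC A)+C\barmu(1-A\barA)=A(\barC B-CB^*)$) and they all hold---but your route is genuinely different from the paper's. The paper's forward direction is soft: a vector field satisfying \eqref{eq:precursor} annihilates the structure functions $\alpha,\baral,\mu$ of $\d\omega^1,\d\omega^3$, and since $D$ annihilates them too, three functions killed by two independent vector fields on a four-manifold are functionally dependent, giving \eqref{eq:albaralmu}; it then invokes Proposition~\ref{prop:hDelta} to write $V=a\hDelta+bD$ and Proposition~\ref{prop:om13} to get $B/B^*=X/X^*=C/C^*$ constant, while the converse builds $V=(C/X)\hDelta$, uses the functional dependence to get $\cL_V\mu=0$, and extracts $\cL_V\omega^3=0$ from $\cL_V\d\omega^1=0$. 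You instead reduce \eqref{eq:precursor}, via Proposition~\ref{prop:om13} alone, to the existence of a nonzero constant $C$ with $V^3=(C+\barC A)/\barmu$ real and $\Delta V^3=-(C+\barC)$, identify the reality condition with \eqref{eq:dBB*}, and show by direct computation that the $\Delta V^3$ condition is pointwise equivalent to \eqref{eq:albaralmu} as two proportional linear relations in $\Delta\barmu$ and $\bardel\barmu$. The paper's approach buys brevity and a conceptual reason for the appearance of the dependence condition; yours buys uniformity and explicitness: you never need $\hDelta$ or $X$ (which presuppose $AA^*\neq 1$), so the corner case $AA^*=1$, $B\neq 0$ (precursors with $V^3=0$, i.e.\ $A=1$ with $\mu\neq\barmu$), which the paper's forward direction glosses over when citing Proposition~\ref{prop:hDelta}, is handled automatically, and the equivalence with \eqref{eq:albaralmu} is exhibited by an explicit algebraic identity rather than a dimension count.
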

\begin{proof}
  Note the following structure equations, which are dual to the
  commutator relations \eqref{eq:alnzcom1}-\eqref{eq:alnzcom4}
  \begin{align}
    \label{eq:dom1}
    \d\omega^1 &= \alpha \omega^1\wedge \omega^2 -
    \mu\omega^1\wedge\omega^3,\\
    \label{eq:dom3}
    \d\omega^3 &= (\baral\omega^1+\alpha\omega^2)\wedge \omega^3 ,\\
    \label{eq:dom4}
    \d\omega^4 &= (\barmu-\mu) \omega^1\wedge \omega^2  + (\barnu\omega^1
    +\nu\omega^2) \wedge\omega^3 - (\baral
    \omega^1+\alpha\omega^2)\wedge \omega^4.
  \end{align}
  If \eqref{eq:precursor} holds, then 
  \[ \cL_V\alpha = \cL_V\baral = \cL_V \mu = 0,\] because
  $\alpha,\baral,\mu$ are the structure functions in \eqref{eq:dom1}
  \eqref{eq:dom3}.  But, if 3 functions on a 4-dimensional manifold
  are annihilated by 2 independent vector fields, then they must be
  functionally dependent.  Therefore \eqref{eq:albaralmu} holds.  By
  Proposition \ref{prop:hDelta}, 
  \[ V = a \hDelta + b D ,\] where $\hDelta$ is defined as per
  \eqref{eq:hDeltadef}, and $a,b$ are some functions.  By Proposition
  \ref{prop:om13},
  \[ C=\baral   V^1 = a X,\quad 
   C^*=\al V^2 = a X^* \] are constants.  
  Hence, by \eqref{eq:Xdef},
  \begin{equation}
    \label{eq:XBC}
    B/B^* = X/X^* = C/C^*
  \end{equation}
  is a constant.

  Conversely, suppose that \eqref{eq:albaralmu} and \eqref{eq:dBB*}
  hold.  By assumption, \eqref{eq:XBC} holds for some constant $C$.
  Hence, $C/X = C^*/X^*$ is real.  Set $V= C/X \hDelta$.  This is a
  real vector field such that, by construction, $ \cL_V\alpha =
  \cL_V\al^* = 0$.  Since $\alpha,\alpha^*,\mu$ are functionally
  dependent, we also have $\cL_V \mu = 0$.  By relation
  \eqref{eq:LVal*om1}, $\cL_V \omega^1 = 0$.  By \eqref{eq:dom1} and
  \eqref{eq:LVomega3},
  \begin{gather*}
    0=\cL_V d\omega^1  = -(\cL_V\mu) \omega^1\wedge \omega^3 - \mu
    \omega^1 \wedge \cL_V\omega^3 = -\mu \omega^1\wedge \cL_V\omega^3
  \end{gather*}
  Since $\cL_V \omega^3$ is real and $\mu\neq 0$ by assumption, it
  follows that $\cL_V\omega^3=0$.
\end{proof}
\noindent
\textbf{Remark:}
Observe that
\[ \frac{B}{B^*} = \frac{B_1 + \i B_2}{B_1-\i B_2} = \frac{1+ \i
  B_2/B_1}{1-\i B_2/B_1}.\]
Hence, if $B_1\neq 0$, then condition \eqref{eq:dBB*} can be
conveniently expressed as $B_2/B_1=k$ where $k$ is a real constant.

We now show that type $(0,2)$ precursor solutions belong to the 4
classes shown in Table \ref{tab:preg202}.  Proposition
\ref{prop:Lprecursor} characterize the precursor solutions for which
$V^3=0$.  Proposition \ref{prop:Aprecursor} characterizes precursor
solutions for which $V^1=0$.  This leaves the case where both $V^1,
V^3$ are non-zero.  Since we are considering type $(0,2)$ solutions,
we exclude the possibility that $B=0$.  The possibility that $B\neq 0$
but $AA^*=1$ is excluded by Proposition \ref{prop:hDelta}. The
remaining possibilities can be divided into the case $B_1\neq 0$ and
the case $B_1=0$.  Proposition \ref{prop:B23} deals with the former
and \ref{prop:C23} with the latter.

\begin{prop}
  \label{prop:Lprecursor}
  There exists a vector field $V$ such that
  \begin{equation}
    \label{eq:precursorV30}
    \cL_V\omega^1=\cL_V\omega^3=0,\quad     V^1\neq0, V^3= 0
  \end{equation}
  if and only if  $A=1$.
\end{prop}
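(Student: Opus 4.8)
The entire analytic content of this statement is already packaged in Proposition~\ref{prop:om13}, so the plan is simply to specialize that result to the case $V^3=0$ and read off the consequences. Recall that Proposition~\ref{prop:om13} asserts that $\cL_V\omega^1=\cL_V\omega^3=0$ holds precisely when $C:=\baral V^1$ is a constant and the three relations \eqref{eq:V3rel}--\eqref{eq:DeltaV3} hold. Imposing $V^3=0$ makes \eqref{eq:deltaV3} vacuous, turns \eqref{eq:V3rel} into $C+\barC A=0$, and turns \eqref{eq:DeltaV3} into $C+\barC=0$, i.e.\ $\re C=0$.

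For the ``only if'' direction, suppose such a $V$ exists with $V^1\neq 0$ and $V^3=0$. Then, by Proposition~\ref{prop:om13}, $C=\baral V^1$ is a constant, and it is \emph{nonzero} because $\baral=\alpha^*\neq 0$ and $V^1\neq 0$. From $\re C=0$ we get $\barC=-C$, and substituting into $C+\barC A=0$ gives $C(1-A)=0$; since $C\neq 0$ we conclude $A=1$.

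For the ``if'' direction, assume $A=1$ and exhibit a vector field. Put $C:=\i$ and define $V$ through its components $V^1:=\i(\baral)^{-1}$, $V^2:=(V^1)^*=-\i\alpha^{-1}$, $V^3:=0$, $V^4:=0$. This is a bona fide real vector field, since $(V^1)^*=V^2$ while $V^3,V^4$ are real, and $V^1\neq 0$ because $\alpha\neq 0$; moreover $C=\baral V^1=\i$ is a nonzero constant. With $V^3=0$ and $A=1$, the relations \eqref{eq:V3rel}--\eqref{eq:DeltaV3} become $0=C+\barC$, $0=0$, $0=-(C+\barC)$, all of which hold because $C=\i$ is purely imaginary. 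Proposition~\ref{prop:om13} then yields $\cL_V\omega^1=\cL_V\omega^3=0$, as required. (As a consistency check, \eqref{eq:valpharel} gives $\cL_V\alpha=\alpha(C+\barC A)=\alpha(C+\barC)=0$, in accordance with \eqref{eq:om1def} and $\cL_V\omega^1=0$.)

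I do not anticipate any genuine obstacle here: once Proposition~\ref{prop:om13} is in hand, the whole statement collapses to the elementary remark that $V^3=0$ forces $C$ to be purely imaginary, whence $A=-C/\barC=1$. The only points that merit a little care are the reality conditions on the components of the vector field constructed in the converse and the non-vanishing of $C$, both of which are immediate. Should an external sanity check be desired, one may compare with Proposition~\ref{prop:L23}: the condition $A=1$ characterizes the logarithmic family $f(\zeta,u)=g_1\log\zeta+g_2\zeta$, which is precisely why this precursor class carries the label $\rL$.
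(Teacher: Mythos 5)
Your proposal is correct and follows essentially the same route as the paper: both directions rest on Proposition \ref{prop:om13}, the forward direction extracting $C+\barC=0$ and $C+\barC A=0$ (with $C\neq 0$) to force $A=1$, and the converse taking $V^1=\i/\baral$, $V^3=0$. Your explicit note that $C\neq 0$ because $\alpha\neq 0$ and $V^1\neq 0$ is a small point the paper leaves implicit, but otherwise the arguments coincide.
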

\begin{proof}
  Suppose that \eqref{eq:precursorV30} holds.  By \eqref{eq:DeltaV3},
  $C+\barC = 0$, and hence $C= \baral V^1$ is imaginary.  Hence, by
  \eqref{eq:V3rel}, $C+C^* A = 0$, which means that $A=1$.
  Conversely, if $A=1$, then in order for \eqref{eq:V3rel} -
  \eqref{eq:DeltaV3} to hold, it suffices to set $V^1= \i/\baral$,
  $V^3=0$.
\end{proof}

\begin{prop}
  \label{prop:Aprecursor}
  There exists a vector field $V$ such that
  \begin{equation}
    \label{eq:precursorV10}
    \cL_V\omega^1=\cL_V\omega^3=0,\quad     V^1= 0, V^3\neq 0
  \end{equation}
  if and only if   $\mu=0$.
\end{prop}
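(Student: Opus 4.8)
The plan is to reduce everything to Proposition~\ref{prop:om13}. Take $V=V^3\Delta$; then the constant $C=\baral V^1$ of Proposition~\ref{prop:om13} vanishes, and that proposition says that $\cL_V\omega^1=\cL_V\omega^3=0$ is equivalent to the three conditions $V^3\barmu=C+\barC A=0$, $\delta V^3=\baral V^3$, and $\Delta V^3=-C-\barC=0$.

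For the forward implication, assume \eqref{eq:precursorV10}. Since $V^1=0$ we have $C=0$, so the first of the conditions above forces $V^3\barmu=0$; as $V^3\neq0$ by hypothesis, $\mu=0$.

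For the converse, assume $\mu=0$ and exhibit a suitable $V^3$. The key observation is that, by \eqref{eq:om1def} and its complex conjugate, $\baral\omega^1+\alpha\omega^2=da+d\bara=d(a+\bara)$, which is exact. Hence the real-valued function $V^3:=e^{a+\bara}=|f_{\zeta\zeta}|^{1/2}$ is globally defined, nowhere vanishing, and satisfies $dV^3=V^3(\baral\omega^1+\alpha\omega^2)$. Comparing this with the coframe expansion $dV^3=(\delta V^3)\omega^1+(\bardel V^3)\omega^2+(\Delta V^3)\omega^3+(DV^3)\omega^4$ gives $\delta V^3=\baral V^3$ and $\Delta V^3=DV^3=0$; moreover $V^3\barmu=0=C+\barC A$ since $\mu=0$. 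These are exactly the conditions of Proposition~\ref{prop:om13} for $V:=V^3\Delta$ with $C=0$, so $\cL_V\omega^1=\cL_V\omega^3=0$ while $V^1=0$ and $V^3\neq0$, as required.

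The only step requiring insight is the construction of $V^3$ in the converse: one must recognize, using the explicit forms of Proposition~\ref{prop:alnonzerorels}, that the $1$-form $\baral\omega^1+\alpha\omega^2$ controlling the $\omega^1,\omega^2$ components of $dV^3$ is the exact form $d(a+\bara)$, so that its logarithmic primitive $e^{a+\bara}$ is an honest function on the spacetime. Everything else is bookkeeping with Proposition~\ref{prop:om13}.
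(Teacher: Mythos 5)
Your proof is correct and follows essentially the same route as the paper: the forward direction reduces to Proposition \ref{prop:om13} (via $C=\baral V^1=0$ forcing $V^3\barmu=0$), and the converse uses exactly the paper's choice $V^3=\e^{a+\bara}$, with your exactness observation $\baral\omega^1+\alpha\omega^2=\d(a+\bara)$ being just a repackaging of relations \eqref{eq:om1def}, \eqref{eq:deltaQ}, \eqref{eq:DeltaQ}. The only cosmetic point is that in the forward direction the given $V$ may also have a $D$-component, but since the conditions of Proposition \ref{prop:om13} do not involve $V^4$ this changes nothing.
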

\begin{proof}
  Suppose that \eqref{eq:precursorV10} holds. Hence, by \eqref{eq:valpharel} ,
  \[ \cL_V\alpha = -V^3\alpha\mu = 0.\]
  Therefore, $\mu=0$.  To prove the converse, it suffices to
  take $V^3=\e^{a+\bara}$.  Relations \eqref{eq:deltaV3} and
  \eqref{eq:DeltaV3} follow  by \eqref{eq:deltaQ}
  \eqref{eq:DeltaQ}.
\end{proof}

\begin{prop}
  \label{prop:B23}
  Suppose $B_1\neq 0, A A^*\neq 1$. The following are equivalent: (i)
  condition \eqref{eq:precursor} holds; (ii) $B_2/B_1=k,\; \Delta X_1 =
  2X_1^2$; (iii) $f(\zeta,u) = F(u^{-\i k} \zeta)u^{-2}+ g \zeta$.
\end{prop}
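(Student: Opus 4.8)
The plan is to prove the chain of equivalences (i) $\Leftrightarrow$ (ii) $\Leftrightarrow$ (iii) by first establishing (ii) $\Leftrightarrow$ (iii) via Proposition \ref{prop:B*23}, and then closing the loop by showing (i) $\Leftrightarrow$ (ii) using the characterization of precursors with $V^1, V^3$ both nonzero that was set up in Proposition \ref{prop:precursor} and Proposition \ref{prop:om13}.

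First I would deal with the implication (i) $\Rightarrow$ (ii). Assume \eqref{eq:precursor} holds with $B_1 \neq 0$ and $AA^* \neq 1$. By Proposition \ref{prop:precursor}, condition \eqref{eq:dBB*} holds, which (by the Remark following that proposition, using $B_1 \neq 0$) is exactly $B_2/B_1 = k$ for a real constant $k$. It remains to extract $\Delta X_1 = 2X_1^2$. The point is that Proposition \ref{prop:precursor} also tells us $V = (C/X)\hDelta$ is a genuine precursor vector field, so in particular $C = \baral V^1 = aX$ is constant along the flow; combined with Proposition \ref{prop:om13}, the constancy of $C$ and the structure equations \eqref{eq:deltaV3}--\eqref{eq:DeltaV3} translate into differential constraints on $X$. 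Writing $V^3 \barmu = C + \barC A$ and applying $\Delta$, then using $\Delta V^3 = -(C+\barC)$ together with \eqref{eq:Deltaalpha} and the definition \eqref{eq:Xdef} of $X = B/(A\barA-1)$, should reduce — after taking real parts and using $B_2/B_1 = k$ — to precisely $\Delta X_1 = 2X_1^2$. The bookkeeping here is the part I expect to be fiddly: one has to carefully track how $\Delta$ acts on $A$, $\mu$, $\alpha$ via the NP relations \eqref{eq:Dalpha}--\eqref{eq:deltanu} and the commutators, and it may be cleaner to work in the $(a,u)$ coordinates of Proposition \ref{prop:alnonzerorels}, where $\Delta Q = e^{-a-\bara}Q_u$.

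For (ii) $\Leftrightarrow$ (iii): given (ii), the condition $B_2/B_1 = k$ lets me invoke Proposition \ref{prop:B*23} directly, which yields $f(\zeta,u) = F(h^{\i k}\zeta)h^2 + g\zeta$ for some real $h = h(u) \neq 0$ — equivalently, via \eqref{eq:B*23LuLarel}, $L_u + (1+\i k)h_1 L_a = -2\i k h_1$ with $h_1 = h'$. The extra hypothesis $\Delta X_1 = 2X_1^2$ must then force $h$ to have the specific form making $h^{\i k} = u^{-\i k}$ (up to the admissible coordinate transformations \eqref{eq:zetaxform}--\eqref{eq:fxform}), i.e. $h_1/h$ is constant so that $h$ is a power of $u$; I would translate $\Delta X_1 = 2X_1^2$ into an ODE for $h$ (using $\Delta Q = e^{-a-\bara}Q_u$ and the explicit form of $X$ coming from $L = F(a-(1+\i k)h_2) - 2\i k h_2$) and solve it. Conversely, starting from $f = F(u^{-\i k}\zeta)u^{-2} + g\zeta$, a direct computation of $A$, $B$, $\mu$, $X$ from Proposition \ref{prop:alnonzerorels} verifies both $B_2/B_1 = k$ and $\Delta X_1 = 2X_1^2$. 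Finally (iii) $\Rightarrow$ (i) is the easiest direction: the form $f = F(u^{-\i k}\zeta)u^{-2} + g\zeta$ manifestly admits the scaling/rotation symmetry $\zeta \mapsto e^{\i k\epsilon}\zeta$, $u \mapsto e^\epsilon u$ (after absorbing the $g\zeta$ term and $v$-scaling via \eqref{eq:vxform}, \eqref{ex:fxform}), whose generator $V$ satisfies $\cL_V\omega^1 = \cL_V\omega^3 = 0$ with $V^1, V^3 \neq 0$; this can be checked either infinitesimally against \eqref{eq:precursor} or by noting the invariant 1-forms $\omega^1 = (\baral)^{-1}da$ and $\omega^3 = e^{a+\bara}du$ from \eqref{eq:om1def}--\eqref{eq:om3def} are preserved. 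The main obstacle, as noted, is the middle step: cleanly deriving the invariant identity $\Delta X_1 = 2X_1^2$ as the precise residue of "precursor-ness beyond the already-known $B_2/B_1 = k$," and dually solving the resulting ODE for $h(u)$ to pin down the power $u^{-2}$.
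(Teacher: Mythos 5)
Your overall strategy is the same as the paper's: (ii)$\Leftrightarrow$(iii) by specializing Proposition \ref{prop:B*23} and solving an ODE for $h(u)$, (i)$\Rightarrow$(ii) through Propositions \ref{prop:om13} and \ref{prop:precursor}, and (iii)$\Rightarrow$(i) by exhibiting an explicit vector field (your scaling/rotation generator is a slightly more geometric packaging of the paper's direct check with $V^1=C/\baral$, $V^3=-2u\,\e^{a+\bara}$, and it works because $\omega^1,\omega^3$ only see $f_{\zeta\zeta}$, not the $g\zeta$ term --- which, contrary to your parenthetical, cannot actually be absorbed). The genuine gap is at the crux of (i)$\Rightarrow$(ii): you never produce the identity that converts the precursor condition into $\Delta X_1=2X_1^2$. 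The paper's step is purely algebraic, with no differentiation: after normalizing $C=1+\i k$ one shows
\begin{equation*}
  V^3 \;=\; \frac{C+\barC A}{\barmu} \;=\; \frac{C(B+\barB A)}{B\,\barmu} \;=\; \frac{C}{X} \;=\; \frac{1}{X_1},
\end{equation*}
using $B=\mu A-\barmu$, $A\barA-1=(A\barB+B)/\barmu$ and $C\barB=\barC B$; then \eqref{eq:DeltaV3} literally reads $\Delta(1/X_1)=-(C+\barC)=-2$, i.e.\ $\Delta X_1=2X_1^2$. Your proposed route --- applying $\Delta$ to $V^3\barmu=C+\barC A$ --- brings in $\Delta A$ and $\Delta\barmu$, third-order quantities not controlled by \eqref{eq:Dalpha}--\eqref{eq:deltanu} or the commutators, so as sketched that computation does not obviously close; the hedge ``should reduce'' sits exactly where the missing idea is.

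The same incompleteness affects your middle step (ii)$\Rightarrow$(iii), where you only promise an ODE. Concretely, the proof of Proposition \ref{prop:B*23} yields $\e^{-a-\bara}/X_1=1/h_1$ with $h_1=h_2'$ real, so by \eqref{eq:DeltaQ} the hypothesis $\Delta X_1=2X_1^2$ becomes $(1/h_1)'(u)=-2$, whence $h_1=-1/(2u)$ (the integration constant removable by \eqref{eq:uxform}) and $h=u^{-1}$, producing precisely the powers $u^{-\i k}$, $u^{-2}$ in (iii). Note also that your phrase ``$h_1/h$ is constant'' is not the right condition: it is $h'/h\propto 1/u$ that forces $h$ to be a power of $u$. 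With the displayed identity supplied and the ODE actually solved, your outline becomes the paper's proof.
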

\begin{proof}
  Suppose that (i) holds. Since $V^3$ is real, by Proposition
  \ref{prop:om13},
  \[ \barmu(C+\barC \barA) - \mu(\barC + CA) = C \barB - \barC B =
  0.\] Since $B\neq 0$, we have $\mu\neq 0$ also.  Hence, $C\neq0$, by
  \eqref{eq:V3rel}.  Hence,
  \[\frac{C}{C^*}=\frac{B}{B^*} = \frac{1+\i B_2/B_1}{1-\i B_2/B_1}
  .\] Hence, $C=1+\i k$, without loss of generality, and $B_2/B_1=k$.
  Furthermore, since $X/X^*=B/B^*$, we have
  \begin{gather}
    \label{eq:XArel}
    \frac{1}{X_1} = \frac{C}{X} = \frac{C(B+\barB A)}{B\barmu} =
    \frac{C+\barC A}{\barmu} = V^3
  \end{gather}
  Therefore, (ii) follows by \eqref{eq:DeltaV3}.

  Next, we show that (ii) implies (iii).  By Proposition
  \ref{prop:B*23}, $f(\zeta,u)= F(h^{\i k}\zeta)h^2+g\zeta$ belongs to
  class $B^*_{23}$.  In the proof of Proposition \ref{prop:B*23}, we
  showed that
  \begin{gather*}
    e^{-a-\bara}/ X_1 = 1/h_1, 
  \end{gather*}
  where $h_1=h_1(u)$ is real. Hence, by \eqref{eq:DeltaQ}
  \begin{gather*}
    \Delta (1/X_1) =  \Delta (e^{a+\bara}/h_1) =
    (1/h_1)'(u),\\
    (1/h_1)'(u) + 2  = 0,\\
    h_1 = -1/(2u).
  \end{gather*}
  In the last step we can omit the constant of integration because of
  transformation freedom
  \eqref{eq:uxform}. Therefore
  \[ L_u -\left(\frac{1+\i k}{2u}\right) L_a=\frac{\i k}{u}.  \]
  Following the steps in the proof of Proposition \ref{prop:B*23} gives
  $h=u^{-1}$, which specializes solution form $B^*_{23}$ to form
  $B_{23}$.

  Finally we show that (iii) implies (i). For this, it suffices to set
  $V^1= C/\baral$ where $C=1+\i k$ and to set
  \[ V^3=1/X_1 =-2ue^{a+\bara}\] 
  Conditions \eqref{eq:deltaV3} \eqref{eq:DeltaV3} follow by
  \eqref{eq:deltaQ} \eqref{eq:DeltaQ}.
\end{proof}

\begin{prop}
  \label{prop:C23} Suppose that $B_1= 0, \mu\neq 0, AA^*\neq 1$. The
  following are equivalent: (i) condition \eqref{eq:precursor} holds;
  (ii) $\Delta X_2 = 0$; (iii) $f(\zeta,u) = F(\e^{\i u} \zeta)+ g
  \zeta$.
\end{prop}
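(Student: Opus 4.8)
The plan is to mirror the structure of Proposition \ref{prop:B23}, which handled the $B_1\neq 0$ case, and adapt it to the present situation $B_1=0$. The three implications (i)$\Rightarrow$(ii)$\Rightarrow$(iii)$\Rightarrow$(i) will be proved in turn. For (i)$\Rightarrow$(ii): since $B_1=0$ we have $B=\i B_2$ purely imaginary, so $B/B^*=-1$, and by the remark following Proposition \ref{prop:precursor} (or directly from \eqref{eq:XBC}) the constant $C=\baral V^1$ must satisfy $C/C^*=B/B^*=-1$, forcing $C$ to be purely imaginary; normalize $C=\i$. Then, just as in \eqref{eq:XArel}, one computes $V^3 = (C+\barC A)/\barmu = \i(1-A)/\barmu$, and using $X/X^*=B/B^*=-1$ together with $X = B/(A\barA-1)$ one identifies $V^3$ with a real multiple of $1/X_2$ (the imaginary part of $X$); concretely $V^3 = 1/X_2$ up to the normalization of $C$. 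Applying \eqref{eq:DeltaV3}, namely $\Delta V^3 = -(C+\barC) = 0$, yields $\Delta(1/X_2)=0$, i.e. $\Delta X_2=0$. (One must keep track that $\mu\neq 0$ is used to guarantee $C\neq 0$ via \eqref{eq:V3rel}, exactly as in the previous proposition.)

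For (ii)$\Rightarrow$(iii): start from Proposition \ref{prop:g1e}, which says $B_1=0$ (with $A\neq 1$, automatic here since $AA^*\neq 1$ excludes $A=1$ by the final remark of Section \ref{sect:g1}) is equivalent to $f(\zeta,u)=F(\e^{\i h}\zeta)+g\zeta$, the $\CE$ precursor form. In the proof of that proposition one has an explicit relation of the type $e^{-a-\bara}/X_2 = \pm 1/h_1$ with $h_1=h'(u)$ real (the imaginary analogue of the $X_1$ relation in Proposition \ref{prop:B*23}); combined with \eqref{eq:DeltaQ} this gives $\Delta(1/X_2) = (\mp 1/h_1)'(u)$, so the hypothesis $\Delta X_2=0$ — equivalently $\Delta(1/X_2)=0$ — forces $(1/h_1)'=0$, hence $h_1$ is a nonzero real constant, which after absorbing it via the coordinate freedom \eqref{eq:uxform}–\eqref{eq:zetaxform} can be set to $h_1=1$, i.e. $h=u$. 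That specializes the $\CE$ form to $f(\zeta,u) = F(\e^{\i u}\zeta)+g\zeta$. Finally (iii)$\Rightarrow$(i): given the explicit form, exhibit the vector field directly — set $V^1 = \i/\baral$ and $V^3 = 1/X_2$ (which, by the computation above, equals a concrete expression like $e^{a+\bara}$ times a constant), and verify \eqref{eq:deltaV3}, \eqref{eq:DeltaV3} by \eqref{eq:deltaQ}, \eqref{eq:DeltaQ}, exactly as at the end of the proof of Proposition \ref{prop:B23}.

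The main obstacle I anticipate is bookkeeping with the real/imaginary decompositions: one has to be careful that when $B_1=0$ the quantities $X$, $V^3$, and the relevant combination of spin coefficients are genuinely governed by $X_2$ rather than $X_1$, and that the sign conventions in $C=\i$ vs $C=1+\i k$ propagate correctly through \eqref{eq:V3rel}–\eqref{eq:DeltaV3}. A secondary subtlety is making sure the hypotheses are used in the right places — $\mu\neq 0$ to force $C\neq 0$, and $AA^*\neq 1$ to stay in the regime where $\hDelta$ is well-defined (Proposition \ref{prop:hDelta}) and where $A\neq 1$ so that Proposition \ref{prop:g1e} applies. Modulo these, every step is a routine transcription of the $B_1\neq 0$ argument with $X_1\leftrightarrow X_2$ and $\re\leftrightarrow\im$, invoking Propositions \ref{prop:g1e}, \ref{prop:om13}, \ref{prop:precursor}, and \ref{prop:hDelta}.
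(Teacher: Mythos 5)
Your proposal is correct and follows essentially the same route as the paper's proof: (i)$\Rightarrow$(ii) by forcing $C=\baral V^1$ to be purely imaginary (normalized $C=\i$) so that $V^3=1/X_2$ and \eqref{eq:DeltaV3} gives $\Delta X_2=0$; (ii)$\Rightarrow$(iii) by specializing the $\rC^*_{23}$ form of Proposition \ref{prop:g1e} via $\e^{-a-\bara}/X_2=-1/h_1$ and \eqref{eq:DeltaQ}, so $h_1$ is constant and $h=u$ after using \eqref{eq:uxform}; (iii)$\Rightarrow$(i) by exhibiting $V$ explicitly. The only discrepancy is the immaterial normalization $h_1=h'(u)$ versus the paper's $h_1=-2h'(u)$, which you hedged and which does not affect the argument.
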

\begin{proof}
  Let us show that (i) implies (ii).  As above, $C=\baral V^1\neq 0$
  is a constant such that $\im(B/C)= 0$. Since $B_1=0$ we have $C=\i$
  without loss of generality.  Hence, $V^3=1/X_2$ and $\Delta X_2 = 0$
  by \eqref{eq:DeltaV3}. 

  Next, we show that (ii) implies (iii).  By assumption, $f(\zeta,u) =
  F(\e^{\i h}\zeta) + g\zeta$ belongs to class $C^*_{23}$. Since $B=\i
  B_2$ we have by
  \[ -\i/X_2 = 1/\barX=(\barB+B\barA)/(\barB\mu) = (1-\barA)/\mu.\]
  Hence, by   Proposition \ref{prop:g1e}
  \[ \e^{-a-\bara}/ X_2 = -\i(2+L_a)/L_u=-1/h_1\] where
  $h_1(u)=-2h'(u)\neq 0$ is real. Since $\Delta X_2=0$, we infer that
  $h_1$ is a constant. Hence, without loss of generality, $h(u) = u$.

  Finally we show that (iii) implies (i). For this, it suffices to set
  $V^1= C/\baral$ where $C=\i$ and to set
  \[ V^3=1/X_2 =-2ke^{a+\bara}\] 
  Conditions \eqref{eq:deltaV3} \eqref{eq:DeltaV3} follow by
  \eqref{eq:deltaQ} \eqref{eq:DeltaQ}.
\end{proof}

We now classify the type $(0,1)$ precursor solutions.

% \begin{lem}
%   \label{lem:tDeltaB0}
%   Suppose that $B=0,\; AA^*=1,\; A\neq 1$. 
%   Then,  $\tDelta\alpha =0$.
% \end{lem}
% \begin{proof}
%   Write
%   \[  \tz = -\Delta \log(M)/(\alpha(A-1)).\]
%   Since $A\mu = \mu^*, \; A^* = 1/A,\; dA=0$, we have
%   \begin{align*}
%     M^* &=  \baral \barmu = AM  \baral/\alpha,\\
%     \tz^* &= -\frac{\Delta \log(M) + \Delta \log\baral- \Delta \log
%       \alpha}{\baral(1/A-1)} \\
%     &= -\frac{A(\Delta \log M +(A-1)\mu)}{\baral (1-A)}\\
%     \tDelta \alpha &= -\barmu \alpha + \tz A \alpha^2 + \tz^*\alpha
%     \baral\\
%     &= -AM-\frac{A\alpha \Delta \log M }{A-1} - \frac{\alpha A(\Delta \log M
%       +(A-1)\mu)}{ (1-A)} = 0
%   \end{align*}
% \end{proof}

\begin{prop}
  \label{prop:Delta1mu}
  Suppose that $B=0, A\neq 1, \mu\neq 0$.  Then \eqref{eq:precursor}
  holds if and only if
  \begin{equation}
    \label{eq:Delta1mu}
    \Delta^2(1/\mu)=0
  \end{equation}
\end{prop}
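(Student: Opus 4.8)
The plan is to reduce the Killing-type condition \eqref{eq:precursor} to a statement purely about the function $L=L(a,u)$, and then translate that statement into the invariant condition \eqref{eq:Delta1mu}. Since $B=0$ and $\mu\neq 0$, Proposition \ref{prop:albaldep} gives $AA^*=1$, and Proposition \ref{prop:albaldep1} gives that $A$ is constant with $\delta M=0$. By Proposition \ref{prop:aa*} we may write $L=Pa+g(u)$ with $P=-(A+1)/A$ a constant, and by Lemma \ref{lem:aa*b0} (the $A\neq 1$ case) we have the explicit form
\[
L = -\frac{A+1}{A}\,a + \frac{A-1}{A}\,(k+\i h(u)),
\]
with $k$ a real constant and $h=h(u)$ real. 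So the only free datum is the real function $h(u)$ (equivalently, $g(u)=\frac{A-1}{A}(k+\i h)$).

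First I would set up the candidate Killing vector. Because $B=0$ means $\hDelta$ is not available via Proposition \ref{prop:hDelta}, I instead look directly for $V$ satisfying Proposition \ref{prop:om13}: $C=\baral V^1$ constant, with $V^3$ solving \eqref{eq:V3rel}--\eqref{eq:DeltaV3}. Here $\barmu = e^{-a-a^*}\barL_u = e^{-a-a^*}g'(u)^*$ (using \eqref{eq:muLrel}), so $V^3\barmu = C+\barC A$ forces $V^3 = e^{a+a^*}(C+\barC A)/g'(u)^*$, which is of the form $e^{a+a^*}\,r(u)$ for a function $r$ of $u$ alone; by \eqref{eq:deltaQ} such a $V^3$ automatically satisfies $\delta V^3=\baral V^3$. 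Thus \eqref{eq:precursor} reduces to the single scalar ODE coming from \eqref{eq:DeltaQ} applied to \eqref{eq:DeltaV3}:
\[
\Delta V^3 = e^{-a-a^*}\partial_u\big(e^{a+a^*} r(u)\big) = -C-\barC,
\]
i.e. $r'(u) = -(C+\barC)$, a real-linear condition on $r$. Since $r(u) = (C+\barC A)/\overline{g'(u)}$ and $C$ is a free complex constant (subject only to $C\neq 0$), the existence of such a $V$ is equivalent to: there exists a constant $C$ with $(C+\barC A)/\overline{g'(u)}$ an affine function of $u$ with real slope $-(C+\barC)$.

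Next I would identify $\Delta^2(1/\mu)$ with the same data. From $\mu = e^{-a-a^*}g'(u)$ and \eqref{eq:DeltaQ}, $1/\mu = e^{a+a^*}/g'(u)$, so $\Delta(1/\mu) = e^{-a-a^*}\partial_u\big(e^{a+a^*}/g'(u)\big) = (1/g'(u))' $ -- wait, more carefully $\Delta(1/\mu) = (1/g')'$ only after the exponential cancels, giving a function of $u$ alone; then $\Delta^2(1/\mu) = \partial_u^2(1/g'(u))$ again as a pure $u$-function times $e^{-a-a^*}$... I would track the exponential factors precisely and conclude $\Delta^2(1/\mu)=0 \iff 1/g'(u)$ is affine in $u$. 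The final step is the matching: show that "$1/g'(u)$ affine in $u$" is equivalent to "$\exists\, C\neq 0$ with $(C+\barC A)/\overline{g'(u)}$ affine in $u$ with the correct real slope." Using $g'(u) = \frac{A-1}{A}\i h'(u)$, both $1/g'$ and $(C+\barC A)/\overline{g'}$ are (complex constant) multiples of $1/h'(u)$ up to an additive constant, so affineness of one is equivalent to affineness of the other; the real-slope normalization is then arranged by rescaling $C$, exactly as in the proofs of Propositions \ref{prop:B23} and \ref{prop:C23} where the integration constant is absorbed by the coordinate freedom \eqref{eq:uxform}.

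The main obstacle I anticipate is bookkeeping rather than conceptual: correctly propagating the $e^{a+a^*}$ factors through the nested $\Delta$ operators via \eqref{eq:DeltaQ}, and verifying that the reality/slope constraint on $C$ in \eqref{eq:DeltaV3} is always solvable (i.e. that $C+\barC A$ cannot be forced to vanish when $A\neq 1$, which uses $A\neq 1$ essentially, paralleling the role of $A\neq 1$ in Proposition \ref{prop:Lprecursor}). I would also double-check the degenerate possibility $h'(u)\equiv 0$ separately, since then $\mu=0$, contradicting the hypothesis, so it does not arise. Modulo these checks, the equivalence \eqref{eq:precursor} $\iff$ \eqref{eq:Delta1mu} follows.
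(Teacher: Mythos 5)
Your overall route is the same as the paper's: reduce \eqref{eq:precursor} via Proposition \ref{prop:om13} to the existence of a constant $C$ and a real $V^3$ with $V^3\barmu=C+\barC A$, note that \eqref{eq:deltaV3} (and reality of $V^3$) are automatic when $B=0$, and turn \eqref{eq:DeltaV3} into a statement about $u$-functions using $L=Pa+g(u)$, \eqref{eq:muLrel}, \eqref{eq:DeltaQ} and Lemma \ref{lem:aa*b0}. Your forward direction is sound (including the observation that $C+\barC A=0$ would force $C=0$ when $A\neq 1$), and your identification $\Delta^2(1/\mu)=0\iff 1/h'(u)$ affine is correct.

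The gap is in the converse, at the step you describe as ``the real-slope normalization is then arranged by rescaling $C$.'' Given that $1/h'$ is affine with real slope $m=(1/h')'$, you must produce a constant $C\neq 0$ satisfying the coupled condition $\lambda_C\, m=-(C+\barC)$, where $\lambda_C$ is the complex constant with $(C+\barC A)/\overline{g'(u)}=\lambda_C/h'(u)$ (explicitly $\lambda_C=\i (C+\barC A)/(1-A)$ after using $\barA=1/A$). This condition is homogeneous and real-linear in $C$, so rescaling $C$ by a real factor changes nothing, and a complex rescaling does not act by simple scaling on $C+\barC A$; generic homogeneous real-linear equations $pC+q\barC=0$ have only the trivial solution. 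The analogy with Propositions \ref{prop:B23} and \ref{prop:C23} is also misleading: there $B\neq 0$ pins $C$ up to a real factor and \eqref{eq:DeltaV3} instead fixes the metric function $h$, whereas here the metric is given and the full complex freedom in $C$ must be exploited. What actually saves the argument is the special structure: writing the condition as $(\beta+1)C+(\beta A+1)\barC=0$ with $\beta=\i m/(1-A)$, a nonzero solution exists iff $|\beta+1|=|\beta A+1|$, and this identity holds precisely because $m$ is real and $AA^*=1$ (indeed $|\beta+1|^2-|\beta A+1|^2=\beta(1-A)+\overline{\beta(1-A)}=\i m-\i m=0$). The paper sidesteps this by exhibiting $C$ explicitly: first showing $\i\frac{A-1}{A}\Delta(1/\mu)=-h''/(h')^2=k$ is a \emph{real} constant, then taking $C=\i$ if $k=0$ and $C=A/(A-1)+\i/k$ if $k\neq 0$, and verifying \eqref{eq:V3rel}--\eqref{eq:DeltaV3}. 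With that solvability argument (or an explicit choice of $C$) inserted, your proof is complete; without it, the crucial half of the equivalence is unproved.
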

\begin{proof}
  Suppose that \eqref{eq:precursor} holds.  By Proposition
  \ref{prop:albaldep1},  $A$ is a constant.  Hence,
  \eqref{eq:Delta1mu} follows by \eqref{eq:V3rel}
  \eqref{eq:DeltaV3}. Conversely, suppose that \eqref{eq:Delta1mu}
  holds.  
  % Our assumptions imply that equation  \eqref{eq:LArel} holds.
  % Hence by \eqref{eq:muLrel}
  By Proposition \ref{prop:om13}, we seek a constant $C$
  such that 
  \[V^3 = (C^*+C A^*)/\mu = (C+C^* A)/\barmu,\] and such that the
  above $V^3$ satisfies \eqref{eq:deltaV3} and \eqref{eq:DeltaV3}.  
  First, observe that $A^* = 1/A$ and $\barmu=A \mu$.  Hence,
  \[ (C+C^* A)/\barmu= (C/A+C^*)/\mu = (C^* + CA^*)/\mu.\] Therefore,
  $V^3$ is well-defined for any choice of $C$.  By Proposition
  \ref{prop:albaldep1}, $\delta(\alpha\mu) = 0$.  Hence
  \[ \alpha(\baral \mu + \delta\mu) = 0\quad \delta(1/\mu) =
  -\delta\mu/\mu^2 =\baral/\mu.\] Hence, \eqref{eq:deltaV3} is
  satisfied for all choices of $C$.  
  We now turn to condition \eqref{eq:DeltaV3}.  
By \eqref{eq:muLrel}
  and \eqref{eq:LArel} of Lemma \ref{lem:aa*b0}
  \[ \i(A-1)/A /\mu = \i(A-1)/A e^{a+\bara}/L_u = -\e^{a+\bara}/
  h'(u),\] where $h=h(u)$ is real. Hence, by \eqref{eq:DeltaQ},
  \begin{equation}
    \i (A-1)/A \Delta(1/\mu) = -h''(u)/(h'(u))^2= k
  \end{equation}
  is a real constant.  If $k=0$, then condition \eqref{eq:DeltaV3} can
  be satisfied by taking $C=\i$.  If $k\neq 0$, \eqref{eq:DeltaV3} is
  satisfied by taking $C= A/(A-1) + \i/k$.  With this choice,
  \begin{gather*}
    C^*+C A^* = \frac{1}{1-A} - \frac{\i}{k} +\left(\frac{A}{A-1}+
      \frac{\i}{k}\right) \frac{1}{A} = \frac{-\i(A-1)}{kA},\\
    \Delta V^3= -1,\\
    C+C^* = \frac{A}{A-1} + \frac{1}{1-A} = 1    
  \end{gather*}
\end{proof}

\begin{prop}
  \label{prop:13precursor} The type (0,1) precursor solutions belong
  to one of the classes shown in Table \ref{tab:preg201}.
\end{prop}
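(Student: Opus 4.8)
\textbf{Proof proposal for Proposition \ref{prop:13precursor}.}

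The plan is to enumerate the type $(0,1)$ solutions established in Proposition \ref{prop:013} — namely the forms $\rP_{13},\rE_{13},\rL_{13}$ — and, for each, determine exactly when condition \eqref{eq:precursor} of Definition \ref{def:g2pre} holds, thereby splitting each $G_1$ form into its $G_2$-precursor specializations. By Proposition \ref{prop:albaldep}, a type $(0,1)$ solution automatically satisfies $B=0$ and $AA^*=1$, and by Proposition \ref{prop:albaldep1} the invariant $A$ is then a constant with $\delta M=0$. So the analysis naturally breaks into the cases $\mu\neq 0$ and $\mu=0$, and, within $\mu\neq 0$, the sub-cases $A\neq 1$ and $A=1$.

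First I would dispatch the two ``degenerate'' branches using results already in hand. If $\mu=0$, then $M=0$ and by Proposition \ref{prop:Aprecursor} condition \eqref{eq:precursor} holds automatically with $V^1=0$, $V^3=\e^{a+\bara}$; this gives the $\rA$-type precursors among the $(0,1)$ forms (the $\rP,\rE,\rL$ forms with the $g\zeta$ term present but $\mu=0$). If $A=1$, then by Proposition \ref{prop:Lprecursor} condition \eqref{eq:precursor} holds automatically with $V^1=\i/\baral$, $V^3=0$; combined with form $\rL_{13}$ this yields the logarithmic precursor. In both of these branches the precursor condition is vacuous, so every solution of the corresponding $(0,1)$ form is a $G_2$-precursor.

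The substantive case is $\mu\neq 0$, $A\neq 1$, which by Proposition \ref{prop:013} covers forms $\rP_{13}$ (when $A^2\neq 1$) and $\rE_{13}$ (when $A=-1$); here Proposition \ref{prop:Delta1mu} tells us that \eqref{eq:precursor} is equivalent to the single higher-order invariant condition $\Delta^2(1/\mu)=0$. So I would translate $\Delta^2(1/\mu)=0$ into a condition on the parametric data using Lemma \ref{lem:aa*b0} and the formulas $\mu=\e^{-a-\bara}L_u$, $L=-\frac{A+1}{A}a+\frac{A-1}{A}(k+\i h)$ from \eqref{eq:LArel}: as in the proof of Proposition \ref{prop:Delta1mu}, $1/\mu$ is a constant multiple of $\e^{a+\bara}/h'(u)$, so applying $\Delta=\e^{-a-\bara}\partial_u$ twice shows $\Delta^2(1/\mu)=0$ is equivalent to $\big(1/h'(u)\big)''=0$, i.e. $1/h'(u)$ is an affine function of $u$. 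Integrating this ODE (and absorbing constants via the coordinate freedom \eqref{eq:uxform}, \eqref{eq:zetaxform}) gives the two normal forms for $h(u)$ — a linear $h$ (the case $k=0$, $h'$ constant) and a logarithmic-type $h$ (the case $k\neq 0$) — which, substituted back into $f_{\zeta\zeta}=\e^{2g}\cdots$ as in Propositions \ref{prop:P23} and \ref{prop:E23}, produce precisely the precursor forms listed in Table \ref{tab:preg201}. Finally, I would record that when $\Delta^2(1/\mu)\neq 0$ the solution is a genuine $G_1$ with no $G_2$-precursor structure, completing the case analysis.

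The main obstacle I anticipate is bookkeeping rather than conceptual: carefully matching each ODE solution for $h(u)$ (and the leftover function $g=g(u)$ controlling the $g\zeta$ term) against the explicit entries of Table \ref{tab:preg201}, and verifying that the residual coordinate freedom in \eqref{eq:zetaxform}--\eqref{eq:fxform} is exactly enough to reach the stated normal forms without over- or under-normalizing. One has to be attentive to the boundary sub-case $A^2=1$ (i.e. $A=-1$ versus $A=1$), since $A=-1$ falls under Proposition \ref{prop:Delta1mu} with the exponential form $\rE_{13}$ while $A=1$ is handled separately via Proposition \ref{prop:Lprecursor}; keeping these bifurcations aligned with the table is where care is needed.
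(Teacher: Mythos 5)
Your overall strategy (split into $\mu=0$, $A=1$, and $\mu\neq 0,\ A\neq 1$, and use Propositions \ref{prop:Aprecursor}, \ref{prop:Lprecursor}, \ref{prop:Delta1mu} respectively) is the same as the paper's, and your translation of $\Delta^2(1/\mu)=0$ into $(1/h')''=0$, with the two normal forms $h=u$ and $h=\tfrac1k\log u$ giving the four classes $\BP_{13},\CP_{13},\BE_{13},\CE_{13}$, is exactly the paper's argument. The $\mu=0$ branch is also fine, though you should note explicitly that $\mu=0$ forces $h'(u)=0$ in the forms of Table \ref{tab:g101}, which is what produces the specific entries $\AP_{13},\AE_{13},\AL_{13}$.

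The genuine gap is that you never derive the class $\BL_{13}$, which is one of the entries of Table \ref{tab:preg201} that the proposition asserts. You dismiss $A=1$ as "vacuous" via Proposition \ref{prop:Lprecursor}, but that proposition only supplies the precursor field with $V^3=0$; the table additionally records the finer class of $A=1$, $\mu\neq 0$ solutions admitting a precursor field with $V^3\neq 0$, and this case cannot be reached through Proposition \ref{prop:Delta1mu}, whose hypothesis is $A\neq 1$. One must redo the Proposition \ref{prop:om13} analysis with $A=1$ and $\mu$ real: then $V^3=(C+C^*)/\mu$ with $C+C^*\neq 0$, \eqref{eq:deltaV3} holds automatically, and \eqref{eq:DeltaV3} forces $\Delta(1/\mu)=-1$, i.e. $\Delta\mu=\mu^2$ --- a strictly stronger condition than $\Delta^2(1/\mu)=0$. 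Equivalently $h''=(h')^2$, so $h=-\log u$ after using \eqref{eq:uxform}, and integrating as in Proposition \ref{prop:L23} yields $f=Cu^{-2}\log\zeta+g\zeta$, which is $\BL_{13}$. Omitting this step leaves the table's $\BL_{13}$ row unaccounted for (it is not merely bookkeeping: this is precisely the class that later specializes to the maximal-order solutions $\BL_{123}$ and the $G_3$ class $\BL_{11}$), so your case analysis, while it covers all type $(0,1)$ solutions, does not establish the full classification stated in the proposition.
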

\begin{proof}
  By Proposition \ref{prop:Lprecursor} the $B=0, A=1$ solutions are
  automatically precursor solutions with $V^3=0, V^1\neq 0$.  We now
  classify all precursor solutions that admit a vector field that
  satisfies \eqref{eq:precursor} with $V^3\neq 0$.  We consider two
  cases: $\mu\neq 0$ and $\mu=0$.  Suppose the former.  By the above
  Lemma, a precursor solution is characterized by the condition
  $\Delta^2 (1/\mu) = 0$, which is equivalent to
  \begin{equation}
    \label{eq:hk}
    h''(u) +k \, h'(u)^2 = 0.
  \end{equation}
  where $h$ is the parameter in solution forms $\rP_{13}$, $\rE_{13}$,
  $\rL_{13}$.  This gives us four classes of solutions.  Class
  $\BP_{13}$ corresponds to the case $A\neq -1$ and $k\neq 0$.  In
  this case, the solution of \eqref{eq:hk}, without loss of
  generality, is $h= \frac{1}{k}\log u$. Class $\CP_{13}$ corresponds
  to $A\neq -1$ and $k=0$.  Here, without loss of generality, the
  solution to \eqref{eq:hk} is $h=u$.  Similarly, the condition $A=-1$
  gives solution classes $\BE_{13}$ and $\CE_{13}$.  Finally, consider
  the case of $A=1$.  Here $\mu^*=\mu$.  Hence, by Proposition
  \ref{prop:L23} $L = -2a+\i k + h$ where $h$ is real.  By Lemma
  \ref{prop:om13} we require that
  \[ V^3 = (C+C^*)/\mu \neq 0,\quad \Delta V^3 = (C+C^*)\Delta(1/\mu)
  = -(C+C^*).\] Since $\delta(\alpha\mu) =0$, we automatically have
  $\delta(1/\mu) = \alpha^* \mu$; condition \eqref{eq:deltaV3} is
  automatically satisfied.  Hence, a necessary and sufficient
  condition for a precursor solution is $\Delta(1/\mu) = -1$, or
  equivalently $\Delta\mu = \mu^2$.  This is equivalent to $h''(u) =
  h'(u)^2$, which, by employing the freedom \eqref{eq:uxform}, gives us
  $h(u) = -\log u$.  Employing the integration steps in Proposition
  \ref{prop:L23}, this gives us $f = C u^{-2} \log\zeta +
  g\zeta$, which is solution form $\BL_{13}$.

  Next, suppose that $\mu=0, A A^*=1$.  Here it suffice to specialize
  one of the Table \ref{tab:g101} solutions.  For classes $\rP_{13}$
  and $\rE_{13}$ we set $h \to 0$.  For the logarithmic solution
  $\rL_{13}$ we set $h\to k$, where the latter is a constant.
\end{proof}

\section{The $G_2$ solutions}
\label{sect:g2}
In this section we characterize and classify the vacuum pp-waves with
two independent Killing vectors.  Since a Killing vector annihilates
the invariant 1-forms $\omega^1,\ldots, \omega^4$, every $G_2$
solution is a specialization of the precursor metrics discussed in the
preceding Section.

We first present the invariant characterization of the generic, type
$(0,2,2)$ solutions, and then present the characterization of the type
$(0,1,2,2)$ solutions.  We then pass to a detailed classification, the
results of which are displayed in Tables \ref{tab:022} and
\ref{tab:0122}.

\begin{prop}
  \label{prop:022}
  A type (0,2,2) $G_2$ solution is characterized by
  \eqref{eq:precursor} and
  \begin{equation}
  \label{eq:022condition}
  d\alpha\wedge \d\baral\neq 0,\quad
  d\alpha\wedge\d\baral\wedge\d\nu = 0.
  \end{equation}
\end{prop}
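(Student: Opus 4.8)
The plan is to characterize the $(0,2,2)$ $G_2$ solutions by showing that this invariant count forces exactly the two conditions \eqref{eq:precursor} and \eqref{eq:022condition}, and conversely. First I would establish necessity. Recall from Proposition \ref{prop:alnonzero} that in the $\alpha\neq 0$ case $d_p=0$ for $p\geq 1$, so a $G_2$ solution has $4-t_q+d_q = 2$, whence $t_q=2$; the invariant count $(0,2,2)$ means $t_1=t_2=2$. The condition $t_1=2$ says precisely that $\alpha,\baral$ are functionally independent, i.e., $d\alpha\wedge\d\baral\neq 0$. Since by Proposition \ref{prop:alnonzero} the second order invariants are generated by $\mu,\nu,\bardel\alpha$ (and conjugates), and $\bardel\alpha = A\alpha^2$ is a function of $\alpha,\baral$ together with the first-order invariant $A$ — wait, more carefully: the stabilization $t_1=t_2$ forces every second-order invariant to be a function of $\alpha,\baral$. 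Applying this to $\nu$ gives $\d\alpha\wedge\d\baral\wedge\d\nu=0$, which is the second half of \eqref{eq:022condition}.

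For the Killing vector structure, a $G_2$ solution has two independent Killing vectors; one is $D=\partial_v$ (which annihilates all invariants since all invariants are annihilated by $D$ by \eqref{eq:Dalpha}ff.), so there is a second Killing vector $V$ independent of $D$, and after subtracting a multiple of $D$ we may assume $V^1\neq 0$ or $V^3\neq 0$. A Killing vector annihilates the invariant coframe $\omega^1,\ldots,\omega^4$ (by \cite{olver}, since these forms are constructed invariantly from the normalized curvature), so in particular $\cL_V\omega^1=\cL_V\omega^3=0$, which is exactly \eqref{eq:precursor}. This establishes that $(0,2,2)$ implies \eqref{eq:precursor} and \eqref{eq:022condition}.

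For the converse, assume \eqref{eq:precursor} and \eqref{eq:022condition}. The first part of \eqref{eq:022condition} gives $t_1=2$ immediately. The work is to show $t_2=2$, i.e., that all second-order invariants are functions of $\alpha,\baral$. From \eqref{eq:Adef}, $\bardel\alpha=A\alpha^2$, and one shows $A$ is a function of $\alpha,\baral$: indeed \eqref{eq:precursor} yields a Killing-like vector $V$ (via Proposition \ref{prop:om13}, $C=\baral V^1$ is constant and $V^3$ satisfies \eqref{eq:V3rel}--\eqref{eq:DeltaV3}), and one checks $\cL_V\alpha=0$ using \eqref{eq:valpharel} together with \eqref{eq:V3rel}; since $V$ and $D$ annihilate $\alpha$, any invariant annihilated by both $V$ and $D$ must be a function of $\alpha,\baral$ (two independent annihilators on a 4-manifold, with $\d\alpha\wedge\d\baral\neq 0$). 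Now $\mu$: one shows $\cL_V\mu=0$ — this is where Proposition \ref{prop:precursor}'s argument (via the structure equation \eqref{eq:dom1}, whose structure functions $\alpha,\mu$ are annihilated by any $V$ preserving $\omega^1,\omega^3$) applies. So $\mu$ is a function of $\alpha,\baral$, and hence so is $A = $ function of $\alpha,\baral$ already noted, and $\bardel\alpha$. That leaves $\nu$, handled directly by the second part of \eqref{eq:022condition}. Hence $t_2\leq 2$, and combined with $t_1=2$ we get $t_2=2$; since $d_2=0=d_1$, the Karlhede algorithm terminates at order $2$ with count $(0,2,2)$, and two independent Killing vectors $D,V$ are present, so it is a $G_2$.

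The main obstacle I anticipate is the rigorous justification that a Killing vector annihilates the \emph{invariant coframe} $\omega^1,\ldots,\omega^4$ (not merely the scalar invariants), and conversely that \eqref{eq:precursor} together with annihilation of the scalar invariants is enough to reconstruct a genuine Killing vector — the precursor terminology in Definition \ref{def:g2pre} flags exactly that \eqref{eq:precursor} alone is necessary but not sufficient for a Killing vector. The resolution is that for a $(0,2,2)$ solution the isotropy is already trivial ($d=0$) and the coframe is rigidly determined by the curvature normalization, so preserving $\omega^1,\omega^3$ plus the scalars $\alpha,\baral,\mu,\nu$ forces preservation of the full coframe via the structure equations \eqref{eq:dom1}--\eqref{eq:dom4}, at which point $V$ is an isometry; this closure argument, rather than any computation, is the crux.
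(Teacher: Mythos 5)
Your argument follows essentially the same route as the paper: necessity from the fact that a second Killing vector annihilates the invariant coframe and the scalar invariants (forcing \eqref{eq:precursor} and the dependence of $\nu$), and sufficiency by showing that $\mu$, $A$ and $\nu$ are all functionally dependent on the independent pair $\alpha,\baral$, so the count stabilizes at $(0,2,2)$. Two remarks. First, your step ``$A$ is a function of $\alpha,\baral$'' needs the explicit verification $\cL_V A=0$; the paper gets this cleanly by writing $\d\alpha=\alpha(\baral\,\omega^1+A\alpha\,\omega^2-\mu\,\omega^3)$ and using $\cL_V\d\alpha=\d\cL_V\alpha=0$ together with $\cL_V\omega^1=\cL_V\omega^2=\cL_V\omega^3=0$ to read off $\cL_V(A\alpha^2)=0$; your sketch gestures at this but does not carry it out. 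Second, and more importantly, the ``crux'' you identify --- upgrading the precursor field $V$ to a genuine Killing vector by a closure argument on the coframe --- is neither needed nor established by what you write: the structure equation \eqref{eq:dom4} only gives $\d(\cL_V\omega^4)=-(\baral\omega^1+\alpha\omega^2)\wedge\cL_V\omega^4$ once the scalars are preserved, which does not by itself force $\cL_V\omega^4=0$. The paper sidesteps this entirely: once $t_1=t_2=2$ and $d_1=d_2=0$, the Karlhede algorithm terminates and the isometry dimension is $4-t_q+d_q=2$, which is all that ``$G_2$'' means here; no direct construction of the second Killing vector is attempted in the proof of this proposition. So your proof is correct once the dispensable closure claim is dropped (or properly justified), but the logical weight lies in the invariant count, not in identifying $V$ itself as an isometry.
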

\begin{proof}
  By Proposition \ref{prop:alnonzero}, the 2nd order Cartan
  invariants are generated by $A, \mu,\nu$.  Suppose that there exists
  a Killing vector $V$ independent from $D$.  Condition
  \eqref{eq:precursor} follows by definition.  Since Killing vectors
  annihilate invariants, there are at most two functionally
  independent invariants.  Hence, \eqref{eq:022condition} must hold.

  Conversely, suppose that \eqref{eq:precursor} and
  \eqref{eq:022condition} hold.  Dependence of $\mu$ follows by
  Proposition \ref{prop:precursor}. Furthermore,
  \[ \cL_V\alpha = \cL_V\alpha^*=0,\quad \cL_V \d\alpha = \d
  \cL_V\alpha = 0,\] where $V$ is the vector field in
  \eqref{eq:precursor}. By \eqref{eq:Dalpha} -- \eqref{eq:Deltaalpha}
  and \eqref{eq:Adef},
  \[      \d\alpha = \alpha(\baral \omega^1 + A\alpha \omega^2 -\mu
  \omega^3) \]
  Hence $\cL_V A = 0$.  Therefore, the invariant count is $(0,2,2)$.
\end{proof}

The type $(0,1,2,2)$ solutions split into two branches, depending on
whether or not $\mu$ is independent of $\alpha$.  We consider each
branch in turn.
\begin{prop}
  \label{prop:01nu}
  Suppose that $\d\alpha\wedge\d\baral = 0$ but that
  $\d\alpha\wedge\d\mu\neq 0$. Then a $G_2$ solution is characterized
  by the condition
  \begin{equation}
    \label{eq:almunu}
    \d\alpha\wedge\d\mu \wedge\d\nu =0.
  \end{equation}
\end{prop}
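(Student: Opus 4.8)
The plan is to mirror the structure of the proof of Proposition~\ref{prop:022}, replacing the pair $(\alpha,\alpha^*)$ by the pair $(\alpha,\mu)$ as the two functionally independent first-/second-order invariants that survive. Under the hypotheses $\d\alpha\wedge\d\alpha^*=0$ but $\d\alpha\wedge\d\mu\neq 0$, Proposition~\ref{prop:albaldep} (with $\mu\neq 0$, which is forced here since $\d\alpha\wedge\d\mu\neq 0$) gives $B=0$ and $A A^*=1$, and Proposition~\ref{prop:albaldep1} then gives that $A$ is a \emph{constant} with $\delta M=0$. So in this branch the genuine second-order invariants reduce to $\alpha$ and $\mu$ (equivalently $M$, or $\nu$), and the question is whether a Killing vector independent of $D$ exists, which will pin the invariant count at $(0,1,2,2)$.

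First I would establish the forward direction: if a $G_2$ solution exists, the extra Killing vector $V$ annihilates all invariants, so $\cL_V\alpha=\cL_V\mu=\cL_V\nu=0$, and since on a $4$-manifold three functions killed by the two independent fields $V,D$ must be functionally dependent, \eqref{eq:almunu} follows. For the converse — the substantive direction — I would argue that \eqref{eq:almunu} together with the structural facts ($A$ constant, $B=0$, $\delta M=0$, $AA^*=1$) lets us build the Killing vector explicitly. The natural candidate is $V = c\,\hDelta + bD$ as in Proposition~\ref{prop:hDelta} when $AA^*\neq 1$; but here $AA^*=1$, so instead I would use the $A\neq 1$ case to take the normalization $\hat\Delta\alpha\to 0$-type vector directly, or better, invoke Proposition~\ref{prop:Delta1mu}: with $B=0$, $\mu\neq 0$, and assuming $A\neq 1$, a precursor vector field exists iff $\Delta^2(1/\mu)=0$, so I must show that \eqref{eq:almunu} forces $\Delta^2(1/\mu)=0$. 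Concretely, write everything in the $(a,u)$ coordinates via Proposition~\ref{prop:alnonzerorels}: $\alpha = e^{a-a^*}(a_\zeta)^*$, $\mu=e^{-a-a^*}L_u$, $\nu=e^{-a-3a^*}(Z_{uu}+(f_\zeta)^*)$ with $L=Pa+g$, $P=L_a$ constant; functional dependence \eqref{eq:almunu} becomes a differential relation on the single free function $h(u)$ appearing in Lemma~\ref{lem:aa*b0}, and the computation should collapse to $h''(u)+k\,h'(u)^2=0$ for some real constant $k$, i.e.\ exactly the condition \eqref{eq:hk} that produces a precursor, which (being type $(0,1)$) is then genuinely $G_2$ since its invariant count cannot exceed $(0,1,2,2)$. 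The degenerate cases $A=1$ (handled separately as in Proposition~\ref{prop:13precursor}, giving $\Delta\mu=\mu^2$) and $\mu$ independent of $\alpha$ versus not need to be checked to match the branch hypothesis; since here $\d\alpha\wedge\d\mu\neq 0$ we are squarely in the $\mu\neq 0$, $A\neq 1$ regime modulo the $A=1$ subcase, which I would dispatch in a sentence using $B_1=0$ automatic when $A=1$.

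The main obstacle I expect is the converse: extracting from the coordinate-free condition \eqref{eq:almunu} the precise ODE on $h(u)$, and verifying that \emph{every} solution of that ODE indeed yields a vector field satisfying the full precursor system \eqref{eq:V3rel}--\eqref{eq:DeltaV3} (not merely $\cL_V\alpha=0$). This is where one must be careful: $\cL_V\alpha=\cL_V\alpha^*=0$ is automatic for $V$ proportional to the appropriate $\hat\Delta$-type vector, and $\cL_V\mu=0$ then follows from functional dependence, but one still needs $\cL_V\omega^1=\cL_V\omega^3=0$, which via Proposition~\ref{prop:om13} is equivalent to the constancy of $C=\alpha^*V^1$ and to $\Delta V^3 = -(C+C^*)$ — and it is this last equation that, after substituting $V^3 = 1/X_1$ or $1/X_2$ (or $(C+C^*)/\mu$ in the $A=1$ case), reproduces \eqref{eq:hk}. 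I would therefore organize the converse as: (1) reduce to $(a,u)$-coordinates and to $L=Pa+g$ with $P$ constant; (2) translate \eqref{eq:almunu} into a relation among $h,h',h''$; (3) solve, normalizing via the coordinate freedom \eqref{eq:uxform}; (4) exhibit the explicit $V$ and check \eqref{eq:V3rel}--\eqref{eq:DeltaV3} using \eqref{eq:deltaQ}--\eqref{eq:DeltaQ}; (5) conclude the invariant count is $(0,1,2,2)$. Steps (2)--(3) carry the real work; everything else parallels the already-established precursor propositions.
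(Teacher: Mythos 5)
Your forward direction coincides with the paper's. The converse, however, has a genuine gap. The paper's own converse never constructs a vector field at all: it simply runs the Karlhede count. Using Propositions \ref{prop:albaldep} and \ref{prop:albaldep1} one gets $B=0$, $AA^*=1$, $\d A=0$, $\barmu=A\mu$ and $\delta M=0$, so the second-order invariants reduce to functions of $\alpha,\mu$; then, because $\delta\mu=-\baral\mu$ and \eqref{eq:deltanu} expresses $\Delta\mu$ through $\delta\nu+\baral\nu-\mu^2$ with $\nu$ functionally dependent on $\alpha,\mu$ by \eqref{eq:almunu}, no new functionally independent invariant appears at third order, giving invariant count $(0,1,2,2)$ and hence an isometry group of dimension $4-t_q+d_q=2$. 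Your plan instead builds a vector field satisfying the precursor condition \eqref{eq:precursor}; but the paper is explicit (Definition \ref{def:g2pre} and the remark following it) that $\cL_V\omega^1=\cL_V\omega^3=0$ is \emph{necessary but not sufficient} for a second Killing vector: the precursor classes of Table \ref{tab:preg201} are generically $G_1$ of type $(0,1,3)$. So your step (5), ``conclude the invariant count is $(0,1,2,2)$,'' does not follow from steps (1)--(4); establishing that count is precisely the substance of the proof, and once you do it (as the paper does, directly from the NP relations) the entire coordinate reduction becomes unnecessary.

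A second, related error is the claim that \eqref{eq:almunu} ``becomes a differential relation on the single free function $h(u)$'' collapsing to \eqref{eq:hk}. The invariants $\alpha,\mu$ depend only on $f_{\zeta\zeta}$, i.e.\ on $h$, but $\nu$ involves $f_\zeta$ and hence also the function $g(u)$ multiplying the linear term $g\zeta$. Thus \eqref{eq:almunu} constrains $h$ and $g$ jointly; the ODE \eqref{eq:hk} alone only lands you in a precursor class $\BP_{13},\CP_{13},\BE_{13},\CE_{13}$, and the additional specialization that actually produces a $G_2$ is the vanishing of the $g$-dependence ($\tUpsilon=0$, i.e.\ $g'=0$ after normalization), which is the content of Proposition \ref{prop:BCPE13} — a proposition that in the paper's logical order \emph{uses} the present result rather than feeding into it. To repair your route you would have to extract both conditions from \eqref{eq:almunu}, arrive at the explicit forms of Table \ref{tab:0122}, and then exhibit two genuine Killing vectors (not merely a precursor field), which amounts to redoing Propositions \ref{prop:Delta1mu}, \ref{prop:13precursor} and \ref{prop:BCPE13}; the paper's invariant-count argument avoids all of this, including any case split on $A=1$.
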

\begin{proof}
  If $V$ is a Killing vector then $\cL_V \nu = 0$.  In a $G_2$
  solution there are two such independent vector field, which means
  that $\alpha,\mu,\nu$ must be functionally dependent.  Let us prove
  the converse.  We will show that the invariant count is $(0,1,2,2)$,
  which signifies a $G_2$ solution by the Karlhede algorithm.  By
  Proposition \ref{prop:alnonzero}, the second-order invariants are
  generated by $\mu, A, \nu$ and their complex conjugates. Suppose
  that
  \[ \d\alpha\wedge\d\mu\neq 0,\quad \d\alpha\wedge\d\baral = 0,\quad
  \d\alpha\wedge\d\mu \wedge\d\nu=0.\] By Propositions
  \ref{prop:albaldep} and \ref{prop:albaldep1},
  \[ B=0,\quad AA^*=1,\quad \d A = 0,\quad \barmu = A\mu.\] Hence, all
  second order invariants depend on $\alpha,\mu$.  The third order
  invariants are generated by $\delta^*A,\delta\mu,\Delta \mu,\Delta
  \nu$, and their complex conjugates.  Since $A$ is a constant and
  $\nu$ is a function of $\alpha, \mu$, and since relation
  \eqref{eq:deltanu} holds, it suffices to show that $\delta\mu$
  depend on $\alpha,\mu$. By Proposition \ref{prop:albaldep1} and by
  \eqref{eq:deltaalpha},
  \begin{equation}
    \label{eq:deltamu}
    \delta(\alpha\mu) = 0,\quad \delta \mu  + \baral \mu = 0,
  \end{equation}
  as was to be shown.
\end{proof}

\begin{prop}
  \label{prop:01nuDnu}
  Suppose that $\d\alpha\wedge\d\baral = \d\alpha\wedge\d\mu= 0$, but
  that $\d\alpha\wedge\d\nu\neq 0$. Then a $G_2$ solution is
  characterized by the conditions
  \begin{align}
    \label{eq:alnubnu}
    &\d\alpha\wedge\d\nu \wedge\d\nu^* =0,\\
    \label{eq:alnuDnu}
    &\d\alpha\wedge\d\nu \wedge\d\Delta\nu =0.
  \end{align}
\end{prop}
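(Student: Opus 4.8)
The plan is to establish the two implications separately, following the template of the proof of Proposition~\ref{prop:01nu}. For the forward implication, I would argue that if the metric is a $G_2$, then there are two functionally independent Killing vectors, each of which annihilates every Cartan invariant \cite[Ch.~8-10]{olver}. Hence $\d\alpha$, $\d\nu$, $\d\nu^*$, $\d\Delta\nu$ all lie in a common $2$-dimensional annihilator and so are linearly dependent; since $\d\alpha\wedge\d\nu\neq0$ by hypothesis, this forces $\d\alpha\wedge\d\nu\wedge\d\nu^*=0$ and $\d\alpha\wedge\d\nu\wedge\d\Delta\nu=0$, i.e.\ \eqref{eq:alnubnu} and \eqref{eq:alnuDnu}.

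For the converse I would assume the standing hypotheses together with \eqref{eq:alnubnu} and \eqref{eq:alnuDnu} and show that the invariant count is $(0,1,2,2)$, which by Proposition~\ref{prop:alnonzero} (so that $d_p=0$ for $p\geq1$) and the Karlhede algorithm exhibits a $G_2$ of isometry dimension $4-2+0=2$. First, $\d\alpha\wedge\d\baral=0$ gives $t_1=1$. By Propositions~\ref{prop:albaldep} and~\ref{prop:albaldep1} one obtains $B=0$, $AA^*=1$, $\barmu=A\mu$, $A$ constant, and $\delta M=0$. Since $\d\alpha\wedge\d\mu=0$, the scalar $\mu$ is a function of $\alpha$, so the second-order generators $\mu,A,\nu$ (and conjugates) of Proposition~\ref{prop:alnonzero} depend only on $\alpha$ and $\nu$; with $\d\alpha\wedge\d\nu\neq0$ this yields $t_2=2$.

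At third order the new generators are $\bardel^2\alpha$, $\delta\mu$, $\Delta\mu$, $\Delta\nu$ and their conjugates. Since $A=\bardel\alpha/\alpha^2$ (see \eqref{eq:Adef}) is constant, $\bardel\alpha$ and hence $\bardel^2\alpha$ are functions of $\alpha$; from $\delta M=0$ and \eqref{eq:deltaalpha} one gets $\delta\mu=-\baral\mu$, again a function of $\alpha$ (recall $\baral$ is a function of $\alpha$); and \eqref{eq:alnuDnu} makes $\Delta\nu$ a function of $\alpha,\nu$. The only remaining generator is $\Delta\mu$, and by relation \eqref{eq:deltanu} this reduces to showing that $\delta\nu$ is a function of $\alpha,\nu$.

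To handle $\delta\nu$ I would write $\nu^*=\Phi(\alpha,\nu)$ as permitted by \eqref{eq:alnubnu}, apply $\delta$, and use that the complex conjugate of \eqref{eq:bardelnu} reads $\delta\nu^*=1-3\alpha^*\nu^*$; solving the resulting equation for $\delta\nu$ then expresses it through $\alpha$ and $\nu$, provided $\Phi_\nu\neq0$. That $\Phi_\nu\neq0$ holds because otherwise $\nu^*$ would depend on $\alpha$ alone, and then conjugating $\d\nu^*\wedge\d\alpha=0$, together with $\d\baral=c\,\d\alpha$ for a nonvanishing $c$ (as $\alpha$ is non-constant by \eqref{eq:deltaalpha}), would contradict $\d\alpha\wedge\d\nu\neq0$. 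With $\delta\nu$ under control, all third-order invariants are functions of $\alpha,\nu$, so $t_3=t_2=2$ and $d_2=d_3=0$, the algorithm terminates, and the invariant count is $(0,1,2,2)$. The main obstacle is exactly the control of $\Delta\mu$: unlike in Proposition~\ref{prop:01nu}, here $\mu$ is subordinate to $\alpha$ while $\nu$ is the active second invariant, so $\Delta\mu$ must be routed through \eqref{eq:deltanu} and ultimately through $\delta\nu$, forcing one to combine \eqref{eq:alnubnu} with the conjugate of \eqref{eq:bardelnu} and to check that $\nu^*$ genuinely involves $\nu$.
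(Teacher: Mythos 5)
Your argument is correct, and it follows the paper's overall template (Karlhede count, showing $t_3=t_2=2$ with $\{\alpha,\nu\}$ as the independent pair), but it handles the one nontrivial step---controlling $\Delta\mu$---by a genuinely different mechanism. The paper invokes Lemma \ref{lem:b0aneq1}: under $B=0$, $\d\alpha\wedge\d\mu=0$ with $\mu\neq0$ forces $A=1$ and $\Delta\mu=\mu^2$, while $\mu=0$ gives $\Delta\mu=0$ a fortiori; either way $\Delta\mu$ is a function of $\alpha$ alone, and the remaining generators $\nu^*,\Delta\nu$ are exactly what \eqref{eq:alnubnu}, \eqref{eq:alnuDnu} control. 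You instead route $\Delta\mu$ through \eqref{eq:deltanu}, writing $\nu^*=\Phi(\alpha,\nu)$ via \eqref{eq:alnubnu}, applying $\delta$ and the conjugate of \eqref{eq:bardelnu} to solve for $\delta\nu$, with a nondegeneracy check $\Phi_\nu\neq0$. Your route avoids the case split on $\mu$ and shows dependence on $(\alpha,\nu)$ suffices for the count, at the cost of the extra nondegeneracy argument; the paper's route is shorter and yields the sharper structural facts ($A=1$, $\Delta\mu=\mu^2$ or $\mu=0$) that are reused later, e.g.\ in the bifurcation into the $\BL$ versus $\AP,\AE,\AL$ branches in Sections \ref{sect:0123} and \ref{sect:g3}. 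One small repair: to get $c\neq0$ in $\d\baral=c\,\d\alpha$ you need $\d\baral\neq0$, which follows from the conjugate of \eqref{eq:deltaalpha}, namely $\bardel\baral=\alpha\baral\neq0$, rather than from the non-constancy of $\alpha$ itself.
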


\begin{lem}
  \label{lem:b0aneq1}
  Suppose that $B=0, \mu \neq 0$.  The following are equivalent: (i)
  $\d\alpha \wedge\d\mu 
  = 0$ and(ii) $A=1, \Delta\mu=\mu^2$.
\end{lem}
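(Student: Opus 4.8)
The plan is to convert both stated conditions into the vanishing of explicit coefficients of a $2$-form written in the invariant coframe $\omega^1,\dots,\omega^4$ dual to $\delta,\bardel,\Delta,D$, and then compare the two.

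First I would record the consequences of the standing hypotheses $B=0$ and $\mu\neq 0$. By Proposition~\ref{prop:albaldep} these force $A\barA=1$; in particular $A\neq 0$, and also $\baral\neq 0$ since $\alpha\neq 0$. By Proposition~\ref{prop:albaldep1}, $A$ is moreover a constant and $\delta M=0$, where $M=\alpha\mu$ as in \eqref{eq:Mdef}. Two further relations are then immediate: $B=0$ with \eqref{eq:Bdef} gives $\barmu=A\mu$; and expanding $0=\delta M=(\delta\alpha)\mu+\alpha\,\delta\mu$ using \eqref{eq:deltaalpha} and $\alpha\neq 0$ gives $\delta\mu=-\baral\mu$.

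Next I would expand $\d\alpha$ and $\d\mu$ in the coframe. Using \eqref{eq:Dalpha}--\eqref{eq:Deltaalpha}, the definition \eqref{eq:Adef}, and $\barmu=A\mu$,
\[ \d\alpha=\alpha\baral\,\omega^1+A\alpha^2\,\omega^2-A\alpha\mu\,\omega^3, \]
while \eqref{eq:Dmu}, \eqref{eq:bardeltamu}, and $\delta\mu=-\baral\mu$ give
\[ \d\mu=-\baral\mu\,\omega^1-\alpha\mu\,\omega^2+(\Delta\mu)\,\omega^3 ; \]
neither has an $\omega^4$ component, since $D\alpha=D\mu=0$. Wedging these, I expect
\[ \d\alpha\wedge\d\mu=\alpha^2\baral\mu(A-1)\,\omega^1\wedge\omega^2+\alpha\baral(\Delta\mu-A\mu^2)\,\omega^1\wedge\omega^3+A\alpha^2(\Delta\mu-\mu^2)\,\omega^2\wedge\omega^3. \]
(Equivalently one may wedge $\d\alpha$ against $\d M$, using $\alpha^2\,\d\mu=\alpha\,\d M-M\,\d\alpha$, so that $\d\alpha\wedge\d\mu=\alpha^{-1}\,\d\alpha\wedge\d M$.)

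Finally, since $\alpha,\baral,\mu$, and $A$ are all nonzero, vanishing of the $\omega^1\wedge\omega^2$ coefficient is equivalent to $A=1$, and, granting $A=1$, vanishing of the $\omega^2\wedge\omega^3$ coefficient is equivalent to $\Delta\mu=\mu^2$; the $\omega^1\wedge\omega^3$ coefficient then collapses to $\alpha\baral(\mu^2-\mu^2)=0$ and imposes nothing extra. Conversely, $A=1$ together with $\Delta\mu=\mu^2$ plainly annihilates all three coefficients, so $\d\alpha\wedge\d\mu=0$. This establishes that (i) is equivalent to (ii). The only delicate point is the bookkeeping of the wedge product and the observation that the $\omega^1\wedge\omega^3$ term is redundant; I do not anticipate any obstacle beyond this short computation.
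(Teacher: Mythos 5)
Your argument is correct and is essentially the paper's own proof: both reduce the hypotheses via Propositions \ref{prop:albaldep} and \ref{prop:albaldep1} to $\barmu=A\mu$, $\delta\mu=-\baral\mu$, expand $\d\alpha$ and $\d\mu$ in the coframe $\omega^1,\omega^2,\omega^3$, and read off the vanishing of the wedge coefficients $(A-1)\alpha^2\baral\mu$, $\alpha\baral(\Delta\mu-A\mu^2)$, $A\alpha^2(\Delta\mu-\mu^2)$. Your write-up even makes explicit the final coefficient-by-coefficient equivalence (and uses the correct $\Delta\alpha=-A\alpha\mu$), which the paper leaves implicit.
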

\begin{proof}
  By assumption, $\barmu=A\mu$.  By Proposition \ref{prop:albaldep1},
  relation \eqref{eq:deltamu} holds. Hence,
  \begin{align*}
    \d\mu &= -\mu\baral \omega^1 -\alpha\mu\omega^2 + \Delta \mu\omega^3,\\
     d\alpha &= \alpha(\baral \omega^1 + A\alpha \omega^2 -\mu \omega^3),\\
     d\alpha\wedge \d\mu &= (A-1)\alpha^2\baral\mu\omega^1 \wedge
     \omega^2-\alpha\baral(A\mu^2 - \Delta\mu)\omega^1\wedge\omega^3
     \\
     &\quad -
     A \alpha^2(\mu^2-\Delta \mu)\omega^2\wedge\omega^3.
  \end{align*}
\end{proof}

\begin{proof}[Proof of Proposition \ref{prop:01nuDnu}]
  By Propositions \ref{prop:albaldep} \ref{prop:albaldep1}, $A$ is a
  constant.  Hence, using the reasoning in the proof of Proposition
  \ref{prop:01nu} above, $\nu^*,\Delta\mu, \Delta\nu$ generate the
  second and third-order invariants.  If $\mu\neq 0$, then by Lemma
  \ref{lem:b0aneq1}, $\Delta\mu$ is a function of $\mu$, which itself
  is a function of $\alpha$.  If $\mu = 0$, then afortiori $\Delta
  \mu=0$.  That means that $\nu^*, \Delta\nu$ generate all second and
  third-order invariants. Therefore, \eqref{eq:alnubnu}
  \eqref{eq:alnuDnu} suffice for a $G_2$ solution.
\end{proof}

We now classify the $(0,2,2)$ solutions.  Throughout, $V$ denotes the
2nd Killing vector independent from $D$.  The $G_2$ solutions can be
further subdivided according to whether $V^3\neq0$ or $V^3=0$.

By Proposition \ref{prop:Lprecursor}, the (0,2) precursor with $V^3=0$
is of class $\rL_{22}$.  The remaining (0,2) precursors are
$\rB_{23}$, $\rC_{23}$, $\rA_{23}$. As we show below, the
specialization from the precursor class to the $G_2$ class is governed
by the vanishing of the $Y$ and $\Upsilon$ invariants, which are
defined in \eqref{eq:Ydef} and \eqref{eq:Upsilondef}, respectively.
\begin{prop}
  \label{prop:B22}
  Suppose that $f(\zeta,u) = F(u^{-\i k} \zeta)u^{-2} + g u^{-2-\i
    k}\zeta,\; k\neq 0$ belongs to the $\rB_{23}$ precursor class.  The
  following are equivalent: (i) $\d\alpha\wedge\d\baral\wedge\d\nu=0$,
  (ii) $\hUpsilon=0$, (iii) $g'(u)=0$.
\end{prop}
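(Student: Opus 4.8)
The plan is to funnel all three conditions through the single statement $V\nu=0$, where $V=(C/X)\hDelta$ is the precursor vector field of the $\rB_{23}$ class (with $C=1+\i k$), and to exploit that $g$ enters the invariants only through $\nu$. First I would record the $g$-dependence: since the term $g\,u^{-2-\i k}\zeta$ is affine in $\zeta$, it does not affect $f_{\zeta\zeta}$, so $a$, $Z$, $L$ of Proposition \ref{prop:alnonzerorels}, and hence $\alpha,\mu,A,B,X,\hz$, the coframe $\omega^1,\omega^3$ and the operator $\hDelta$, are all independent of $g$. The only invariant datum that feels $g$ is $\nu$, through the $(f_\zeta)^*$ term in \eqref{eq:nuLrel}, with $\nu=\nu_0+\e^{-a-3\bara}u^{-2+\i k}g^*$, where $\nu_0$ is the value at $g\equiv 0$; this is affine in $g^*$. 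Consequently $\hnu$ of \eqref{eq:hnudef} and then $\hUpsilon$ of \eqref{eq:Upsilondef} are affine in $g^*$ and---because $\Delta$ carries a $\partial_u$, cf.\ \eqref{eq:DeltaQ}---also in $(g')^*$.

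Next I would set up the precursor field. Membership in the $\rB_{23}$ precursor class gives $B_1\neq 0$ and $A\barA\neq1$, hence $\mu\neq0$, $X_1\neq0$, so $X,\hnu,\hUpsilon,\hDelta$ are well defined and $\d\alpha\wedge\d\baral\neq0$ by Proposition \ref{prop:albaldep}. By Propositions \ref{prop:precursor} and \ref{prop:hDelta} there is a real field $V=(C/X)\hDelta$ with $\cL_V\omega^1=\cL_V\omega^3=0$ and $\cL_V\alpha=\cL_V\baral=\cL_V\mu=0$; Lie-differentiating $\d\alpha=\alpha(\baral\,\omega^1+A\alpha\,\omega^2-\mu\,\omega^3)$ (which follows from \eqref{eq:Dalpha}--\eqref{eq:Deltaalpha} and \eqref{eq:Adef}) and using that $V$ annihilates $\alpha,\baral,\mu,\omega^1,\omega^2,\omega^3$ forces $\cL_VA=0$, hence $\cL_VB=\cL_VX=0$ and $\cL_V\bigl(X(\mu+2\barX)/\baral\bigr)=0$, so $V\hnu=V\nu=(C/X)\hDelta\nu$. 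The equivalence (i)$\,\Leftrightarrow V\nu=0$ is then immediate: if (i) holds then $\nu$ is a function of $\alpha$ and $\baral$, both annihilated by $V$; conversely, if $V\nu=0$ the independent fields $V$ and $D=\partial_v$ annihilate the three functions $\alpha,\baral,\nu$ on a $4$-manifold, forcing $\d\alpha\wedge\d\baral\wedge\d\nu=0$.

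To bring in $\hUpsilon$, I would expand $\hDelta\hnu$ using the null-rotation law \eqref{eq:nrDelta}, eliminate the $\delta,\bardel,D$ derivatives of $\nu$ by \eqref{eq:Dnu}, \eqref{eq:bardelnu}, \eqref{eq:deltanu} and compute $\Delta\nu$ via \eqref{eq:DeltaQ}; the corrections built into \eqref{eq:hnudef}--\eqref{eq:Upsilondef} are arranged precisely so that this collapses to $\hDelta\hnu$ being a nowhere-vanishing multiple of $\hUpsilon$ (possibly modulo terms that vanish throughout the $\rB_{23}$ precursor class). Since $C\neq 0$ and $X\neq 0$, this yields $V\nu=0\Leftrightarrow\hUpsilon=0$, i.e.\ (i)$\,\Leftrightarrow\,$(ii).

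For (ii)$\,\Leftrightarrow\,$(iii) I would set $g\equiv0$: then $f=F(u^{-\i k}\zeta)u^{-2}$ is the Kundt--Ehlers $G_2$ metric $\rB_{22}$, for which (i) holds, so $\hUpsilon|_{g\equiv0}=0$ by the previous step. Writing $\hUpsilon=\hUpsilon|_{g\equiv0}+c_0g^*+c_1(g')^*$ from the dependence analysis, the cancellation engineered into $\hnu$ should force $c_0\equiv0$, and a direct evaluation of $c_1$ (this is where $k\neq0$ is used) should show $c_1$ is nowhere zero; hence $\hUpsilon=c_1(g')^*$ and $\hUpsilon=0\Leftrightarrow g'=0$. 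The hard part is the identity $\hDelta\hnu=(\text{nonzero})\cdot\hUpsilon$: it requires careful bookkeeping with the higher-order NP relations and is exactly what dictates the shape of the corrections in \eqref{eq:hnudef}--\eqref{eq:Upsilondef}; verifying $c_0\equiv0$ and $c_1\neq0$ is a check of the same flavor.
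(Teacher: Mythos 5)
Your strategy is the same as the paper's: reduce (i) to $V\nu=0$ for the precursor field $V=(C/X)\hDelta$ (note $C/X=1/X_1$ since $X=X_1(1+\i k)$ here, so this is literally the paper's $V=X_1^{-1}\hDelta$), use $\cL_V\alpha=\cL_V\baral=\cL_V\mu=\cL_V A=\cL_V X=0$ to trade $\nu$ for $\hnu$, identify $V\nu=0$ with the vanishing of $\hUpsilon$, and finish (iii) by isolating the $g$-dependence of $\hUpsilon$. Your reduction of (i) to $V\nu=0$ and your observation that $g$ enters all invariants only through the affine $g^*$-term in $\nu$ (via $(f_\zeta)^*$ in \eqref{eq:nuLrel}) are both sound, and the latter is a nice structural substitute for part of the paper's ``direct calculation.''

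The gap is that the two identities carrying the whole proof are asserted rather than derived, as you acknowledge (``should collapse,'' ``should force $c_0\equiv 0$,'' ``possibly modulo terms that vanish''). Without them neither (i)$\Leftrightarrow$(ii) nor (ii)$\Leftrightarrow$(iii) is established. The paper closes exactly these points: using the precursor relations ($\Delta X_1=2X_1^2$, $X/X^*$ constant, Proposition \ref{prop:om13}, and \eqref{eq:bardelnu}--\eqref{eq:deltanu}) it records $\delta X=-\baral X$, $\bardel X=-\alpha X$, $\Delta X=2XX_1$, $\delta(\hnu/X^*)=-4\i X_2$, $\bardel(\hnu/X^*)=(1-2\hnu\alpha)/X^*$, from which $\hDelta(\hnu/X^*)=\hUpsilon$ holds \emph{exactly} (not merely up to terms vanishing on the class); since $\hDelta X=0$, this is your ``nonzero multiple'' statement with the multiple identified. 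For (ii)$\Leftrightarrow$(iii) the paper computes the closed form $\hUpsilon^*=4uX_1^2/X\,g'(u)\,F''(u^{-\i k}\zeta)^{-1/2}$, which simultaneously delivers your $c_0\equiv 0$ and $c_1\neq 0$; your alternative route via $g\equiv 0$ (the Kundt--Ehlers $G_2$ metric $\rB_{22}$, so $\hUpsilon|_{g\equiv 0}=0$) only removes the need to know the constant term, and still leaves $c_0\equiv 0$ and the nonvanishing of $c_1$ to be verified by essentially the same calculation. So: right approach, but the crux computations that constitute the proof are left undone.
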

\begin{proof}
  By Proposition \ref{prop:precursor}, $V= X_1^{-1} \hDelta$
  annihilates $\omega^1,\omega^2,\omega^3, \alpha,\mu$.  Above, we
  already noted that $\cL_V A = 0$.  By \eqref{eq:Bdef}
  \eqref{eq:Xdef}, $\cL_V X = 0$ also.  Let $\hnu$ be the invariant
  defined in \eqref{eq:hnudef}. By Proposition \ref{prop:om13} and
  \eqref{eq:bardelnu} \eqref{eq:deltanu} ,
  \begin{gather*}
    \delta X = -\baral X,\quad
    \bardel X = -\alpha X,\quad
    \Delta X = 2 X X_1,\\
    \delta(\hnu/X^*) = -4\i X_2,\quad
    \bardel(\hnu/X^*) = (1-2\hnu\alpha)/X^*,\\
    \hDelta(\hnu/X^*) = \Delta(\hnu/X^*) -4\i X X_2/\bardel +
    (1-2\hnu\alpha)/\alpha  = \hUpsilon
  \end{gather*}
  where $\hnu$ is the invariant defined by \eqref{eq:hnudef}.
  This proves the equivalence of (i) and (ii).   A direct calculation
  shows that
  \begin{gather*}
    \hUpsilon^* = 4u X_1^2/X\,  g'(u) F''(u^{-\i k} \zeta)^{-1/2}
  \end{gather*}
  This proves the equivalence of (ii) and (iii).
\end{proof}
\noindent
Remark 1: If $g'(u)=0$, then by \eqref{eq:zetaxform} we can absorb the
$g(u) u^{-2-\i k}\zeta$ term into the $F(u^{-\i
  k}\zeta)u^{-2}$ term.\\
Remark 2: the invariant $\hnu$ can be calculated directly by employing
the tetrad that respects the normalization $\hDelta \alpha =0$.  The
null rotation that sends $\Delta \to \hDelta$ maps $\nu \to \hnu$.

\begin{prop}
  \label{prop:C22} Suppose that $f(\zeta,u) = F(\e^{\i u} \zeta) + g
  \e^{\i u}\zeta,$ belongs to the $\rC_{23}$ precursor class.  The
  following are equivalent: (i) $\d\alpha\wedge\d\baral\wedge\d\nu=0$,
  (ii) $\hUpsilon=0$, (iii) $g'(u)=0$.
\end{prop}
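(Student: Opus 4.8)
The plan is to mirror, almost verbatim, the proof of Proposition \ref{prop:B22}, exploiting the complete structural parallel between the $\rB_{23}$ and $\rC_{23}$ precursor classes. First I would invoke Proposition \ref{prop:precursor} and Proposition \ref{prop:C23} to pin down the Killing-candidate vector field: for the $\rC_{23}$ class we have $B_1=0$, so that $C=\i$ (without loss of generality) and $V=X_2^{-1}\hDelta$ annihilates $\omega^1,\omega^2,\omega^3$ as well as $\alpha$ and $\mu$ (the latter by functional dependence, Proposition \ref{prop:precursor}). As in Proposition \ref{prop:B22}, one notes that $\cL_V A=0$ automatically, and hence by \eqref{eq:Bdef} and \eqref{eq:Xdef} also $\cL_V X=0$. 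So the only second-order invariant not yet known to be annihilated by $V$ is $\nu$, and the invariant count drops to $(0,1,2,2)$ (signifying a $G_2$) precisely when $\cL_V\nu=0$, i.e.\ when $\d\alpha\wedge\d\baral\wedge\d\nu=0$.

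Next I would compute the action of $\delta,\bardel,\hDelta$ on $X$ and on $\hnu/X^*$, exactly paralleling the display in the proof of Proposition \ref{prop:B22} but now specialized to $B_1=0$ (so $X=\i X_2$, $X X_1$ is replaced throughout by the appropriate $X_2$-expressions). Using the NP relations \eqref{eq:bardelnu}, \eqref{eq:deltanu} together with Proposition \ref{prop:om13}, one finds $\hDelta(\hnu/X^*)=\hUpsilon$, where $\hUpsilon$ is the invariant \eqref{eq:Upsilondef}; since $\hnu/X^*$ is annihilated by $\delta$ and $\bardel$, the condition $\cL_V\nu=0$ is equivalent to $\hUpsilon=0$. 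This establishes (i)$\Leftrightarrow$(ii).

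For (ii)$\Leftrightarrow$(iii), I would substitute the explicit $\rC_{23}$ ansatz $f(\zeta,u)=F(\e^{\i u}\zeta)+g\e^{\i u}\zeta$ into the definitions and carry out a direct calculation, as was done at the end of the proof of Proposition \ref{prop:B22}. One expects an identity of the shape $\hUpsilon^*=(\text{nonzero factor})\cdot g'(u)\cdot F''(\e^{\i u}\zeta)^{-1/2}$, so that $\hUpsilon=0$ iff $g'(u)=0$ (recall $F'''\neq 0$, so $F''$ is not locally constant and the $F''$-factor does not vanish identically). Finally I would add a remark, as in Proposition \ref{prop:B22}, that when $g'(u)=0$ the linear term $g\e^{\i u}\zeta$ can be absorbed by the coordinate freedom \eqref{eq:zetaxform}, so that the $G_2$ solution is genuinely $\rC_{22}$: $f=F(\e^{\i u}\zeta)$.

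The main obstacle is the last bookkeeping step: the explicit evaluation of $\hUpsilon$ on the $\rC_{23}$ ansatz. One must unwind $a$, $Z$, $L$, $\mu$, $\nu$, $X$ via Proposition \ref{prop:alnonzerorels} and the formulas from the proof of Proposition \ref{prop:g1e}, keeping careful track of the $\e^{a+\bara}$ weights and the real-versus-holomorphic structure, so that the $g'(u)$ dependence falls out cleanly and the claimed nonvanishing of the prefactor is transparent. Everything else is a formal transcription of the $\rB_{23}$ argument with $X_1\leftrightarrow X_2$ and $C=1+\i k\leftrightarrow C=\i$.
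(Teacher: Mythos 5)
Your proposal is correct and follows essentially the same route as the paper, whose entire proof of this proposition is the remark that the argument of Proposition \ref{prop:B22} carries over verbatim with the two changed formulas $\Delta X=0$ and $\hUpsilon^{*}=4X^{*}\,g'(u)\,F''(\e^{\i u}\zeta)^{-1/2}$ --- exactly the shape you predicted, including the nonvanishing prefactor. One minor slip in your write-up: $\hnu/X^{*}$ is \emph{not} annihilated by $\delta$ and $\bardel$ (those derivatives equal $-4\i X_2$ and $(1-2\hnu\alpha)/X^{*}$); the equivalence $\cL_V\nu=0\Leftrightarrow\hUpsilon=0$ follows instead from facts you had already established, namely that $V$ is a nonzero multiple of $\hDelta$ and annihilates $X$, $X^{*}$, $\mu$, $\baral$.
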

\begin{proof}
  The proof is similar to the argument employed in Proposition
  \ref{prop:B22} above.  The formulas that differ are
  \begin{gather*}
    \Delta X = 0,\qquad
    \Upsilon^* = 4X^*\,  g'(u) F''(\e^{\i u} \zeta)^{-1/2}
  \end{gather*}
\end{proof}

\begin{prop}
  \label{prop:L22} Suppose that $f(\zeta,u) = g_1\log\zeta+g_2\zeta$
  belongs to the logarithmic $\rL_{23}$ precursor class.  The
  following are equivalent: (i) $\d\alpha\wedge\d\baral \wedge\d\nu =
  0$, (ii) $Y=0$, (iii) $g_2=0$.
\end{prop}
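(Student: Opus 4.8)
The plan is to follow the template of Propositions \ref{prop:B22} and \ref{prop:C22}, with the simplification that for the $\rL$ family the relevant precursor vector field has vanishing $\Delta$-component. First I would recall from Proposition \ref{prop:L23} that $f=g_1\log\zeta+g_2\zeta$ is equivalent to $A=1$; hence $B=\mu-\mu^*$ and $B_1=0$, and since $\rL_{23}$ is a genuine type $(0,2)$ precursor we have $\d\alpha\wedge\d\baral\neq0$, which by Proposition \ref{prop:albaldep} forces $B\neq0$ (i.e.\ $\mu$ is not real). By Proposition \ref{prop:Lprecursor}, the real vector field $V$ determined by $V^1=\i/\baral$, $V^2=-\i/\alpha$, $V^3=V^4=0$ satisfies $\cL_V\omega^1=\cL_V\omega^3=0$, and $C:=\baral V^1=\i$ is a constant.

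Next I would verify that $V$ annihilates the low-order invariants. Equation \eqref{eq:valpharel} with $A=1$, $V^3=0$ gives $\cL_V\alpha=\alpha(C+\barC)=0$, hence $\cL_V\baral=0$; applying $\cL_V$ to the structure equation \eqref{eq:dom1} and using $\cL_V\omega^2=(\cL_V\omega^1)^*=0$ yields $(\cL_V\mu)\,\omega^1\wedge\omega^3=0$, so $\cL_V\mu=0$. Since $D$ annihilates every invariant and is independent of $V$ (it has no $\delta$-component), the pair $\{V,D\}$ spans the common annihilator of $\alpha,\baral$; as $\d\alpha\wedge\d\baral\neq0$, the condition $\d\alpha\wedge\d\baral\wedge\d\nu=0$ is therefore equivalent to $\cL_V\nu=0$. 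A one-line computation from $D\nu=0$ \eqref{eq:Dnu}, $\bardel\nu=1-3\alpha\nu$ \eqref{eq:bardelnu}, $\delta\nu=-\baral\nu+\Delta\mu+\mu^2$ \eqref{eq:deltanu} then gives
\[
\cL_V\nu=\frac{\i}{\baral}\delta\nu-\frac{\i}{\alpha}\bardel\nu
=\i\left(2\nu-\frac1\alpha+\frac{\Delta\mu+\mu^2}{\baral}\right)=\i\,Y,
\]
the last equality being the definition \eqref{eq:Ydef} of $Y$ specialized to $A=1$. This settles (i)$\Leftrightarrow$(ii).

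For (ii)$\Leftrightarrow$(iii) I would insert into the expression for $Y$ the explicit data coming from the proof of Proposition \ref{prop:L23}: with $L=-2a+g$ one has $Z=\e^{-2a+g}$, so that by Proposition \ref{prop:alnonzerorels} (and the relations \eqref{eq:DeltaQ} applied to functions of $a,\bara,u$)
\[
\alpha=-\tfrac12\,\e^{a+\bara-g^*},\qquad \mu=g'\,\e^{-a-\bara},\qquad \Delta\mu+\mu^2=(g''+(g')^2)\,\e^{-2a-2\bara},
\]
while \eqref{eq:nuLrel} gives $\nu=(g''+(g')^2)\,\e^{-3a-3\bara+g}-\e^{-a-\bara+g^*}+g_2^*\,\e^{-a-3\bara}$. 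Substituting, the $(g''+(g')^2)$-terms cancel against $(\Delta\mu+\mu^2)/\baral$ and the $\e^{-a-\bara+g^*}$-terms cancel against $-1/\alpha$, leaving $Y=2\,g_2^*\,\e^{-a-3\bara}$, which vanishes exactly when $g_2=0$. This closes the chain of equivalences.

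The conceptual content is entirely in the short identity $\cL_V\nu=\i Y$, which is immediate once the precursor field $V$ is in hand; the only step requiring care is the final explicit substitution, where one must track which exponentials carry $g$ rather than $g^*$ and the exact powers of $\e^{a}$ and $\e^{\bara}$. This is routine, exactly as the analogous verifications in Propositions \ref{prop:B22} and \ref{prop:C22}.
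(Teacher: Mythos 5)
Your proposal is correct and follows essentially the same route as the paper: it uses the precursor field of Proposition \ref{prop:Lprecursor} (your $V=\i(\delta/\baral-\bardel/\alpha)$ is a real multiple of the paper's $\im(\alpha^{-1}\bardel)$), reduces (i) to $\cL_V\nu=0$, and identifies $\cL_V\nu$ with $Y$ exactly as in \eqref{eq:Ydeltanu}. Your explicit evaluation $Y=2g_2^*\e^{-a-3\bara}$ is consistent with the paper's ``direct calculation'' $\alpha Y^*=-\zeta g_2(g_1g_1^*)^{-1/2}$, so the (ii)$\Leftrightarrow$(iii) step agrees as well.
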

\begin{proof}
  By Proposition \ref{prop:Lprecursor}, $V=\im
  (\alpha^{-1}\bardel)$ annihilates $\omega^1,\omega^3,\alpha,\mu$.
  Hence, condition (i) is equivalent to $\cL_V\nu=0$. We have
  \begin{equation}
    \label{eq:Ydeltanu}
    (\delta\nu)/\baral -  (\delta^*\nu)/\alpha = -1/\alpha + 2\nu +
    (\mu^2+\Delta\mu) /\baral = Y
  \end{equation}
  This proves the equivalence of (i) and (ii). 
  A direct calculation shows that
  \[ \alpha Y^* = -\zeta g_2 (g_1 g_1^*)^{-1/2}.\]
  This proves the equivalence of (ii) and (iii).
\end{proof}
\begin{prop}
  \label{prop:A22} 
  Suppose that $f(\zeta,u) =F(\zeta)+g\zeta$ belongs to the $\rA_{23}$
  precursor class. The following are equivalent: (i)
  $\d\alpha\wedge\d\baral \wedge\d\nu  = 0$, (ii)  $\Delta\nu=0$,
  (iii)  $g'(u)=0$.
\end{prop}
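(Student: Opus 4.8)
The plan is to mimic closely the structure of the proofs of Propositions \ref{prop:B22}, \ref{prop:C22}, \ref{prop:L22} that precede it, since $\rA_{23}$ is the ``$\delta M = 0$'' (indeed $M = 0$) degeneration of the general precursor ansatz \eqref{eq:A23}. First I would produce the second Killing vector candidate: by Proposition \ref{prop:Aprecursor}, since $f(\zeta,u)=F(\zeta)+g\zeta$ has $\mu=0$, the precursor vector field is $V$ with $V^1=0$ and $V^3=\e^{a+\bara}$ (equivalently $V=\Delta$ up to the $D$-direction, after the $\hDelta$-type adjustment collapses because $\mu=0$). This $V$ annihilates $\omega^1,\omega^2,\omega^3$ and, by \eqref{eq:valpharel} together with $\mu = 0$, also annihilates $\alpha$ and $\baral$. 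Hence condition (i), $\d\alpha\wedge\d\baral\wedge\d\nu=0$, is equivalent to $\cL_V\nu=0$; and since $V$ is (a multiple of) $\Delta$ on invariants, $\cL_V\nu=0$ is equivalent to $\Delta\nu=0$. That gives the equivalence of (i) and (ii).

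For (ii) $\Leftrightarrow$ (iii), I would compute $\Delta\nu$ explicitly in the $F(\zeta)+g\zeta$ parametrization. Using the formulas of Proposition \ref{prop:alnonzerorels}: here $f_{\zeta\zeta}=F''(\zeta)$ is independent of $u$, so $a=\tfrac14\log F''(\zeta)$ is independent of $u$, $Z_u=0$, and $\mu = \e^{-a-a^*}L_u$; since $L=\log Z_a$ and $Z,a$ are $u$-independent when $g$ is absent, $\mu$ indeed vanishes. By \eqref{eq:nuLrel}, $\nu = \e^{-a-3\bara}(Z_{uu}+(f_\zeta)^*)$; with $f = F(\zeta)+g(u)\zeta$ one has $f_\zeta = F'(\zeta)+g(u)$, so the only $u$-dependence in $\nu$ enters through $g(u)^*$ (and $Z_{uu}=0$). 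Applying \eqref{eq:DeltaQ}, $\Delta\nu = \e^{-a-\bara}\partial_u\nu = \e^{-a-\bara}\,\e^{-a-3\bara}\,\overline{g'(u)}$, so $\Delta\nu=0$ iff $g'(u)=0$. I expect the bookkeeping to produce something like $\Delta\nu = \overline{g'(u)}\,(F''(\zeta))^{-1/2}\cdot(\text{nonzero factor})$ after substituting $a$, paralleling the closed forms $\alpha Y^* = -\zeta g_2 (g_1 g_1^*)^{-1/2}$ and $\hUpsilon^* \propto g'(u) F''(\cdot)^{-1/2}$ that appear in the neighbouring proofs; the precise nonzero prefactor is routine and need not be tracked.

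Conversely, if $g'(u)=0$ then by the transformation freedom \eqref{eq:fxform} (the $g(u)$ term, constant in $u$, can be absorbed) the metric reduces to $f=F(\zeta)$, which manifestly admits the extra Killing vector $\partial_u$ (translation in $u$), confirming the $G_2$ property directly — this also matches the remark following Proposition \ref{prop:B22}. The main obstacle, such as it is, is purely computational: carrying the substitution of $a=\tfrac14\log F''(\zeta)$ through \eqref{eq:nuLrel} and \eqref{eq:DeltaQ} cleanly enough to read off that the $g$-independent part of $\Delta\nu$ vanishes, so that $\Delta\nu$ is a nonvanishing multiple of $\overline{g'(u)}$. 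Since $\mu=0$ kills $\Delta\mu$ and the $\delta,\bardel$ derivatives of $\nu$ as well, no higher-order invariants intervene and the invariant count is $(0,2,2)$ exactly when $\Delta\nu=0$, completing the proof.
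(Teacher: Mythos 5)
Your proposal is correct and follows essentially the same route as the paper: invoke Proposition \ref{prop:Aprecursor} (with $\mu=0$) to get a real multiple of $\Delta$ annihilating $\omega^1,\omega^3,\alpha,\baral$, so that (i) reduces to $\Delta\nu=0$, and then a direct computation in the $a$, $u$ variables showing $\Delta\nu$ is a manifestly nonzero multiple of $\overline{g'(u)}$ (the paper records this as $\Delta\nu^*=\e^{-2a}g'(u)/F''(\zeta)$, matching your $\e^{-2a-4\bara}\,\overline{g'(u)}$ up to the conjugation/prefactor bookkeeping). Your extra remarks (absorbing the constant $g$ via \eqref{eq:fxform}, the resulting $\partial_u$ symmetry) are consistent with the paper's remark following the proof.
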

\begin{proof}
  By Proposition \ref{prop:Aprecursor} a multiple of $\Delta$
  annihilates $\omega^1,\omega^3,\alpha,\baral$.  Hence (i) is equivalent to
  (ii).  A direct calculation shows that
  \[ \Delta \nu^* = \e^{-2a} g'(u)/F''(\zeta).\]
  This proves the equivalence of (ii) and (iii).
\end{proof}
\noindent Note that if $g'(u)=0$, then we can
absorb the $g\zeta$ term into the $F(\zeta)$ term.

We now classify the $G_2$ solutions of type $(0,1,2,2)$.  By
definition, these are specializations of the type $(0,1)$ precursors.
The latter solutions fall into three groups: (i) $V^3=0$, (ii)
$V^3\neq 0$ and $\d\alpha\wedge\d\mu=0$, (iii) $V^3\neq 0$ and
$\d\alpha\wedge\d \mu \neq 0$, where $V$ is the vector field that
satisfies \eqref{eq:precursor}. Case (i) is class $\rL_{23}$.  The
specialization to a $G_2$ solution is described, mutatis mutandi, by
Proposition \ref{prop:L22} above.  Case (ii) consists of classes
$\rL_{13}$, $\AP_{13}$, $\AE_{13}$, $\AE_{13}$. The specialization to
$G_2$ solutions is described by Propositions in \ref{prop:BL122},
\ref{prop:AP122}, \ref{prop:AE122}, \ref{prop:AL122} of the following
section. Case (iii) consists of classes $\BP_{13}$, $\CP_{13}$,
$\BE_{13}$, $\CE_{13}$.  By Proposition \ref{prop:01nu}, the
specialization to a $G_2$ solution is characterized by the condition
$\d\alpha\wedge\d\mu\wedge\d\nu=0$. The following Proposition analyzes
this condition.  The key invariant here is $\tUpsilon$, as defined by
\eqref{eq:tUpsilondef}.

\begin{lem}
  \label{lem:albaldep2} Suppose that $B=0$ and $AA^*=1, A\neq 1.$ Then
  \begin{equation}
    \label{eq:cLValmu}
    \{ \d\alpha,\d\alpha^*, \d\mu,\d\mu^* \}^\perp=
    \lspan \{ \tDelta, D \},  
  \end{equation}
  with $\tDelta$ defined as in \eqref{eq:tDeltadef}.  
\end{lem}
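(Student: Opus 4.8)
The plan is to compute the orthogonal complement of $\{\d\alpha,\d\alpha^*,\d\mu,\d\mu^*\}$ directly in the coframe $\omega^1,\omega^2,\omega^3,\omega^4$ and show that it is spanned by $\tDelta$ and $D$. First I would note that since $B=0$ and $AA^*=1$ with $A\neq 1$, Propositions \ref{prop:albaldep1} and \ref{prop:aa*} apply: $A$ is a constant, $\delta M=0$, $\barmu = A\mu$, and $L=Pa+g$ with $P=-(A+1)/A$. Moreover by Lemma \ref{lem:aa*b0} we have an explicit form for $L$, hence for $\mu$ via \eqref{eq:muLrel}. Since all the invariants $\alpha,\alpha^*,\mu,\mu^*$ are annihilated by $D$ (equations \eqref{eq:Dalpha}, \eqref{eq:Dmu}), $D$ lies in the orthogonal complement automatically, so the real content is to identify the one remaining direction.

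Next I would write out the differentials explicitly. From \eqref{eq:Dalpha}--\eqref{eq:Deltaalpha} and \eqref{eq:Adef} we have $\d\alpha = \alpha(\alpha^*\omega^1 + A\alpha\,\omega^2 - \mu^*\omega^3)$, and conjugating, $\d\alpha^* = \alpha^*(\alpha\,\omega^2 + A^*\alpha^*\omega^1 - \mu\,\omega^3)$. For $\mu$, using $\delta M=0$ (so $\delta\mu = -\alpha^*\mu$ as in \eqref{eq:deltamu}), $\bardel\mu = -\alpha\mu$ from \eqref{eq:bardeltamu}, $D\mu=0$, and $\Delta\mu$ left as an invariant, we get $\d\mu = -\mu\alpha^*\omega^1 - \mu\alpha\,\omega^2 + \Delta\mu\,\omega^3$, with the conjugate expression for $\d\mu^*$. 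All four differentials have vanishing $\omega^4$-component, which is why $D$ is in the complement.

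Then the key step: a vector $V = V^1\delta + V^2\bardel + V^3\Delta + V^4 D$ with $(V^1)^* = V^2$, $V^3,V^4$ real, annihilates all four differentials iff the three $\omega^1,\omega^2,\omega^3$ components vanish. Pairing $V$ against the span, the $\omega^4$-components being zero means $V^4$ is free. The remaining three equations — actually a $3\times 3$ (complex-linear, but with reality constraints it is a real-linear) system in $V^1,V^2,V^3$ — must be shown to have a one-dimensional solution space, and then I would verify that the solution is a real multiple of $\tz^*\delta + \tz\bardel + \Delta$ with $\tz = \tilde X^*/\alpha$ as defined in \eqref{eq:tDeltadef} (note $\tX$ is available since $A\neq 1$). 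Concretely, $\cL_V\alpha = 0$ reads $\alpha^*V^1 + A\alpha V^2 - \mu^* V^3 = 0$; I expect this, together with $\cL_V\mu = 0$ (i.e. $-\alpha^*\mu V^1 - \alpha\mu V^2 + \Delta\mu\, V^3 = 0$) and their conjugates, to pin down $V^1/V^3$ and $V^2/V^3$ and leave $V^3$ as the single free real parameter. Substituting the explicit $L=Pa+g$ form and the definition $\tX = \Delta\log M^*/(1-A^*)$ should make the ratio $V^1\alpha^*/V^3$ collapse to $\tz$.

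The main obstacle I anticipate is bookkeeping: showing that the $3\times 3$ coefficient system genuinely has rank $2$ (not $3$, which would force $V=cD$, and not $1$) and that its kernel matches $\tDelta$ on the nose rather than merely up to some undetermined scalar. The rank-$2$ claim should follow from a determinant computation — the determinant of the $\d\alpha,\d\alpha^*,\d\mu$ coefficients is essentially $\alpha^2\alpha^*\mu(A-1)(\mu^2 - \Delta\mu)/\mu$ or similar, and the hypothesis $A\neq 1$ together with $\mu\neq 0$ (implicit, since if $\mu=0$ then $\d\mu=0$ and the rank drops, a case one should either exclude or handle separately) should keep it from degenerating too far; one has to be careful that $\d\alpha\wedge\d\mu\ne 0$ in general here, consistent with the role this lemma plays in the subsequent $\tUpsilon$ analysis. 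I would close by checking that $\tDelta\alpha = 0$ and $\tDelta\mu = 0$ hold identically under these hypotheses, which both confirms $\tDelta$ is in the complement and fixes the normalization, completing the identification $\{\d\alpha,\d\alpha^*,\d\mu,\d\mu^*\}^\perp = \lspan\{\tDelta, D\}$.
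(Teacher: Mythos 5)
Your plan is essentially the paper's own proof: the paper writes out the coefficient matrix of $\d\alpha$, $\d\alpha^*$, $\d M$ (working with $M=\alpha\mu$ so that $\delta M=0$) in the invariant coframe, observes that it has rank $2$ because $\d\alpha\wedge\d\alpha^*=0$, and identifies the conjugation-invariant kernel as $\lspan\{\tDelta,D\}$ — exactly your strategy of solving the linear system for $V$ and matching the solution to $\tDelta$, including your correct observation that $\mu\neq 0$ is implicitly required (the paper needs it too, since $M$ appears in the denominator of $\tX$). One clarification on your anticipated determinant step: the $3\times 3$ determinant must \emph{vanish} identically — this is forced by $B=0$, $AA^*=1$, which make $\langle\d\alpha^*,V\rangle=0$ proportional to $\langle\d\alpha,V\rangle=0$ and give $\d\mu^*=A\,\d\mu$ — and what must be shown nonzero is only a $2\times 2$ minor, equivalently $\d\alpha\wedge\d\mu\neq 0$, which is where $A\neq 1$ and $\mu\neq 0$ enter; a genuinely nonvanishing $3\times 3$ determinant would give rank $3$ and contradict the lemma.
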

\begin{proof}
  Since $M = \alpha \mu$, no generality is lost if replace $\d\mu$ with
  $\d M$.  By Proposition \ref{prop:albaldep1}, $\delta M = 0$.  By
  \eqref{eq:bardeltamu}
  \[ \bardel M = (A-1) \alpha M.\] By \eqref{eq:Bdef}, $\barmu =
  A\mu$.   Hence, by \eqref{eq:deltaalpha} \eqref{eq:Deltaalpha} we seek
  the kernel of the following matrix:
  \begin{equation}
    \label{eq:tDeltasystem}
    \begin{pmatrix}
      \alpha\baral &  \alpha^2 A &  -AM &0\\
      \baral^2 A^{-1} & \alpha\baral & -M \baral \alpha^{-1} & 0\\
       0 & (A-1) \alpha M & \Delta M &0
    \end{pmatrix}
  \end{equation}
  By Proposition \ref{prop:albaldep} $\d\alpha\wedge\d\baral = 0$;
  hence, the above matrix has rank $2$.  Since $A^* = 1/A$, the kernel
  is invariant under complex conjugation.  Therefore, since $A\neq 1$,
  a basis for the kernel is $D$ and
  \[ \tDelta = \tX/\baral\, \delta + \tX^*/\alpha\, \bardel + \Delta +
  \tX\tX^*/(\alpha\baral) D,\quad \tX^* = \Delta M /(M(1-A)) \]
\end{proof}
\begin{prop}
  \label{prop:BCPE13}
  Suppose that $f(\zeta,u)=(k_0 z)^{\i k_1}u^{-2} + gu^{-2} z$, or
  $f(\zeta,u) = \exp(z) + g z$ where $z=u^{-\i k} \zeta$ or $z=\e^{\i
    u} \zeta$; i.e., $f(\zeta,u)$ belongs to one of the following
  classes: $\BP_{13}$, $\CP_{13}$, $\BE_{13}$, $\CE_{13}$.  Then, the
  following are equivalent: (i) $\d\alpha\wedge\d\mu\wedge\d\nu = 0$,
  (ii) $\tUpsilon=0$, (iii) $g'(u)=0$.
\end{prop}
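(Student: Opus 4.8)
The plan is to follow the same template established in Propositions \ref{prop:B22}, \ref{prop:C22}, \ref{prop:A22}: produce an explicit vector field $V$ that annihilates $\alpha,\alpha^*$ and $\mu$, express $\cL_V\nu$ as a recognizable invariant, and then compute that invariant directly for the given metric ansatz. By Lemma \ref{lem:albaldep2}, since in all four cases $B=0$, $AA^*=1$, $A\neq 1$, the common perpendicular space of $\d\alpha,\d\alpha^*,\d\mu,\d\mu^*$ is spanned by $D$ and $\tDelta$, where $\tDelta$ is the operator in \eqref{eq:tDeltadef} with $\tX^* = \Delta M/(M(1-A))$. The vector field $V$ to use is the real multiple of $\tDelta$ that additionally satisfies the precursor condition \eqref{eq:precursor}; concretely, one normalizes $\tDelta$ so that $\tDelta\alpha=0$, which forces $\tX = \Delta\log M^*/(1-A^*)$ as in \eqref{eq:tXdef}. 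The existence of such a $V$ in these four classes is exactly what Proposition \ref{prop:13precursor} (via Proposition \ref{prop:Delta1mu}) guarantees, so condition (i), $\d\alpha\wedge\d\mu\wedge\d\nu=0$, is equivalent to $\cL_V\nu=0$ — this is the content of Proposition \ref{prop:01nu} applied to these classes.

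Next I would compute $\cL_V\nu=\tDelta\nu$ in terms of the defined quantities. Mimicking the calculation in Proposition \ref{prop:B22}, one uses \eqref{eq:bardelnu}, \eqref{eq:deltanu}, together with the action of $\delta,\bardel$ on $\tX$ (obtainable from $\delta M=0$, $\bardel M=(A-1)\alpha M$, and \eqref{eq:deltaalpha}, \eqref{eq:Deltaalpha}) to show that the modified invariant $\tnu := \nu + \tX^*(\tX+2\mu-\barA\tX^*)/\baral$ of \eqref{eq:tnudef} satisfies $\tDelta(\tnu/\tX^*) = \tUpsilon$, with $\tUpsilon$ as in \eqref{eq:tUpsilondef}. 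Because $\tX$, $\mu$, $A$ are already annihilated by $V$, the vanishing of $\cL_V\nu$ is equivalent to the vanishing of $\tUpsilon$; this establishes (i)$\iff$(ii). (Alternatively, as in Remark 2 after Proposition \ref{prop:B22}, one can observe that the null rotation sending $\Delta\to\tDelta$ maps $\nu\to\tnu$, so $\tnu$ is just $\nu$ computed in the $\tDelta\alpha=0$ tetrad, and then $\tUpsilon$ is the analogue of the $\hUpsilon$/$Y$ invariant in that tetrad.)

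Finally, for (ii)$\iff$(iii), I would substitute each of the four explicit forms $f(\zeta,u) = (k_0 z)^{\i k_1}u^{-2}+gu^{-2}z$, $f=\exp(z)+gz$ with $z=u^{-\i k}\zeta$ or $z=\e^{\i u}\zeta$ into the definitions \eqref{eq:Adef}--\eqref{eq:tDeltadef}, using the integration dictionary of Proposition \ref{prop:alnonzerorels} (i.e., $a=\tfrac14\log f_{\zeta\zeta}$, $L=\log Z_a$, $Z=Z(a,u)$, and the relations \eqref{eq:muLrel}, \eqref{eq:ALrel}, \eqref{eq:deltaQ}, \eqref{eq:DeltaQ}), and compute $\tUpsilon$. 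In each case one expects a closed form of the shape $\tUpsilon^* = (\text{nonvanishing factor})\cdot g'(u)\cdot(\text{function of }z)$, exactly paralleling the formulas $\hUpsilon^* = 4uX_1^2/X\, g'(u)F''(\cdot)^{-1/2}$ and $\alpha Y^* = -\zeta g_2(g_1 g_1^*)^{-1/2}$ in the preceding propositions, so that $\tUpsilon=0\iff g'(u)=0$. This last step — carrying out the substitution simultaneously for all four sub-forms and verifying the nonvanishing of the prefactor in each — is the main computational obstacle; the structural parts (i)$\iff$(ii) are essentially forced by Lemma \ref{lem:albaldep2} and the pattern already set up in Section \ref{sect:g2}.
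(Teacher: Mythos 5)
Your proposal is correct and follows essentially the same route as the paper's own proof: Lemma \ref{lem:albaldep2} pins the precursor vector field $V$ down as a multiple of $\tDelta$, the identity $\tDelta(\tnu/\tX^*)=\tUpsilon$ yields (i)$\iff$(ii), and direct substitution of the four explicit metric forms shows $\tUpsilon$ is a nonvanishing multiple of $g'(u)$, giving (ii)$\iff$(iii). The only small imprecision is that the invariant frame is fixed by requiring $\tDelta$ to annihilate both $\alpha$ and $\mu$ (the kernel computed in Lemma \ref{lem:albaldep2}), not by $\tDelta\alpha=0$ alone, which under $AA^*=1$, $B=0$ leaves a one-parameter freedom by Proposition \ref{prop:hDelta}.
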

\begin{proof}
  By assumption, $B=0, AA^*=1, A\neq 1$.  Hence,  there exists a $V$
  such that condition \eqref{eq:precursor}
  holds.  Since $\cL_V\alpha= \cL_V\mu=0$, by Lemma
  \ref{lem:albaldep2} $V$ is a multiple of $\tDelta$.  
  % Hence By assumption, $V$ annihilates $\alpha,\mu$ and hence also
  % $\cL_V M=0$.
  % Observe that
  % \begin{gather*}
  %   \mu^*  = A\mu,\quad A^* = 1/A\\
  %   \delta M = 0,\quad \bardel M = \alpha M (A-1) \\
  %   \tX = \Delta M/(M(1-A)) = (\Delta \mu - A \mu^2)/(\mu(1-A))
%     \intertext{Hence,}
%     \tDelta M =  \Delta M+ (\tX^*/\alpha)\,
%     \bardel M = 0\\
%     \tDelta \alpha = -A \alpha \mu+ A \alpha\frac{\Delta
%     \mu-A\mu^2}{\mu(1-A)} + \alpha\frac{A\Delta\mu - A\mu^2}{\mu(A-1)}= 0
%   \end{gather*}
%   by construction.  Since $\alpha,\mu$ are independent, it follows
%   that $V$ is a scalar multiple of $\tDelta$.  
  Hence, $\tX/\tX^* = C/C^*$ where $C=\baral V^1$, and hence $V =
  C/\tX \tDelta$.  In the proof of Proposition \ref{prop:Delta1mu} we showed
  that $\Delta(1/\mu)$ is a constant. It follows that $\cL_V
  \Delta\mu=0$ and hence $\cL_V \tX=0$ also.  Therefore, the desired
  condition is equivalent to $\tDelta (\tnu/\tX) = 0$ where $\tnu$ is
  the invariant defined in \eqref{eq:tnudef}.  By \eqref{eq:bardelnu},
  \eqref{eq:deltanu} \eqref{eq:deltaV3}, \eqref{eq:DeltaV3}
  \begin{gather*}
    \delta \tX = -\baral \tX,\quad
    \bardel \tX = -\alpha \tX,\quad
    \Delta \tX = 2 \tX \tX_1,\\
    \delta(\tnu/\tX) = -4\i \tX_2,\quad
    \bardel(\hnu/\tX) = (1-2\hnu\alpha)/\tX,\\
    \hDelta(\hnu/\tX) = \Delta(\hnu/\tX) -4\i \tX \tX_2/\bardel +
    (1-2\hnu\alpha)/\alpha  = \tUpsilon
  \end{gather*}
  This proves the equivalence of (i) and (ii).  A direct calculation
  shows that
  \[ \tUpsilon = C\alpha\mu^2 u^{1+\i k_1} \zeta^* (g_1'(u))^*,\]
  where $C=C(k_0,k_1)$ is a constant.  This proves the equivalence of
  (ii) and (iii).
\end{proof}
\noindent
Remark 1: If $g'(u)=0$, then by \eqref{eq:zetaxform} we can absorb the
the 2nd term in $f(\zeta,u)$ into the first term.  Remark 2: the
invariant $\tnu$ can be calculated directly by employing a
null-rotated tetrad that sends $\Delta \to \tDelta$ and $\nu \to
\tnu$.

\section{The maximal IC order  class. } 
\label{sect:0123}
This section is devoted to the proof of Theorem \ref{thm:maxorder};
we exhibit and classify all vacuum pp-wave solutions with a
$(0,1,2,3)$ invariant count.  The $(0,1)$ class is defined by the
condition $\d\alpha\wedge \d\baral=0$.  If $\alpha,\mu$ are
independent, then the $(0,1,2)$ condition requires that $\nu, \barnu$
be functions of $\alpha,\mu$.  However, by Proposition
\ref{prop:01nu}, this forces a $G_2$ solution, and therefore can be
excluded from the $(0,1,2,3)$ classification.

\noindent Thus, we have narrowed the search for  $(0,1,2,3)$
solutions  to the following class:
\begin{equation}
  \label{eq:0123class}
  \d\alpha\wedge\d\baral = 0, \quad
  \d\alpha\wedge\d\mu = 0, \quad \d\alpha\wedge\d\nu \wedge \d\barnu =0
\end{equation}
The middle condition forces some restrictions.
\noindent 
By Lemma \ref{lem:b0aneq1}, the analysis divides into two cases: $B=0,
A=1,\Delta\mu=\mu^2, \mu \neq0$ and $\mu=0, AA^*=1$.  The former
possibility specifies class $\BL_{13}$; the latter classes $\AP_{13},
\AE_{13}, \AL_{13}$. We begin by describing the specialization from
class $\BL_{13}$ to class $\BL_{123}$. The $Y$ invariant employed
below is defined in \eqref{eq:Ydef}.
\begin{prop}
  \label{prop:BL123}
  Suppose that $f(\zeta,u) = C u^{-2} \log \zeta +g\zeta$ belongs to
  class $\BL_{13}$.  The following are equivalent: (i)
  $\d\alpha\wedge\d \nu\wedge\d\nu^*=0$, (ii) $\Delta\log(Y Y^*)=4\mu$,
  (iii) $g = ku^{-2}\e^{\i h}$, where $k$ is a real constant and
  $h=h(u)$ is real.
\end{prop}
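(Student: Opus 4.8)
The plan is to follow the template established by the earlier specialization results (Propositions \ref{prop:B22}, \ref{prop:C22}, \ref{prop:L22}, \ref{prop:A22}, \ref{prop:BCPE13}): work in the invariant tetrad, identify the Killing-type vector field $V$ that survives in the precursor, express the ``extra'' invariant condition as the annihilation of a suitable invariant by $V$, and then compute that invariant explicitly in the given coordinate form $f(\zeta,u)=Cu^{-2}\log\zeta + g\zeta$ to see that it vanishes precisely when $g'(u)$-type data degenerates to $g=ku^{-2}\e^{\i h}$.

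First I would recall from the $\BL_{13}$ analysis (Proposition \ref{prop:13precursor}) that this class satisfies $B=0$, $A=1$, $\Delta\mu=\mu^2$, $\mu\neq 0$, so that $\mu^*=\mu$ and, by Proposition \ref{prop:Lprecursor}, there is a precursor vector field $V$ with $V^3=0$, $V^1\neq 0$, namely $V=\im(\alpha^{-1}\bardel)$ (as in Proposition \ref{prop:L22}). Condition (i), $\d\alpha\wedge\d\nu\wedge\d\nu^*=0$, says the three invariants $\alpha,\nu,\nu^*$ are functionally dependent; since $\alpha,\alpha^*$ are already dependent in the $(0,1)$ class and $V$ already annihilates $\alpha$, this is equivalent to $\cL_V\nu=0$ together with the already-known dependence, i.e.\ to $\cL_V$ killing the invariant built from $\nu$. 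The relation \eqref{eq:Ydeltanu} from Proposition \ref{prop:L22} shows $(\delta\nu)/\baral - (\delta^*\nu)/\alpha = Y$, so $\cL_V Y$ and $\cL_V\nu$ are linked; I would use \eqref{eq:bardelnu}, \eqref{eq:deltanu}, \eqref{eq:deltaalpha}, \eqref{eq:Deltaalpha} and $A=1$, $\Delta\mu=\mu^2$ to compute $\delta Y$, $\bardel Y$, $\Delta Y$ in terms of $Y$, $\mu$, $\alpha$. The point is that $Y$ scales under the flow of $V$, and $\cL_V(\log(YY^*))=0$ is equivalent to $\d\alpha\wedge\d\nu\wedge\d\nu^*=0$; the quantity $\Delta\log(YY^*)-4\mu$ is precisely the invariantly-meaningful combination that detects this (the $4\mu$ compensating the intrinsic $\Delta$-scaling of $YY^*$), which gives the equivalence of (i) and (ii).

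For (ii) $\Leftrightarrow$ (iii), I would carry out the explicit integration as in the earlier proofs. Using Proposition \ref{prop:L23} and the $\BL_{13}$ normalization $L=-2a+\i k_0+h(u)$ with $h$ real, $h(u)=-\log u$ (from Proposition \ref{prop:13precursor}), I would compute $a=\frac14\log f_{\zeta\zeta}$, $Z(a,u)$, hence $\alpha,\mu,\nu$ via Proposition \ref{prop:alnonzerorels} and then $Y$ via \eqref{eq:Ydef}, substituting the explicit $f=Cu^{-2}\log\zeta+g(u)\zeta$. A direct calculation (analogous to $\alpha Y^* = -\zeta g_2(g_1 g_1^*)^{-1/2}$ in Proposition \ref{prop:L22}) should yield $Y$ as an explicit expression in $u,\zeta,g(u),g'(u)$; the condition $\Delta\log(YY^*)=4\mu$ then becomes a first-order ODE for $g(u)$ whose general solution, after absorbing the freedoms \eqref{eq:uxform}--\eqref{eq:fxform}, is $g=ku^{-2}\e^{\i h(u)}$ with $k\in\Rset$, $h$ real. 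I expect the main obstacle to be bookkeeping in this last computation: because $Y$ here depends on the second Killing direction in a more subtle way than in the $\rL_{22}$ case (there is still a genuine one-parameter family $h(u)$ rather than a single constant), one has to be careful to separate the part of $g$ fixed by coordinate freedom from the genuinely invariant content, and to verify that the ODE $\Delta\log(YY^*)=4\mu$ indeed integrates to exactly the two-real-parameter family claimed in (iii) and not something larger or smaller.
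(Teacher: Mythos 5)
Your reduction of (i) to a Lie--derivative condition along the $V^3=0$ precursor field $V=\im(\alpha^{-1}\bardel)$ is where the argument breaks. In class $\BL_{13}$ one has $A=1$, $B=0$, $\Delta\mu=\mu^2$, and from \eqref{eq:bardelnu}, \eqref{eq:deltanu} the relations $\delta Y=-\baral\,Y$, $\bardel Y=-3\alpha\,Y$ hold \emph{throughout} the class, independently of (i). Consequently, writing $V=\tfrac{1}{2\i}\bigl(\alpha^{-1}\bardel-\baral^{-1}\delta\bigr)$, relation \eqref{eq:Ydeltanu} gives $\cL_V\nu=-\tfrac{1}{2\i}Y$, so ``$\cL_V\nu=0$'' is equivalent to $Y=0$, which by Proposition \ref{prop:BL11} is the $G_3$ condition $\d\alpha\wedge\d\nu=0$ --- strictly stronger than (i): the $\BL_{122}$ and generic $\BL_{123}$ metrics satisfy (i) with $Y\neq0$, so your claimed equivalence would contradict (iii). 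Your fallback statement fares no better: $\cL_V\log(YY^*)=\tfrac{1}{2\i}\bigl(\alpha^{-1}(-4\alpha)-\baral^{-1}(-4\baral)\bigr)=0$ holds identically on $\BL_{13}$, so it detects nothing. The underlying issue is that in the $(0,1)$ situation the common kernel of $\d\alpha,\d\alpha^*$ is three-dimensional (spanned by $D$, your $V$, and the $V^3\neq0$ precursor direction of Proposition \ref{prop:Delta1mu}), so functional dependence of $\alpha,\nu,\nu^*$ cannot be tested against the single field $V$; the obstruction sits in the $\Delta$-direction, which is exactly why (ii) is a $\Delta$-derivative condition. The paper settles (i)$\Leftrightarrow$(ii) by expanding the $3\times3$ determinant of the $(\delta,\bardel,\Delta)$-derivatives of $\alpha,Y,Y^*$, which factors as $2\alpha^2\baral\bigl(4YY^*\mu-\Delta(YY^*)\bigr)$. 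Your observations could be repaired along these lines: since $V$ annihilates $\alpha$ and $YY^*$ but shifts $\arg Y$ by a nonzero constant, (i) is equivalent to $\d\alpha\wedge\d(YY^*)=0$, and the only surviving component of that 2-form is proportional to $\Delta(YY^*)-4\mu\,YY^*$; but this step is absent from your proposal, so as written (i)$\Leftrightarrow$(ii) is not established.

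The second half of your plan, (ii)$\Leftrightarrow$(iii), is sound in outline and is what the paper does: with $g=\e^{h_1+\i h_2}$ one computes $\mu$, $M$, and $YY^*=4\e^{2h_1}u^4\mu^4$ from Proposition \ref{prop:alnonzerorels}, and condition (ii) reduces to the single real ODE $u\,h_1'(u)=-2$, i.e. $g=k\,u^{-2}\e^{\i h}$ with $h=h_2$ unconstrained; your worry about the size of the solution family is resolved by the fact that only the modulus of $g$ is constrained. So that part needs only the bookkeeping you anticipate, while the first equivalence needs a genuinely different argument.
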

\begin{proof} 
  Our assumption implies
  \begin{gather*}
    A=1, \quad B=0\\
    \delta\mu = -\mu\baral,\quad
    \Delta\mu = \mu^2,\\
    Y = 2\nu-1/\alpha+ 2\mu^2/\baral.
  \end{gather*}
  Hence,  by \eqref{eq:bardelnu}  \eqref{eq:deltanu}
  \begin{gather*}
    \delta Y = -Y \baral,\quad
    \bardel Y  = -3Y \alpha,\\
    \begin{vmatrix}
      \delta\alpha & \bardel\alpha & \Delta \alpha\\
      \delta Y & \bardel Y & \Delta Y\\
      \delta Y^* & \bardel Y^* & \Delta Y^*\\
    \end{vmatrix} = 
    \begin{vmatrix}
      \alpha\baral & \alpha^2 & -\alpha\mu\\
      -Y\baral & -3 Y \alpha & \Delta Y\\
      -3 Y^* \baral & -\alpha  Y^* & \Delta Y^*\\
    \end{vmatrix} = 2\alpha^2\baral( 4 Y Y^* \mu - \Delta( Y Y^*))
  \end{gather*}
  This proves the equivalence of (i) and (ii).  Writing $g=\e^{h_1+\i
    h_2}$, a direct calculation shows that
  \begin{gather}
    \mu = -(CC^*)^{1/4}(\zeta\zeta^*)^{1/2},\\
    M=\alpha\mu =  (\i/2) (C^*)^{-1/2},\\
    YY^*=4 \e^{2h_1} u^4 \mu^4,\\
    \label{eq:DeltalogY}
    (\Delta\log Y Y^*)\mu = -2uh_1'(u).
  \end{gather}
  Therefore, (ii) is equivalent to 
  \[ uh_1'(u) = -2,\]
  which is equivalent to (iii).
\end{proof}
\noindent
We now prove that generically the above solution is (0,1,2,3), and
in the process derive the condition for specialization to a $G_2$ solution.
\begin{prop}
  \label{prop:BL122}
  Suppose that $f(\zeta,u) = u^{-2}(C \log \zeta + k \e^{\i h}\zeta)$
  belongs to class $\BL_{123}$.  The following are equivalent: (i)
  $\d\alpha\wedge\d\nu \wedge \d\Delta \nu = 0$, (ii) $\Delta
  (\alpha\Delta \log Y) = 0$, (iii) $\e^{\i h} = u^{\i k_1}$, where
  $k_1$ is a real constant.
\end{prop}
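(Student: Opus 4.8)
\noindent
The plan is to follow the template of Propositions~\ref{prop:BL123} and~\ref{prop:L22}: reduce (i) to a functional dependency among $\alpha$, $Y$ and $\Delta Y$, show the corresponding $3\times 3$ determinant factors through $\Delta(\alpha\Delta\log Y)$, and then establish (ii)$\iff$(iii) by substituting the explicit metric. First I would record the relations valid throughout class $\BL_{123}$: by Proposition~\ref{prop:BL123}, $A=1$, $B=0$, $\mu$ is real with $\Delta\mu=\mu^2$, $M=\alpha\mu$ is a nonzero constant, and $Y=2\nu-1/\alpha+2\mu^2/\baral$ satisfies $\delta Y=-\baral Y$ and $\bardel Y=-3\alpha Y$; moreover $\delta\alpha=\alpha\baral$, $\bardel\alpha=\alpha^2$, $\Delta\alpha=-\mu\alpha$, $\Delta\baral=-\mu\baral$. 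Since $\d\alpha\wedge\d\mu=\d\alpha\wedge\d\baral=0$ in this class, $\nu$ and $\Delta\nu$ differ from $Y$ and $\Delta Y$ respectively by functions of $\alpha$ alone, so (i) is equivalent to $\d\alpha\wedge\d Y\wedge\d\Delta Y=0$.

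Next I would use the commutator~\eqref{eq:alnzcom3} and its conjugate, together with $DY=0$, to compute $\delta\Delta Y=-2\baral\,\Delta Y$ and $\bardel\Delta Y=-4\alpha\,\Delta Y$, the $Y$-terms cancelling. Because $D$ annihilates every invariant, $\d\alpha\wedge\d Y\wedge\d\Delta Y=0$ amounts to
\[
\begin{vmatrix}
\alpha\baral & \alpha^2 & -\alpha\mu\\
-\baral Y & -3\alpha Y & \Delta Y\\
-2\baral\,\Delta Y & -4\alpha\,\Delta Y & \Delta^2 Y
\end{vmatrix}=0,
\]
and an expansion of this determinant, combined with the identity $\Delta(\alpha\Delta\log Y)=\tfrac{\alpha}{Y^2}\bigl(Y\,\Delta^2 Y-(\Delta Y)^2-\mu Y\,\Delta Y\bigr)$, shows it equals $-2\alpha\baral\,Y^2\,\Delta(\alpha\Delta\log Y)$. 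Since $\alpha\neq0$ and (as the next step shows) $Y\neq 0$ in class $\BL_{123}$, this yields (i)$\iff$(ii).

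For (ii)$\iff$(iii) I would put $a=\tfrac14\log f_{\zeta\zeta}=\tfrac14\log(-Cu^{-2}\zeta^{-2})$, which is independent of $g$, and read off closed forms for $\alpha$, $\mu$, $e^{-a-\bara}=-\mu u$, and (via~\eqref{eq:nuLrel}) $\nu$. A direct computation shows the $h$-independent contributions to $Y=2\nu-1/\alpha+2\mu^2/\baral$ cancel, leaving
\[
Y=2k\,e^{-a-3\bara}\,u^{-2}\,\e^{-\i h}.
\]
Taking $\Delta\log$ and using $\Delta a=\Delta\bara=0$, $\Delta\log u=-\mu$, $\Delta h=-\mu u\,h'(u)$, the real parts combine to $2\mu$ and one obtains $\Delta\log Y=\mu(2+\i u\,h'(u))$, hence $\alpha\Delta\log Y=M(2+\i u\,h'(u))$ with $M$ constant. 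Therefore $\Delta(\alpha\Delta\log Y)=-\i M\mu u\,(u\,h'(u))'$, which vanishes precisely when $u\,h'(u)$ is a (necessarily real) constant $k_1$; integrating and absorbing the additive constant of integration via the coordinate freedom~\eqref{eq:zetaxform}--\eqref{eq:fxform} gives $h=k_1\log u$, i.e. $\e^{\i h}=u^{\i k_1}$.

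The step I expect to be the main obstacle is the determinant identity: one has to check that after substituting the structure relations and the commutator expressions for $\delta\Delta Y$ and $\bardel\Delta Y$, every $\mu$- and $\nu$-dependent piece recombines into the single scalar $\Delta(\alpha\Delta\log Y)$ up to the nonvanishing factor $-2\alpha\baral Y^2$. The reduction of (i) to the $Y$-determinant, and the substitutions behind (ii)$\iff$(iii), are then routine if somewhat lengthy.
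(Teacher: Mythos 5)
Your proposal is correct and follows essentially the same route as the paper's proof: reduce (i) to $\d\alpha\wedge\d Y\wedge\d\Delta Y=0$, evaluate the $3\times3$ determinant using $\delta\Delta Y=-2\baral\Delta Y$, $\bardel\Delta Y=-4\alpha\Delta Y$, factor it through $\Delta(\alpha\Delta\log Y)$ (your sign in that identity is in fact the consistent one; the paper's stated factorization has an immaterial sign slip), and then get (ii)$\iff$(iii) from the explicit form of $Y$, which reduces the condition to $(u\,h'(u))'=0$. You merely make explicit two points the paper leaves implicit — the replacement of $\nu,\Delta\nu$ by $Y,\Delta Y$ modulo functions of $\alpha$, and the nonvanishing of $Y$ in class $\BL_{123}$ — so no substantive difference in method.
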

\begin{proof}
  All of the relations given in the proof of Proposition
  \ref{prop:BL123} hold.  Furthermore, by \eqref{eq:alnzcom1} -
  \eqref{eq:alnzcom4} 
  \begin{gather*}
    \delta\Delta Y = -2\baral\Delta Y,\quad \bardel\Delta Y = -4\alpha
    \Delta Y.
  \end{gather*}
  Thus, a direct calculation shows that
  \[ \d\alpha \wedge \d Y \wedge \d\Delta Y =
  2\alpha^2\baral(Y\mu\Delta Y + \Delta Y^2 - Y\Delta^2 Y)
  \omega^1\wedge\omega^2\wedge\omega^3.\]
  Since $\alpha\mu$ is a constant, the factor on the right can be
  written as
  \[ (Y\mu\Delta Y + \Delta Y^2 - Y\Delta^2 Y) =  Y^2 \alpha^{-1}
  \Delta(\alpha \Delta \log Y).\]
  This proves the equivalence of (i) and (ii).  Furthermore, a direct
  calculation gives
  \[ 2 C^{1/4} (C^*)^{3/4} \Delta(\alpha \Delta \log Y)= u (\zeta\zeta^*)^{1/2}(h_2'(u) +
  uh_2''(u)).\]
  This proves the equivalence of (ii) and (iii).
\end{proof}

We now consider the case of $\mu=0, AA^*=1$.
\begin{prop}
  \label{prop:AP123}
  Suppose that $f(\zeta,u) = (k_0\zeta)^{2\i k_1} + g\zeta$ belongs to
  the $\AP_{13}$ class.  The following are equivalent: (i)
  $\d\alpha\wedge\d\nu \wedge \d \nu^* = 0$, (ii)
  \begin{equation}
    \label{eq:3Arelation}
    (1-3A)\Delta\log Y + (A-3) \Delta \log Y^* = 0, 
  \end{equation}
 (iii) $g= k_2 \e^{\i h(1-2\i k_1)}$ where $k_2$ is
  a real constant and $h=h(u)$ is real.
\end{prop}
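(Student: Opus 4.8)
The plan is to follow exactly the template established by Propositions \ref{prop:B22}, \ref{prop:C22}, \ref{prop:L22}, \ref{prop:A22}, and especially \ref{prop:BL123}, since this proposition is of the same type: characterize the vanishing of a $3$-form (here $\d\alpha\wedge\d\nu\wedge\d\nu^*$) on the $\AP_{13}$ class, first via an intrinsic invariant identity and then via the explicit parametric condition on $g$. First I would record the structural constraints that define the $\AP_{13}$ class: by Proposition \ref{prop:013}, $B=0$, $AA^*=1$, $\mu=0$, and $A^2\neq1$, with $L=Pa-(1+P/2)(k+\i h)$ and $P=-(A+1)/A$. From $\mu=0$ and $AA^*=1$ we get, using \eqref{eq:deltaalpha}--\eqref{eq:Deltaalpha} and \eqref{eq:bardelnu}--\eqref{eq:deltanu}, that $\Delta\alpha=0$ and the first-order differential operators act on $\alpha$ and $\nu$ in a controlled way; in particular $Y=(3-A)\nu-1/\alpha$ here since $\mu=0$ kills the $(\Delta\mu+\mu^2)/\baral$ term in \eqref{eq:Ydef}.

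Next I would compute the action of $\delta,\bardel,\Delta$ on $Y$ (and on $Y^*$). As in Proposition \ref{prop:BL123}, $\delta Y$ and $\bardel Y$ should be scalar multiples of $Y$ times $\baral$ and $\alpha$ respectively, with coefficients depending only on $A$ — this is forced because $\delta\alpha=\alpha\baral$, $\bardel\nu=1-3\alpha\nu$, etc., and the combination defining $Y$ is engineered to scale homogeneously. Then $\d\alpha\wedge\d\nu\wedge\d\nu^*$ is, up to the nonzero factor $(3-A)(3-A^*)$ relating $\d\nu$ to $\d Y$, proportional to $\d\alpha\wedge\d Y\wedge\d Y^*$. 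Expanding this $3\times3$ determinant in the $\omega^1,\omega^2,\omega^3$ coframe — exactly the determinant displayed in Proposition \ref{prop:BL123} but with the $A$-dependent coefficients $-(3-A)$ and $-(3-A^*)$ wherever $-3$ and $-1$ appeared there — collapses to a multiple of $(1-3A)\Delta\log Y+(A-3)\Delta\log Y^*$ (the $\Delta Y$, $\Delta Y^*$ terms survive, the homogeneous $\delta,\bardel$ rows contribute only the stated combination). This gives the equivalence of (i) and (ii).

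For the equivalence of (ii) and (iii), I would carry out the direct calculation in the explicit coordinates: with $f=(k_0\zeta)^{2\i k_1}+g\zeta$ one has $a=\frac14\log f_{\zeta\zeta}$, hence $a$ is (up to additive constants) $\i k_1\log\zeta$ rescaled, and $\alpha$, $\nu$ follow from \eqref{eq:alphaZrel}, \eqref{eq:nuLrel} with $Z_{uu}$ and $(f_\zeta)^*$ supplying the $u$-dependence through $g(u)$. One then expresses $Y$ explicitly, computes $\Delta\log Y=e^{-a-\bara}\partial_u\log Y$ via \eqref{eq:DeltaQ}, and plugs into \eqref{eq:3Arelation}. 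I expect the left side to reduce, after using $AA^*=1$ and $A=-1-L_a$ to rewrite the coefficients $1-3A$ and $A-3$ in terms of $k_1$, to a first-order linear ODE for $g$ of the form $g'(u)=c(k_1)\,h'(u)\,g(u)$ with $c(k_1)=\i(1-2\i k_1)$ matching the stated exponent, whose solution is $g=k_2\e^{\i h(1-2\i k_1)}$ with $k_2$ a constant; reality of $k_2$ should come out of the real part of the $A^*$-equation, paralleling the $uh_1'(u)=-2$ step in Proposition \ref{prop:BL123}.

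The main obstacle I anticipate is purely computational bookkeeping in the last step: getting the precise constant $c(k_1)$ and confirming that $k_2$ is forced to be real requires carefully tracking the $A$- versus $A^*$-dependence (which are not independent, since $AA^*=1$) through the substitution $\zeta=Z(a,u)$, and the branch choices in $a=\frac14\log f_{\zeta\zeta}$. The structural part — deriving identity (ii) from (i) — should be routine given the machinery of Propositions \ref{prop:alnonzerorels} and \ref{prop:BL123}; the risk is only in not mis-tracking factors of $\alpha$, $\baral$ and the constant $\alpha\mu$ (which here is simply $0$, simplifying matters relative to the $\BL$ case where $\alpha\mu$ was a nonzero constant).
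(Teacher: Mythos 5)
Your proposal is essentially the paper's own proof: you obtain (i)$\Leftrightarrow$(ii) from the $3\times 3$ determinant of $\delta,\bardel,\Delta$ acting on $\alpha, Y, Y^*$ using the homogeneous scaling of $Y$ (with $\mu=0$, $A$ constant, $AA^*=1$), and (ii)$\Leftrightarrow$(iii) by a direct coordinate computation showing the left side of \eqref{eq:3Arelation} is proportional to the $u$-derivative of the real "modulus'' parameter in $g=\e^{(1-2\i k_1)(h_1+\i h_2)}$, exactly as the paper does. One immaterial slip: the $A$-dependence enters the determinant through its first row, since $\bardel\alpha=A\alpha^2$ and $\Delta\alpha=-\barmu\alpha=0$ here, while the $Y$-rows keep the same coefficients as in the $\BL_{123}$ case, namely $\delta Y=-Y\baral$ and $\bardel Y=-3Y\alpha$ (not $-(3-A)$, $-(3-A^*)$); your claimed final combination $(1-3A)\Delta\log Y+(A-3)\Delta\log Y^*$ is nonetheless the correct outcome, matching the paper's $\alpha^2\baral\bigl((1-3A)Y^*\Delta Y+(A-3)Y\Delta Y^*\bigr)$ up to a nonzero factor.
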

\begin{proof} 
  Our assumption and Proposition \ref{prop:albaldep1} imply that
  $\mu=0$ and that $A$ is a constant satisfying $A A^*=1$.  Hence, by
  \eqref{eq:bardelnu} and \eqref{eq:deltanu}
  \begin{gather*} 
    Y = (3-A)\nu - 1/\alpha\\
    \delta Y = -Y \baral,\quad \bardel Y = -3Y \alpha,\\
    \begin{vmatrix}
      \delta\alpha & \bardel\alpha & \Delta \alpha\\
      \delta Y & \bardel Y & \Delta Y\\
      \delta Y^* & \bardel Y^* & \Delta Y^*\\
    \end{vmatrix} = 
    \begin{vmatrix}
      \alpha\baral & A\alpha^2 & 0\\
      -Y\baral & -3 Y \alpha & \Delta Y\\
      -3 Y^* \baral & -\alpha  Y^* & \Delta Y^*\\
    \end{vmatrix} \\
    \qquad = 2\alpha^2\baral( (1-3A)Y^*\Delta Y + (A-3) Y \Delta Y^*)
  \end{gather*}
  This proves the equivalence of (i) and (ii). Writing
  \[ g = \e^{(1-2\i k_1)(h_1+ \i h_2)}, \]
  a direct calculation
  shows that
  \begin{gather*}
    \baral((1-3A)\Delta \log Y + (A-3)  \Delta \log Y^*) =C
    \zeta^{-\i k_1}h_1'(u),
  \end{gather*}
  where $C=C(k_0,k_1)$ is a constant.   This proves the equivalence of
  (ii) and (iii).
\end{proof}
\noindent
We now prove that generically the above solution is (0,1,2,3), and
in the process derive the condition for specialization to a $G_2$ solution.
\begin{prop}
  \label{prop:AP122}
  Suppose that $f(\zeta,u) = (k_0\zeta)^{2\i k_1} + k_2 \e^{\i h(1+2\i
    k_1)}\zeta$ belongs to class $\AP_{123}$.  The following are
  equivalent: (i) $\d\alpha\wedge\d\nu \wedge \d\Delta \nu = 0$, (ii)
  $\Delta^2 Y^{\frac{1-A}{A-3}} = 0$, (iii) $f(\zeta,u) = (k_0
  \zeta)^{2\i k_1} + C u^{-2-\i k_1} \zeta$, where $C$ is a complex
  constant.
\end{prop}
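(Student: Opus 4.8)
The plan is to follow the template of Propositions \ref{prop:AP123} and \ref{prop:BL122}: prove (i)$\iff$(ii) by an exterior-algebra computation in the invariant coframe $\omega^1,\dots,\omega^4$, and (ii)$\iff$(iii) by direct substitution of the explicit $\AP_{123}$ ansatz. Since $f$ belongs to $\AP_{123}$, Propositions \ref{prop:AP123} and \ref{prop:albaldep1} give $\mu=0$, $A$ a nonzero constant with $AA^*=1$ (so $A\neq 1,3$), and $Y=(3-A)\nu-1/\alpha$ by \eqref{eq:Ydef}; from the NP relations with $\mu=0$ and $\bardel\alpha=A\alpha^2$ one has $\delta\alpha=\alpha\baral$, $\Delta\alpha=D\alpha=0$, and, as already derived inside the proof of Proposition \ref{prop:AP123}, $\delta Y=-\baral Y$, $\bardel Y=-3\alpha Y$, $DY=0$.

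For (i)$\iff$(ii) I would first use the commutators \eqref{eq:alnzcom1}--\eqref{eq:alnzcom4} with $\mu=0$ (and $\Delta\baral=(\Delta\alpha)^*=0$, $D\Delta Y=\Delta DY=0$) to obtain $\delta\Delta Y=-2\baral\Delta Y$ and $\bardel\Delta Y=-4\alpha\Delta Y$. Then $\d\alpha,\d Y,\d\Delta Y$ all have vanishing $\omega^4$-components, so
\[
\d\alpha\wedge\d Y\wedge\d\Delta Y=\alpha^2\baral\bigl((A-3)\,Y\,\Delta^2 Y-2(A-2)(\Delta Y)^2\bigr)\,\omega^1\wedge\omega^2\wedge\omega^3 .
\]
Putting $p=(1-A)/(A-3)$, the algebraic identity $(A-3)Y\,\Delta^2 Y-2(A-2)(\Delta Y)^2=\tfrac{A-3}{p}\,Y^{2-p}\,\Delta^2(Y^p)$ shows the bracket vanishes exactly when $\Delta^2(Y^p)=0$, because $\alpha,\baral,Y$ and $\tfrac{A-3}{p}$ are nonzero. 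Since $A$ is constant, $\nu$ and $\Delta\nu$ are functions of $Y,\Delta Y,\alpha$, so $\d\alpha\wedge\d\nu\wedge\d\Delta\nu$ is a nonzero scalar multiple of the wedge above, and (i)$\iff$(ii) follows.

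For (ii)$\iff$(iii) I would exploit the coordinate dictionary \eqref{eq:deltaQ}--\eqref{eq:DeltaQ} ($\delta=\baral\,\partial_a$, $\bardel=\alpha\,\partial_{\bara}$, $\Delta=e^{-a-\bara}\partial_u$): $\delta Y=-\baral Y$ and $\bardel Y=-3\alpha Y$ force $Y=e^{-a-3\bara}\tilde Y(u)$ with $\tilde Y$ a function of $u$ alone. For the ansatz $f(\zeta,u)=(k_0\zeta)^{2\i k_1}+g(u)\zeta$ the quantity $f_{\zeta\zeta}$, hence $a$ and $Z$, are $u$-independent, so $\nu=e^{-a-3\bara}(f_\zeta)^*$, and matching the $u$-free parts of $Y$ forces $\tilde Y$ to be a nonzero constant multiple of $g^*(u)$. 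Iterating $\Delta=e^{-a-\bara}\partial_u$ then gives $\Delta^2(Y^p)=e^{-(p+2)a-(3p+2)\bara}(\tilde{Y}^p)''(u)$, so (ii) holds iff $g^p$ is linear in $u$. Writing $g=k_2\,e^{\i h(u)(1-2\i k_1)}$ as in Proposition \ref{prop:AP123}, this is an ODE of the form $h''+c\,(h')^2=0$ with $c$ a nonzero constant determined by $A$; its general solution is $h=c^{-1}\log(cu+c_0)+c_1$, and the translation and scaling freedom \eqref{eq:uxform}--\eqref{eq:fxform} normalizes $c_0,c_1$ so that $g$ collapses to a pure power, giving the form $Cu^{-2-\i k_1}$ in (iii). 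The converse is the routine verification that this $g$ satisfies $\Delta^2(Y^p)=0$.

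\textbf{The main obstacle} is the bookkeeping in (ii)$\iff$(iii): carrying the $e^{-a-3\bara}$ weight of $Y$ through the iterated $\Delta$'s and pinning down the precise relation between $A$, $k_1$ and $p$ so that ``$\Delta^2(Y^p)=0$'' becomes exactly ``$g$ is a power of $u$ with the stated exponent.'' The exponent $p=(1-A)/(A-3)$ is engineered so that $Y^p$ has the homogeneity turning this into a linear-in-$u$ condition; once that structure is in place, the (i)$\iff$(ii) step is a short determinant computation of the kind already carried out for $\BL_{122}$ and $\AP_{123}$.
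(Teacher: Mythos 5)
Your proposal is correct and follows essentially the same route as the paper: the (i)$\iff$(ii) step is the same $3\times 3$ determinant computation using $\delta\Delta Y=-2\baral\Delta Y$, $\bardel\Delta Y=-4\alpha\Delta Y$ and the identity recasting the bracket as a multiple of $Y^{2-p}\Delta^2(Y^p)$ with $p=\frac{1-A}{A-3}$, while your (ii)$\iff$(iii) argument simply spells out the paper's unexplained ``direct calculation,'' landing on the same ODE $h''=k_1(h')^2$ (equivalently $k_1 h_2'^2-h_2''=0$) whose normalized solution gives the power-law coefficient.
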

\begin{proof}
  All of the relations given in the proof of Proposition
  \ref{prop:AP123} hold.  Furthermore, by \eqref{eq:alnzcom1} -
  \eqref{eq:alnzcom4} ,
  \begin{gather*}
    \delta\Delta Y = -2\baral\Delta Y,\quad \bardel\Delta Y = -4\alpha
    \Delta Y.
  \end{gather*}
  From there, a direct calculation shows that
  \begin{gather*}
    \begin{vmatrix}
      \delta\alpha & \bardel\alpha & \Delta \alpha\\
      \delta Y & \bardel Y & \Delta Y\\
      \delta \Delta Y & \bardel \Delta Y & \Delta^2 Y\\
    \end{vmatrix} = 
    \begin{vmatrix}
      \alpha\baral & A\alpha^2 & 0\\
      -Y\baral & -3 Y \alpha & \Delta Y\\
      -2\baral \Delta Y & -4\alpha  \Delta Y & \Delta^2 Y\\
    \end{vmatrix} =\\
    \qquad = 2\alpha^2\baral( (2(2-A)(\Delta Y)^2 + (A-3) \Delta^2
    Y)=2\alpha\baral Y^{\frac{3A-7}{A-3}} \frac{(A-3)^2}{1-A} \Delta^2
    Y^{\frac{1-A}{A-3}}
  \end{gather*}
  This proves the equivalence of (i) and (ii).  Furthermore, a direct
  calculation gives
  \[ Y^{\frac{A-1}{A-3}}\Delta^2 Y^{\frac{1-A}{A-3}} = \tC
  \zeta^{1-\i k_1} \zeta^{1+\i k_1} (k_1 h_2'(u)^2-h_2''(u)),\]
  where $\tC$ is a complex constant.
  This proves the equivalence of (ii) and (iii).
\end{proof}
\noindent Finally, we consider the $\AE$ and the $\AL$ classes.
Propositions \ref{prop:AE123} and \ref{prop:AL123} derive the form of
the $(0,1,2)$ solutions for the cases $A=-1$ and $A=1$, respectively.
Propositions \ref{prop:AE122} and \ref{prop:AL122} prove that these
solutions are generically of type $(0,1,2,3)$ and derive the condition
for the specialization to the corresponding $(0,1,2,2)$ $G_2$
solution. Mutatis mutandi, these Propositions are proved in the same
way as Propositions \ref{prop:AP123} and \ref{prop:AP122} above.
\begin{prop}
  \label{prop:AE123}
  Suppose that $f(\zeta,u) = \exp(k\zeta) + g\zeta$ belongs to the
  $\AE_{13}$ class.  The following are equivalent: (i)
  $\d\alpha\wedge\d\nu \wedge \d \nu^* = 0$, (ii)
  $\Delta\log(Y/Y^*)=0$, (iii) $g= \e^{\i k_1}\e^h$ where $k_1$ is a
  real constant and $h=h(u)$ is real.
\end{prop}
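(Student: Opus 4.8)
The plan is to follow the template of Proposition \ref{prop:AP123}, specialized to $A=-1$. Membership in the $\AE_{13}$ class forces $B=0$, $AA^*=1$, $A=-1$ (constant), and $\mu=0$, the ``$\rA$''-type forms being precisely the $\mu=0$ solutions. Since $\mu=0$ and $A=-1$, relations \eqref{eq:muLrel} and \eqref{eq:ALrel} force the parameter $L$ of Proposition \ref{prop:alnonzerorels} to be a genuine constant, so after the phase rotation \eqref{eq:zetaxform} the metric function is $f(\zeta,u)=\exp(k\zeta)+g\zeta$ with $k$ a real constant and $g=g(u)\not\equiv 0$ a free complex function. Moreover $\Delta\mu+\mu^2=0$, so \eqref{eq:Ydef} collapses to $Y=4\nu-1/\alpha$.

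For (i)$\Leftrightarrow$(ii): from \eqref{eq:deltaalpha}, \eqref{eq:Deltaalpha} and $\bardel\alpha/\alpha^2=A=-1$ one reads off $\delta\alpha=\alpha\baral$, $\bardel\alpha=-\alpha^2$, $\Delta\alpha=0$; then \eqref{eq:bardelnu} and \eqref{eq:deltanu} with $\mu=0$ give $\delta Y=-\baral Y$ and $\bardel Y=-3\alpha Y$ (and the conjugate relations for $Y^*$), exactly the pattern of Proposition \ref{prop:AP123}. Since $Y=4\nu-1/\alpha$ and $\d\alpha\wedge\d\baral=0$ (the $(0,1)$ hypothesis), $\d\alpha\wedge\d\nu\wedge\d\nu^*$ is a nonzero multiple of $\d\alpha\wedge\d Y\wedge\d Y^*$; expanding the $3\times 3$ determinant of $(\delta,\bardel,\Delta)$ applied to $(\alpha,Y,Y^*)$, whose third column is killed by $\Delta\alpha=0$, gives
\[ \d\alpha\wedge\d Y\wedge\d Y^* = c\,\alpha^2\baral\bigl(Y^*\Delta Y-Y\Delta Y^*\bigr)\,\omega^1\wedge\omega^2\wedge\omega^3 = c\,\alpha^2\baral\,YY^*\,\Delta\log(Y/Y^*)\,\omega^1\wedge\omega^2\wedge\omega^3 \]
for a nonzero numerical constant $c$. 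As $\alpha\neq 0$ and $Y\neq 0$ ($g\not\equiv 0$), vanishing of this $3$-form is equivalent to $\Delta\log(Y/Y^*)=0$, i.e. to (ii); this is the $A=-1$ instance of \eqref{eq:3Arelation}, since $1-3A=4$ and $A-3=-4$.

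For (ii)$\Leftrightarrow$(iii) I would compute $Y$ explicitly via Proposition \ref{prop:alnonzerorels}. With $L$ constant, $\zeta=Z(a,u)=\e^{L}a$ has $Z_{uu}=0$ and $a$ depends on $\zeta$ alone; since $\Delta$ annihilates $a$ and $a^*$ (being dual to $\omega^3$), \eqref{eq:DeltaQ} extends to $\Delta Q=\e^{-a-a^*}Q_u$ for $Q=Q(a,a^*,u)$. Substituting into $\nu=\e^{-a-3a^*}(Z_{uu}+(f_\zeta)^*)$ and $\alpha=\e^{a-a^*}(a_\zeta)^*$, the $\zeta$-dependent parts of $4\nu$ and $1/\alpha$ cancel, leaving $Y=4\,\e^{-a-3a^*}g^*$, hence $Y/Y^*=\e^{2(a-a^*)}g^*/g$. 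Writing $g=\e^{h_1+\i h_2}$ with $h_1,h_2$ real functions of $u$, we get $\Delta\log(Y/Y^*)=\e^{-a-a^*}\partial_u\bigl(2(a-a^*)-2\i h_2\bigr)=-2\i\,\e^{-a-a^*}h_2'(u)$, so (ii) holds iff $h_2$ is constant, i.e. iff $g=\e^{\i k_1}\e^{h}$ with $k_1:=h_2$ and $h:=h_1$; this is (iii). All computations are short once the reductions $\mu=0$ and $L$ constant are installed; the only step requiring care is the cancellation producing $Y=4\,\e^{-a-3a^*}g^*$, which is the precise analogue of the corresponding calculation in Proposition \ref{prop:AP123} (the degenerate $k_1\to\infty$ limit, in which the linear relation there between $\log|g|$ and $\arg g$ collapses to ``$\arg g$ constant''), so no essentially new obstacle arises.
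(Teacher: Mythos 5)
Your proposal is correct and takes essentially the same route as the paper, which proves Proposition \ref{prop:AE123} by the mutatis mutandis specialization $A\to -1$ of Proposition \ref{prop:AP123}: the same $3\times 3$ determinant identity yields (i)$\Leftrightarrow$(ii) (with $1-3A=4$, $A-3=-4$ turning \eqref{eq:3Arelation} into $\Delta\log(Y/Y^*)=0$), and your explicit evaluation $Y=4\,\e^{-a-3\bara}g^*$ supplies the ``direct calculation'' step for (ii)$\Leftrightarrow$(iii).
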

\begin{prop}
  \label{prop:AE122}
  Suppose that $f(\zeta,u) = \exp(k_0\zeta) + \e^{\i k_1}\e^h\zeta$
  belongs to the $\AE_{123}$ class.  The following are equivalent: (i)
  $\d\alpha\wedge\d\nu \wedge \d \Delta\nu = 0$, (ii) $\Delta^2
  Y^{-1/2}=0$, (iii) $\e^h=Cu^{-2}$ where $C$ is a complex constant.
\end{prop}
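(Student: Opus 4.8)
The plan is to run the argument exactly as in Propositions~\ref{prop:AP123}--\ref{prop:AP122}, now with the special value $A=-1$ that defines the $\AE$ class. First I would record the consequences of the $\AE_{13}$ hypothesis: $A\equiv-1$ (Proposition~\ref{prop:E23}) and $\mu=0$, hence $\barmu=0$, $\Delta\alpha=\Delta\baral=0$, and $AA^*=1$. Combining \eqref{eq:bardelnu} and \eqref{eq:deltanu} (with $\mu=0$) with $\delta\alpha=\alpha\baral$, $\bardel\alpha=A\alpha^2$ from \eqref{eq:deltaalpha} and \eqref{eq:Adef}, and using \eqref{eq:Ydef}, gives
\begin{equation*}
 Y=(3-A)\nu-\frac1\alpha=4\nu-\frac1\alpha,\qquad \delta Y=-\baral\,Y,\qquad \bardel Y=-3\alpha\,Y .
\end{equation*}
Since $\Delta\alpha=0$ one has $\Delta\nu=\tfrac14\Delta Y$ and $\d\alpha\wedge\d\nu=\tfrac14\,\d\alpha\wedge\d Y$, so $\d\alpha\wedge\d\nu\wedge\d\Delta\nu=\tfrac1{16}\,\d\alpha\wedge\d Y\wedge\d\Delta Y$; it therefore suffices to work with $Y$ throughout.

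To prove (i)$\Leftrightarrow$(ii), I would apply the commutator \eqref{eq:alnzcom3} and its complex conjugate to the three relations above, using $DY=0$, $\mu=\barmu=0$ and $\Delta\baral=0$, to get $\delta\Delta Y=-2\baral\,\Delta Y$ and $\bardel\Delta Y=-4\alpha\,\Delta Y$. Expanding $\d\alpha\wedge\d Y\wedge\d\Delta Y$ as a $3\times3$ determinant in the frame $\omega^1\wedge\omega^2\wedge\omega^3$ and substituting $A=-1$, $\Delta\alpha=0$ collapses it to $2\alpha^2\baral\bigl(3(\Delta Y)^2-2Y\,\Delta^2 Y\bigr)$. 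This is the $r=\tfrac{1-A}{A-3}=-\tfrac12$ case of the identity already used in Proposition~\ref{prop:AP122}: one verifies directly that $3(\Delta Y)^2-2Y\,\Delta^2 Y=4\,Y^{5/2}\,\Delta^2\bigl(Y^{-1/2}\bigr)$. Since $\alpha\neq0$ and $Y\neq0$, the $3$-form vanishes iff $\Delta^2(Y^{-1/2})=0$.

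To prove (ii)$\Leftrightarrow$(iii), I would compute $Y$ explicitly for $f(\zeta,u)=\exp(k_0\zeta)+\e^{\i k_1}\e^{h}\zeta$. Because the linear term is invisible to $f_{\zeta\zeta}$, \eqref{eq:adef} gives $a=\tfrac{k_0}{4}\zeta+\mathrm{const}$, so $\zeta=Z(a,u)$ is independent of $u$ and $Z_{uu}=0$; substituting this and $a_\zeta=k_0/4$ into \eqref{eq:alphaZrel} and \eqref{eq:nuLrel}, the $1/\alpha$ term cancels the holomorphic part of $4\nu$, leaving $Y=4\,e^{-a-3\bara}\,g^{*}$, where $g=\e^{\i k_1}\e^{h}$ is the coefficient of the linear term. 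Using \eqref{eq:DeltaQ}, that is $\Delta=e^{-a-\bara}\partial_u$ on functions of $a,\bara,u$, the equation $\Delta^2(Y^{-1/2})=0$ reduces to $g\,g''=\tfrac32(g')^2$, equivalently $h''=\tfrac12(h')^2$; one integration, followed by absorbing the constant of integration with the $u$-translation freedom \eqref{eq:uxform}, yields $h=-2\log u+\mathrm{const}$, i.e.\ $\e^{h}=C u^{-2}$.

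The only real labour is in the last step: pushing the fractional powers of $Y$ through $\Delta=e^{-a-\bara}\partial_u$ without error and checking the algebraic identity that converts the determinant into the single scalar equation $\Delta^2(Y^{-1/2})=0$. There is no conceptual obstacle, since the whole structure is forced, mutatis mutandis, by Proposition~\ref{prop:AP122}; the one point needing care is that the residual $u$-reparametrization is precisely what normalizes the constant of integration, so that the ODE solution is $\e^{h}\propto u^{-2}$ rather than $\e^{h}\propto(u+c)^{-2}$.
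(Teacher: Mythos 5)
Your proposal is correct and follows essentially the same route as the paper, which proves Proposition \ref{prop:AE122} ``mutatis mutandis'' by specializing the argument of Propositions \ref{prop:AP123}--\ref{prop:AP122} to $A=-1$ (where $\tfrac{1-A}{A-3}=-\tfrac12$), exactly as you do: the determinant identity, the relations $\delta Y=-\baral Y$, $\bardel Y=-3\alpha Y$, $\delta\Delta Y=-2\baral\Delta Y$, $\bardel\Delta Y=-4\alpha\Delta Y$, and the explicit computation reducing $\Delta^2 Y^{-1/2}=0$ to $h''=\tfrac12(h')^2$ with the integration constant absorbed by \eqref{eq:uxform}. Your verifications (e.g.\ $3(\Delta Y)^2-2Y\Delta^2Y=4Y^{5/2}\Delta^2(Y^{-1/2})$ and $Y=4e^{-a-3\bara}g^*$) are consistent with the paper's formulas.
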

\begin{prop}
  \label{prop:AL123}
  Suppose that $f(\zeta,u) = \e^{\i k}\log\zeta + g\zeta$ belongs to
  the $\AL_{13}$ class.  The following are equivalent: (i)
  $\d\alpha\wedge\d\nu \wedge \d \nu^* = 0$, (ii)
  $\Delta\log(YY^*)=0$, (iii) $g= k_2\e^{\i h}$ where $k_2$ is a real
  constant and $h=h(u)$ is real.
\end{prop}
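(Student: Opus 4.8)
The plan is to follow, \emph{mutatis mutandis}, the argument of Proposition~\ref{prop:AP123}, specialized to $A=1$. First I would record the setup: by the defining conditions of the $\AL_{13}$ class together with Propositions~\ref{prop:albaldep} and~\ref{prop:albaldep1}, we have $\mu=0$, $B=0$, and $A=1$ is a (necessarily unimodular) constant. With $\mu=0$, relation~\eqref{eq:Ydef} collapses to $Y=2\nu-1/\alpha$, and \eqref{eq:bardelnu}, \eqref{eq:deltanu}, \eqref{eq:deltaalpha}, \eqref{eq:Adef} yield
\[ \delta Y=-\baral Y,\quad \bardel Y=-3\alpha Y,\quad \delta\alpha=\alpha\baral,\quad \bardel\alpha=\alpha^2,\quad \Delta\alpha=-\barmu\alpha=0, \]
together with the conjugates $\delta Y^*=-3\baral Y^*$ and $\bardel Y^*=-\alpha Y^*$; recall also that $\alpha,Y,Y^*$, like all invariants, are annihilated by $D$.

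For (i)$\Leftrightarrow$(ii): since $\nu=\tfrac12(Y+1/\alpha)$ is an affine combination of $Y$ and $1/\alpha$, and $\d\alpha\wedge\d\baral=0$ in the $(0,1)$ class, the three-form $\d\alpha\wedge\d\nu\wedge\d\nu^*$ equals $\tfrac14\,\d\alpha\wedge\d Y\wedge\d Y^*$. Expanding this in the coframe $\omega^1,\omega^2,\omega^3$ gives a $3\times3$ determinant whose third column reads $0,\Delta Y,\Delta Y^*$; a cofactor expansion using the relations above collapses it to a nonzero multiple of $Y^*\Delta Y+Y\Delta Y^*=\Delta(YY^*)$. (This is the $\mu\to0$ instance of the determinant computed in Proposition~\ref{prop:BL123}, equivalently the $A=1$ specialization of Proposition~\ref{prop:AP123}, in which $(1-3A)$ and $(A-3)$ coincide.) Hence $\d\alpha\wedge\d\nu\wedge\d\nu^*=0\iff\Delta(YY^*)=0$, which, away from the degenerate locus $Y\equiv0$ (i.e.\ away from the more symmetric solution $f=\e^{\i k}\log\zeta$), is the same as $\Delta\log(YY^*)=0$.

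For (ii)$\Leftrightarrow$(iii): integrate the $\AL_{13}$ structure equations exactly as in Proposition~\ref{prop:L23} and Lemma~\ref{lem:aa*b0}. From $f=\e^{\i k}\log\zeta+g\zeta$ one has $f_{\zeta\zeta}=-\e^{\i k}\zeta^{-2}$, so $a=\tfrac14\log f_{\zeta\zeta}$ and the substitution $\zeta=Z(a)$ are $u$-independent (consistent with $\mu=0$), and the formulas of Proposition~\ref{prop:alnonzerorels} express $\alpha$ and $\nu$ as explicit functions of $\zeta,\zeta^*$ and $g$. A direct computation shows that the would-be $u$-independent part of $Y$ cancels --- the relevant constant vanishes precisely because the coefficient of $\log\zeta$ is unimodular --- leaving $Y=c\,g^*\zeta^{1/2}(\zeta^*)^{3/2}$ with $c\neq0$ a constant, so that $YY^*=|c|^2|g|^2|\zeta|^4$. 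Applying $\Delta=\e^{-a-\bara}\partial_u$ and using $\partial_u\zeta=0$ then yields
\[ \Delta\log(YY^*)=2\,\e^{-a-\bara}\,\frac{\d}{\d u}\log|g(u)|, \]
which vanishes if and only if $|g|$ is constant, i.e.\ $g=k_2\e^{\i h}$ with $k_2\in\Rset$ a constant and $h=h(u)$ real --- condition (iii).

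The step I expect to be the main obstacle is the explicit computation in the last paragraph: carrying the $Z$--$L$--$a$ substitution of Proposition~\ref{prop:alnonzerorels} through to closed forms for $\nu$ (hence $Y$) and then applying $\Delta$ to $\log(YY^*)$. This is routine given the apparatus already in place, but it is the only point at which genuine calculation --- as opposed to bookkeeping with the commutator and Newman--Penrose relations --- is needed; the delicate part is the cancellation that reduces $Y$ to a single $\zeta,\zeta^*$-monomial times $g^*$, since that is precisely what makes $\Delta\log(YY^*)$ proportional to $(\log|g|)'$ alone, with a manifestly non-vanishing prefactor, so that (ii)$\Leftrightarrow$(iii) is an honest equivalence rather than a one-sided implication.
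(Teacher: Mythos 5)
Your proposal is correct and follows essentially the same route as the paper: the paper proves Proposition \ref{prop:AL123} ``mutatis mutandis'' from Proposition \ref{prop:AP123}, i.e.\ exactly your determinant computation with $\delta Y=-\baral Y$, $\bardel Y=-3\alpha Y$, $\Delta\alpha=0$ specialized to $A=1$ (where $(1-3A)$ and $(A-3)$ coincide, giving $\Delta(YY^*)=0$), followed by a direct calculation of $Y$ for the explicit metric form, and your closed form $Y=c\,g^*\zeta^{1/2}(\zeta^*)^{3/2}$ and the resulting $\Delta\log(YY^*)\propto(\log|g|)'$ check out. The only quibble is your parenthetical attributing the cancellation of the $g$-independent part of $Y$ to unimodularity of the $\log\zeta$ coefficient --- the cancellation in fact occurs for any constant coefficient (as it must, since $Y=0$ characterizes the $g=0$ limit) --- but this does not affect the argument.
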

\begin{prop}
  \label{prop:AL122}
  Suppose that $f(\zeta,u) = \e^{\i k_0}\log\zeta + k_1 \e^{\i h}\zeta$ belongs to
  the $\AL_{123}$ class.  The following are equivalent: (i)
  $\d\alpha\wedge\d\nu \wedge \d \Delta\nu = 0$, (ii)
  $\Delta^2 \log Y=0$,
 (iii) $\e^{\i h}=u^{\i k_2}$ where $k_2$ is a real constant.
\end{prop}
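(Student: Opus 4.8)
The plan is to mirror, \emph{mutatis mutandis}, the arguments used for Propositions~\ref{prop:AP123} and~\ref{prop:AP122}, now specialized to the case $A=1$. The starting point is that a metric in the $\AL_{123}$ class has $A=1$ (a constant) and $\mu\equiv 0$ — this is part of the characterization established in the proof of Proposition~\ref{prop:AL123} — so that by \eqref{eq:Ydef} one has $Y=2\nu-1/\alpha$, and all the relations recorded in that proof remain valid here. Using \eqref{eq:deltaalpha}, \eqref{eq:Deltaalpha} and $\bardel\alpha=\alpha^2$ (the $A=1$ case of the defining relation $\bardel\alpha=A\alpha^2$), together with \eqref{eq:bardelnu}--\eqref{eq:deltanu}, one first obtains $\delta Y=-\baral Y$ and $\bardel Y=-3\alpha Y$; feeding these into the commutators \eqref{eq:alnzcom1}--\eqref{eq:alnzcom4} (where $\mu=0$ forces $\Delta\alpha=\Delta\baral=0$, and $DY=0$) then gives $\delta\Delta Y=-2\baral\Delta Y$ and $\bardel\Delta Y=-4\alpha\Delta Y$.

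With these derivative identities in hand, the equivalence of (i) and (ii) follows by expanding the Jacobian determinant
\[
  \begin{vmatrix}
    \delta\alpha & \bardel\alpha & \Delta\alpha\\
    \delta Y & \bardel Y & \Delta Y\\
    \delta\Delta Y & \bardel\Delta Y & \Delta^2 Y
  \end{vmatrix}
  =
  \begin{vmatrix}
    \alpha\baral & \alpha^2 & 0\\
    -\baral Y & -3\alpha Y & \Delta Y\\
    -2\baral\Delta Y & -4\alpha\Delta Y & \Delta^2 Y
  \end{vmatrix}
  = 2\alpha^2\baral\bigl((\Delta Y)^2-Y\Delta^2 Y\bigr)
  = -2\alpha^2\baral\, Y^2\,\Delta^2(\log Y).
\]
Since $\Delta\alpha=0$, one has $\Delta Y=2\Delta\nu$ and $\d Y=2\,\d\nu+\alpha^{-2}\,\d\alpha$, so $\d\alpha\wedge\d\nu\wedge\d\Delta\nu$ is a nonzero constant multiple of $\d\alpha\wedge\d Y\wedge\d\Delta Y$, which by the display above equals $-2\alpha^2\baral\, Y^2\,\Delta^2(\log Y)\,\omega^1\wedge\omega^2\wedge\omega^3$. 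Hence (i) holds if and only if $\Delta^2\log Y=0$, which is (ii). (This is the $A\to 1$ limit of Proposition~\ref{prop:AP122}, whose condition $\Delta^2 Y^{(1-A)/(A-3)}=0$ degenerates to $\Delta^2\log Y=0$.)

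For (ii)$\Leftrightarrow$(iii), I would run the integration steps of Proposition~\ref{prop:L23} on $f(\zeta,u)=\e^{\i k_0}\log\zeta+k_1\e^{\i h(u)}\zeta$. Since $f_{\zeta\zeta}=-\e^{\i k_0}\zeta^{-2}$ is $u$-independent, \eqref{eq:adef} gives $a=\tfrac14\log(-\e^{\i k_0})-\tfrac12\log\zeta$, hence $Z=c_0\e^{-2a}$ with $c_0$ constant ($Z_{uu}=0$), $L=-2a+\mathrm{const}$, $L_u=0$ so $\mu=0$ by \eqref{eq:muLrel}, $\alpha=-\tfrac12\,\bar c_0^{-1}\,\e^{a+\bara}$ by \eqref{eq:alphaZrel}, and $\nu=\e^{-a-3\bara}(f_\zeta)^*$ by \eqref{eq:nuLrel}. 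A short computation shows the $\bar\zeta^{-1}$ part of $2\nu$ cancels $-1/\alpha$ exactly, leaving the closed form $Y=2\,\e^{-a-3\bara}\,\overline{g(u)}$ with $g=k_1\e^{\i h}$. Applying $\Delta Q=\e^{-a-\bara}Q_u$ from \eqref{eq:DeltaQ} twice then writes $\Delta^2\log Y$ as a nowhere-zero factor times a second-order differential expression in $h$; setting this to zero yields an ODE for $h$ whose general solution, after the residual freedom \eqref{eq:zetaxform}--\eqref{eq:uxform} is used to normalize the integration constants, is exactly the form in (iii).

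The derivative identities and the determinant expansion are routine once the $A=1$, $\mu=0$ reduction has been made. The step I expect to be the main obstacle is the last one: carrying out the explicit integration cleanly, and above all the bookkeeping of the residual coordinate freedom, so that the one-parameter family of solutions of the $h$-ODE collapses to the single canonical representative in (iii). This is precisely the "direct calculation" that the analogous Propositions~\ref{prop:AP122} and~\ref{prop:AE122} leave implicit.
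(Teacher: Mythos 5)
Your reduction of (i)$\Leftrightarrow$(ii) is correct and is exactly the paper's route: the paper offers no separate proof of this proposition beyond the remark that it follows mutatis mutandis from Propositions \ref{prop:AP123} and \ref{prop:AP122}, and your $A=1$, $\mu=0$ specialization of that Jacobian determinant, with $\delta Y=-\baral Y$, $\bardel Y=-3\alpha Y$, $\delta\Delta Y=-2\baral\Delta Y$, $\bardel\Delta Y=-4\alpha\Delta Y$, correctly yields $2\alpha^2\baral\bigl((\Delta Y)^2-Y\Delta^2Y\bigr)=-2\alpha^2\baral\,Y^2\Delta^2\log Y$, and the passage from $(\alpha,\nu,\Delta\nu)$ to $(\alpha,Y,\Delta Y)$ is legitimate since $\Delta(1/\alpha)=0$ here.

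The genuine gap is the step you postponed, (ii)$\Leftrightarrow$(iii), and it is not mere bookkeeping: your own closed form $Y=2\e^{-a-3\bara}\,\overline{g(u)}$ with $g=k_1\e^{\i h}$ finishes it in one line, and the outcome is \emph{not} the form printed in (iii). Since $\Delta a=\Delta\bara=0$ and $\Delta Q=\e^{-a-\bara}Q_u$, one gets $\Delta\log Y=-\i h'(u)\,\e^{-a-\bara}$ and hence $\Delta^2\log Y=-\i h''(u)\,\e^{-2(a+\bara)}$, so (ii) is equivalent to $h''(u)=0$, i.e. $h=k_2u+\mathrm{const}$, i.e. $\e^{\i h}=\e^{\i k_2u}$ up to a removable constant phase. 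The residual freedom \eqref{eq:uxform} is only affine in $u$, so this can be normalized to $\e^{\i u}$ but can never be brought to $u^{\i k_2}$ (for which $h=k_2\log u$ and $h''\neq0$ unless $k_2=0$). This exponential form agrees with the $\AL_{122}$ entry of Table \ref{tab:0122} and with the paper's remark that $\AL_{122}$ is a specialization of $\rC_{22}$, i.e. of $F(\zeta\e^{\i u})$; the power-law condition stated in (iii) appears to be a slip carried over from the pattern of Propositions \ref{prop:BL122} and \ref{prop:AP122}. Consequently your assertion that the $h$-ODE, after using the residual freedom, gives ``exactly the form in (iii)'' is false as written: by deferring the ``direct calculation'' you missed both the easy completion and the inconsistency it exposes, which is precisely the content that a proof of this proposition must supply.
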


Finally we remark that a suitable change of variable
\eqref{eq:zetaxform} \eqref{eq:fxform} allows for two equivalent
representation for solution classes $\AP_{122}, \AE_{122}, \AL_{122}$:
\begin{align}
  &(k_0 \zeta)^{2\i k_1}+ C u^{-2-\i k_1} \zeta \simeq
  \left(k_0 u^{-\i/k_1}\zeta +C\right)^{2\i k_1} u^{-2}\\
  &\exp(k_0 \zeta)+Cu^{-2}\zeta \simeq (\exp(k_0\zeta)+C\zeta)u^{-2} \\
  &\e^{\i k_0}\log\zeta+k_1 \e^{\i u} \zeta \simeq e^{\i k_0}
  \log(e^{\i u} \zeta+k_1)
\end{align}
It follows that classes $\AP_{122}, \AE_{122}$ are specializations of
the generic $G_2$ solution $\rB_{22}$, while $\AL_{122}$ is a
specialization of $\rC_{22}$.

\section{The $G_3$ solutions}
\label{sect:g3}
In this section we classify the $G_3$ solutions. The invariant count
is $(0,1,1)$ and hence these solutions are characterized by
$\alpha\neq 0$ and
\[ \d\alpha\wedge\d\baral = \d\alpha\wedge\d\mu = \d\alpha \wedge\d A
= \d\alpha \wedge\d\nu = 0.\] The condition $\d\alpha \wedge\d A=0$ is
redundant, because by Propositions \ref{prop:albaldep} and
\ref{prop:albaldep1}, a $G_3$ solution satisfies $B=0, A A^*=1, \d A =
0$.  By Lemma \ref{lem:b0aneq1} there are two branches: (i)
$B=0, A=1, \Delta \mu=\mu^2,\; \mu\neq 0$; and (ii) $\mu=0, A A^*=1$.
By Propositions \ref{prop:BL123}, \ref{prop:AP123}, \ref{prop:AE123},
\ref{prop:AL123} the condition $\d\alpha\wedge\d\nu\wedge\d\nu^*=0$,
which is weaker than $\d\alpha\wedge\d\nu=0$, specializes these two
branches to $(0,1,2,3)$ solutions. Therefore the $G_3$ solutions arise
as the following sequence of specializations:
\[ (0,1,3) \to (0,1,2,3)\to (0,1,2,2) \to (0,1,1).\] Therefore, to
classify the $G_3$ solutions it suffices to begin with the classes
$\BL_{13}$, $\AP_{13}$, $\AE_{13}$, $\AL_{13}$ and impose the
specialization is $\d\alpha\wedge\d\nu=0$.
\begin{prop}
  \label{prop:BL11}
  Suppose that $f(\zeta,u) = Cu^{-2}\log\zeta + gu^{-2}\zeta$
  belongs to class $\BL_{13}$. The following are equivalent: (i)
  $\d\alpha\wedge\d\nu = 0$, (ii) $Y=0$, (iii) $g=0$.
\end{prop}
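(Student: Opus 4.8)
The plan is to follow the template already established for the $\rB_{23}\to\rB_{22}$ and $\rL_{23}\to\rL_{22}$ specializations (Propositions~\ref{prop:L22}, \ref{prop:BL122}), namely: first exploit the precursor structure to reduce the invariant condition (i) to a single scalar equation involving the $Y$ invariant, then perform a direct computation in the $a,u$ coordinates to relate that scalar condition to the vanishing of the parametric function $g$. Since class $\BL_{13}$ satisfies $A=1$, $B=0$, $\Delta\mu=\mu^2$, $\mu\neq 0$, by Proposition~\ref{prop:Lprecursor} there is a precursor vector field $V$ with $V^3=0$, $V^1\neq 0$; in fact $V=\im(\alpha^{-1}\bardel)$ annihilates $\omega^1,\omega^3,\alpha,\mu$, exactly as in Proposition~\ref{prop:L22}.

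First I would show (i)$\Leftrightarrow$(ii). Because $V$ annihilates $\alpha,\baral,\mu$, the condition $\d\alpha\wedge\d\nu=0$ is equivalent to $\cL_V\nu=0$, i.e. to the vanishing of the directional derivative of $\nu$ along $V$. Using the formula $V=\im(\alpha^{-1}\bardel)$ together with the NP relations \eqref{eq:bardelnu} and \eqref{eq:deltanu}, I would compute $\cL_V\nu$ and recognize it — as in \eqref{eq:Ydeltanu} — as a multiple of $Y=-1/\alpha+2\nu+(\mu^2+\Delta\mu)/\baral$. Since in class $\BL_{13}$ we have $\Delta\mu=\mu^2$, the invariant $Y$ here simplifies to $Y=2\nu-1/\alpha+2\mu^2/\baral$, precisely the expression already recorded in the proof of Proposition~\ref{prop:BL123}. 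This gives (i)$\Leftrightarrow$(ii) immediately, the argument being essentially identical to that of Proposition~\ref{prop:L22}.

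Next I would show (ii)$\Leftrightarrow$(iii) by direct calculation. Starting from $f(\zeta,u)=Cu^{-2}\log\zeta+gu^{-2}\zeta$ with $g=g(u)$, I would compute $a=\tfrac14\log f_{\zeta\zeta}$, then $\mu$, $\nu$, and hence $Y$ via Proposition~\ref{prop:alnonzerorels} and the auxiliary quantities $Z,L,\Phi$. The proof of Proposition~\ref{prop:BL123} already supplies $\mu=-(CC^*)^{1/4}(\zeta\zeta^*)^{1/2}$ and $M=\alpha\mu=(\i/2)(C^*)^{-1/2}$, so the only new work is tracking the contribution of the $gu^{-2}\zeta$ term to $\nu$, which enters through $f_\zeta$ in \eqref{eq:nuLrel}. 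One expects $\alpha Y^*$ (or $\baral Y$) to come out proportional to $\zeta\, g(u)$ times a nonvanishing factor built from $C$, in exact analogy with the identity $\alpha Y^*=-\zeta g_2(g_1 g_1^*)^{-1/2}$ in Proposition~\ref{prop:L22}; then $Y=0$ is equivalent to $g=0$.

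The main obstacle is the bookkeeping in the (ii)$\Leftrightarrow$(iii) computation: one must carefully carry the $g\zeta$ perturbation through the change of variables $\zeta=Z(a,u)$ and confirm that it does not contaminate $\mu$ or $A$ (it should not, since $f_{\zeta\zeta}$ is unchanged by a term linear in $\zeta$) while producing a clean linear-in-$g$ expression for $Y$. Once the structural reduction (i)$\Leftrightarrow$(ii) is in place — which is a verbatim adaptation of Proposition~\ref{prop:L22} — the remainder is a routine, if slightly delicate, explicit NP computation of the type performed repeatedly in Sections~\ref{sect:g2} and~\ref{sect:0123}.
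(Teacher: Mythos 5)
Your reduction of (i) to the single condition $\cL_V\nu=0$ is only half justified, and this is where the proposal has a real gap. In Proposition \ref{prop:L22} the condition being tested is $\d\alpha\wedge\d\baral\wedge\d\nu=0$, and there the common annihilator of $\d\alpha,\d\baral$ genuinely is spanned by $D$ and $V=\im(\alpha^{-1}\bardel)$, so $\cL_V\nu=0$ is equivalent to the wedge condition. Here condition (i) is the stronger statement $\d\alpha\wedge\d\nu=0$, involving the single $1$-form $\d\alpha$, whose annihilator is three-dimensional: besides $D$ and $V\propto \alpha\delta-\baral\bardel$ it also contains (a multiple of) $\mu\baral^{-1}\delta+\Delta$, since $\Delta\alpha=-\mu\alpha$ when $A=1$, $B=0$. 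Consequently the vanishing of $\cL_V\nu$ — which, via the identity \eqref{eq:Ydeltanu}, is the same as $Y=0$ — only gives you (i)$\Rightarrow$(ii) for free; the converse requires checking that $\nu$ is killed along the remaining direction as well, so the argument is \emph{not} ``essentially identical'' to Proposition \ref{prop:L22}. The paper handles this by computing all three $2\times2$ minors of the derivative matrix of $(\alpha,Y)$, e.g.\ $\delta\alpha\,\bardel Y-\delta Y\,\bardel\alpha=-2Y\alpha^2\baral$ together with the two minors involving $\Delta\alpha,\Delta Y$, and noting that $Y\equiv0$ annihilates them all (together with the fact that $\d\alpha\wedge\d\nu=0$ iff $\d\alpha\wedge\d Y=0$, because $Y-2\nu$ is built from $\alpha,\baral,\mu$, which are mutually dependent in $\BL_{13}$). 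Your approach can be patched in one line: if $Y\equiv0$ then $\nu=\tfrac12\left(1/\alpha-2\mu^2/\baral\right)$, and since $\d\alpha\wedge\d\baral=\d\alpha\wedge\d\mu=0$ in class $\BL_{13}$, this forces $\d\alpha\wedge\d\nu=0$ — but as written this step is missing.

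The second half of your plan, proving (ii)$\Leftrightarrow$(iii) by an explicit computation of $Y$ in the $a,u$ variables and isolating the contribution of the linear term through $f_\zeta$ in \eqref{eq:nuLrel}, is exactly what the paper does: the calculation gives $\baral Y^*=u^2\zeta g/C$, so $Y=0$ iff $g=0$, just as you anticipated.
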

\begin{proof}
  Using the relations from the proof of Proposition \ref{prop:BL123},
  we have
  \begin{gather*}
    \delta\alpha \bardel Y - \delta Y \bardel \alpha = -2Y \alpha^2
    \baral,\\
    \delta\alpha \Delta Y - \delta Y \Delta \alpha = -2
    \alpha\baral(Y\mu-\Delta Y),\\
    \Delta\alpha \bardel Y - \Delta Y \bardel \alpha =  \alpha^2
    (3Y\mu-\Delta Y)
    \end{gather*}
    This proves the equivalence of (i) and (ii).  A direct calculation
    shows that
    \[ \baral Y^* =  u^2 \zeta g/C.\]
    This proves the equivalence of (ii) and (iii).
\end{proof}
\begin{prop}
  \label{prop:AP11}
  Suppose that $f(\zeta,u) = (k_0 \zeta)^{2\i k_1} +g \zeta$ belongs
  to the $\AP_{13}$ class. The following are equivalent: (i)
  $\d\alpha\wedge\d\nu = 0$, (ii) $Y=0$, (iii) $g=0$.
\end{prop}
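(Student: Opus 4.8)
The plan is to mirror the proof of Proposition~\ref{prop:BL11}, with the $\BL_{13}$ data replaced by the $\AP_{13}$ data assembled in the proof of Proposition~\ref{prop:AP123}. Recall that in the $\AP_{13}$ class we have $\mu=0$, that $A$ is a constant with $AA^*=1$ (so in particular $A\neq 0,3$), that $Y=(3-A)\nu-1/\alpha$, and --- using \eqref{eq:Dalpha}, \eqref{eq:Deltaalpha}, \eqref{eq:Adef} together with the relations $\delta Y=-Y\baral$, $\bardel Y=-3Y\alpha$ quoted in that proof --- that
\[
  \delta\alpha=\alpha\baral,\quad \bardel\alpha=A\alpha^2,\quad \Delta\alpha=0,\quad D\alpha=0,\quad DY=0.
\]

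First I would reduce (i) to a statement about $Y$ alone. Since $3-A\neq 0$, differentiating $Y=(3-A)\nu-1/\alpha$ gives $\d\alpha\wedge\d Y=(3-A)\,\d\alpha\wedge\d\nu$, so (i) is equivalent to $\d\alpha\wedge\d Y=0$. Expanding $\d\alpha$ and $\d Y$ in the invariant coframe $\omega^1,\dots,\omega^4$ (both have vanishing $\omega^4$-component, since all invariants are annihilated by $D$), the three $2\times 2$ minors of the matrix with rows $(\delta\alpha,\bardel\alpha,\Delta\alpha)$ and $(\delta Y,\bardel Y,\Delta Y)$ are
\[
  \delta\alpha\,\bardel Y-\delta Y\,\bardel\alpha=(A-3)\,Y\alpha^2\baral,\quad
  \delta\alpha\,\Delta Y-\delta Y\,\Delta\alpha=\alpha\baral\,\Delta Y,\quad
  \Delta\alpha\,\bardel Y-\Delta Y\,\bardel\alpha=-A\alpha^2\,\Delta Y.
\]
Hence $\d\alpha\wedge\d Y=0$ forces $(A-3)Y\alpha^2\baral=0$; since $\alpha\neq0$ and $A\neq3$ this already gives $Y=0$, and then the remaining two minors vanish automatically. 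The converse is trivial, since $Y\equiv0$ makes $\d Y=0$. This proves (i)$\Leftrightarrow$(ii).

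For (ii)$\Leftrightarrow$(iii) I would run the direct computation, exactly as in Proposition~\ref{prop:BL11}. From $f=(k_0\zeta)^{2\i k_1}+g\zeta$ one gets $f_{\zeta\zeta}=2\i k_1(2\i k_1-1)k_0^{2\i k_1}\zeta^{2\i k_1-2}$, so $a=\tfrac14\log f_{\zeta\zeta}$ is an explicit function of $\zeta$ alone; consequently $Z=Z(a)$ is $u$-independent and $Z_{uu}=0$. Substituting this together with \eqref{eq:nuLrel} and \eqref{eq:alphaZrel} into \eqref{eq:Ydef}, the contributions of the homogeneous term $(k_0\zeta)^{2\i k_1}$ to $(3-A)\nu$ and to $-1/\alpha$ cancel identically, leaving
\[
  Y=C\,\zeta^{(1-\i k_1)/2}\,\bar\zeta^{(3+3\i k_1)/2}\,\bar g,
\]
for a nonzero constant $C=C(k_0,k_1)$ --- the analogue of the relation $\baral Y^*=u^2\zeta g/C$ in Proposition~\ref{prop:BL11}. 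Since $g=g(u)$ is arbitrary, $Y\equiv 0$ if and only if $g\equiv 0$.

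The only step that is not pure bookkeeping is the cancellation of the $g$-independent part of $Y$: one must verify that $e^{-4\bara}$ is proportional to $\bar\zeta^{\,2\i k_1+2}$ with precisely the constant needed to cancel the homogeneous part of $(3-A)\nu$ against $1/\alpha$. This follows from the explicit expression for $a(\zeta)$ above and the resulting value $A=-1-(L_a)^*$, and presents no conceptual difficulty; everything else is linear algebra in the invariant coframe.
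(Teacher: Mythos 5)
Your proposal is correct and follows essentially the same route as the paper: the equivalence (i)$\Leftrightarrow$(ii) via the three coframe minors of $\d\alpha\wedge\d Y$ built from $\delta\alpha=\alpha\baral$, $\bardel\alpha=A\alpha^2$, $\Delta\alpha=0$ and $\delta Y=-Y\baral$, $\bardel Y=-3Y\alpha$ (the paper's displayed identities, up to an immaterial sign in the third minor), and (ii)$\Leftrightarrow$(iii) via the direct computation showing the $g$-independent parts of $(3-A)\nu$ and $1/\alpha$ cancel, so that $Y$ is a nonzero constant multiple of $\bar g$ — which is the same content as the paper's formula $\alpha^* Y^*=C\zeta^{1-2\i k_1}g$.
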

\begin{proof}
  Using the relations from the proof of Proposition \ref{prop:AP123},
  we have
  \begin{gather*}
    \delta\alpha \bardel Y - \delta Y \bardel \alpha = (A-3)Y \alpha^2
    \baral,\\
    \delta\alpha \Delta Y - \delta Y \Delta \alpha = \alpha\baral
    \Delta Y,\\
    \Delta\alpha \bardel Y - \Delta Y \bardel \alpha = A\alpha^2
    \Delta Y
    \end{gather*}
    This proves the equivalence of (i) and (ii).  A direct calculation
    shows that
    \[ \alpha^* Y^* =  C\zeta^{1-2\i k_1} g,\]
    where $C$ is a constant.
    This proves the equivalence of (ii) and (iii).
\end{proof}
The proof of the following two Propositions uses the same argument as
above.  One merely specializes $A\to -1$ and $A\to 1$, respectively.
\begin{prop}
  \label{prop:AE11}
  Suppose that $f(\zeta,u) = \exp(k\zeta) +g \zeta$ belongs
  to the $\AE_{13}$ class. The following are equivalent: (i)
  $\d\alpha\wedge\d\nu = 0$, (ii) $Y=0$, (iii) $g=0$.
\end{prop}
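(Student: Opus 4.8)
The plan is to follow the template set by Propositions \ref{prop:BL11} and \ref{prop:AP11}, since the statement of Proposition \ref{prop:AE11} is word-for-word parallel to those (with the only structural difference being that the ambient solution class is $\AE_{13}$, corresponding to the specialization $A\to -1$). First I would recall from the proof of Proposition \ref{prop:AE123} the basic invariant relations that hold for class $\AE_{13}$: namely $\mu=0$, $A=-1$ is a constant, $Y=(3-A)\nu-1/\alpha = 4\nu-1/\alpha$, together with the derived relations $\delta Y = -Y\baral$ and $\bardel Y = -3Y\alpha$ (these come from \eqref{eq:bardelnu} and \eqref{eq:deltanu} with $\mu=0$). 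I would also record $\delta\alpha=\alpha\baral$, $\bardel\alpha=A\alpha^2=-\alpha^2$, and $\Delta\alpha=-\barmu\alpha=0$ from \eqref{eq:deltaalpha}--\eqref{eq:Deltaalpha}.

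The equivalence of (i) and (ii) then reduces to computing the three $2\times 2$ determinants obtained by pairing $(\delta,\bardel,\Delta)$ applied to $\alpha$ and to $Y$, as in Propositions \ref{prop:BL11} and \ref{prop:AP11}. Each such determinant will factor as a nonzero multiple of $Y$ (or of $\Delta Y$, but since $\mu=0$ forces $\Delta Y$ to be controlled, the governing factor is $Y$ itself), using $A=-1$ in the formulas already displayed in Proposition \ref{prop:AP11}: there one gets $\delta\alpha\,\bardel Y - \delta Y\,\bardel\alpha = (A-3)Y\alpha^2\baral = -4Y\alpha^2\baral$, and similarly for the other two pairings. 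Since $\alpha\neq 0$, the vanishing of $\d\alpha\wedge\d\nu$ (equivalently $\d\alpha\wedge\d Y$, as $Y$ and $\nu$ differ by a function of $\alpha$) is equivalent to $Y=0$.

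For the equivalence of (ii) and (iii), I would carry out the direct computation of $Y^*$ in terms of the metric function. Using the explicit form $f(\zeta,u)=\exp(k\zeta)+g\zeta$ and the parametrization of Proposition \ref{prop:alnonzerorels} (i.e., $a=\tfrac14\log f_{\zeta\zeta}$, $L=\log Z_a$, etc., which in the $\AE$ case give $L=g$ as derived in Proposition \ref{prop:E23}), one plugs into $\nu = e^{-a-3\bara}(Z_{uu}+(f_\zeta)^*)$ and into $Y=4\nu-1/\alpha$, and the $F$-dependent and $1/\alpha$ pieces cancel, leaving $\baral Y^*$ (or $\alpha^*Y^*$) equal to a nonzero constant times $\zeta$ times $g$, exactly paralleling the conclusion $\baral Y^* = u^2\zeta g/C$ of Proposition \ref{prop:BL11} and $\alpha^* Y^* = C\zeta^{1-2\i k_1}g$ of Proposition \ref{prop:AP11}. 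Hence $Y=0$ iff $g=0$. The main obstacle, such as it is, is purely the bookkeeping in this last cancellation — keeping track of the $e^{-a-3\bara}$ weights and verifying that the $\exp(k\zeta)$-dependent terms drop out — but this is entirely routine given the relations in Proposition \ref{prop:alnonzerorels}, and no genuinely new idea beyond the $A\to -1$ specialization of Proposition \ref{prop:AP11} is required.
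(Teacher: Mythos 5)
Your proposal is correct and follows essentially the paper's own route: the paper disposes of Proposition \ref{prop:AE11} by stating that one simply repeats the argument of Propositions \ref{prop:BL11}/\ref{prop:AP11} with the specialization $A\to-1$, which is exactly what you do (the $\delta$--$\bardel$ determinant gives $(A-3)Y\alpha^2\baral=-4Y\alpha^2\baral$, forcing $Y=0$, and the direct computation with $a=\tfrac14\log(k^2e^{k\zeta})$ makes the $\exp$-terms cancel so that $Y^*$ is a nonzero multiple of $g$). The only cosmetic blemish is your parenthetical about $\mu=0$ "controlling" $\Delta Y$; the cleaner statement is that $Y\equiv0$ trivially kills the determinants containing $\Delta Y$, while the converse needs only the first determinant.
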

\begin{prop}
  \label{prop:AL11}
  Suppose that $f(\zeta,u) = \e^{\i k} \log \zeta +g \zeta$ belongs
  to the $\AL_{13}$ class. The following are equivalent: (i)
  $\d\alpha\wedge\d\nu = 0$, (ii) $Y=0$, (iii) $g=0$.
\end{prop}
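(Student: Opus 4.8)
The plan is to run, mutatis mutandis, the argument of Proposition~\ref{prop:AP11}, specializing the (constant) first-order invariant $A$ to the value $A=1$ that characterizes the logarithmic class (Proposition~\ref{prop:L23}) and recalling that membership in $\AL_{13}$ forces $\mu=0$, $AA^*=1$ and $\d A=0$ (Propositions~\ref{prop:albaldep}, \ref{prop:albaldep1} and Lemma~\ref{lem:aa*b0}). Under these constraints the invariant $Y$ of \eqref{eq:Ydef} degenerates to $Y=2\nu-1/\alpha$, and, exactly as in the proof of Proposition~\ref{prop:AP123}, the Newman--Penrose identities \eqref{eq:bardelnu} and \eqref{eq:deltanu} together with $\delta\alpha=\alpha\baral$, $\bardel\alpha=\alpha^2$ and $\Delta\alpha=0$ yield the homogeneity laws $\delta Y=-\baral Y$ and $\bardel Y=-3\alpha Y$.

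For the equivalence (i)\,$\Leftrightarrow$\,(ii) I would compute $\d\alpha\wedge\d\nu$; since $Y$ is an invertible affine function of $\nu$ with $\alpha$-dependent coefficients, this amounts to computing $\d\alpha\wedge\d Y$. With $\d\alpha=\alpha(\baral\,\omega^1+\alpha\,\omega^2)$ and $\d Y=-\baral Y\,\omega^1-3\alpha Y\,\omega^2+\Delta Y\,\omega^3$, the three $2\times 2$ minors come out proportional to $Y\alpha^2\baral$, to $\alpha\baral\,\Delta Y$ and to $\alpha^2\,\Delta Y$ respectively, exactly as in Proposition~\ref{prop:AP11} with $A=1$. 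The first minor already forces $Y=0$ (since $\alpha\neq 0$), and once $Y\equiv 0$ the remaining two minors vanish automatically; hence $\d\alpha\wedge\d\nu=0$ if and only if $Y=0$.

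For the equivalence (ii)\,$\Leftrightarrow$\,(iii) I would substitute $f=\e^{\i k}\log\zeta+g\zeta$ into the coordinate dictionary of Proposition~\ref{prop:alnonzerorels}. Since $f_{\zeta\zeta}=-\e^{\i k}\zeta^{-2}$ is independent of $u$, one has $a=-\tfrac{1}{2}\log\zeta+\mathrm{const}$, so $Z=Z(a)$ is $u$-independent, $Z_{uu}=0$, and \eqref{eq:nuLrel} gives $\nu=\e^{-a-3\bara}(f_\zeta)^{*}$. A short computation in which the identity $\e^{4\bara}=(f_{\zeta\zeta})^{*}$ cancels the $\bzeta^{-1}$ contributions coming from $(f_\zeta)^{*}$ and from $1/\alpha$ then yields $Y=2\e^{-a-3\bara}g^{*}$, equivalently $\baral\,Y^{*}=\e^{-\i k}\zeta\,g$. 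Hence $Y=0$ if and only if $g=0$, which completes the argument.

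The only genuine labor here is the bookkeeping in the last paragraph: keeping track of the additive constant hidden in $a=\tfrac{1}{4}\log f_{\zeta\zeta}$ and verifying the cancellation of the $\bzeta^{-1}$ terms. This is entirely parallel to the direct calculation already carried out for $\AP_{11}$, with the power $\zeta^{1-2\i k_1}$ appearing there degenerating to $\zeta$ here; no new phenomena arise, which is precisely why the statement can be asserted to follow ``by the same argument''.
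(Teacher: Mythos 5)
Your proposal is correct and follows exactly the route the paper takes: the paper's proof of this proposition consists of the remark that one repeats the argument of Proposition \ref{prop:AP11} (itself built on the relations from Proposition \ref{prop:AP123}) with the specialization $A\to 1$, which is precisely what you carry out, including the correct degenerate form $Y=2\nu-1/\alpha$, the homogeneity relations $\delta Y=-\baral Y$, $\bardel Y=-3\alpha Y$, and the explicit identity $\baral Y^{*}=\e^{-\i k}\zeta g$ establishing (ii)$\Leftrightarrow$(iii).
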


\section{The $G_5$ and $G_6$ solutions}
\label{sect:g56}

In this section we derive and classify the metric forms in the
$\alpha  = 0$ class.  By Proposition \ref{prop:alzero} the
corresponding solutions are either $G_5$ or $G_6$.

\begin{prop}
  The following are equivalent: (i) $\alpha=0$ and (ii) $f(\zeta,u) =
  g_2 \zeta^2+g_1 \zeta + g_0$, where as usual $g_i=g_i(u),\;
  i=0,1,2$ denote complex valued functions of one variable.
\end{prop}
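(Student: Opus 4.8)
The plan is to show that $\alpha = 0$ is equivalent to the vanishing of $\bardel\alpha/\alpha^2$-type obstruction, but more directly, to characterize when $f_{\zeta\zeta\zeta\zeta}$ degenerates. First I would recall from Proposition \ref{prop:alnonzerorels} — or rather, re-derive in the present $\alpha=0$ setting — the relation between $\alpha$ and the metric function $f$. In the $\alpha\neq 0$ analysis the key was equation \eqref{eq:alphaZrel}, $\alpha = e^{a-\bara}(a_\zeta)^*$ with $a = \tfrac14\log f_{\zeta\zeta}$; the derivation of that formula did not actually use $\alpha\neq 0$ until the very last step where one sets $a_\zeta\neq 0$. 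So the same computation gives, in general, $\alpha = 0 \iff a_\zeta = 0 \iff \partial_\zeta\log f_{\zeta\zeta} = 0 \iff f_{\zeta\zeta\zeta} = 0$ (as a function of $\zeta$, allowing $u$-dependence). Hence the real content of the proposition is the elementary ODE fact that $f_{\zeta\zeta\zeta}(\zeta,u) = 0$, together with analyticity of $f$ in $\zeta$, is equivalent to $f$ being a quadratic polynomial in $\zeta$ with $u$-dependent coefficients.

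The key steps, in order, would be: (1) Compute the spin coefficient $\alpha$ directly from the metric \eqref{eq:vsinvacmetric} in the natural null tetrad adapted to $\d\zeta, \d\bar\zeta, \d u, \d v$ — or simply invoke the chain of relations already used to establish \eqref{eq:alphaZrel}, observing that the step $a_\zeta\neq 0$ was the only place positivity of $\alpha$ entered. This yields $\alpha \propto (a_\zeta)^* \propto (f_{\zeta\zeta\zeta}/f_{\zeta\zeta})^*$, so $\alpha = 0$ iff $f_{\zeta\zeta\zeta}$ vanishes identically (note $f_{\zeta\zeta}\neq 0$ away from flat space, since otherwise $\Psi_4 = 0$). (2) Integrate $f_{\zeta\zeta\zeta} = 0$ three times in $\zeta$, picking up one arbitrary function of $u$ at each integration, to obtain $f(\zeta,u) = g_2(u)\zeta^2 + g_1(u)\zeta + g_0(u)$. (3) Conversely, plug $f = g_2\zeta^2 + g_1\zeta + g_0$ back in: then $f_{\zeta\zeta} = 2g_2$ is independent of $\zeta$, so $a = \tfrac14\log(2g_2)$ depends only on $u$, hence $a_\zeta = 0$ and $\alpha = 0$. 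One should also note that $g_2\neq 0$ is required for the solution to be a genuine (non-flat) type N pp-wave, i.e.\ $\Psi_4\neq 0$; if $g_2 = 0$ the spacetime is flat.

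I expect the main obstacle to be purely expository rather than mathematical: making sure that the formula $\alpha = e^{a-\bara}(a_\zeta)^*$ (or a suitable variant) is legitimately available in the $\alpha=0$ regime, since Proposition \ref{prop:alnonzerorels} as stated carries the standing hypothesis $\alpha\neq 0$ and the normalization $\gamma\to 0$, which is not used here (in the $\alpha=0$ case one instead normalizes only $\Psi_4\to 1$, as in Proposition \ref{prop:alzero}). The cleanest route is probably to compute $\alpha$ from scratch for the metric \eqref{eq:vsinvacmetric}: using \eqref{eq:NPsc3} and the structure equations $\d\bth^a = \bom^a{}_b\wedge\bth^b$ with $\bth^1 \propto \d\zeta$, one reads off that the $\bth^2 = \bth^{1*}$ component of $-(\bom_{12}+\bom_{34})/2$ is governed by $\partial_\zeta$ of the conformal factor relating the flat transverse metric to the tetrad normalization absorbing $\Psi_4$, which is exactly $\partial_\zeta\log\sqrt{f_{\zeta\zeta}}$. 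Once that identification is in hand, the rest is the trivial triple integration and its converse, and I would present it in two or three lines without belaboring the constants of integration.
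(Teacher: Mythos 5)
Your proposal is correct and follows essentially the same route as the paper, which likewise proves the proposition by the direct computation $\alpha = e^{a-\bara}(a_\zeta)^*$ with $a=\tfrac14\log f_{\zeta\zeta}$ (valid with only the $\Psi_4\to 1$ normalization), so that $\alpha=0$ iff $f_{\zeta\zeta\zeta}=0$ iff $f$ is quadratic in $\zeta$ with $u$-dependent coefficients. Your extra care about the availability of that formula outside the $\alpha\neq 0$ regime is a reasonable expository refinement of the same argument, not a different method.
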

\begin{proof}
  A direct calculation shows that
  \[ \alpha = e^{a-a^*} (a_\zeta)^*,\]
  where
  \[ a=\frac{1}{4} f_{\zeta\zeta}.\]
\end{proof}
\noindent
Note that a form-preserving transformation \eqref{eq:zetaxform} --
\eqref{eq:fxform} can be used to set $g_1, g_0\to 0$.  Hence, without
loss of generality a solution in the $\alpha=0$ class has the form
$f(\zeta,u) = g \zeta^2$, where $g\neq 0$.

It will be convenient to set $g=e^{4A}$, where $A=A(u)$ is complex
valued.  A direct calculation then shows that
\begin{align}
  \label{eq:al0gamma}
  \gamma&= \frac{e^{-2\Re A} A^*_u}{\sqrt{2}}\\
  \label{eq:gambgam}
  \frac{\gamma}{\gamma^*} &= \frac{A^*_u}{A_u}.
\end{align}
We are now in a position to derive and classify the homogeneous $G_6$
solutions. Such solutions are characterized by the condition
$\Delta\gamma=0$, which ensures that the fundamental Cartan invariant
$\gamma$ is a constant.

At this point the $G_6$ classification bifurcates, depending on the
value of $A_u$.  We consider the generic case in Proposition
\ref{prop:B0}, and the singular case in Proposition \ref{prop:C0}.
The classification is summarized in Table \ref{tab:g56}.
\begin{prop}
  \label{prop:B0}
  Suppose that $f(\zeta,u) = e^{4A} \zeta^2$, $\Delta\gamma=0$, and
  $\Re\gamma\neq 0$.  Then, without loss of generality,
  \begin{equation}
    f(\zeta,u) = k_1 u^{2\i k_0-2} \zeta^2.
  \end{equation}
\end{prop}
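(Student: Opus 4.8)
The plan is to read off a first-order ordinary differential equation for $A=A(u)$ from the explicit formula \eqref{eq:al0gamma} and integrate it. Since we are in the $\alpha=0$ class, the relations \eqref{NPBcase} give $D\gamma=\delta\gamma=\bardel\gamma=0$; together with the hypothesis $\Delta\gamma=0$ this shows that $\gamma$ is annihilated by the full frame, hence is a constant. Write $\gamma\equiv c$, where $\Re c\neq 0$ by hypothesis (so in particular $c\neq 0$). By \eqref{eq:al0gamma} this constancy is exactly the ODE
\[ e^{-2\Re A}\,A_u^{*}=\sqrt 2\,c. \]

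Next I would separate $A$ into real and imaginary parts, $A=P+\i Q$ with $P,Q$ real-valued functions of $u$, and take real and imaginary parts of the displayed ODE to obtain
\[ e^{-2P}P'=\sqrt 2\,\Re c,\qquad e^{-2P}Q'=-\sqrt 2\,\Im c. \]
The first equation is $\tfrac{d}{du}\bigl(-\tfrac12 e^{-2P}\bigr)=\sqrt 2\,\Re c$, so $e^{-2P}$ is an affine function of $u$; since $\Re c\neq 0$ it is genuinely non-constant, and using the $u$-translation freedom in \eqref{eq:uxform} we may absorb the constant of integration and write $e^{-2P}=\lambda u$ for some nonzero real $\lambda$. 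Substituting $e^{2P}=1/(\lambda u)$ into the second equation yields $Q'=-\sqrt 2(\Im c)/(\lambda u)=\kappa/u$ with $\kappa:=-\sqrt2\,\Im c/\lambda$ real, hence $Q=\kappa\log u+\text{const}$. Collecting the pieces, $A=\bigl(-\tfrac12+\i\kappa\bigr)\log u+\text{const}$.

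Finally I would exponentiate: $e^{4A}=k_1\,u^{-2+4\i\kappa}$ for some nonzero complex constant $k_1$ (which collects $\lambda^{-2}$ together with the leftover additive constant in $A$), and set $k_0:=2\kappa$, so that
\[ f(\zeta,u)=e^{4A}\zeta^2=k_1\,u^{2\i k_0-2}\,\zeta^2, \]
which is the asserted form.

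The only point demanding care is the phrase ``without loss of generality'': one should check that the residual coordinate freedom that preserves the shape $f=e^{4A}\zeta^2$ --- namely \eqref{eq:zetaxform}--\eqref{eq:fxform} with $h\equiv 0$, i.e.\ a $u$-translation together with a $u$-rescaling and a constant $\zeta$-rotation --- is precisely enough to eliminate the constant of integration above, while rescalings and rotations act only on $\lambda$ and on the phase of $k_1$. I would also stress that the hypothesis $\Re\gamma\neq 0$ enters exactly here: it is what guarantees that $e^{-2P}$ is non-constant in $u$, thereby forcing the power-law dependence; the complementary degenerate case $\Re\gamma=0$ is what gets handled separately in Proposition \ref{prop:C0}. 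Beyond this bookkeeping it is a routine integration, so I do not anticipate a genuine obstacle.
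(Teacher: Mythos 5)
Your argument is correct and follows essentially the same route as the paper: both proofs note that $\Delta\gamma=0$ together with \eqref{NPBcase} makes $\gamma$ a nonzero constant, turn \eqref{eq:al0gamma} into a first-order ODE for $A(u)$, integrate it, and use the residual freedom \eqref{eq:zetaxform}--\eqref{eq:uxform} to normalize the constants of integration. The only (cosmetic) difference is that you split $A=P+\i Q$ and integrate real and imaginary parts directly, whereas the paper first uses \eqref{eq:gambgam} to fix the constant phase of $A_u$ (writing $1/A_u=e^{\i k}h$ with $h$ real) before integrating; your closing remarks on the role of $\Re\gamma\neq 0$ and on which constants are essential match the paper's bookkeeping.
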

\begin{proof}
  If $\Delta\gamma=0$, then $\gamma$ is a constant.  By assumption,
  $A_u \neq 0$ , and so $\gamma/\gamma^*$ is also a constant.  It will
  therefore be convenient to write
  \begin{equation}
    \label{eq:al0hkdef}
    1/A_u = e^{\i k} h,
  \end{equation}
  where both $k$ is a real constant and $h=h(u)$ is real.
  A direct calculation now gives
  \begin{align*}
    h_u &= -\frac{1}{2}\cos k,\\ \intertext{which implies }
    A_u &= \frac{2\e^{-\i k}}{k_2 - u \cos k},\\
    f &= (\cos k u - k_2)^{-2+2\i \tan k} k_1
  \end{align*}
  where $k_1\neq 0$ is a real constant.  Substituting into
  \eqref{eq:al0gamma} gives
  \[ \gamma = \frac{e^{\i k}}{\sqrt{8 k_1}},\]
  which means that $k,k_1$ are essential constants, while $k_2$ can be
  gauged away.  Applying the change of variables \eqref{eq:uxform}
  gives the desired solution form.
\end{proof}

\begin{prop}
  \label{prop:C0}
  Suppose that $f(\zeta,u) = e^{4A} \zeta^2$, $\Delta\gamma=0$, and
  $\Re\gamma= 0$.  Then, without loss of generality,
  \begin{equation}
    f(\zeta,u) = \e^{2\i k_0 u} \zeta^2,
  \end{equation}
  where $k_0$ is a real constant.
\end{prop}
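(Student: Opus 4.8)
The plan is to integrate the two hypotheses explicitly, reduce $f$ to a one-parameter normal form, and then strip off the inessential constants using the coordinate freedom \eqref{eq:zetaxform}--\eqref{eq:fxform}.

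First I would observe, exactly as recorded just before Proposition \ref{prop:B0}, that $\Delta\gamma=0$ together with \eqref{NPBcase} forces $\gamma$ to be a constant. If $\gamma=0$, then by \eqref{eq:al0gamma} $A_u=0$, so $A$ is constant and $f=\e^{4A}\zeta^2$ has constant coefficient; this is the asserted form with $k_0=0$ once the coefficient has been normalized in the final step, so I may assume $\gamma\neq 0$. Since $\Re\gamma=0$ we then have $\gamma^*=-\gamma$, so by \eqref{eq:gambgam} $A_u^*=-A_u$, i.e. $\Re A_u=0$. Hence $\Re A$ is a real constant, and then \eqref{eq:al0gamma} shows that $|\gamma|=\e^{-2\Re A}|A_u|/\sqrt2$ forces $|A_u|$ to be constant as well.

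Next I would write $A_u=\i\,q(u)$ with $q$ real and smooth. Since $|q|$ is a nonzero constant, $q$ cannot change sign and is therefore itself a real constant $q=c$; integrating gives $A=\i c u + A_0$ with $A_0\in\Cset$ a constant, so that
\[ f(\zeta,u)=\e^{4A_0}\,\e^{4\i c u}\,\zeta^2 . \]

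Finally, I would apply \eqref{eq:zetaxform}--\eqref{eq:fxform} with $h\equiv 0$ and $g\equiv 0$, so that $\hat\zeta=\e^{\i k}\zeta$, $\hat u=(u+u_0)/a$ and $\hat f=a^2 f$; in the new coordinates $\hat f=a^2\e^{-2\i k}\e^{4A_0}\e^{4\i c(a\hat u-u_0)}\hat\zeta^2$. Taking $a=\e^{-2\Re A_0}>0$ renders the prefactor unimodular, and choosing the rotation $k$ (and, if one likes, the shift $u_0$) so that $-2k+4\Im A_0-4cu_0\equiv 0\pmod{2\pi}$ cancels the remaining phase, leaving $\hat f=\e^{2\i k_0\hat u}\hat\zeta^2$ with $k_0:=2ca$ a real constant; when $\gamma=0$ the same computation with $c=0$ gives $k_0=0$. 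The verification of the identities \eqref{eq:al0gamma}--\eqref{eq:gambgam} is routine and already in place; the only delicate points are the continuity argument promoting $A_u=\i q(u)$ of constant modulus to a genuine constant, and the bookkeeping of the coordinate transformation at the end --- neither a real obstacle --- so that the argument runs parallel, mutatis mutandis, to the proof of Proposition \ref{prop:B0}.
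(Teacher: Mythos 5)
Your argument is correct and follows essentially the same route as the paper's proof: conclude from $\gamma$ being a (nonzero) purely imaginary constant that $A_u=\i c$ is a purely imaginary constant, integrate, and absorb the inessential amplitude and phase via the coordinate freedom \eqref{eq:zetaxform}--\eqref{eq:fxform}. You merely spell out the details the paper leaves implicit (the constancy of $q$ via continuity, the $\gamma=0$ subcase, and the explicit normalization of $\e^{4A_0}$), which is fine.
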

\begin{proof}
  The super-singular case of $\gamma=0$ corresponds to $A_u=k=0$.  From
  now on, we suppose that $\gamma$ is a non-zero imaginary constant.
  It follows that 
  \[ A_u = \i k \]
  where $k$ is some real constant.  The desired conclusion follows immediately.
\end{proof}
\section{Conclusions} \label{conclusions}

In our search for those vacuum PP-wave spacetimes in which the
fourth-order covariant derivatives of the curvature tensor are
required to classify them entirely, we have produced an approach to
invariantly classifying the vacuum PP-wave spacetimes.  Our approach
is based on Cartan invariants and the Karlhede algorithm and is
necessitated by the fact that a the class of vacuum PP-waves has
vanishing scalar invariants \cite{ACSI}. Our classification is finer
than the analysis of each spacetime's isometry group alone. The
summary of this invariant approach to classification is given in
tables \ref{tab:022} -- \ref{tab:011} with specialization relations
summarized in Figures \ref{fig:g23} and \ref{fig:g1}.

For any spacetime, the classification begins with the fact that the
components of the curvature tensor and its covariant derivatives
produce all of the invariants required. The Karlhede algorithm
provides an algorithmic approach to determining the lowest order, $q$,
of covariant differentiation needed to classify the space, canonical
forms for the components of the curvature tensor and the number of
functionally independent invariants, $(t_0,t_1,\ldots, t_q)$ arising
from the collection of all components of the curvature tensor and its
covariant derivatives up to order $q$.  

For vacuum pp-waves we have demonstrated that $q\leq 4$ and have
classified all solutions that attain an IC order of $4$.  Table
\ref{tab:0123} summarizes the maximal order solutions.  By
characterizing the $G_2$ and $G_3$ solutions in terms of invariant
conditions, the invariant approach also sheds light on the origin of
the additional Killing vectors.  Another remarkable finding is the
fact that the maximal order solutions of Table \ref{tab:0123} are
direct precursors of the $G_3$ solutions first discovered by Kundt and
Ehlers.  In terms of the metric form, the mechanism of specialization
is the disappearance of an additive term; e.g.,
\[  e^{\i k_0} \log \zeta + k_1 e^{\i h} \zeta \to e^{\i k} \log
\zeta.\]

Outside of the invariant classification of spacetimes, the study of
the invariant structure of the Riemann tensor and its covariant
derivatives reveal the interconnection between spacetimes with less
symmetry and their more symmetric counterparts and how these arise as
specialization of the classifying manifold. Furthermore by imposing
conditions on the Cartan invariants we produced definite examples of
spacetime with little or no symmetry. This is particularly relevant
for the PP-wave spacetimes as before our work little was known about
those spacetimes admitting $D = \partial_v$ as the sole Killing
vector.

The approach used to invariantly classify the PP-waves is not limited
to this class alone. One may repeat the process for the other half of
the plane-fronted waves, the Kundt waves \cite{MMC}. Together these
spacetimes constitute the entirety of all Petrov type N VSI
spacetimes: the class of spacetimes where all scalar curvature
invariants vanish. These spacetimes are a special case of the CSI
spacetimes , where all scalar curvature invariants are constant, and
so the Karlhede algorithm is the only approach to invariantly
classifying these spaces.

Future research direction involve the extension of the invariant
classification to all VSI space-times, and even the full class of
Kundt-degenerate spacetimes.  The question of the physical and
phenomenological interpretation of the classifying invariants is also
unresolved, although some steps in this direction are ongoing
\cite{mcnutt}.

\section{Acknowledgements}
The authors would like to thank Georgios Papadopoulos for useful
discussions. The research of RM and AC is supported, in part, by NSERC
discovery grants.

\appendix
\section{Tables of exact solutions}
\label{sect:table}
Tables \ref{tab:g1} and \ref{tab:g101} summarize the exact solutions
derived in Section \ref{sect:g1}. Tables \ref{tab:preg202} and
\ref{tab:preg201}  summarize the precursor
solutions derived  Section in \ref{sect:g2precursor}.  Tables
\ref{tab:022} and \ref{tab:0122} give the $G_2$ solutions.

\begin{figure}[ht]
  \centering
\includegraphics[width=10cm]{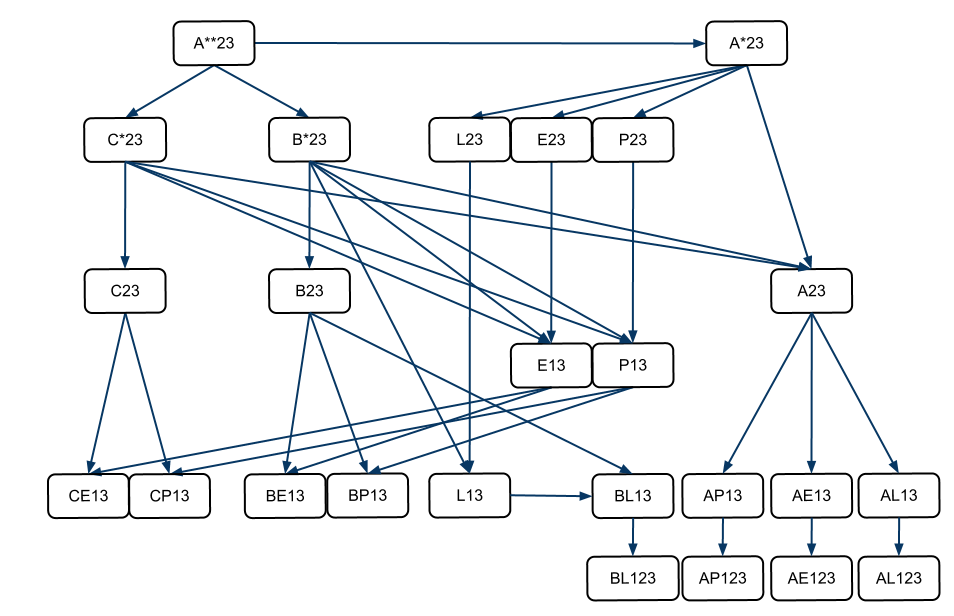}
  \caption{$G_1$ solutions}
  \label{fig:g1}
\end{figure}

\begin{table}[ht]
  \centering
  \begin{tabular}{|@{\vbox to 10pt{} } c|c|l|}
    \hline
    $G_1$ & $f(\zeta,u)$ & Invariant condition\\ \hline
    $\rA^{**}_{23}$ & $g_1 F(g_2 \zeta) + g_3\zeta$ &  $\delta \barA\,
    \delta^2 \!M = 
    \delta M\, \delta^2 \!\barA$\\
    $\rA^*_{23}$ & $F(g\zeta) g^{-2} + g_1 \zeta$ & $\delta M = 0$\\
    $\rA_{23}$& $F(\zeta)  + g_1 \zeta$ & $\mu= 0$\\
    $\rB^*_{23}$& $F(h^{\i k}\zeta)h^{2}+g\zeta$ &$B_2/B_1= k,\;
    A\barA\neq 1, B_1\neq 0$\\
    $\rC^*_{23}$ &$F(\e^{\i h}\zeta)+g\zeta$ & $B_1=0,\; A\neq 1$ \\
    $\rP_{23}$ &$(\e^{g_1}\zeta)^{\i h} + g_2 \zeta$ & $A\barA=1,\; A^2\neq 1$\\
    $\rE_{23}$ &$\exp(g_1\zeta) + g_2 \zeta$ & $A=-1$\\
    $\rL_{23}$ &$g_1\log\zeta + g_2 \zeta$ & $A=1$\\
    \hline
  \end{tabular}
  \caption{Type $(0,2,3)$ solution classes}
  \label{tab:g1}
\end{table}

\begin{table}[ht]
  \centering
  \begin{tabular}{|@{\vbox to 10pt{} } c|c|l|}
    \hline
    $G_1$ & $f(\zeta,u)$ & Invariant condition\\ \hline
    $\rP_{13}$ & $(k_0 \e^{\i h} \zeta)^{2\i k_1}+g\zeta$ &
    $B=0,\; A^2\neq 1,\; \mu \neq 0$,\\
%    $G_1$-ca $\AP_{13}$ & $(k_0 \zeta)^{2\i k_1}+g\zeta$ &
%    $\mu=0,\; A\barA=1,\; A^2\neq 1$,\\
    $\rE_{13}$ & $\exp(k \e^{\i h} \zeta)+g\zeta,$ &
    $B=0,\, A=-1$ \\ 
    $\rL_{13}$& $ \e^{\i k}h\log \zeta+g_2\zeta$ & $B=0, A=1$\\
    \hline
  \end{tabular}
  \caption{Type $(0,1,3)$ solutions}
  \label{tab:g101}
\end{table}

\begin{table}[ht]
  \centering
  \begin{tabular}{|@{\vbox to 10pt{} } c|c|l|}
    \hline
    $G_1$ & $f(\zeta,u)$ & Invariant condition\\ \hline
    $\rB_{23}$& $F( u^{-ik}\zeta)u^{-2}+g\zeta$ &$B_2/B_1=
    k,\;  \Delta X_1     = 2X_1^2$,  $AA^*\neq 1,\; B_1\neq 0$\\ 
    $\rC_{23}$ &$F(\zeta \e^{\i u})+g\zeta$ & $B_1=0,\Delta X_2=0,\;
    \mu\neq 0, AA^*\neq 1$ \\
    $\rL_{23}$ &$g_1\log\zeta + g_2 \zeta$ & $A=1$\\
    $\rA_{23}$ & $F(\zeta)  + g_1 \zeta$ & $\mu= 0$\\
    \hline
  \end{tabular}
  \caption{Type $(0,2,3)$  $G_2$-precursor solutions}
  \label{tab:preg202}
\end{table}

\begin{table}[ht]
  \centering
  \begin{tabular}{|@{\vbox to 10pt{} } c|c|l|}
    \hline
    $G_1$ & $f(\zeta,u)$ & Invariant condition\\ \hline
     $\BP_{13}$& $(k_0 u^{-\i k_1} \zeta)^{2\i k_2}+g\zeta$ &
    $B=0,\; \Delta^2(1/\mu) = 0,\;\Delta\mu\neq 0,\; A^2\neq 1$,\\
    $\CP_{13}$& $(k_0 \e^{\i u} \zeta)^{2\i k_1}+g\zeta$ &
    $B=0,\; \Delta\mu = 0,\; \mu\neq 0,\; A^2\neq 1$,\\
     $\BE_{13}$ & $\exp(k_0 u^{-\i k_1} \zeta)+g\zeta,$ &
    $B=0,\, A=-1, \; \Delta^2(1/\mu) = 0,\;\Delta\mu\neq 0$ \\ 
    $\CE_{13}$ & $\exp(k_0 \e^{\i u} \zeta)+g\zeta,$ &
    $B=0,\, A=-1, \Delta \mu = 0,\; \mu\neq 0$ \\ 
    $\rL_{13}$ & $ \e^{\i k}h\log \zeta+g_2\zeta$ & $B=0, A=1$\\
    $\BL_{13}$ & $ C u^{-2}\log \zeta+g\zeta$ & $B=0, A=1, \Delta
    \mu = \mu^2,\; \mu\neq 0$\\  
    $\AP_{13}$ & $(k_0 \zeta)^{2\i k_1}+g\zeta$ &
    $\mu= 0,\; A\barA =1,\; A^2\neq 1$,\\
    $\AE_{13}$ & $\exp(k \zeta)+g\zeta$ &
    $\mu= 0,\; A=-1$,\\
    $\AL_{13}$ & $\e^{\i k}\log \zeta+g\zeta$ &
    $\mu= 0,\; A=1$,\\
    \hline
  \end{tabular}
  \caption{Type $(0,1,3)$  $G_2$-precursor solutions}
  \label{tab:preg201}
\end{table}

\begin{table}[ht]
  \centering
  \begin{tabular}{|@{\vbox to 10pt{} } c|c|l|}
    \hline
    $G_1$ & $f(\zeta,u)$ & Invariant condition\\ \hline
    $\BL_{123}$ & $ (C\log \zeta+k\e^{\i h}\zeta)u^{-2}$ &
    $B=0, A=1, \Delta\mu=\mu^2, \Delta\log(Y Y^*)=4\mu,\; \mu\neq 0$\\
    $\AP_{123}$ & $(k_0 \zeta)^{2\i k_1}+k_2\e^{\i h(1-2\i k_1)}\zeta$ &
    $\mu= 0,\; A\barA =1,  \; A^2\neq 1$\\
    && $(1-3A)\Delta \log Y+(A-3)\Delta \log Y^*    = 0$\\
    $\AE_{123}$ & $\exp(k_0 \zeta)+ \e^{\i k_1}\e^{ h}\zeta$ &
    $\mu= 0,\, A=-1,\, \Delta (Y/Y^*)    = 0  $\\
    $\AL_{123}$ & $e^{\i k_0}\log\zeta+k_1 \e^{\i h}\zeta$ &
    $\mu= 0,\, A=1, \Delta(Y Y^*) = 0 $\\
    \hline
  \end{tabular}
  \caption{Type $(0,1,2,3)$  solutions}
  \label{tab:0123}
\end{table}

\begin{table}[ht]
  \centering
  \begin{tabular}{|@{\vbox to 10pt{} } c|c|l|}
    \hline
    $G_2$ & $f(\zeta,u)$ & Invariant condition\\ \hline
    $\BP_{122}$ & $((k_0 u^{-\i k_1} \zeta)^{2\i k_2}+k_3
    u^{-\i k_1} \zeta)u^{-2}$ &  
    $B=0,\; \Delta^2(1/\mu) = 0,\, \tUpsilon=0,\;\Delta\mu\neq 0,\;
    A^2\neq 1$ \\ 
    $\CP_{122}$ & $(k_0 \e^{\i u} \zeta)^{2\i k_1}+ k_2
    \e^{\i u} \zeta$ & 
    $B=0,\; \Delta\mu = 0,\, \tUpsilon=0,\; \mu\neq 0,\; A^2\neq 1$ \\
    $\BE_{122}$ & $\exp(k_0 u^{-\i k_1} \zeta)+k_2 u^{-\i
      k_1} \zeta,$ & 
    $B=0,\, A=-1, \; \Delta^2(1/\mu) = 0,\, \tUpsilon=0,\;\Delta\mu\neq 0$ \\ 
    $\CE_{122}$ & $\exp(k_0 \e^{\i u} \zeta)+k_1 e^{\i u} \zeta,$ &
    $B=0,\, A=-1, \Delta \mu = 0,\,\tUpsilon=0,\; \mu\neq 0$ \\ 
    $\rL_{122}$ & $ \e^{\i k}h\log \zeta$ & $B=0, A=1, Y=0$\\
    $\BL_{122}$ & $ u^{-2} (C\log \zeta+k\zeta)$ &
    $B=0, A=1, \Delta\mu=\mu^2,\;    \mu\neq 0$\\
    & &$\Delta\log(Y Y^*) = 4\mu,\;\Delta (\alpha \Delta\log Y)=0$\\ 
    $\AP_{122}$ & $(k_0 \zeta)^{2\i k_1}+ C u^{-2-\i k_1} \zeta$ &
    $\mu= 0,\; A\barA =1, \, \Delta^2 Y^{\frac{1-A}{A-3}} = 0,\;\;
    A^2\neq 1$,\\
    &$\left(k_0 u^{-\i/k_1}\zeta +C\right)^{2\i k_1} u^{-2}$&\\
    $\AE_{122}$ & $\exp(k_0 \zeta)+Cu^{-2}\zeta$ &
    $\mu= 0,\; A=-1,\, \Delta^2 Y^{-1/2} = 0$\\
    & $(\exp(k_0\zeta)+C\zeta)u^{-2}$ & \\
    $\AL_{122}$ & $\e^{\i k_0}\log\zeta+k_1 \e^{\i u} \zeta$ &
    $\mu= 0,\, A=1, \Delta^2\log Y = 0 $\\
    & $e^{\i k_0} \log(e^{\i u} \zeta+k_1)$ &\\
    \hline
  \end{tabular}
  \caption{Type $(0,1,2,2)$  $G_2$- solutions}
  \label{tab:0122}
\end{table}

\begin{table}[ht]
  \centering
  \begin{tabular}{|@{\vbox to 10pt{} } c|c|l|}
    \hline
    $G_3$ & $f(\zeta,u)$ & Invariant condition\\ \hline
    $\BL_{11}$ & $ C u^{-2} \log \zeta$ &
    $B=0, A=1, \Delta\mu=\mu^2, Y=0,\; \mu\neq 0$\\
    $\AP_{11}$ & $(k_0 \zeta)^{2\i k_1}$ &
    $\mu= 0,\; A\barA =1,\; Y = 0,\; A^2\neq 1$\\
    $\AE_{11}$ & $\exp(k_0 \zeta)$ &
    $\mu= 0,\, A=-1, Y = 0 $\\
    $\AL_{11}$ & $e^{\i k}\log\zeta$ &
    $\mu= 0,\, A=1, Y = 0 $\\
    \hline
  \end{tabular}
  \caption{Type $(0,1,1)$  $G_3$ solutions}
  \label{tab:011}
\end{table}

\begin{table}[ht]
  \centering
  \begin{tabular}{|@{\vbox to 10pt{} } c|c|l|}
    \hline
    Label & $f(\zeta,u)$ & Invariant condition \\ \hline
    $\rA_{11}$ & $g \zeta^2 $ & $\alpha = 0, \Delta\gamma\neq 0$\\
    $\rB_0$ & $k_1 u^{2\i k_0 -2} \zeta^2$ & $\alpha  = 0,\;\Delta\gamma =
    0,\; \Re\gamma \neq 0$  \\
    $\rC_0$ & $\exp(2\i k_0u) \zeta^2$ & $\alpha  = 0,\;\Delta\gamma =
    0,\; \Re\gamma = 0$  \\ \hline 
  \end{tabular}
  \caption{The $G_5$ and $G_6$ solutions}
  \label{tab:g56}
\end{table}

%\begin{section}{Rotating Plane-Fronted Waves}
%\input{RPFW.tex}
%\end{section}


\begin{thebibliography}{99}

\bibitem{Cartan} E. Cartan, Lecons sur la Geometrie des Espaces de
  Riemann, Paris: Gauthier-Villars (1946).


\bibitem{ACSI} A. Coley, S. Hervik and N. Pelavas,
  Class. Quant. Grav. {\bf 26}, 025013 (2009). [arXiv:0901.0791].

\bibitem{BCSI} A. Coley, S. Hervik and N. Pelavas,
  Class. Quant. Grav. {\bf 26}, 125011 (2009). [arXiv:0904.4877].

\bibitem{DKundt} A. Coley, S. Hervik, G. Papadopoulos and N. Pelavas,
  \textit{Class. Quant.  Grav.} \textbf{26}, 105016 (2009).


\bibitem{Collins91} J.M. Collins, Class. Quant. Grav. {\bf 8},
  1859-1869 (1991).  

\bibitem{KE62} J. Ehlers and W. Kundt, Exact solutions of the
  gravitational field equations, in Gravitation: an introduction to
  current research, ed. L. Witten, page 49, New York and London: Wiley
  (1962).


\bibitem{karlhede} A. Karlhede, \textit{Gen. Rel. Grav.} \textbf{12}
  693 (1980).  

\bibitem{ES} D. Kramer, H. Stephani, M. MacCallum and
  E. Herlt, Exact Solutions of Einstein's Field Equations, Cambridge
  University Press (1980).

\bibitem{ramvick96} M.P. Machado Ramos and J.A.G. Vickers,
  Class. Quant. Grav. {\bf 13}, 1589-1599 (1996).


\bibitem{MMC} D. D. McNutt, R. Milson, and A. Coley, preprint (2012)
  [arxiv:1208.5027] 

\bibitem{mcnutt} D. D. McNutt, Vacuum plane waves: Cartan invariants
  and physical interpretations, preprint (2012).

\bibitem{PR} R. Penrose and W. Rindler, Spinors and Spacetime Vol. 1,
  Cambridge University Press (1984).

\bibitem{VSI}
V. Pravda, A Pravdov{\'a}, A. Coley, R. Milson, 
\textit{Class. Quantum Grav.} \textbf{19} 6213--6236 (2002)


% \bibitem{Handbook} A. D. Polyanin and V. F. Zaitsev, Handbook of Exact
%   Solutions for Ordinary Differential Equations, 2nd Edition , Chapman
%   \& Hall/CRC, Boca Raton, (2003).

% \bibitem{Kundt61} W. Kundt, Z. Phys. {\bf 163}, 77 (1961). 




\bibitem{olver} P.J. Olver, \textit{Equivalence, Invariants, and
    Symmetry} Cambridge University Press 1995


\bibitem{schmidt} H.J. Schmidt, Why do all the curvature invariants of
  a gravitational wave vanish?, in \textit{New Frontiers in Gravitation},
  ed. G. A. Sardanashvili, Hadronic Press, Palm Harbor, pp. 337-344  (1994).


\bibitem{sipgo86} R. Sippel and H. Goenner, \textit{Gen. Relativity and
    Gravitation} \textbf{18} 1229--1243 (1986).
\end{thebibliography}
\end{document}